\documentclass[11pt]{article}
\usepackage{waingarten}
 
\usepackage{environ}
\newcounter{quote}

\NewEnviron{myquotenumber}{\vspace{3ex}\par
\refstepcounter{quote}%
\hfill\parbox{\dimexpr \textwidth-2cm}%brug \parbox[b] for bunden [c] (standard) for center og [t] for top
{{\BODY}}
\hfill\llap{(\thequote)}\vspace{2ex}\par}

\title{Testing Unateness Nearly \vspace{0.5cm}Optimally}
\author{Xi Chen \\Columbia University \\ \texttt{{xichen@cs.columbia.edu}}
  \and Erik Waingarten \\Columbia University\\ \texttt{eaw@cs.columbia.edu}\vspace{0.3cm}}

\begin{document}
\maketitle
\begin{abstract}
We present an $\tilde{O}(n^{2/3}/\eps^2)$-query algorithm that tests whether 
  an unknown Boolean function $f\colon\{0,1\}^n\rightarrow \{0,1\}$ is unate
  (i.e., every variable is either non-decreasing or non-increasing) or $\eps$-far
  from unate.
The upper bound %This algorithm 
is nearly optimal given the $\tilde{\Omega}(n^{2/3})$ lower~bound~of~\cite{CWX17}. 
The algorithm %It 
builds on a novel use of the binary search procedure
  and its analysis over long random paths. %
%A crucial component of our algorithm is the use of the standard binary search procedure
%  to preprocess a large set of variables before performing a more efficient search of bichromatic edges.
\end{abstract}
\thispagestyle{empty}

\newpage
\thispagestyle{empty}
\tableofcontents
\thispagestyle{empty}
\newpage

\setcounter{page}{1}

% !TEX root = main.tex

\newcommand{\AESearch}{\textsc{AE-Search}}
\newcommand{\Preprocess}{\mathtt{Preprocess}}

\section{Introduction}

A Boolean function $f \colon \{0,1\}^n \to \{0,1\}$ is \emph{monotone} if every variable is non-decreasing, and \emph{unate} if every variable is either non-decreasing or non-increasing (equivalently, $f$ is unate iff there exists a string $a \in \{0,1\}^n$ such that $g(x) := f(x \oplus a)$ is monotone). Both problems of \emph{testing} monotonicity and unateness were   introduced in \cite{GGLRS00}, where a tester is a %to design a query-efficient 
  randomized algorithm that, given query access to an unknown Boolean function $f \colon \{0,1\}^n \to \{0,1\}$, outputs ``accept'' with probability at least ${2}/{3}$ when $f$ is monotone (or unate) and outputs ``reject'' with probability at least ${2}/{3}$ when $f$ is~$\eps$-far from monotone (or unate).\footnote{Given a property $\calP$ of Boolean functions, 
  we say $f$ is~$\eps$-far from $\calP$ if for every $g \in \calP$, $\Prx_{\bx\sim\{0,1\}^n}[f(\bx) \ne g(\bx)] \geq \eps$ where $\bx \sim \{0,1\}^n$ is sampled uniformly at random.}
The work of \cite{GGLRS00} analyzed the non-adaptive\footnote{An algorithm is non-adaptive 
if queries made cannot depend on answers to previous queries and thus, all queries can be made in
  a single batch.
% Thus, non-adaptive algorithm proceed by generating a sequence of queries $x_1, \dots, x_q \in \{0,1\}^n$, obtaining responses $f(x_1), \dots, f(x_q) \in \{0,1\}$, and outputting either ``accept'', or ``reject''. 
In contrast a general adaptive algorithm proceeds round by round: 
the point it~queries in each round can depend on answers to previous queries.}, one-sided error\footnote{We say a tester makes one-sided error if it always accepts a function that satisfies the property.} edge tester\footnote{An edge tester keeps drawing edges $(\bx,\by)$ from the hypercube uniformly at random
  and querying $f(\bx)$ and $f(\by)$.} which led
  to the upper bounds of $O(n/\eps)$ and $O(n^{3/2}/\eps)$ for testing monotonicity and 
  unateness, respectively.
These remained the best upper bounds for over a decade.%querying edges in the hypercube sampled 
%  uniformly at random (the so-called edge tester),
%  one can 

Recently there have been some exciting developments in understanding the query complexity of both problems. 
Progress made on the upper bound side is due, in part, to %a surprising connection between complexity of testing monotonicity and unateness, and 
  new \emph{directed} isoperimetric inequalities on the hypercube. In particular, \cite{CS16} and \cite{KMS15} showed that various isoperimetric inequalities on the hypercube have directed analogues, where the edge boundary is now measured by considering anti-monotone bichromatic edges\footnote{An edge
    $\smash{(x,x^{(i)})}$ (where $\smash{x^{(i)}}$ denotes the point obtained from $x$ by flipping the $i$th bit) in $\{0,1\}^n$
    is bichromatic if $f(x)\ne f(x^{(i)})$,
    is  monotone (bichromatic) if $x_i=f(x)$,
    and is  anti-monotone (bichromatic) if $x_i\ne f(x)$.}. 
  %In these inequalities, the parameter capturing the size of the edge boundary is the distance to monotonicity, $\eps$. %Since a function $f\colon \{0,1\}^n \to \{0,1\}$ which is $\eps$-far from unate is also $\eps$-far from monotone after possibly negating any variables, 
%  and one may use these inequalities to prove the existence of bipartite graphs of anti-monotone bichromatic
%  edges with certain properties in any function that is $\eps$-far from monotone
%  (see e.g., Lemma \ref{lem:KMS}).
  %  one may utilize the directed isoperimetric inequalities on $\eps$-far from unate functions in all $2^n$ possible orientations of the variables. 
These inequalities were then used in the analysis of {non-adaptive} algorithms for testing monotonicity \cite{CS16, CST14, KMS15}. 
For example, to~obtain the $\tilde{O}(\sqrt{n}/\eps^2)$ upper bound, \cite{KMS15}
  used their new inequality to prove the existence of a large and almost regular 
  bipartite graph that consists of anti-monotone bichromatic edges in any function that is $\eps$-far
  from monotone (see Lemma \ref{lem:KMS}).
%Such a bipartite graph was used to analyze the performance of the pair tester. %\footnote{A pair tester keeps drawing pairs of points $\bx,\by\in \{0,1\}^n$ 
%    from a certain distribution independently, and rejects the input function $f$ if 
%    $\bx_i\le \by_i$ for all $i\in [n]$, $f(\bx)=1$ and $f(\by)=0$.} 
These upper bounds are complemented with lower bounds for testing monotonicity \cite{FLNRRS02,CST14, CDST15, BB16, CWX17}. For non-adaptive algorithms, the query complexity has been pinned down to $\smash{\tilde{\Theta}(\sqrt{n})}$ for constant $\eps$; for general adaptive algorithms, a gap remains between $\tilde{O}(\sqrt{n}/\eps^2)$ of \cite{KMS15}
  and the best lower bound of $\tilde{\Omega}(n^{1/3})$ \cite{CWX17}.

Given the similarity in their definitions,
  it is natural to expect that the same directed isoperimetric inequalities
  %(and the underlying graph structure they reveal) 
  can be used to  test unateness: if $f$ is far from unate, then
  by definition $f(x\oplus a)$ is far from monotone for any $a\in \{0,1\}^n$, on which
  one can then apply these inequalities to obtain rich graph structures. % underlying $f(x\oplus a)$.
%However, it is known that non-adaptive algorithms, unlike in testing monotonicity,
%  cannot 
This is indeed the approach \cite{CWX17b} followed 
  to obtain an $\smash{\tilde{O}(n^{3/4})}$-query adaptive algorithm for unateness   by leveraging the directed isoperimetric inequality of \cite{KMS15}.
It improved the upper bound of \cite{GGLRS00} as well as recent linear upper bounds 
  for testing unateness  \cite{KS16,CS16,BMPR16,BCPRS17}
  (which turned out to be optimal \cite{CWX17, BCPRS17b} for non-adaptive and one-sided error algorithms).
Shortly before the work of \cite{CWX17b}, an adaptive lower bound of
  $\tilde{\Omega}{(n^{2/3})}$ was obtained in \cite{CWX17} for testing unateness.

\ignore{\red{Erik: I'm not a huge fan of this paragraph, because I don't actually believe that the two rounds are needed for \cite{CWX17b}, somehow I believe that complexity of non-adaptive, two-sided error should be $\Theta(n^{3/4})$. If you think this paragraph sells our result more, I am okay with keeping it, but I think we could potentially remove it? }An observation made in \cite{CWX17b} is that their $\tilde{O}(n^{3/4})$-query
  algorithm only uses two \emph{rounds} \cite{CG17} of adaptivity:
  The algorithm has two stages; the first stage is non-adaptive, and each query in the 
  second stage only depends on answers to queries from the first stage.
Compared to the linear lower bound for non-adaptive algorithms \cite{CWX17, BCPRS17b},
  it is surprising that such a small amount of adaptivity brings
   a polynomial improvement in the query complexity.
\emph{Is there a better algorithm for testing unateness by utilizing more rounds of adaptivity?}}

Our main contribution is an $\tilde{O}(n^{2/3}/\eps^2)$-query, adaptive algorithm for testing unateness.
This essentially settles the problem since it matches
  the $\tilde{\Omega}(n^{2/3})$ adaptive lower bound of \cite{CWX17} 
 up to a poly-logarithmic factor (when $\epsilon$ is a constant). 
\begin{theorem}[Main]\label{thm:main}
There is an $\smash{\tilde{O}(n^{2/3}/\eps^2)}$-query, adaptive algorithm with the following property: 
Given $\eps>0$ and query access to an unknown Boolean function $f \colon \{0,1\}^n \to \{0,1\}$,
  it always accepts when $f$ is unate and rejects with probability at least $2/3$
  when $f$ is $\eps$-far from unate.
% which is $\eps$-far from unate, finds an edge violation to unateness with high probability.
\end{theorem}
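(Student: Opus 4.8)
My plan is to have the tester reject only when it has exhibited a \emph{violating pair}: a coordinate $i\in[n]$ together with a monotone bichromatic edge $(x,x^{(i)})$ (i.e.\ $x_i=0$, $f(x)<f(x^{(i)})$) and an anti-monotone bichromatic edge $(y,y^{(i)})$ ($y_i=0$, $f(y)>f(y^{(i)})$). A unate function has no violating pair, so one-sided error is automatic, and everything below concerns the case that $f$ is $\eps$-far from unate, where the goal is to produce a violating pair with probability $2/3$ in $\tilde O(n^{2/3}/\eps^2)$ queries. For each $i$ let $M_i,A_i$ be the monotone and anti-monotone bichromatic edge sets in direction $i$ and $\mu_i^{\mathrm m}=|M_i|/2^n$, $\mu_i^{\mathrm a}=|A_i|/2^n$.

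\textbf{Step 1: structure from directed isoperimetry, handled orientation-agnostically.} Set $a^*\in\{0,1\}^n$ by $a^*_i=1$ iff $\mu_i^{\mathrm a}>\mu_i^{\mathrm m}$; then $g:=f(\cdot\oplus a^*)$ is $\eps$-far from monotone, and its anti-monotone bichromatic edges in direction $i$ are exactly the \emph{minority-type} edges of $f$ in direction $i$, of density $m_i:=\min(\mu_i^{\mathrm m},\mu_i^{\mathrm a})$. Applying Lemma~\ref{lem:KMS} to $g$ yields a large, almost-regular bipartite graph $G$ of these edges with $|E(G)|\ge\Omega(\eps\,2^n/\mathrm{polylog}\,n)$. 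Crucially $G$ lives inside $f$ as a set of bichromatic edges, and since a coordinate's majority density is at least its minority density $m_i$, it suffices for the algorithm merely to collect one edge of $G$ in some direction $i$ together with one $M_i$- or $A_i$-edge of the matching type — both are then found simultaneously with high probability once the sampler runs long enough — so the algorithm never needs to know $a^*$; it only collects bichromatic edges tagged by direction and type. Finally split coordinates by a threshold $\tau$ into \emph{heavy} ($m_i\ge\tau$), caught by plain uniform edge sampling, and \emph{light}, which collectively still carry $\Omega(\eps/\mathrm{polylog})$ of $|E(G)|$ and are caught only with the amplification below; optimizing $\tau$ produces the exponent $2/3$.

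\textbf{Step 2: the subroutine \AESearch\ and persistency (the main obstacle).} After a cheap $\Preprocess$ stage that estimates the $\mu_i$'s and selects a path length $\ell$ (geometrically spaced, $O(\log n)$ choices tried in parallel), \AESearch\ repeatedly samples a uniform $z$ and a uniform increasing path $\pi$ of length $\ell$ through $z$, queries the two endpoints, and — if they are bichromatic — runs binary search along $\pi$ using $O(\log n)$ queries to extract an actual bichromatic edge, recording its direction and type. The heart of the analysis is the \emph{amplification lemma}: \AESearch\ returns a bichromatic edge of a given direction $i$ of a given type with probability $\approx\ell\cdot(\text{density})$, so that length-$\ell$ paths boost detection of sparse coordinates by a factor $\ell$ while costing only $O(\log n)$ queries per trial. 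Proving this for the non-monotone $f$ requires a persistency argument: all but an $o(\eps)$ fraction of the relevant (e.g.\ $G$-)edges have an endpoint from which a random increasing walk of length $\ell$ keeps its $f$-value, so binary search along $\pi$ lands on a genuine violation rather than a spurious crossing caused by $f$ oscillating; the non-persistent edges get charged to an auxiliary short-walk test. The regularity of $G$ from Step~1 is what lets one pick $\ell$ matched to the degree scale so that persistency holds, and bounding the loss from spurious crossings is, I expect, what forces the $\eps^{-2}$ dependence and the poly-logarithmic overhead — this is the step I expect to be the real difficulty.

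\textbf{Step 3: assembling the tester.} Run $\Preprocess$, plain uniform edge sampling for the heavy regime, and $\tilde O(n^{2/3}/\eps^2)$ trials of \AESearch\ (over the $O(\log n)$ scales) for the light regime. Each direction $i$ then accumulates, on each of its two sides, a count of sampled edges proportional to the (amplified) density of that side; a collision/birthday argument shows that for a coordinate whose minority and majority densities are both above the relevant threshold — which Step~1 guarantees for at least one heavy or one light coordinate — the trials land on both sides and a violating pair appears with constant probability. Substituting the $\tau$ from Step~1 makes both regimes cost $\tilde O(n^{2/3}/\eps^2)$; a constant number of repetitions boosts the success probability to $2/3$, and the union bound over the $O(\log n)$ scales costs only a poly-logarithmic factor.
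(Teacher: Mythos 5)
There is a genuine gap, and it sits exactly where you flagged the difficulty. Your Step~2 asserts an amplification lemma for binary search along uniformly random increasing paths of length $\ell$, with persistency of the relevant edges secured by ``picking $\ell$ matched to the degree scale'' of the regular graph $G$. But to reach query complexity $\tilde O(n^{2/3}/\eps^2)$ you need $\ell$ (equivalently, the size of the random set of directions) to be as large as roughly $n^{2/3}$, and for such $\ell$ the persistency claim is simply false for random paths: the only a priori bound on the total influence is $\I_f = O(\sqrt{n})$ (the high-influence case being handled separately), so the directions \emph{outside} the hidden set $\calI$ can each have influence $\Omega(1/\sqrt{n})$, and a random path of length $\omega(\sqrt{n})$ crosses $\omega(1)$ bichromatic edges along these useless directions with overwhelming probability. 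Binary search then returns edges along those directions (and the distribution of the returned direction is an adaptively biased object that your sketch does not control), so the trials never accumulate edges along $\calI$ at the claimed rate. This is not a loss that can be ``charged to an auxiliary short-walk test'': with uniformly random sets/paths the argument caps at length $O(\sqrt{n})$ and recovers only the $\tilde O(n^{3/4})$ bound of \cite{CWX17b}; indeed, if persistency held for random sets of size $\gg\sqrt n$ one would improve the $\tilde O(\sqrt n)$ monotonicity tester of \cite{KMS15}. Your $\Preprocess$ step, described as merely estimating densities and choosing a scale $\ell$, does nothing to remove this obstruction.

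The missing ingredient is the paper's actual $\Preprocess$: after sampling a size-$n^{2/3}$ set $S_0$ of directions, one repeatedly tests persistency of random points with respect to the current set and, whenever a violation $f(x)\ne f(x^{(T)})$ is found, runs binary search along that random path and \emph{deletes} the returned variable from the set, until $(1-\xi)$-fraction of points are persistent. The technically hard part — and what your proposal has no substitute for — is showing that this adaptive pruning cannot conspire against the low-influence variables of $\calI$: one must prove that a variable $i$ with small $\Inf_f[i]$ changes the outcome of a binary search on a random path of length $\ell$ with probability only $O(\log \ell)\cdot\Inf_f[i]$ (not $\ell\cdot\Inf_f[i]$), and hence that the distribution of the pruned set is nearly unchanged (in total variation) when $i$ is swapped for a placeholder; only then can one conclude that the strong edges along $i\in\calI$ remain persistent with respect to the pruned set and that $\AESearch$ finds them. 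Separately, your ``heavy'' regime is also underspecified — plain uniform edge sampling finds bichromatic edges along high-influence coordinates but does not by itself produce the matching minority-side edge needed for a violation, which is why the paper's Case~2 builds the revealing-points machinery — but the long-path persistency claim is the decisive flaw: without the pruning-plus-obliviousness argument the route you describe does not get below $n^{3/4}$.
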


In addition to the bipartite graph structure implied by the isoperimetric inequality of
  \cite{KMS15},
the algorithm relies on novel applications 
  of the standard binary search procedure on long %(polynomial in $n$)
  random paths.
  %and new ideas for its analysis
  %when running it on long random paths. 
%  which uses
%  for which $O(\log n)$-rounds of adaptivity are needed. 

Given a path between two points $x$ and $y$ in the hypercube with $f(x)\ne f(y)$,
  the binary search (see Figure \ref{fig:binarysearch}) returns a bichromatic edge along the path with $\log \ell$ queries where $\ell$ is the length of the path.
The idea of using binary search in Boolean function property testing 
  is not new.
In every application we are aware of in this area 
  (e.g., in testing conjunctions \cite{DolevRon:11,ChenXie:15}, testing juntas \cite{B09, CLSSX18}, unateness \cite{KS16} and
  monotonicity \cite{CS18}),
  one runs binary search to find bichromatic edges (or pairs, as in testing juntas) 
  that can be directly used to form  a violation (or at least part of it)
   to the property being tested.
This is indeed how we use binary search in 
  one of the cases of the algorithm (Case 2) to search for an \emph{edge violation} (i.e., a 
  pair of bichromatic edges along the same variable, one is monotone and the other is anti-monotone). 
However, in the most challenging case (Case 1) of the algorithm, binary search plays a completely different role.
Instead of searching for an edge violation, binary search is used to \emph{preprocess}
  a large set $S_0\subseteq [n]$ of variables to obtain a subset $S\subseteq S_0$. This set $S$ is used to search for bichromatic edges more efficiently
  %so that $S$ can be used to search for bichromatic edges more efficiently
  using a procedure called $\AESearch$ from \cite{CWX17b}.
Analyzing the performance of $S$ for $\AESearch$ is 
  technically the most demanding part of the paper,
  where new ideas are needed for understanding the behavior of 
  binary search running along long random paths in the hypercube.
%  unwanted high-influential variables from $S_0$.
%The subset $S$ obtained at the end of the preprocessing 
%  is then passed down to the main search phrase of the algorithm.

%  and at the same time keeping
%  low-influential variables that can be used to find an edge violation in the
%  main stage of the algorithm.
%So in this case the bichromatic edges found via binary search is 
%  useless in terms of forming an edge violation.
  %   the reason why one would like to run binary
  %search in a unateness tester
%Since the goal of our algorithm is to find two bichromatic edges that form an edge collision, 
%  binary search is expected to be very useful \cite{KS} but at the same time it
%  is difficult to analyze especially whee it is ran on on long paths.
%The only paper that 

%  as well as the main idea behind its analysis.

%Theorem~\ref{thm:main} matches (up to poly-logarithmic factors) the lower bound from \cite{CWX17}, and thus, settles the problem of testing unateness.

\subsection{Technical overview}\label{techover} 

%For simplicity we focus on the following setting 
In this section we present a high-level overview of the algorithm, focusing on 
  why and how we~use binary search in Case 1 of the algorithm.
For simplicity we assume $\eps$ is a constant.

%We now give a high level idea for our algorithm for testing unateness. 
First our algorithm rejects a function only when an edge violation to unateness 
  is found.
%will always output ``unate'' when the function $f \colon \{0,1\}^n \to \{0,1\}$ is unate, since it proceeds by looking for an edge violation. 
%Specifically, the algorithm tries to to find two bichromatic edges in variable $i$, $(x, x^{(i)})$ and $(y, y^{(i)})$ with $x_i = y_i = 0$, where one edge is monotone, and the other is anti-monotone: $f(x) < f(x^{(i)})$ and $f(y) > f(y^{(i)})$. 
Since an edge violation is a certificate of non-unateness, the algorithm always accepts a 
  function when it is unate and thus, it makes one-sided error. 
As a result, it suffices to show that the algorithm finds an edge violation with high probability
  when the unknown function $f$ is far from unate. 

\ignore{One noteworthy remark is that the algorithm does not leverage any new directed isoperimetric inequalities, aside from those in \cite{KMS15}. The necessary structural results about far-from-unate functions follow by a careful application of the inequality of \cite{KMS15} in all $2^n$ possible orientations of the variables. Instead, we explore new algorithmic primitives which better exploit the directed isoperimetry of far-from-unate functions.}

For simplicity, we explain Case 1 of the algorithm using the following setting:\footnote{The following conditions on the function $f$ are satisfied by the hard functions in \cite{CWX17} used for proving the $\tilde{\Omega}(n^{2/3})$ lower bound.}
\begin{flushleft}\begin{enumerate}
\item[] All edge violations of $f$ are along a hidden set $\calI \subset [n]$ of $\Omega(n)$ variables. %There is an unknown set $\calI\subseteq [n]$ of variables of size $\Omega(n)$.
%These variables are the reason why $f$ is far from unate: 
For %Along 
each variable $i\in \calI$, there are $\Theta(2^n/n)$ monotone edges and $\Theta(2^n/n)$ 
    anti-monotone edges. 
    %(letting 
    Let $P_i^+$ denote the set of points incident on monotone edges along $i$ and 
    $P_i^-$ denote the set of points incident on anti-monotone edges along $i$. %).
%Moreover,  
  The sets $P_i^+$'s for $i\in \calI$ are disjoint, so monotone edges along variables in $\calI$ 
  form a matching of size $\Theta(2^n)$;
  similarly, the sets $P_i^-$'s are disjoint and anti-monotone edges along $\calI$ 
  also form a matching of size $\Theta(2^n)$. % (i.e., $P_i^-$'s are disjoint).
%The existence of such matchings follows from the isoperimetric inequality of \cite{KMS15} (see 
%  comment below).
Along each $i\notin \calI$, there are $\Theta(2^n/\sqrt{n})$ bichromatic edges along $i$ which are all either monotone or anti-monotone, but not both.% however, the number of such edges is $\Theta(2^n/\sqrt{n})$. %may be much larger than $\Omega(2^n/n)$. %there are monotone or anti-monotone edges only, though the influence of 
  %a variable $i\notin \calI$ can be much larger than that of a variable in $\calI$ (i.e., $\Theta(1/n)$).
\end{enumerate}\end{flushleft}

%The algorithm needs to deal with more general cases; for instance, the hidden set $\calI$ may be much smaller, the bichromatic edges along variables in $\calI$ may not necessarily form matchings in $\{0,1\}^n$, and the influence of variables outside of $\calI$ may not necessarily be $\Theta(1/\sqrt{n})$. 

This particular case will highlight some of the novel ideas in the algorithm and the analysis, so we focus on this case for the technical overview. 

%Unsurprisingly, the core of these algorithms is a procedure for efficiently outputting bi-chromatic edges $f$. The high level goal would be to continuously sample bi-chromatic edges in hopes of observing an edge violation. A sub-routine for finding bi-chromatic edges in the hypercube in the Boolean function property testing literature (\cite{B09, BB15, KS16}) is \emph{binary search}.
%\begin{enumerate}
%\item Find a point $x \in \{0,1\}^n$ and a set $S \subset [n]$ where $f(x) \neq f(x^{(S)})$. 
%\item Pick any subset $S' \subset S$ of size $\left\lceil \frac{|S|}{2} \right\rceil$ and query $f(x^{(S')})$. 
%\item If $f(x) \neq f(x^{(S')})$, recurse on $x$ and $S'$, if $f(x) = f(x^{(S')})$, recurse on $x^{(S')}$ and $S \setminus S'$.
%\end{enumerate}
An appealing approach for finding an edge violation %(along a variable in $\calI$) 
  %that may sound appealing 
  is to keep running binary search on points $\bx,\by$ that are drawn 
  independently and unifomly at random.
%If $f(x) \neq f(x^{(S)})$, then after $O(\log n)$ many queries, we can find a bi-chromatic edge $(y, y^{(i)})$ for some $y \in \{0,1\}^n$ and $i \in S$. 
Since a function that is far from unate must  be $\eps$-far from constant as well, 
  $f(\bx)\ne f(\by)$ with a constant probability and 
%so that sampling $\bx \sim \{0,1\}^n$ uniformly at random, $\bS \subset [n]$ uniformly at random, and
  when this happens, binary search returns a bichromatic edge.
   % with at most $O(\frac{1}{\eps} + \log n)$ queries.  
Now in order to analyze the chance of observing an edge violation by repeating this process, 
  two challenges arise.
First, the output distribution given by the variable of the bichromatic edge
 found by binary search can depend on $f$ in subtle ways, and becomes difficult to analyze formally (partly because of its adaptivity).
Second, %in the case when they have much larger influence than 
  %those in $\calI$ (influence $1/n$ each). 
%variables in $\calI$ get their fair portion
%  in the distribution of the bichromatic edge found by binary search.
 since the influence of variables outside $\calI$ is $\Omega(1/\sqrt{n})$, a random path between $\bx$ and $\by$ of $\Omega(n)$ edges may often cross $\Omega(\sqrt{n})$ bichromatic edges along variables outside of $\calI$ and $O(1)$ bichromatic edges along variables in $\calI$. In this case, binary search will likely return a bichromatic edge along a variable outside $\calI$, which is useless for finding an edge violation.
%  in the case when variables outside of $\calI$ have much larger influence than those in $\calI$
%  ($\Theta(1/n)$ only),
%  most likely binary search would return a bichromatic edge along a variable outside of $\calI$,
%  which is useless for finding an edge violation. 

%(This is indeed one of the ideas behind the lower bound construction of 
%For example, an $\tilde{O}(\sqrt{n})$-query upper bound would follow if 
%  one can show that the bichromatic edge found lies in $\calI$ with high probability
%  and is  
%It may be that the distribution of bi-chromatic edges produced from this procedure can never give an edge %violation (for instance, if all bi-chromatic edges in the support of such a distribution were monotone).

A less adaptive (and thus much simpler to analyze) 
  variant of binary search called $\AESearch$ was introduced \cite{CWX17b}
  to overcome these two difficulties. The subroutine
$\AESearch\hspace{0.05cm}(f, x,S)$\footnote{See Figure~\ref{fig:edge-search} in Appendix~\ref{sec:aes} for a formal description of the $\AESearch$ subroutine.} queries $f$ and takes two additional inputs: $x\in \{0,1\}^n$ and
  a set $S\subseteq [n]$ of variables, uses $O(\log n)$ queries and satisfies the following property:
\begin{flushleft}\begin{myquotenumber}\label{quote:aesearch}
\textbf{Property of \AESearch:} If $(x,x^{(i)})$ is a bichromatic edge with $i\in S$ and 
  both $x$\\ and $x^{(i)}$ are $(S\setminus \{i\})$-\emph{persistent} (which for $x$ informally means that $f(x)=f(x^{(\bT)})$\\with high probability
  when $\bT$ is a uniformly random subset of $S\setminus \{i\}$ of half of its\\ size),
  then $\AESearch\hspace{0.05cm}(f, x,S)$ finds the edge $(x,x^{(i)})$ with probability at least $2/3$.
\end{myquotenumber}\end{flushleft}
In some sense, $\AESearch\hspace{0.05cm}(f, x, S)$ efficiently checks whether there 
  exists an $i \in S$ such that $(x, x^{(i)})$ is bichromatic, whereas the trivial algorithm for this task takes $O(|S|)$ queries.\footnote{See Definition~\ref{def:persistency} and its relation to the performance of $\AESearch$ in Lemma~\ref{lem:aesearch} for a formal description} 

In this simplified setting,
 the algorithm of \cite{CWX17b} starts by drawing a size-$\sqrt{n}$ set $\bS\subseteq [n]$ uniformly 
  at random and runs $\AESearch\hspace{0.03cm}(f, \bx,\bS)$
  on independent samples $\bx$ for $n^{3/4}$ times, hoping to find an edge violation. 
To see why this works we first note that $|\bS\cap \calI|=\Omega(\sqrt{n})$~with
  high probability. 
Moreover, the following property holds for $\bS$: % 
  %shows that 
%Therefore, random sets $\bS$ of size $\sqrt{n}$ satisfy the following property:
\begin{flushleft}\begin{myquotenumber}\label{quote:random-set}
\textbf{Property of the Random Set $\bS$:} With $\Omega(1)$ probability over the randomness\\ of $\bS$,
  most $i\in \bS\cap \calI$ satisfy that most points in $P_{i}^+$ and $P_{i}^-$ are 
  $(\bS\setminus \{i\})$-persistent.
\end{myquotenumber}\end{flushleft}
%This is 
%  most points (including those in the two matchings) are $\bS$-persistent.
%\red{Erik: I think this is a bit too technical for the overview, and maybe not super important?}
We sketch its proof since it highlights the technical challenge we will face later.

First we view the sampling of $\bS$ as $\bS'\cup\{\bi\}$,
  where $\bS'$ is a random set of size $\sqrt{n}-1$ and $\bi$ is a random variable in $[n]$.
Since the influence of each variable in $\bS'$ is at most $O(1/\sqrt{n})$, for many points $x \in \{0,1\}^n$ most random paths of length $O(\sqrt{n})$ along variables in $\bS'$ starting at $x$ will not cross any bichromatic edges.  
%This is formalized so-called \emph{persistency lemma} to show that
In other words, most random sets $\bS'$ of size $\sqrt{n}-1$ satisfy that 
  most of points in $\{0,1\}^n$ are $\bS'$-persistent with high constant probability.
Given that $\cup_i P_i^+$ and $\cup_i P_i^-$ are both $\Omega(1)$-fraction of $\{0,1\}^n$,
   most points in $\cup_i P_i^+$ and $\cup_i P_{i}^-$ must be $\bS'$-persistent as well.
On the other~hand,~given that $\bi$ is \emph{independent} from $\bS'$ and that $\calI$ is $\Omega(n)$,
  with probability $\Omega(1)$ many points in $P_{\bi}^+$ and $P_{\bi}^-$ are $\bS'$-persistent.
The property of $\bS$ follows by an argument of expectation.

%sampling a random set $\bS'$ $\bS\setminus \{\bi\}$ (as a random set of size $\sqrt{n}-1$)
%  and $\bi$ independently, and applies a so-called persistency lemma,
%  which states that when a set of size at most $\sqrt{n}$ is sampled uniformly at random 
%  (such as $\bS\setminus \{\bi\}$), with high probability most of the points in $\{0,1\}^n$
%  are persistent with respect to it.
With properties of both $\bS$ and $\AESearch$ in hand in (\ref{quote:aesearch}) and (\ref{quote:random-set}), as well as the fact that $|\bS \cap \calI| = \Omega(\sqrt{n})$ with high probability, 
we expect to find a bichromatic edge along a variable in $\bS\cap \calI$ after $\sqrt{n}$ executions of $\AESearch$ (since the union of $P_i^+$ and $P_i^-$ for $i\in \bS\cap\calI$ consists of $\Omega(1/\sqrt{n})$-fraction of $\{0,1\}^n$).
Moreover, the variable is (roughly) uniformly over $\bS\cap \calI$
  and (roughly) equally likely to be monotone or anti-monotone.
It follows %trivially 
from the birthday paradox that repeating $\AESearch$ for $O(n^{1/4})\cdot \sqrt{n}$ rounds 
  is enough to find an edge violation.
%So in total we expect to find monotone edges along $n^{1/4}$ different directions in $\bS\cap \calI$,
%  the same can be said about anti-monotone edges, and now birthday paradox applies (since $|\bS\cap \calI|
%  =O(\sqrt{n})$).

%  The subroutine $\AESearch(x, S)$ outputs bi-chromatic edges efficiently while maintaining control over the distribution of bi-chromatic edges produced. There, \cite{CWX17b} show that for \emph{sufficiently many} points $x \in \{0,1\}^n$, if $(x, x^{(i)})$ is a bi-chromatic edge and $i \in \bS$ where $\bS \subset [n]$ is a random set, $\AESearch(x, \bS)$ will output $(x, x^{(i)})$ with high probability after $O(\log n)$ queries. 
The natural question is whether we can make $\bS$ larger (e.g., of size $n^{2/3}$) without breaking property (\ref{quote:random-set}).
This would lead to an $\tilde{O}(n^{2/3})$-query algorithm (for the simplified setting). However, it is no longer true that many random paths of length $\Omega(n^{2/3})$ do not cross bichromatic edges because the influence of variables along variables in $\bS \setminus \calI$ is $\Omega(1/\sqrt{n})$. Therefore, large $\bS$ may not satisfy property (\ref{quote:random-set}) and as a result, $\AESearch$ may never output bichromatic edges along variables in $\bS \cap \calI$.
%However, the persistency lemma no longer works for random sets $\bS$ of size larger than $\sqrt{n}$
%  because~behind~it is a crucial assumption that the total influence of the function is $O(\sqrt{n})$:
%When the total influence is $\omega(\sqrt{n})$, unateness can be tested in $O(\sqrt{n})$ queries
%  non-adaptively  \cite{BCPRS17}.
This limit to sets of size at most $O(\sqrt{n})$
%of the persistency lemma at $O(\sqrt{n})$ 
 was a similar bottleneck in \cite{KMS15}, and 
  the connection between $|\bS|$ and the total influence of $f$ was later explored in \cite{CS18}.
%When $f$ has total influence much bigger than $\sqrt{n}$, there is a 
%  simple algorithm for finding an edge violation that uses $O(\sqrt{n})$ queries.
Indeed, if %the persistency lemma holds~
 (\ref{quote:random-set}) held for $\bS$ of size larger than $\sqrt{n}$,
  then one could improve on the %use it to beat the 
  $O(\sqrt{n})$-query algorithm of \cite{KMS15} for testing
  monotonicity.
Consequently, if one believes that monotonicity testing requires $\Omega(\sqrt{n})$ \emph{adaptive} queries,
  it is natural to conjecture that the algorithm in \cite{CWX17b} is optimal for testing unateness.

The key insight in this work %to overcome this barrier is that, instead of sampling a set $\bS_0$ and 
%  directly using it to run $\AESearch$,
%  we 
is to
\emph{preprocess} the set $\bS$ before using $\AESearch$. %by running binary search on it to remove unwanted
%  high-influence variables before using the set left, say $\bS$, to run $\AESearch$.
For our simplified setting, we first
  % in order to allow for larger sets. 
  sample $\bS_0 \subset [n]$ of size $n^{2/3}$ (much larger than what the analysis~in~\cite{KMS15, CWX17b, CS18} would allow) uniformly at random.
  %, for example, in cases where \cite{CWX17b} sets $|\bS| = \Theta(\sqrt{n})$, we let $|\bS| = \Theta(n^{2/3})$. 
Then, we set $\bS=\bS_0$, and repeat the following steps for $n^{2/3}\cdot\polylog(n)$ many iterations:
\begin{flushleft}\begin{quote}
$\Preprocess$:  Sample $\bx\in \{0,1\}^n$ uniformly at random.
Check if $\bx$ is $\bS$-persistent by drawing $\polylog(n)$ many subsets $\bT\subseteq\bS$ of half of its size
 uniformly at random.
If a $\bT$ with $\smash{f(\bx)\ne f(\bx^{(\bT)})}$ is found, run binary search
  on a random path from $f(\bx)$ to $f(\bx^{(\bT)})$  
  to find a bichromatic edge along variable $\bi$ and remove $\bi$ from $\bS$.
\end{quote}\end{flushleft}
At a high level, the analysis of the algorithm would proceed as follows. 
%It is not difficult to show that 
At the end of $\Preprocess$, for every $i\in \bS$, 
  most points in $\{0,1\}^n$ are $(\bS\setminus \{i\})$-persistent. Otherwise, $\Preprocess$ would %be able to
  remove more variables from $\bS$ %given that it repeats for $|\bS|\cdot \polylog(n)$ iterations
  %and that a
  since 
  points which are not $(\bS\setminus \{i\})$-persistent cannot be very $\bS$-persistent. 
%  and thus, $\bS\setminus \{\bi\}$-persistent as well (if we ignore the loss of a constant factor in 
%  the persistency parameter).
At the same time, most variables in $\bS_0\cap \calI $ at the beginning survive~in~$\bS$~at the end (given that 
  variables in $\calI$ have very low influence).
It may seem that we can now conclude property (\ref{quote:random-set}) holds for $\bS$, and that a violation is found after $O(n^{2/3})$ rounds of $\AESearch(f, \bx, \bS)$ when $\bx$ is uniform.

However, the tricky (and somewhat subtle) problem is that, even though most points in $\{0,1\}^n$ 
  are $(\bS\cap \{i\})$-persistent for every $i\in \bS\cap \calI$, 
  it is not necessarily the case that points \emph{inside} $P_i^+$ and~$P_i^-$ 
  are $(\bS\cap \{i\})$-persistent, since $P_i^+\cup P_i^-$ is only a $O(1/n)$-fraction of the hypercube.
Compared with~the argument from \cite{CWX17b} above for $\sqrt{n}$-sized uniformly random sets, % used earlier to establish the property for $\sqrt{n}$-size sets $\bS$,
%for the $n^{3/4}$-algorithm,
%  the difference is that $\bS$ there is a truly random set (so we can equivalently sample
%  $\bS'$ and $\bi$ independently and set $\bS=\bS'\cup\{\bi\}$),
  %while here 
  after preprocessing $\bS_0$ (which was a uniform random set) with multiple rounds
  of binary search, the set $\bS$ left can be very far from random.
  %will be very far from random. 
More specifically, the set $\bS$ obtained from $\bS_0$
  will heavily depend on the function $f$ and, in principle, a clever adversary %potentially 
  could 
  design a function so that $\Preprocess$ running on $\bS_0$
  deliberately outputs a set $\bS$  
%  look at $\bS_0\cap \calI$ at the beginning
%  and design a function $f$ 
%  such that the set $\bS$ at the end 
  that where points in $P_i^+$ and $P_i^-$ are not $(\bS \setminus \{i\})$-persistent. % for $i\in \bS_0\cap \calI$.
%However, by analyzing the binary search procedure, we show that such an adversary could 
%  only succeed in leading the binary search to a bad set $\bS$ by increasing the influence 
%  of variables for which many edge violations exists. Intuitively, finding an edge violation becomes easier, since bi-chromatic edges along these variables are abundant.  
  %  resulting from the binary search will fail to find violations when used with $\AESearch$.  
   
The main technical challenge is to show that this is not possible when %as long as 
  variables in $\calI$~have low~influence,\footnote{In the simplified setting, each variable $i\in \calI$
  has influence only $O(1/n)$; In the real situation, we need to handle the case even when each
  variable has influence as high as $1/n^{2/3}$.} and 
%, despite of multiple rounds of binary search,
  the desired property for $\bS$ remains valid.
 To this end, we show that for any variable $i$ with low influence, the following two distributions supported on preprocessed sets $\bS$ have small total variation distance. The first distribution samples $\bS_0' \subset [n]$ of size $(n^{2/3}-1)$ and outputs the set $\bS' \cup \{i\}$ obtained from preprocessing $\bS_0'$. The second distribution $\bS_0' \subset [n]$ of size $(n^{2/3}-1)$ and outputs the set $\bS$ obtained from preprocessing $\bS_0' \cup \{i\}$.
  %To this end, we show that for any variable $i$ with low influence (consider it as
 % a variable in $\calI$), the distribution over sets obtained from running $\Preprocess$ on a set $S_0'$
 % of size $n^{2/3}-1$ and 
 % that from running it on $S_0'\cup \{i\}$ have small total variation distance.
Intuitively this means that a low-influence variable $i$ has little impact
  on the result $\bS$ of $\Preprocess$ and thus, %$i$ is oblivious to $\Preprocess$
  $\Preprocess$ is oblivious to $i$ and 
  %and the latter 
  cannot deliberately %avoid 
  exclude $P_i^+$ and $P_i^-$ from the set of $\bS$-persistent points.
%, even when knowing that $i$ is in the set $S_0=S_0'\cup \{i\}$, no adversary can
%  avoid $P_i^+$ and $P_i^-$ since the result of $\Preprocess$ is most likely the same
%  as running $\Preprocess$ without $i$.
%  would lead to the same set $S$ at the end with high probability\footnote{Place holder}.

%\red{Erik: I also think that the following is a bit too technical, and not very useful at this point.}
%This allows us %follow the same argument that $\bS\setminus \{\bi\}$ and $\{\bi\}$
%  to use a similar expectation argument to prove the desired property for $\bS$.
%  can be viewed essentially as two independent objects, from which the desired property of $\bS$ follows.
%  of variables will not change the result $S$ after preprocessing. 

To analyze the total variation distance between the results of running $\Preprocess$ on $\bS_0'$~and 
  $\bS_0'\cup\{i\}$,
we need %a new idea 
to understand how a low-influence variable $i$
  can affect the result~of a binary search on a \emph{long random path} (given that $\Preprocess$
  is just a sequence of calls to binary search). The random paths have length
  $|\bS_0|=n^{2/3}$ at the beginning of $\Preprocess$, and are repeated for $\tilde{O}(n^{2/3})$ rounds. %it is also random because~in $\Preprocess$
%  the starting point of a binary search is always a point $\bx\in \{0,1\}^n$ sampled uniformly at random.
Giving more details, we show that a variable $i$ with influence $\Inf_f[i]$ can affect 
  the result of a binary search on a random path of length $\ell$ with probability
  at most $\log \ell\cdot \Inf_f[i]$, instead of the trivial upper bound of $\ell\cdot \Inf_f[i]$, which is the probability that a variable $i$ affects the evaluation of $f$ on vertices of a random path of length $\ell$.
This is proved in Claim \ref{binaryclaim} (although the formal statement 
  is slightly different since we need to introduce a placeholder when running
  binary search on the set without $i$ so that the two paths have the same length; see Section \ref{sec:binarysearch}).
% 
%At a high level, the above case leads to the following algorithm. We sample $\bS \subset [n]$ and run binary search on $\bS$ to find bi-chromatic edges. For many variables where $\AESearch$ observes violations, the binary search procedure will find bi-chromatic edges. Rather than removing these variables from $\bS$ (as was done during the preprocessing phase), we will look for bi-chromatic edges along variables in $\bS$ in the opposite direction. In this case, we fall back on the \cite{KMS15, CWX17b} analysis and use subsets $\bT \subset \bS$ of size $|\bT| = O(\sqrt{n})$. When we later perform $\AESearch$ with $\bT$, we only need one bi-chromatic edge to find a violation, rather than two bi-chromatic edges along the same variable in opposing directions.

In order to go beyond the assumptions on the function given in this overview, the algorithm needs to deal with more general cases: 
(1) Monotone (or anti-monotone) edges of $\calI$ may not form a matching, but rather, a large and almost-regular bipartite graph whose existence follows from the 
    directed isoperimetric inequality of \cite{KMS15}.
 (2) Although \cite{KMS15} implies the existence of such graphs
    with bichromatic edges from $\calI$,
    there may be more bichromatic edges along $\calI$
    outside of these two graphs, which would raise the influence of these variables to the point where $\Preprocess$ is no longer oblivious of these variables. Intuitively, this implies that bichromatic edges which give rise to edge violations are abundant, so finding them becomes easier. This is handled in Case 2, where we give an algorithm (also based on binary search) which finds many bichromatic edges along these high influence variables, and combine it with the techniques from \cite{CWX17b} to find an edge violation. (3) The set $\calI$ can be much smaller than $n$, in which case, the techniques from \cite{CWX17b} actually achieves better query complexity. We formalize this in Case 3 of the algorithm. 

\subsection{Organization}

We review preliminaries, recall the binary search procedure and review the definition of persistency and
  the $\AESearch$ procedure in Section \ref{sec:pre}. 
We  present the preprocessing procedure~in~Section \ref{sec:preprocess}
  and prove that a low-influence variable has small impact on its output.
We use the directed isoperimetric lemma of \cite{KMS15} to establish a so-called 
  Scores Lemma in Section \ref{sec:score}, which roughly speaking helps us
  understand how good the set $\bS$ is after preprocessing (in terms of 
  using it to run $\AESearch$ to find a bichromatic edge along a certain variable).
We separate our main algorithm into three cases in Section \ref{sec:mainalg},
  depending on different combinations of parameters.
Case 1, 2 and 3 of the algorithm are presented and analyzed in 
  Sections \ref{algsec:case1}, \ref{algsec:case2} and \ref{algsec:case3}, respectively.
Section \ref{sec:highinf} presents a procedure used in Case 2 to find bichromatic
  edges of variables with relatively high influence.

% !TEX root =  main.tex

\newcommand{\Prune}{\mathtt{Preprocess}}
\newcommand{\TestAB}{\mathtt{CheckPersistence}}
\newcommand{\BinarySearch}{\mathtt{BinarySearch}}
\newcommand{\nil}{\mathtt{nil}}
\newcommand{\uu}{\boldsymbol{u}}
\newcommand{\vv}{\boldsymbol{v}}
\newcommand{\Sub}{\mathtt{Sub}\hspace{0.04cm}}
\newcommand{\I}{\mathbf{I}}

\section{Preliminaries}\label{sec:pre}

We will use bold-faced letters such as $\bT$ and $\bx$ to denote random variables. For $n \geq 1$, we write $[n] = \{ 1, \dots, n\}$. In addition, we write $g = \tilde{O}(f)$ to mean $g = O(f \cdot \polylog(f))$ and $g = \tilde{\Omega}(f)$ to mean $g = \Omega(f / \polylog(f))$.

For $x \in \{0,1\}^n$, and a set $S \subset [n]$, we write $x^{(S)} \in \{0,1\}^n$ as the point given by letting $\smash{x_k^{(S)} = x_k}$ for all $k \notin S$, and $\smash{x_k^{(S)} = 1 - x_k}$ for all $k \in S$
  (i.e., $x^{(S)}$ is obtained from $x$ by \emph{flipping} variables in $S$). When $S = \{ i \}$ is a singleton set, we abbreviate $x^{(i)} = x^{(\{i\})}$ and say that $x^{(i)}$ is obtained from $x$ by  {flipping} the $i$th variable. Throughout the paper, we use $n+1$ as the name of a \emph{placeholder} variable (i.e., a dummy variable). If $x \in \{0,1\}^n$ and $S \subseteq [n+1]$, then $x^{(S)}: = x^{(S \setminus \{n+1\})}$, and in particular, $x^{(n+1)} = x$. We will refer to this as \emph{flipping} variable $n+1$ (see Section~\ref{sec:binarysearch}) although no change is made on $x$. For a subset $S \subseteq [n+1]$ and a variable $i \in [n]$, we let $\Sub(S,i) \subseteq [n+1]$ be the subset obtained by \emph{substituting} $n+1$ with $i$ and 
  $i$ with $n+1$. In other words, 
$$
\Sub(S,i)=\begin{cases} S & \text{if $i,n+1\in S$ or $i,n+1\notin S$} \\
  (S\cup\{n+1\})\setminus \{i\} & \text{if $i\in S$ and $n+1\notin S$}\\
  (S\cup\{i\})\setminus \{n+1\} & \text{if $n+1\in S$ and $i\notin S$.}\end{cases}
$$

 We will at times endow $S \subseteq [n+1]$ with an \emph{ordering} $\pi \colon [|S|] \rightarrow S$ which is a bijection indicating that $\pi(i)$ is the $i$th element of $S$ under $\pi$. When $T \subset S$, the ordering $\tau \colon [|T|] \to T$ obtained from $\pi$ is the unique bijection such that for all $i, j \in T$, $\tau^{-1}(i) < \tau^{-1}(j)$ if and only if $\pi^{-1}(i) < \pi^{-1}(j)$. Moreover, when $S \subseteq [n+1]$ and $\pi$ is an ordering of $S$, the ordering $\pi' $ of $\Sub(S, i)$ obtained from $\pi $ is obtained by 
   substituting $n+1$ with $i$ and $i$ with $n+1$ in the ordering,
   i.e., $\pi'(k)=\pi(k)$ when $\pi(k)\notin \{i,n+1\}$, $\pi'(k)=n+1$ if $\pi(k)=i$
   and $\pi'(k)=i$ if $\pi(k)=n+1$.
 %switching $i$ and $n+1$, i.e., letting $\pi' = \iota_{i}(\pi)$, where $\iota_i \colon [n+1] \to [n+1]$ is the identity on every point except for $\iota_i(i) = n+1$ and $\iota_i(n+1)=i$.

Given a Boolean function $f \colon \{0,1\}^n \to \{0,1\}$, and a variable $i \in [n]$, we say that $(x, x^{(i)})$ is a \emph{bichromatic edge} of $f$ along variable $i$ if $f(x) \neq f(x^{(i)})$;
  it is monotone (bichromatic) if $x_i=f(x)$ and 
  anti-monotone (bichromatic) if $x_i\ne f(x)$. The \emph{influence} of variable $i$ in $f$ is defined as $$\Inf_f[i] = \Prx_{\bx \sim \{0,1\}^n}\left[ f(\bx) \neq f(\bx^{(i)})\right],$$ which is twice the number of bichromatic edges of $f$ along $i$ divided by $2^n$. The \emph{total influence} of $f$, $\I_f = \sum_{i\in [n]} \Inf_f[i],$ is twice the number of bichromatic edges of $f$ divided by $2^n$.
Given distributions $\mu_1$ and $\mu_2$ on some sample space $\Omega$, the \emph{total variation distance} between $\mu_1$ and $\mu_2$ is given by $$\dtv(\mu_1, \mu_2) = \max_{S \subseteq \Omega} \big|\mu_1(S) - \mu_2(S) \big|.$$

\subsection{Binary search with a placeholder}\label{sec:binarysearch}

We use the subroutine $\BinarySearch\hspace{0.05cm}(f,x,S,\pi)$ described in Figure \ref{fig:binarysearch},
  where $f\colon\{0,1\}^n\rightarrow \{0,1\}$~is a Boolean function, $x\in \{0,1\}^n$,
  $S$ is a nonempty subset of $[n+1]$, and $\pi$ is an ordering of $S$.

When $S\subseteq [n]$, 
  $\BinarySearch\hspace{0.05cm}(f, x, S, \pi)$ performs as the standard binary search algorithm: % in which
  $x=x_0,x_1,\ldots,x_{|S|}=x^{(S)}$ is~a path from $x$ to $\smash{x^{(S)}}$ in which $x_t$ is obtained 
  from $x_{t-1}$ by flipping variable $\pi(t)\in S\subseteq [n]$, and when $\smash{f(x) \neq f(x^{(S)})}$, the binary search is done along
  this path to find an edge that is bichromatic.
Now in general $S$ may also contain $n+1$, which we use as the~name of a placeholder variable.
Similarly, when $f(x) \neq f(x^{(S)})$, the binary search is done along the path 
  $x=x_0,x_1,\ldots,x_{|S|}=x^{(S)}$ (recall that $x^{(S)}$ is defined as $x^{(S\setminus \{n+1\})}$
  when $S$ contains $n+1$) where
  $x_t$ is obtained from $x_{t-1}$ by flipping variable $\pi(t)$
  (in particular, when $\pi(t) = n+1$, $x_t=x_{t-1}$).  

Note that even though $n+1$ is a placeholder variable, given $S \subseteq [n+1]$ 
  with $n+1 \in S$ and an ordering $\pi$ of $S$, queries made by $\BinarySearch\hspace{0.05cm}(f, x, S, \pi)$ and $\BinarySearch\hspace{0.05cm}(f, x, S \setminus \{n+1\}, \pi')$ (where $\pi'$ is the ordering of
  $S \setminus \{n+1\}$ obtained from $\pi$) are different, so their results may also be different.
%   except that it takes
%  an ordering $\pi$ as part of the input
%  and the search is done along the path from $x$ to $x^{(S)}$ by
%  flipping variables in the order of $\pi(1),\ldots,\pi(|S|)$.
%For some technical reasons (mainly due to the analysis of our algorithm),
%  we also need a subroutine called $\BinarySearch^*\hspace{0.05cm}(f,x,S,\pi)$, where
%  the star in the name means the same algorithm but now implemented with a \emph{placeholder}.
%More specifically, in $\BinarySearch^*$ the last input $\pi$ is an ordering of
%  $S\cup\{n+1\}$, where $n+1$ serves as the name of our placeholder,
%  and the binary search is done along the path $x=x_0,x_1,\ldots,x_{|S|+1}=x^{(S)}$ of length $|S|+1$, where
%  $x_i$ is obtained from $x_{i-1}$ by flipping variable $\pi(i)$ if $\pi(i)\ne n+1$ 
%  and $x_i=x_{i-1}$ if $\pi(i)=n+1$.
We summarize properties of $\BinarySearch$ in the following lemma.

\begin{lemma}
 $\BinarySearch\hspace{0.04cm}(f, x, S, \pi)$ uses $O(\log n)$ queries and satisfies the following property.
If $f(x)=f(x^{(S)})$, it returns $\nil$; if $f(x)\ne f(x^{(S)})$, it returns
  a variable $i\in S \setminus \{n+1\}$ and a point $y \in \{0,1\}^n$ along the path from
  $x$ to $x^{(S)}$ with ordering $\pi$ such that $(y, y^{(i)})$ is bichromatic.
\end{lemma}

\begin{figure}
\begin{framed}
\noindent Subroutine $\BinarySearch\hspace{0.05cm}(f,x, S, \pi)$
\begin{flushleft}
\noindent {\bf Input:} Query access to $f \colon \{0, 1\}^n \to \{0, 1\}$,  a point 
  $x \in \{0,1\}^n$, a nonempty set $S \subseteq [n+1]$ and an ordering $\pi$ of $S$.\\
{\bf Output:} Either $i \in S$ and a point $y \in \{0,1\}^n$ where $(y, y^{(i)})$ is a bichromatic edge, or $\nil$.

\begin{enumerate}
\item Query $f(x)$ and $f(x^{(S)})$ and return $\nil$ if $f(x)=f(x^{(S)})$.
\item Let $m=|S|$ and $x=x_0,x_1,\ldots,x_{m}=x^{(S)}$ be the sequence of points obtained from $x$ by flipping variables in the order of $\pi(1),\ldots,\pi(m)$: $\smash{x_{i} = x_{i-1}^{(\pi(i))}}$. Let $\ell=0$ and $r=m$.
\item While $r-\ell>1$ do
\item \ \ \ \ \ \ \ \ Let $t=\lceil (\ell+r)/2 \rceil$ and query $f(x_t)$. If $f(x_\ell)\ne f(x_t)$ set $r=t$; otherwise set $\ell=t$.\
\item Return $\pi(r)$ and $y = x_{\ell}$.
  %\item Let $m = |S|$ and $\mu = \lceil \frac{m}{2} \rceil$. If $m = 1$, then return $i$ where $S = \{ i \}$ and $x$.
%\item If $m > 1$, let $S_{\pi, \mu} = \{ \pi(1), \dots, \pi(\mu) \} \subset S$ be the set containing the first $\mu$ elements of $S$ according to order $\pi$, and let $z = x^{(S_{\pi, \mu})} \in \{0,1\}^n$ be the \emph{middle point} on the path from $x$ to $x^{(S)}$ traversing edges of $S$ in order $\pi$. Query $f(z)$.
%\item If $f(z) \neq f(x)$, we consider the permutation $\pi' \colon [\mu] \to S_{\pi, \mu}$ given by $\pi'(k) = \pi(k)$ for all $k \in [\mu]$. We output $\BinarySearch(x, S_{\pi, \mu}, \pi')$.
%\item If $f(z) = f(x)$, we consider the set $S' = S \setminus S_{\pi, \mu}$, and the permutation $\pi' \colon [m - \mu] \to S'$ given by letting $\pi'(k) = \pi(\mu + k)$ for all $k \in [m - \mu]$. We output $\BinarySearch(x, S', \pi')$.
\end{enumerate}
\end{flushleft}\vskip -0.14in
\end{framed}\vspace{-0.2cm}
\caption{Description of the binary search subroutine for finding a bichromatic edge.} \label{fig:binarysearch}
\end{figure}
\subsection{Persistency with respect to a set of variables}

We need the following notion of persistency for points and edges with respect to a set of variables.

\begin{definition}\label{def:persistency}
Given a Boolean function $f\colon\{0,1\}^n\rightarrow \{0,1\}$, a set $S\subseteq [n+1]$ of variables and~a point $x\in \{0,1\}^n$, we say that $x$ is \emph{$S$-persistent} if the following two conditions hold:%\footnote{Xi: I changed the size in the second equation to $+1$ and ceilings to floors so that it is well defined even 
%  when $|S|=0$.}
% for a $\left(1 - o\left(\frac{\xi}{\log n}\right)\right)$-fraction of $x \in \{0,1\}^n$,
\[ \Prx_{\substack{\ \\ \bT \subseteq S \\ |\bT| = \left\lfloor |S|/2\right\rfloor}}\left[ f(x) = f(x^{(\bT)})\right] \geq 1 - \frac{1}{\log^2 n} \quad\text{and}\quad \Prx_{\substack{\bT \subseteq S \\ |\bT| = \lfloor |S|/2\rfloor+1}}\left[ f(x) = f(x^{(\bT)})\right] \geq 1 - \frac{1}{\log^2 n} . \]
where $\bT$ is a subset of $S$ of certain size drawn uniformly at random.
%\footnote{Erik: this is so minor, I'm somewhat embarrassed to write it: do we want to use $\subset$ or $\subseteq$?}
Note that when $S=\emptyset$, every point in $\{0,1\}^n$ is trivially $S$-persistent.
%\end{definition}
%\begin{definition}[Persistent Edges]

Let $e$ be an edge in $\{0,1\}^n$.
We say that $e$ is \emph{$S$-persistent} if both points of $e$ are $S$-persistent.
 %$=(x,x^{(i)})$ be a bichromatic edge of $f$ along coordinate $i\in [n]$ and$ be 
%  let $S\subseteq [n]\setminus \{i\}$. We say that $e$ is $S$-\emph{persistent} if
%  both points $x$ and $x^{(i)}$ are $S$-persistent.
%(Note that we will only be interested in the persistency of bichromatic edges, and 
%  the $S$-persistency of such an edge
%  along coordinate $i\in [n]$ is defined over sets $S\subseteq [n]\setminus \{i\}$ only.) 
%A bi-chromatic edge $(x, x^{(i)})$ of $f$ and a set $S \subset [n]$ is said to be a \emph{good pair} if:
%\begin{itemize}
%\item $f(x) = f(x^{(S)})$ and $f(x^{(i)}) = f(x^{(i \bigtriangleup S)})$\footnote{we note that $i \bigtriangleup S$ is the \emph{symmetric difference} of $\{ i \}$ and $S$, i.e., $i \bigtriangleup S$ is $S \cup \{ i \}$ if $i \notin S$, and $S \setminus \{i \}$ if $i \in S$.}, and
%\item We have that:
%\[ \Prx_{\substack{\bT \subset S \\ |\bT| = \lceil \frac{|S|}{2} \rceil}}\left[ f(x) = f(x^{(\bT)})\right] \geq 1 - \frac{1}{\log^2 n} \qquad\text{and}\qquad \Prx_{\substack{\bT \subset S \\ |\bT| = \lceil \frac{|S|}{2} \rceil-1}}\left[ f(x^{(i)}) = f(x^{(i \bigtriangleup \bT)})\right] \geq 1 - \frac{1}{\log^2 n}. \]
%\end{itemize}
\end{definition}

%One should consider good pairs as capturing the notion of \emph{persistence} from \cite{KMS15}, and \emph{solid} edges in \cite{CWX17b}. 
The notion of persistency above is useful because it can be used to formulate a clean sufficient condition for $\AESearch\hspace{0.05cm}(f,x,S)$ to find a bichromatic edge $(x,x^{(i)})$ 
  for some $i\in S$ with high~probability.
This is captured in Lemma \ref{lem:aesearch} (see Lemma~6.5 in \cite{CWX17b}) below.
For completeness we include the description of $\AESearch$ \cite{CWX17b} 
  and the proof of Lemma \ref{lem:aesearch} in Appendix 
%For completeness, we present Lemma~\ref{lem:aesearch} next (which corresponds to Lemma~6.5 in \cite{CWX17b}), and defer the proof to Section 
  \ref{sec:aes}.

\begin{lemma}\label{lem:aesearch}
Given a point $x\in \{0,1\}^n$ and a set $S\subseteq [n+1]$,
$\AESearch\hspace{0.05cm}(f,x,S)$ makes $O(\log n)$ queries to $f$, and returns either an
  $i\in S$ such that $(x,x^{(i)})$ is a bichromatic edge, or ``fail.''
  
Let $(x, x^{(i)})$ be a bichromatic edge of $f$ along $i\in [n]$. % and $\red{S \subseteq [n+1]}$ with $i\in S$. % \setminus \{i \}$.
If $i\in S$ and $(x,x^{(i)})$ is $(S\setminus \{i\})$-persistent, then 
%which forms a good pair with $(x, x^{(i)})$. Then, 
  both $\AESearch\hspace{0.05cm}(f,x, S )$ and $\AESearch\hspace{0.05cm}(f,x^{(i)}, S )$ 
  output $i$ with probability at least $2/3$.
\end{lemma}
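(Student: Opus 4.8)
The plan is to recall the precise definition of the $\AESearch$ subroutine from Figure~\ref{fig:edge-search} and verify the three assertions in turn. Recall that $\AESearch(f,x,S)$ runs a binary search that maintains a candidate subset of $S$, starting from $S$ itself and halving it $O(\log|S|)$ times; at each halving step it makes $O(1)$ queries to $f$ at a probe point $x^{(\bT)}$, where $\bT$ is a set of size roughly $|S|/2$ chosen so that $i$ is meant to lie in $\bT$ if and only if $i$ lies in the half of the candidate block that is being flipped, and it uses the answer to decide which half to keep; once the candidate set is a single element $i$, it queries $f(x^{(i)})$ and returns $i$ if $f(x^{(i)})\ne f(x)$ and ``fail'' otherwise. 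The query bound $O(\log n)$ is immediate ($O(\log|S|)$ rounds, $O(1)$ queries each), and so is soundness: the subroutine returns a variable $i$ only after explicitly checking that $(x,x^{(i)})$ is bichromatic, so it never outputs an invalid answer.

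For the main assertion, fix a bichromatic edge $(x,x^{(i)})$ with $i\in S$ such that $x$ and $x^{(i)}$ are both $(S\setminus\{i\})$-persistent, and consider the \emph{ideal} run of $\AESearch(f,x,S)$ in which every halving step correctly decides which half contains $i$; this run ends with candidate set $\{i\}$ and outputs $i$. It suffices to show that with probability at least $2/3$ the actual run agrees with the ideal run at each of its $O(\log|S|)$ halving steps. At such a step, conditioned on agreement so far, the only thing revealed about $\AESearch$'s internal randomness is that $i$ lies in the current candidate block; one then checks from the definition of $\AESearch$ that the probe set $\bT$ is distributed so that either $i\in\bT$ and $\bT\setminus\{i\}$ is uniform among subsets of $S\setminus\{i\}$ of a size for which Definition~\ref{def:persistency} guarantees $(S\setminus\{i\})$-persistency of $x^{(i)}$, or $i\notin\bT$ and $\bT$ itself is uniform among subsets of $S\setminus\{i\}$ of such a size. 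In the first case $f(x^{(\bT)})=f\big((x^{(i)})^{(\bT\setminus\{i\})}\big)=f(x^{(i)})$ except with probability $1/\log^2 n$, and in the second $f(x^{(\bT)})=f(x)$ except with probability $1/\log^2 n$; either way the actual and ideal runs agree at this step except with probability $1/\log^2 n$.

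A union bound over the $O(\log|S|)=O(\log n)$ halving steps shows that the actual run equals the ideal run, hence outputs $i$, with probability at least $1-O(1/\log n)\ge 2/3$ for $n$ large enough (and for small $n$ one simply queries all of $\{x^{(j)}:j\in S\}$ directly). The claim for $\AESearch(f,x^{(i)},S)$ follows from the identical argument, since the hypothesis is symmetric under swapping $x\leftrightarrow x^{(i)}$: the edge $(x,x^{(i)})$ equals $(x^{(i)},(x^{(i)})^{(i)})$, $i\in S$, and both endpoints are $(S\setminus\{i\})$-persistent.

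The step I expect to be the main obstacle is the distributional claim about the probe sets $\bT$: one must follow exactly how $\AESearch$ chooses its internal randomness and completes a binary-search prefix to a set of size about $|S|/2$, and check that, after conditioning on the history (``$i$ is in the current candidate block'') and deleting $i$ where appropriate, $\bT$ is genuinely uniform over subsets of $S\setminus\{i\}$ of one of the two sizes appearing in Definition~\ref{def:persistency}. This is exactly where the two conditions in the definition of persistency get used — one handling the case $i\in\bT$ and the other the case $i\notin\bT$ — and it is the only point at which parity issues require care; everything else reduces to the routine union bound above.
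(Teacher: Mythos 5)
There is a genuine gap: you are analyzing a different algorithm from the one the lemma is about. The subroutine $\AESearch$ of Figure~\ref{fig:edge-search} is not an adaptive halving/binary-search procedure. It draws $L=\lceil 4\log n\rceil$ subsets $\bT_1,\dots,\bT_L\subseteq S$ of size $t=\lfloor(|S|-1)/2\rfloor+1$ \emph{independently and uniformly at random}, queries $f(x^{(\bT_\ell)})$ for each $\ell$, forms $\bC=\bigcap_{\ell:\hspace{0.02cm}\bb_\ell\neq f(x)}\bT_\ell$, and outputs the unique element of $\bC$ if $\bC$ is a singleton whose flip is bichromatic. Consequently the correctness argument has a different shape from yours: one must show (a) that $i\in\bC$ with probability $1-o(1)$, which uses persistency of $x^{(i)}$ to guarantee that some $\bT_\ell$ containing $i$ flips the value (each such set does so with probability about $1/2-o(1)$, since conditioned on $i\in\bT$ the set $\bT\setminus\{i\}$ is uniform of size $t-1$ in $S\setminus\{i\}$) and persistency of $x$ to rule out value-flipping sets with $i\notin\bT_\ell$; and (b) a \emph{uniqueness} step absent from your argument, namely that no other $j\in S$ survives the intersection, which follows because a single uniform $\bT$ has probability at least roughly $1/4$ of containing $i$, avoiding $j$, and flipping the value, so $j\in\bC$ only with probability $(3/4+o(1))^{L}=o(1/n)$.

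Beyond the mismatch of algorithms, the step you yourself flag as the main obstacle would in fact fail for the adaptive halving scheme you describe: Definition~\ref{def:persistency} only controls the behavior of $f(x^{(\bT)})$ for $\bT$ drawn \emph{uniformly} from $S\setminus\{i\}$ at the two sizes $\lfloor|S|/2\rfloor$ and $\lfloor|S|/2\rfloor+1$, whereas in an adaptive binary search the probe set at a given round is determined in part by earlier answers (and hence by values of $f$), so after conditioning on the history it is not uniform of those sizes, and no union bound over rounds can be justified from persistency alone. This adaptivity problem is exactly why $\AESearch$ was designed in \cite{CWX17b} to draw all its probe sets non-adaptively; your proposal therefore needs to be redone against the actual procedure, along the lines of the two claims sketched above.
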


\red{Lemma \ref{lem:aesearch} has the following immediate corollary.
\begin{corollary}\label{AESearchCorollary}
Given a set $S\subseteq [n+1]$ and a point $x\in \{0,1\}^n$,
  there exists at most one variable $i \in S$ such that $(x,x^{(i)})$ is both bichromatic and $(S \setminus \{i\})$-persistent.
\end{corollary}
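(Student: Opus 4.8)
The plan is to derive this immediately from the guarantee on $\AESearch$ in Lemma~\ref{lem:aesearch}, by a simple probabilistic pigeonhole argument. Suppose toward a contradiction that there exist two distinct variables $i,j\in S$ such that $(x,x^{(i)})$ is bichromatic and $(S\setminus\{i\})$-persistent, and $(x,x^{(j)})$ is bichromatic and $(S\setminus\{j\})$-persistent.

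Apply Lemma~\ref{lem:aesearch} twice to the \emph{same} point $x$ and the \emph{same} set $S$: once with the bichromatic edge $(x,x^{(i)})$, which gives that $\AESearch\hspace{0.05cm}(f,x,S)$ outputs $i$ with probability at least $2/3$; and once with the bichromatic edge $(x,x^{(j)})$, which gives that $\AESearch\hspace{0.05cm}(f,x,S)$ outputs $j$ with probability at least $2/3$. The key observation is that $\AESearch\hspace{0.05cm}(f,x,S)$ is a single randomized procedure whose output is either one variable in $S$ or ``fail'', so the events $\{\AESearch\hspace{0.05cm}(f,x,S)=i\}$ and $\{\AESearch\hspace{0.05cm}(f,x,S)=j\}$ are disjoint when $i\ne j$. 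Hence their probabilities sum to at most $1$, contradicting $2/3+2/3>1$. Therefore at most one such variable can exist.

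I do not expect any real obstacle here; the only point that needs to be stated carefully is that $\AESearch$ returns at most one variable, which is exactly what makes the two output events mutually exclusive. (One could alternatively phrase this without invoking $\AESearch$ at all, by arguing directly from the persistency definitions about how a single binary search / sampling of $\bT\subseteq S$ can ``detect'' at most one coordinate, but routing through Lemma~\ref{lem:aesearch} is the cleanest.)
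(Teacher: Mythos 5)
Your proposal is correct and follows exactly the paper's argument: the paper likewise invokes Lemma~\ref{lem:aesearch} for both $i$ and $j$ and derives a contradiction from $\AESearch\hspace{0.05cm}(f,x,S)$ having to output each with probability at least $2/3$. Your write-up just makes explicit the disjointness of the two output events, which the paper leaves implicit.
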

\begin{proof}
%Since Lemma~\ref{lem:aesearch} shows that $\AESearch(x, S\cup \{i\})$ and $\AESearch(x^{(i)}, S \cup \{i\})$ outputs $i$ with probability at least $\frac{2}{3}$ when $(x, x^{(i)})$ is $S$-persistent, for each vertex $x \in \{0,1\}^n$ and $S' \subset [n]$, there exists at most one index $i \in S'$ where $x$ is $S' \setminus \{ i\}$-persistent. 
If the condition holds for both $i\ne j\in S$,
  then from Lemma \ref{lem:aesearch} $\AESearch\hspace{0.05cm}(f,x,S)$ would return both $i$ and $j$ with probability at least $2/3$, a contradiction.
\end{proof}}

\section{Preprocessing Variables}\label{sec:preprocess}

%Part of our algorithm will work on the following extension 
Our goal in this section is to present a preprocessing procedure called $\Prune$.
% as 
%  well as its version $\Prune^*$ with a placeholder.
%The former will be used in our main algorithm while the latter will be used
%  only for the analysis.
%For some technical reasons, $\Prune$ works on the following extension $f^*$ of 
 % the input function $f$
%  by adding a dummy variable: 
%Given a Boolean function $f:\{0,1\}^n\rightarrow \{0,1\}$, 
%  we use $f^*:\{0,1\}^{n+1}\rightarrow \{0,1\}$ to denote the function
%  with $$f^*(x)=f(x_1,\ldots,x_n),\quad \text{for all $x\in \{0,1\}^{n+1}$.}$$
%It is clear that the function $f^*$ is easy to simulate given query access to $f$.
Given query~access to a Boolean function
  $f\colon\{0,1\}^n\rightarrow \{0,1\}$, %a function $f:\{0,1\}^n\rightarrow \{0,1\}$, 
  a nonempty set $S_0\subseteq [n+1]$ (again, $n+1$ serves here as a placeholder variable), an ordering $\pi$ of $S_0$ and a parameter $\xi\in (0,1)$,
  $\Prune\hspace{0.03cm}(f ,S_0,\pi,\xi)$ makes $(|S_0|/\xi)\cdot \text{polylog}(n)$  queries
  and returns a subset $\bS$ of $S_0$.
At a high level, $\Preprocess$ keeps running $\BinarySearch$ to
  remove variables from $S_0$ until the set $\bS\subseteq S_0$ left 
  satisfies that at least $(1-\xi)$-fraction of points in $\{0,1\}^n$
  are~$\bS$-persistent (recall Definition \ref{def:persistency}).

In addition to proving the above property for $\Prune$ in Lemma \ref{lem:prune2},
  we show in Lemma~\ref{lem:main} the following: When $i\in S_0\subseteq [n]$ has low influence,
  then the result of running $\Prune$ on $S_0$ is \emph{close} (see Lemma \ref{lem:prune2}
  for the formal statement) to
  that of running it on $\Sub(S_0,i)$ (in which we substitute $i$
  with the placeholder variable $n+1$). % in terms of their total variation distance. 

\subsection{The preprocessing procedure}\label{sec:prune}

\begin{figure}
\begin{framed}
\noindent Subroutine $\TestAB\hspace{0.05cm}(f,S, \pi,\xi)$
\begin{flushleft}
\noindent {\bf Input:} Query access to $f \colon \{0, 1\}^n \to \{0, 1\}$, a nonempty set $S \subseteq [n+1]$, 
an ordering $\pi$ of $S$ and a parameter $\xi\in (0,1)$. \\
\noindent {\bf Output:} Either $\nil$ or a variable $i \in S$. 
%along with a point $y \in \{0,1\}^n$ where $f(y) \neq f(y^{(i)})$.
\begin{enumerate}
\item Repeat the following steps ${\log^4 n}/{\xi}$ many times:
\begin{enumerate}
\item Sample a point $\bx$ from $\{0,1\}^n$ uniformly at random.\vspace{0.06cm}
%\item Query $f(\bx)$ and $f(\bx^{(S)})$. If $f(\bx) \neq f(\bx^{(S)})$, output $\BinarySearch(\bx, S, \pi)$.
\item %Repeat the following $\log^4 n$ times:
Flip a fair coin and perform one of the following tasks:\vspace{0.17cm}
\begin{itemize}
\item Sample $\bT \subseteq S$ with size $\lfloor  |S|/2 \rfloor$ uniformly.
  Run $\BinarySearch\hspace{0.03cm}(f,\bx,\bT, \bpi')$ where $\bpi'$ is
  the ordering of $\bT$ defined by $\pi$ restricted on $\bT$. If $\BinarySearch(f, \bx ,\bT, \bpi')$ returns a variable $i$ and a point $y$, output $i$.\vspace{0.2cm}
 %  and $|\bT| = \lceil \frac{|S|}{2} \rceil - 1$) uniformly at random.
\item Sample $\bT \subseteq S$ with size $\lfloor  |S|/2 \rfloor+1$ uniformly.
  Run $\BinarySearch\hspace{0.03cm}(f,\bx,\bT, \bpi')$ where $\bpi'$ is
  the ordering of $\bT$ defined by $\pi$ restricted on $\bT$. If $\BinarySearch(f, \bx ,\bT, \bpi')$ returns a variable $i$ and a point $y$, output $i$.

%Query $f(\bx)$ and $f(\bx^{(\bT)})$. If $f(\bx) \neq f(\bx^{(\bT)})$, output $\BinarySearch(\bx, \bT, \pi')$, where $\pi'\colon [|\bT|] \to \bT$ is the order defined by $\pi$ restricted on $\bT$. 
\end{itemize} 
\end{enumerate}
\item If $\BinarySearch$ always returned $\nil$, output $\nil$.
\end{enumerate}
\end{flushleft}\vskip -0.14in
\end{framed}\vspace{-0.2cm}
\caption{Description of the  subroutine $\TestAB$.} \label{fig:test-ab}
% for determining if at least $(1-\xi)$-fraction\newline
%  points in $\{0,1\}^n$ are $S$-persistent.} \label{fig:test-ab}
\end{figure}

\begin{figure}
\begin{framed}
\noindent Procedure $\Prune\hspace{0.05cm}(f,S_0, \pi,\xi)$
\begin{flushleft}
\noindent {\bf Input:} Query access to $f \colon \{0, 1\}^n \to \{0, 1\}$, a nonempty set $S_0 \subseteq [n+1]$, an
  ordering $\pi$ of $S_0$ and a parameter $\xi\in (0,1)$. \\
{\bf Output:} A subset $S \subseteq S_0$.

\begin{enumerate}
\item Initially, let $S= S_0$ and $\tau=\pi$.
\item While $S$ is nonempty do
\item \ \ \ \ \ \ \ \ Run $\TestAB\hspace{0.05cm}(f,S,\tau,\xi)$.
\item \ \ \ \ \ \ \ \ If it returns $\nil$, return $S$; otherwise (it returns an $i\in S$), 
  remove $i$ from $S$ and $\tau$.
  % $S=S \setminus \{ i \}$
  %and $\pi$ to be the ordering of $S$ after removing $i$ from $\pi$.
\item Return $S$ (which must be the empty set to reach this line).
\end{enumerate}
\end{flushleft}\vskip -0.14in
\end{framed}\vspace{-0.2cm}
\caption{Description of the procedure $\Prune$ for preprocessing a set of variables.} \label{fig:prune}
\end{figure}

The procedure $\Prune\hspace{0.05cm}(f,S_0,\pi,\xi)$ is described in Figure \ref{fig:prune}.
It uses a subroutine $\TestAB\hspace{0.05cm}(f,S,\pi,\xi)$ described in Figure \ref{fig:test-ab}.
Roughly speaking, $\TestAB$ checks if at least $(1-\xi)$-fraction of points in $\{0,1\}^n$
  are $S$-persistent for the current set $S$. 
This is done by sampling points $\bx$ and subsets $\bT$ of $S$ of the right sizes
  uniformly at random, and checking if $\smash{f(\bx)=f(\bx^{(\bT)})}$, for $\log^4 n/\xi$ many rounds.
%$\log^3 n/\xi$ many uniformly random points $\bx$ and checking, using random subsets $\bT$ of $S$, whether each of these points $\bx$ is $S$-persistent.
If $\TestAB$ finds $\bx$ and $\bT$ such that $\smash{f(\bx)\ne f(\bx^{(\bT)})}$,
  it runs binary search on them to find a bichromatic edge along some variable $i\in S$ and 
  outputs $i$;
  otherwise it returns $\nil$.
%If at least $(1 - \xi)$-fraction of them are $S$-persistent, $\TestAB$ will output $\nil$; otherwise, it will find a bichromatic edge along some variable $i \in S$ and output $i$.
%iteratively ``test'', whether the subset has properties A and B. Whenever a subset fails to have the desired %properties, the algorithms will \emph{prune} the sets via query-efficient binary search procedures. 

The main property we prove for $\TestAB$ (see Lemma~\ref{lem:test-lemma} in Appendix \ref{proof:pruneanalysis})
  is that when the fraction of points that are not $S$-persistent is at least $\xi$,
  it returns a variable $i\in S$ 
  %(meaning that it has found a bichromatic edge along direction $i$) 
  with high probability. 
  
The procedure $\Prune\hspace{0.05cm}(f,S_0,\pi,\xi)$ sets $S=S_0$ and $\tau=\pi$
  at the beginning and keeps calling $\TestAB \hspace{0.05cm}(f,S,\tau,\xi)$ and removing
  the variable $\TestAB(f, S, \tau, \xi)$ returns from both $S$ and the ordering $\tau$, until $\TestAB$ returns $\nil$ 
  or $S$ becomes empty in which case $\Prune$ terminates and returns  $S$.
As a result, $\Prune$ makes at most $|S_0|$ calls to $\TestAB$.
%After each call it removes the variable returned by $\TestAB$ from $S$ so 
%  the number of calls to $\TestAB$ is at most $|S|$.
%$\Prune$ terminates and returns the $S$ when $\TestAB\hspace{0.05cm}(f,S,\pi,\xi)$
%  returns $\nil$ (meaning that $\TestAB$ fails to find a bichromatic edge).
Using the property of $\TestAB$ from Lemma~\ref{lem:test-lemma}, it is unlikely that $\xi$-fraction of points are not $S$-persistent
  but somehow $\TestAB\hspace{0.05cm}(f,S,\tau,\xi)$ returns $\nil$.
This implies that at least $(1-\xi)$-fraction of $\{0,1\}^n$ are $\bS$-persistent 
  for $\bS=\Prune\hspace{0.05cm}(f,S_0,\pi,\xi)$ at the end with high probability.

We summarize our discussion above in the following lemma but delay its proof
  %(as well as the proof of the property of $\TestAB$) 
  to Appendix \ref{proof:pruneanalysis} since it follows from standard applications of Chernoff bounds and union bounds.

\begin{lemma}\label{lem:prune2}
Given a Boolean function $f\colon\{0,1\}^n\rightarrow \{0,1\}$, 
  a nonempty $S_0 \subseteq [n+1]$, an ordering $\pi$ of $S_0$
  and a parameter $\xi\in (0,1)$, $\Prune\hspace{0.05cm}(f,S_0, \pi,\xi)$ makes
  at most  $O(|S_0|\hspace{0.05cm}{\log^\red{5} n}/{\xi})$  queries to $f$ and with probability at least $1 - \exp\left(-\Omega(\log^2 n)\right)$,
  it outputs a subset $\bS \subseteq S_0$ %and a set $\bB$ of bichromatic edges 
  such that %(1) $\bB$ contains a bichromatic edge along coordinate $i$ for each $i\in S_0\setminus \bS$ and (2)
   at least $(1-\xi)$-fraction of points in $\{0,1\}^n$ are $\bS$-persistent. 
\end{lemma}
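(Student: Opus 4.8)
The plan is to bound the query complexity directly from the structure of the procedure, then establish the persistency guarantee by controlling the failure probability of the final call to $\TestAB$. For the query bound: $\Prune$ makes at most $|S_0|$ calls to $\TestAB$ (each call either removes a variable or terminates the loop), and each call to $\TestAB$ runs $\log^4 n / \xi$ iterations, each of which performs one $\BinarySearch$ costing $O(\log n)$ queries. Multiplying gives $O(|S_0| \cdot \log^5 n / \xi)$, which is the claimed bound. This part is routine bookkeeping.

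For the persistency guarantee, the key is the (deferred) Lemma~\ref{lem:test-lemma}: whenever more than a $\xi$-fraction of points in $\{0,1\}^n$ fail to be $S$-persistent, $\TestAB(f,S,\tau,\xi)$ returns some variable $i \in S$ with probability at least $1 - \exp(-\Omega(\log^2 n))$. I would argue as follows. Consider the sequence of sets $S = S_0 \supsetneq S_1 \supsetneq \cdots$ produced during the loop; there are at most $|S_0| \le n+1$ of them. The procedure returns a set $\bS$ only when the corresponding call to $\TestAB$ returns $\nil$ (or when $\bS = \emptyset$, in which case the persistency condition is vacuous since every point is $\emptyset$-persistent). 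Define the bad event as: the returned set $\bS$ is nonempty and strictly more than a $\xi$-fraction of points are not $\bS$-persistent. If this bad event occurs, then the final call $\TestAB(f,\bS,\cdot,\xi)$ returned $\nil$ despite the $\xi$-fraction condition being violated — an event of probability at most $\exp(-\Omega(\log^2 n))$ by Lemma~\ref{lem:test-lemma}. Since there are at most $n+1$ candidate sets that could be the final one, a union bound over all of them shows the bad event has probability at most $(n+1)\exp(-\Omega(\log^2 n)) = \exp(-\Omega(\log^2 n))$, absorbing the polynomial factor into the exponent.

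The main obstacle — and the reason Lemma~\ref{lem:test-lemma} itself is deferred — is the concentration argument underlying $\TestAB$. One must show that if the fraction of non-$S$-persistent points is at least $\xi$, then a single iteration of the inner loop (sample $\bx$, sample $\bT$ of size $\lfloor |S|/2 \rfloor$ or $\lfloor |S|/2\rfloor + 1$, run $\BinarySearch$) finds a bichromatic edge with probability $\Omega(\xi / \log^2 n)$: roughly $\Omega(\xi)$ chance that $\bx$ is non-persistent (in the relevant half-size sense), and conditioned on that, $\Omega(1/\log^2 n)$ chance over $\bT$ that $f(\bx) \ne f(\bx^{(\bT)})$, in which case $\BinarySearch$ deterministically returns an edge. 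Over $\log^4 n / \xi$ independent iterations this boosts to $1 - (1 - \Omega(\xi/\log^2 n))^{\log^4 n/\xi} = 1 - \exp(-\Omega(\log^2 n))$. Some care is needed because the definition of $S$-persistency involves \emph{both} sizes $\lfloor |S|/2 \rfloor$ and $\lfloor |S|/2 \rfloor + 1$, so ``not $S$-persistent'' means at least one of the two conditions fails, which is exactly why $\TestAB$ flips a fair coin to test both cases. Once Lemma~\ref{lem:test-lemma} is in hand, the argument above closes the proof of Lemma~\ref{lem:prune2}.
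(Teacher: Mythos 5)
Your proposal is correct and follows essentially the same route as the paper: the query bound comes from counting at most $|S_0|$ calls to $\TestAB$, each costing $O(\log^5 n/\xi)$ queries, and the persistency guarantee comes from the deferred claim that $\TestAB$ detects a non-persistent set with probability $1-\exp(-\Omega(\log^2 n))$, combined with a union bound over the at most $|S_0|\le n$ calls. Your sketch of the inner concentration argument (a single iteration succeeds with probability $\Omega(\xi/\log^2 n)$, boosted over $\log^4 n/\xi$ iterations) also matches the paper's proof of that claim.
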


\subsection{Low influence variables have low impact on $\Preprocess$}\label{sec:prune2}

%We would like to show that if $i\in S$ has low influence (i.e. $\Inf_f(i)\ll \xi/|S|$),
%  then $\Prune\hspace{0.05cm}(f,S,\pi,\xi)$ contains $i$ with high probability.
%For some technical reason 
%We introduce the following extension $f^*$ of
%  $f$ by adding a dummy variable $n+1$:
%We note that since $n+1$ is a dummy variable, %the distribution of
%  $\Prune\hspace{0.05cm}(f^*,S_0,\pi,\xi)$ always contains $n+1$ if $n+1\in S_0$
%  at the beginning.
%  $\Prune\hspace{0.05cm}(f,S,\pi,\xi)$ we are interested in is the same as 
%  that of $\Prune\hspace{0.05cm}(f^*,S,\pi,\xi)$ when $S\subseteq [n]$.
%On the other hand, if $S'$ is a subset of $[n+1]$ that contains $n+1$, then
%  we always have $n+1\in \Prune\hspace{0.05cm}(f^*,S',\pi',\xi)$ for some ordering of $\pi'$. 

In the rest of the section, we show that when $S_0\subseteq [n]$,
  \red{a variable $i\in S_0$ with 
  low influence $\Inf_f[i]$ has low impact on the result
  of $\bS= \Prune\hspace{0.03cm}(f,S_0, \pi,\xi)$. % with high probability.
More formally, we show that one can substitute $i$ by the placeholder $n+1$ and 
  the result of running $\Prune$ on $\Sub(S_0,i)$ is~almost the same (after substituting
  $n+1$ back to $i$ in the result of $\Prune$).}

This is made more precise in the following lemma:
%For this purpose, we prove the following lemma and note that when $n+1\in S_0$, we always
%  have $n+1\in \Prune\hspace{0.05cm}(f,S_0, \pi,\xi)$.
%As a result, the following lemma implies that when $i\in S_0$, $n+1\notin S_0$ and 
%  $\Inf_f(i)$ is low, variable $i$ survives in 
%  $\Prune\hspace{0.05cm}(f^*,S_0,\pi,\xi)$ with high probability. 

%It follows from these two observations that the following lemma achieves our goal:
 
\begin{lemma}\label{lem:main}
Let $f\colon\{0,1\}^n\rightarrow \{0,1\}$ be a Boolean function. % and let $f^*:\{0,1\}^{n+1}\rightarrow \{0,1\}$ 
Let $i\in S_0\subseteq [n]$, % be a subset of $[n]$ that contains $i$,
%Let $S\subseteq [n]\setminus \{i\}$.
  $\pi$ be an ordering~of~$S $ and $\xi\in (0,1)$. 
Let $S_0' = \Sub(S_0, i)$ be the subset of $[n+1]$ 
  and~let $\pi'$ be the ordering of $S_0'$ obtained from $\pi$ by
  substituting $i$ with $n+1$.  
Then we have
$$
\dtv\Big(\Prune\hspace{0.05cm}(f,S_0,\pi,\xi),\hspace{0.03cm}\Sub\big(\Prune\hspace{0.05cm}(f ,S_0',\pi',\xi),i\big)\Big)
\le O\left(\frac{|S_0|\hspace{0.01cm}\log^\red{5} n}{\xi}\right)\cdot \Inf_f[i].
$$
\end{lemma}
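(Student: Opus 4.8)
## Proof Proposal for Lemma~\ref{lem:main}

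\textbf{High-level plan.} The plan is to bound the total variation distance by coupling the two executions of $\Prune$ — one on $S_0$ with ordering $\pi$, the other on $S_0' = \Sub(S_0,i)$ with ordering $\pi'$ — along the same internal randomness, and to argue that the two runs produce identical transcripts (hence identical outputs, after substituting $n+1$ back to $i$) unless some ``bad event'' tied to variable $i$ occurs. The key point is that $\Prune$ is just a sequence of at most $|S_0|$ calls to $\TestAB$, and each call to $\TestAB$ performs $\log^4 n/\xi$ rounds, each of which samples a uniform $\bx$, a uniform subset $\bT$ of the current set, and runs $\BinarySearch(f,\bx,\bT,\bpi')$ on a random path of length $|\bT| \le |S_0|$. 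So in total there are at most $O(|S_0|\log^4 n/\xi)$ invocations of $\BinarySearch$ on random paths. The two coupled runs differ only in that wherever the first run flips variable $i$, the second flips the placeholder $n+1$ (which does nothing), and vice versa. Away from variable $i$, the paths and queries are identical; the claim is that the binary search returns the same answer in both runs unless $i$ ``affects'' the search.

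\textbf{Key steps.} First I would set up the coupling: use the natural bijection between the randomness of $\Prune(f,S_0,\pi,\xi)$ and $\Prune(f,S_0',\pi',\xi)$ in which each sampled point $\bx$ is the same, and each sampled subset $\bT$ of the current set corresponds (via $\Sub(\cdot,i)$ and the induced ordering) to a subset of the current set on the other side, so that the underlying path $x = x_0, x_1, \dots, x_m$ along the ordering is \emph{the same sequence of hypercube points} in both runs — the only difference being that one run labels a certain step as ``flip $i$'' and the other labels it ``flip $n+1$,'' but since $x^{(i)}$ versus $x$ at that step, the actual path differs exactly at points where variable $i$ would have been flipped. Wait — more carefully: on the $S_0'$ side the placeholder is never actually flipped, so the path never moves in coordinate $i$; on the $S_0$ side it does. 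So the two paths agree up to and including the step just before $i$ is flipped (in $\pi$-order), and thereafter differ in coordinate $i$. Second, I would invoke Claim~\ref{binaryclaim}: a variable $i$ with influence $\Inf_f[i]$ affects the outcome of a binary search on a random path of length $\ell$ with probability at most $\log\ell \cdot \Inf_f[i] \le \log n \cdot \Inf_f[i]$ (this is exactly why the placeholder formalism was introduced — to make the ``two paths of equal length'' comparison work). Third, union-bound over all $O(|S_0|\log^4 n/\xi)$ binary searches across the whole execution of $\Prune$: the probability that \emph{any} of them is affected by $i$ is at most $O(|S_0|\log^4 n/\xi)\cdot \log n\cdot \Inf_f[i] = O(|S_0|\log^5 n/\xi)\cdot \Inf_f[i]$. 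Fourth, I would argue that conditioned on no binary search being affected by $i$, the two runs are in perfect lockstep: each $\TestAB$ call returns the same variable (where ``the same'' means equal as elements of $[n]$, with $n+1$ on one side matched to $i$ on the other via $\Sub$), so the same variable is removed from both current sets, the orderings stay in correspondence via the induced-ordering rule, and the loop terminates at the same time with outputs related by $\Sub(\cdot,i)$. Hence the outputs of the two procedures agree as distributions except on an event of probability $O(|S_0|\log^5 n/\xi)\cdot \Inf_f[i]$, which gives the claimed bound on $\dtv$ by the standard coupling inequality.

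\textbf{Anticipated main obstacle.} The technically delicate part is making precise what it means for variable $i$ to ``not affect'' a binary search and verifying that this is exactly the event controlled by Claim~\ref{binaryclaim}, in a way that is robust under the $\Sub$/induced-ordering bookkeeping. Specifically: on the $S_0$ side, $\bT$ is a subset of the current set that may or may not contain $i$; on the $S_0'$ side the corresponding subset is $\Sub(\bT,i)$, which contains $n+1$ iff $\bT$ contains $i$. When $i \notin \bT$, the two binary searches are literally identical (same path, same queries), so there is nothing to prove. When $i \in \bT$, the $S_0'$-side path has a ``stutter'' step (flipping $n+1$) exactly where the $S_0$-side path flips $i$; the two paths then diverge in coordinate $i$ from that point on. Claim~\ref{binaryclaim} is designed to say that, over the random choice of the rest of $\bT$ and of $\bx$, the probability that the binary search on the $i$-containing path returns a different answer than the binary search on the stuttering path is at most $\log|\bT|\cdot\Inf_f[i]$; and that when they agree, neither returns $i$ (it returns $n+1$'s position on the $S_0'$ side, which $\Sub$ maps back to $i$ — or returns neither). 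I would need to quote the precise statement of Claim~\ref{binaryclaim} and check the matching of its hypotheses to this setup; getting the ``equal-length paths with a placeholder'' alignment exactly right, and confirming the per-search failure probability is $\log(\text{length})\cdot\Inf_f[i]$ rather than $(\text{length})\cdot\Inf_f[i]$, is the crux. The rest — the union bound and the lockstep induction over the $\le |S_0|$ iterations of the while-loop — is routine.
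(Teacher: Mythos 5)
Your overall architecture (per\hspace{0.05cm}-binary-search closeness, union bound over the $\log^4 n/\xi$ rounds inside $\TestAB$, composition over the at most $|S_0|$ calls made by $\Prune$, yielding $O(|S_0|\log^5 n/\xi)\cdot\Inf_f[i]$) matches the paper's, but there is a genuine gap in the coupling step that carries the whole argument. You couple the two executions by \emph{shared randomness}: the same sampled point $\bx$ on both sides, with $\bT$ matched to $\Sub(\bT,i)$, and you then invoke Claim~\ref{binaryclaim} as if it bounded, by $O(\log n)\cdot\Inf_f[i]$, the probability that $\BinarySearch\hspace{0.03cm}(f,\bx,\bT,\cdot)$ and $\BinarySearch\hspace{0.03cm}(f,\bx,\Sub(\bT,i),\cdot)$ return different answers \emph{for the same} $\bx$. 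Claim~\ref{binaryclaim} does not say this: it bounds the total variation distance between the two output distributions (over independent uniform starting points), and its proof establishes only a \emph{multiset} identity over the pair $\{x,x^{(i)}\}$. Concretely, in the case $\ell>k$ of that proof, the conclusion is $\BinarySearch(f,x,S,\pi)=\BinarySearch(f,x^{(i)},S',\pi')$ and $\BinarySearch(f,x^{(i)},S,\pi)=\BinarySearch(f,x,S',\pi')$ --- the correct pairing \emph{swaps} $x$ and $x^{(i)}$ across the two sides. So under your identity coupling, even a point satisfying $\calC(x)$ can produce different outputs on the two sides: after the step at which position $k$ is discarded, one run searches the segment $x_\ell,\dots,x_r$ and the other searches $x_\ell^{(i)},\dots,x_r^{(i)}$, and nothing in $\calC(x)$ forces $f$ to agree on those two segments outside the fixed index set $J$. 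If you instead try to prove the same-$\bx$ disagreement bound directly, you run into the adaptivity problem the paper's deterministic set $J$ was designed to avoid: the indices actually queried depend on $f$, so you cannot union-bound over them with only $O(\log n)$ terms, and the per-search failure probability under the shared-randomness coupling is not (at least not obviously) $O(\log n)\cdot\Inf_f[i]$.

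The repair is to abandon the single global shared-randomness coupling and compose \emph{optimal} couplings step by step, which is exactly what the paper does: first upgrade Claim~\ref{binaryclaim} to a TV bound of $O(\log^5 n/\xi)\cdot\Inf_f[i]$ between $\TestAB(f,S,\pi,\xi)$ and $\TestAB(f,S',\pi',\xi)$ (Corollary~\ref{testcorollary}, using the best coupling within each round), and then observe that whenever the first $\ell-1$ calls have returned identical answers the two current states are again related by $\Sub(\cdot,i)$, so the conditional distributions of the $\ell$th outputs are $\beta$-close; a standard sequential-coupling (hybrid) argument then gives $\dtv(\bX,\bY)\le m\beta$. Your ``lockstep'' observation --- that equal outputs keep the sets and orderings in correspondence --- is the right ingredient for this composition, but it must be used to justify the conditional TV bounds that feed the hybrid argument, not as a consequence of a shared-randomness coupling whose per-step failure probability you have not actually established.
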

  
Because $\Prune$ keeps calling $\TestAB$ which keeps calling $\BinarySearch$,
  we start the proof~of Lemma \ref{lem:main}
  with the following claim concerning the binary search procedure.  
%%To simplify the notation, we use $\BinarySearch^*(f,x,S,\pi)$ to denote
%%  the \emph{variable output} of $\BinarySearch\hspace{0.05cm}(f,x,S,\pi)$:
%%  $\BinarySearch^*(f,x,S,\pi)$ is $i\in S$ if a bichromatic edge along $i$ is found
 %% or $\nil$ if the search fails.
  
%In the rest of the section we prove a technical lemma that will play 
%  an important role in the analysis of our unateness tester. 
%Let $f:\{0,1\}^n\rightarrow \{0,1\}$ be a Boolean function and let
%  $i\in [n]$ be a variable with low influence.
%Let $S\subseteq [n]$ with $i\in S$.
%We would like to show that running binary search on $\bx$ and $S$,
%  when $\bx$ is drawn from $\{0,1\}^n$ uniformly at random,
%  has the same output as running binary search on $\bx$ and $S\setminus\{i\}$ 
%  with high probability.
%(This will be used to establish the second property of $\Prune$.)
%Due to some technical reasons (one of them is that the length
%  of the path is different by one and thus, middle points picked are 
%  slighted shifted), the lemma we will prove uses   
%We prove the following lemma which basically says that switching 
%  the dummy variable $n+1$ with a low-influence variable $i$
%  does not change the result of binary search with high probability.

\begin{claim}\label{binaryclaim}
Let $i\in S\subseteq [n]$ %. , $S$ be a subset of $[n]$ that contains $i$,
%Let $S\subseteq [n]\setminus \{i\}$.
  and $\pi$ be an ordering of $S $. 
Let $S' = \Sub(S, i)$, %e the subset of $[n+1]$ 
  and $\pi'$ be the ordering~of $S'$ obtained from $\pi$ by
  substituting $i$ with $n+1$. 
We let $\bu$ and $\bv$ be the random variables where
\begin{flushleft}  \begin{itemize}
\item $\bu$ is the output of $\BinarySearch\hspace{0.05cm}(f ,\bx,S,\pi)$
  when $\bx$ is drawn from
  $\{0,1\}^n$ uniformly, and \vspace{-0.07cm}
\item $\bv$ is the output of $\BinarySearch\hspace{0.05cm}(f ,\bz,S',\pi')$ 
  when $\bz$ is drawn from $\{0,1\}^n$ uniformly. \vspace{0.06cm}
 \end{itemize}\end{flushleft}
Then, we have $ \dtv(\bu,\bv)\le 
%$$
%\dtv\Big( \BinarySearch\hspace{0.05cm}(f^*,\bx,S,\pi),\BinarySearch\hspace{0.05cm}(f^*,\bx,S',\pi')\Big)\le 
O(\log n)\cdot \Inf_f[i].$ %where $\BinarySearch\hspace{0.05cm}(f^*,\bx,S,\pi)$ denotes the 
%  output of $\BinarySearch$ as a random variable when $\bx$ is drawn uniformly at random
%  and $\BinarySearch\hspace{0.05cm}(f^*,\bx,S',\pi')$
%where we used $(\bx,0)$ to denote $(\bx_1,\ldots,\bx_n,0)\in \{0,1\}^n$.
%Let $S\subseteq [n]$ and $T\subseteq [n+1]\setminus \{i\}$ such that $i\in S$, $n+1\in T$ and 
%  $S\setminus \{i\}=T\setminus \{n+1\}$.
%Let $\pi$ be an ordering of $S$ and $\pi'$ be an ordering of $T$ obtained 
%  by replacing $i$ with $n+1$ in $\pi$.
%Let $R$ be a subset of $S$ and $R'$ 
\end{claim}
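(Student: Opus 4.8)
\textbf{Proof plan for Claim \ref{binaryclaim}.}

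The plan is to set up a coupling between the two executions of $\BinarySearch$ and to argue that, conditioned on a high-probability event concerning the variable $i$, the two executions produce identical outputs. First I would use the natural coupling of $\bx$ and $\bz$ by taking $\bz = \bx$ (both drawn uniformly from $\{0,1\}^n$). Under this coupling, consider the two paths: the path $x_0, x_1, \dots, x_{|S|}$ traced by flipping the variables of $S$ in the order $\pi$, and the path $x_0', x_1', \dots, x_{|S'|}'$ traced by flipping the variables of $S'$ in the order $\pi'$. Since $S' = \Sub(S,i)$ replaces $i$ by the placeholder $n+1$ and $\pi'$ is obtained from $\pi$ by the same substitution, the two paths have the same length, and they agree at \emph{every} intermediate point $x_t = x_t'$ \emph{unless} the walk has already flipped the variable $i$ (in the first path) without having flipped it in the second, i.e. they differ only in the $i$th coordinate and only on the segment of the path strictly between the position of $\pi^{-1}(i)$ and the end — wait, more carefully: the two paths coincide pointwise for all $t$ up to (but not including) the step where $\pi(t)=i$ / $\pi'(t)=n+1$, and they also coincide pointwise for all $t$ from that step onward only after $i$ is "undone", which never happens; so in fact after step $t^* := \pi^{-1}(i)$ the two paths differ exactly in coordinate $i$. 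The key point is that the sequence of \emph{query answers} $f(x_t)$ versus $f(x_t')$ can differ only at those $t \ge t^*$, and only when $f(x_t) \ne f(x_t^{(i)})$, i.e. when $x_t$ lies on a bichromatic edge along $i$.

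The heart of the argument is then the following observation about how binary search behaves. $\BinarySearch$ makes a sequence of $O(\log n)$ midpoint queries $f(x_{t_1}), f(x_{t_2}), \dots$, where each $t_j$ is determined by the comparisons made so far; the output is determined entirely by this sequence of answers (together with $\pi$, which fixes the indexing). If at \emph{every} midpoint $t_j$ that the binary search on the $S$-path actually queries we have $f(x_{t_j}) = f(x_{t_j}')$ — equivalently, $x_{t_j}$ does not lie on a bichromatic edge along $i$, whenever $t_j \ge t^*$ — then the two binary searches see the same answer sequence and return the same variable (here is where the placeholder matters: if the $S$-search would return $i$, the $S'$-search returns $n+1$, but $\bu$ is required to be a variable in $S\setminus\{n+1\}$... actually re-reading the claim statement, $\bu$ and $\bv$ are the full outputs of $\BinarySearch$; since $\BinarySearch$ never returns $n+1$ by the Lemma in Section \ref{sec:binarysearch}, I should be careful — I would note that the binary search, when restricted to a segment not containing $i$, never localizes to $i$; and the only way the $S$-search returns $i$ is if its final interval is $[t^*-1, t^*]$, which forces a bichromatic edge along $i$ to have been detected, an event we are excluding). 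So the output disagreement is contained in the event
$$
\calE \ :=\ \bigcup_{j} \big\{\, t_j \ge t^* \ \text{and}\ f(x_{t_j}) \ne f(x_{t_j}^{(i)}) \,\big\},
$$
where $j$ ranges over the at most $\lceil \log_2 |S|\rceil = O(\log n)$ midpoints queried by $\BinarySearch(f,\bx,S,\pi)$. Since $\bx$ is uniform, each $x_{t_j}$ (for a \emph{fixed} index $t_j$) is uniformly distributed over $\{0,1\}^n$, so $\Pr[f(x_{t_j}) \ne f(x_{t_j}^{(i)})] = \Inf_f[i]$; but the indices $t_j$ are themselves random (they depend on earlier answers). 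I would handle this by a union bound over the at most $O(\log n)$ \emph{possible} positions that the $j$th midpoint can take is not quite right either, since there are many possible positions. The cleanest route: union-bound over the $j$th query only, using that for each fixed realization of the first $j-1$ answers the $j$th midpoint index is deterministic, hence the $j$th queried point is uniform over $\{0,1\}^n$ (as $\bx$ is uniform and the index is a fixed coordinate-flip pattern applied to $\bx$), giving $\Pr[f(x_{t_j})\ne f(x_{t_j}^{(i)})] = \Inf_f[i]$ for each of the $O(\log n)$ values of $j$. Summing, $\Pr[\calE] \le O(\log n)\cdot \Inf_f[i]$, and since $\bu = \bv$ off $\calE$ under the coupling, $\dtv(\bu,\bv) \le \Pr[\calE] \le O(\log n)\cdot\Inf_f[i]$, as claimed.

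The main obstacle I anticipate is making the ``binary search sees the same answers $\implies$ same output'' step fully rigorous in the presence of the placeholder, i.e. correctly matching up the final output when the $S$-search would localize to the variable $i$: I need to verify that this case is \emph{already inside} the bad event $\calE$ (because localizing to $i$ requires a bichromatic edge along $i$ among the queried midpoints, with one endpoint being some $x_t$ with $t \ge t^*$ — indeed the final comparison $f(x_{\ell}) \ne f(x_r)$ with $r = t^*$ exhibits exactly such an edge since $x_{t^*} = x_{t^*-1}^{(i)}$ and $x_{t^*-1} = x_{t^*-1}'$... though one must double-check whether $x_{t^*-1}$ itself is one of the queried midpoints or only $x_\ell$ for $\ell \le t^*-1$; either way some queried point differs between the two paths). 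A secondary subtlety is that $\BinarySearch$ first queries $f(x_0)$ and $f(x_{|S|})$ and returns $\nil$ if they agree; since $x_0 = \bx$ is a genuine query point and $x_{|S|} = \bx^{(S\setminus\{n+1\})}$ has its $i$th coordinate flipped relative to $\bx^{(S'\setminus\{n+1\})}$ only if... actually $S\setminus\{n+1\}$ and $S'\setminus\{n+1\}$ differ in coordinate $i$, so $x_{|S|}$ for the two searches are the two endpoints of an $i$-edge, and I should fold the event ``$f$ disagrees on that edge'' into $\calE$ as well (this is the $j=0$ term, again contributing $\Inf_f[i]$ by uniformity of $\bx$). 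Once these endpoint and localization cases are correctly absorbed into $\calE$, the rest is the routine union bound sketched above.
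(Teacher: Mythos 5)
Your deterministic analysis is sound as far as it goes: under the identity coupling $\bz=\bx$ the two executions stay in sync as long as every queried path-point at a position $\ge t^*=\pi^{-1}(i)$ avoids bichromatic $i$-edges, and your handling of the endpoint $x^{(S)}$ and of the case where the $S$-search would localize to $i$ is correct. The genuine gap is in the probability bound for this bad event. The queried positions $t_1,t_2,\dots$ are chosen \emph{adaptively}: $t_j$ is a function of the answers $f(x_{t_1}),\dots,f(x_{t_{j-1}})$, and conditioning on those answers destroys the uniformity of $\bx$ --- given the first $j-1$ answers, $\bx$ is uniform only over the set of points consistent with them, so $x_{t_j}$ is \emph{not} uniform over $\{0,1\}^n$ and $\Pr[f(x_{t_j})\ne f(x_{t_j}^{(i)})\mid \text{past answers}]$ need not equal $\Inf_f[i]$. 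An adversarial $f$ can correlate the trajectory of the search with the location of its few $i$-bichromatic edges (the distribution of, e.g., the point at which the search converges can be up to a factor $|S|$ denser than uniform on a small set), and the only union bound that is always valid for adaptively chosen positions is over all $|S|$ possible positions, i.e. the trivial $|S|\cdot\Inf_f[i]$ bound that this claim is specifically designed to beat. So the step ``each of the $O(\log n)$ queried points is uniform, hence contributes $\Inf_f[i]$'' does not hold as stated.

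The paper's proof is engineered around exactly this obstacle: it replaces the adaptive query set by the \emph{fixed} set $J$ of $O(\log n)$ indices that a binary search would visit on its way to position $k=\pi^{-1}(i)$, imposes $f(x_j)=f(x_j^{(i)})$ only for $j\in J$ (a union bound over a deterministic set, costing $O(\log n)\cdot\Inf_f[i]$), and then abandons the identity coupling in favor of a multiset/swap coupling on the pair $\{x,x^{(i)}\}$: all four runs agree while position $k$ is still in play (their queries up to that moment lie in $J$), and once $k$ is discarded, the run of $\BinarySearch(f,x,S,\pi)$ is matched with $\BinarySearch(f,x^{(i)},S',\pi')$ if the search moves above $k$ (and with $\BinarySearch(f,x,S',\pi')$ if it moves below), so the matched runs continue on literally identical path segments and no condition on the later, adaptively chosen queries is needed. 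This swap also repairs a secondary defect of your coupling: $\BinarySearch$ outputs a (variable, point) pair, and under $\bz=\bx$ the returned points differ in coordinate $i$ whenever the located edge lies above position $k$, so even off your bad event you only get agreement of the returned variable, not of the full output. Fixing your proof would essentially require importing both ideas, i.e. reproducing the paper's argument.
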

\begin{proof}
%In order to simplify the notation, the proof only consider the variable output of $\BinarySearch(f, x, S, \pi)$. In particular, even though $\BinarySearch(f, x, S, \pi)$ outputs a variable-point pair when a bichromatic edge is found, whenever a bichromatic edge is found along variable $i$ on point $y$, we only consider the variable $i$, so $\BinarySearch(f, x, S, \pi) \in [n] \cup \{ \nil \}$. 

Our plan is to show that for every point $x\in\{0,1\}^n$ with a certain property, we have
%give a bijection $h$ from $\{0,1\}^n$ to itself such that
\begin{equation}\label{multiset1}
\big\{\BinarySearch\hspace{0.05cm}(f ,x,S ,\pi ),\hspace{0.05cm}
\BinarySearch\hspace{0.05cm}(f ,x^{(i)},S,\pi)\big\}
\end{equation}
as a multiset is the same as
\begin{equation}\label{multiset2}
\big\{\BinarySearch\hspace{0.05cm}(f ,x,S' ,\pi' ),\hspace{0.03cm}
\BinarySearch\hspace{0.05cm}(f ,x^{(i)},S',\pi')\big\}.
\end{equation}
It turns out that the property holds for most points in $\{0,1\}^n$. % (using the 
%  fact that $i$ has low influence). 
  The lemma then follows.

To describe the property we let $m=|S|=|S'|$ and let $k=\pi^{-1}(i)$ (with $\pi'(k)=n+1$). We~let $J \subseteq [0:m]$ denote the  set of indices taken by variables $\ell$ and $r$
  (see Figure~\ref{fig:binarysearch}~for settings~of~$\ell$ and $r$) in an execution of $\BinarySearch$  %with an input set of size $m$ which outputs $k$ 
  along a path of length $m$ that outputs the $k$th edge at the end. 
\red{For example, ignoring the rounding issue, $J$ always contains 
  $0,m$ and $m/2$: these are indices of the first three points that 
  binary search examines. It contains $3m/4$ if $k> m/2$, or $m/4$ if $k\le m/2$, so on and so forth.
The set $J$ also always contains $k-1$ and $k$: these are indices of the last two points
  that binary search examines before returning the $k$th edge.}
  
Now we describe the property.
Given $x\in \{0,1\}^n$ we let
%  For each $j \in [0:m]$, we let $S_j = \{ \pi(t) : 1 \leq t \leq j\}$, where $S_0 = \emptyset$, and the path 
$x = x_0, \dots, x_{m} = x^{(S)}$ with $\smash{x_{t} = x_{t-1}^{(\pi(t))}}$ for all $t\in [m]$.
% = x^{(S_j)}$ for all $t \in [m]$. 
We let $\calC(x)$ be the indicator of the condition that:
\begin{align}
f(x_j) = f(x_j^{(i)}), \quad\text{for all }j \in J. \label{eq:condition-x}
\end{align}
\ignore{We then define a set $J$ of $O(\log n)$ many indices in $[0:m]$ as follows.
Imagine that we run binary search on a path of length $m+1$ in which the first
  $k$ points have the same value and the last $m+1-k$ points have the same but opposite value.
So the binary search will find the $k$th edge as the only bichromatic edge at the end,
  and we use $J$ to denote the set of points queried during the process.
As a result, $J$ contains $0,m,\lfloor m/2\rfloor,\ldots$ as well as $k-1$ and $k$.
For each $j\in [0:m]$, we let $S_j$ denote the subset of $S$ that contains the first
  $j$ variables in $S$ with respect to the ordering $\pi$.
Our condition on $x\in \{0,1\}^n$ is that, letting $x_j=x^{(S_j)}$ for each $j\in [0:m]$,
\begin{equation}\label{condition1}
f(x_j)=f(x_j^{(i)}),\quad \text{for all $j\in J$.}
\end{equation}}
We show that $\bx\sim\{0,1\}^n$   satisfies $\calC(\bx)$ with high probability.
Because $\bx$ is drawn uniformly from $\{0,1\}^n$\ignore{(and $S_j$ is a fixed set)},
  $\bx_j$ defined above is also distributed uniformly for each $j \in J$ and thus,
  the probability that a specific $j\in J$ violates the condition above is at most $\Inf_f[i]$. It
  then follows from a union bound over $j\in J$ that
  the fraction of points that violate the condition $\calC(x)$ is at most $\Inf_f[i]\cdot O(\log n)$.
%  $j_1,\ldots,j_\ell$ such that $j_1=0$, $j_2=m$, $j_3=\lfloor m/2\rfloor$ and each $j_r$ for each $r\ge 4$
%  is set to be either $\lfloor (j_{r-3}+j_{r-1})\rfloor$ or $\lfloor (j_{r-1}+j_{r-2})/2\rfloor$
%  with the condition that $k\in 
%for most points $x\in \{0,1\}^n$.

It suffices to prove that when $x\in \{0,1\}^n$ satisfies $\calC(x)$,
  the two multisets in (\ref{multiset1}) and (\ref{multiset2}) are the same.
To this end we write down the two paths in the multiset (\ref{multiset1}) that start with $x$ and $x^{(i)}$ as
$$
x_0,x_1,\ldots,x_m \quad \text{and}\quad y_0,y_1,\ldots,y_m 
$$
in which $x_t=x^{{(\pi(t)}}_{t-1}$ and $y_t=x_t^{(i)}$.
Similarly we write down the two paths for (\ref{multiset2}) as
$$
z_0,z_1,\ldots,z_m\quad \text{and}\quad w_0,w_1,\ldots,w_m,
$$
in which we have $z_t=x_t$ for all $t<k$ and $z_t=y_t$ for all $t\ge k$;
  $w_t=y_t$ for all $t<k$ and $w_t=x_t$ for all $t\ge k$.
It follows from the property (\ref{eq:condition-x}) of $x$ that
\begin{equation}\label{condition2}
f(x_j)=f(y_j)=f(z_j)=f(w_j),\quad\text{for all $j\in J$.}
\end{equation}
Since $0,m\in J$ we have that $f(x_0)=f(x_m)$ implies the same 
  holds for $y,z$ and $w$ in which case (\ref{multiset1}) and 
  (\ref{multiset2}) are trivially the same since they all return $\nil$.
So we assume below that $f(x_0)\ne f(x_m)$ and thus, all four binary searches return a variable
  and a bichromatic edge.

Next, since $k-1,k\in J$ we have 
$$
f(x_{k-1})=f(x_k)=f(y_{k-1})=f(y_k)=f(z_{k-1})=f(z_k)=f(w_{k-1})=f(w_k).
$$
As a result, the $k$th edge is not bichromatic in all four paths and thus, 
  during each run of binary search, $k$ is removed from the interval $[\ell:r]$ (see Figure~\ref{fig:binarysearch}) after
  a certain number of rounds.~Moreover, it follows from the definition of $J$ and (\ref{condition2})
  that in all four runs of binary search, the values of $\ell$ and $r$ are the same
  at the moment when $k$ is removed from consideration (i.e., at the first time when either
  $\ell$ or $r$ is updated so that $k\notin [\ell:r]$.
We consider two cases for the values of $\ell$ and $r$.
\begin{flushleft}\begin{enumerate}
\item $\ell>k$: In this case, $\BinarySearch\hspace{0.05cm}(f,x,S,\pi)$ continues to search
  on the path $x_\ell,\ldots,x_r$ and $\BinarySearch\hspace{0.05cm}(f,x^{(i)},S',\pi')$
  continues to search on the path $w_{\ell},\ldots,w_r$ which is the same as $x_\ell,\ldots,x_r$
  given that $\ell>k$.
As a result, their outputs are the same. Similarly we have that 
  $\BinarySearch\hspace{0.05cm}(f,x^{(i)},S,\pi)$ is the same as 
  $\BinarySearch\hspace{0.05cm}(f,x,S',\pi')$ in this case. See Figure~\ref{fig:binary-search-2} for example executions.
\item $r<k$: In this case, $\BinarySearch\hspace{0.05cm}(f,x,S,\pi)$ continues to search
  on the path $x_\ell,\ldots,x_r$ and $\BinarySearch\hspace{0.05cm}(f,x ,S',\pi')$
  continues to search on the path $z_{\ell},\ldots,z_r$ which is the same as $x_\ell,\ldots,x_r$
  given that $r<k$.
As a result, their outputs are the same. Similarly we have that 
  $\BinarySearch\hspace{0.05cm}(f,x^{(i)},S,\pi)$ is the same as 
  $\BinarySearch\hspace{0.05cm}(f,x^{(i)},S',\pi')$ in this case. See Figure~\ref{fig:binary-search-1} for example executions.
\end{enumerate}\end{flushleft}
As a result, the two multisets are the same when $x$ satisfies the condition $\calC(x)$.
\end{proof}

\begin{figure}
\centering
\begin{picture}(400, 300)
\put(0,0){\includegraphics[width=.8\linewidth]{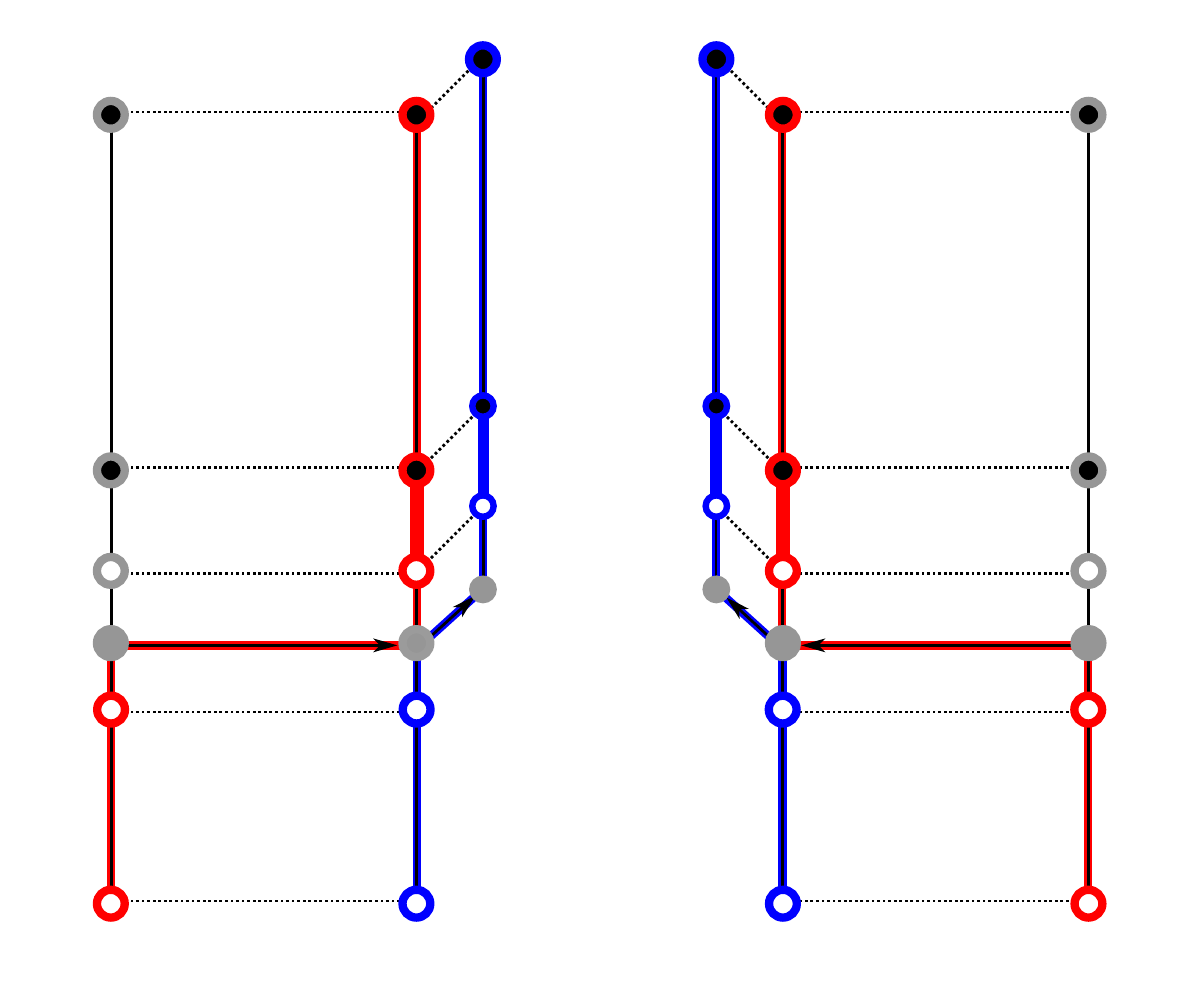}}
\put(10, 8){$x=x_0$}
\put(125, 8){$w_0 = x^{(i)}$}
\put(95, 288){$x_m = x^{(S)}$}
\put(150, 306){$w_m$}
\put(117, 175){$x_{j_1}$}
\put(160, 185){$w_{j_1}$}
\put(15, 85){$x_{j_2}$}
\put(140, 85){$w_{j_2}$}
\put(160, 150){$w_{j_3}$}
\put(115, 140){$x_{j_3}$} 
\put(80, 112){$i$}
\put(145, 110){$n+1$}

\put(230, 8){$x = z_0$}
\put(315, 8){$x^{(i)} = y_0$}
\put(245, 290){$y_m = x^{(S \setminus \{i\})}$}
\put(355, 85){$y_{j_2}$}
\put(255, 140){$y_{j_3}$}
\put(255, 175){$y_{j_1}$}
\put(230, 85){$z_{j_2}$}
\put(212, 182){$z_{j_1}$}
\put(212, 150){$z_{j_3}$}
\put(220, 306){$z_m$}
\put(295, 112){$i$}
\put(212, 108){$n+1$}
\end{picture}
\caption{Example executions of $\BinarySearch(f, x, S, \pi)$ and $\BinarySearch(f, x^{(i)}, S', \pi')$ on the left-hand side, and executions of $\BinarySearch(f, x, S', \pi')$ and $\BinarySearch(f, x^{(i)}, S, \pi)$ on the right-hand side, assuming that $\calC(x)$ is satisfied, and corresponding to the case when $k \leq \ell$. Queries made only during executions of $\BinarySearch(f, x, S, \pi)$ and $\BinarySearch(f, x^{(i)}, S, \pi)$ are displayed by red dots, and the corresponding paths considered are outlined in red; queries made only during executions of $\BinarySearch(f, x, S', \pi')$  and $\BinarySearch(f, x^{(i)}, S', \pi')$ are displayed by blue dots, and the corresponding paths considered are outlined in blue. Points are filled in with black if $f$ evaluates to $1$, and points which are not filled in if $f$ evaluates to $0$. Dotted lines indicates that condition $\calC(x)$ or the fact that $n+1$ is a dummy variable implies points evaluate to the same value under $f$. From the above executions, it is clear to see that $\BinarySearch(f, x, S, \pi)$ on the left-hand side considers the path (drawn in red) between $x_{j_3}$ and $x_{j_1}$, and $\BinarySearch(f, x^{(i)}, S', \pi')$ considers the path between $w_{j_3}$ and $w_{j_1}$ (drawn in blue); since variable $n+1$ represents a dummy variable, $f$ has the same evaluation on both of these paths, so both output the same variable. Similarly, $\BinarySearch(f, x^{(i)}, S, \pi)$ on the right-hand side considers the path (drawn in red) between $y_{j_3}$ and $y_{j_1}$, and $\BinarySearch(f, x^{(i)}, S', \pi')$ considers the same path between $z_{j_3}$ and $z_{j_1}$ (drawn in blue); as a result of $n+1$ being a dummy variable, both output the same variable.}\label{fig:binary-search-2}
\end{figure}

\begin{figure}
\centering
\begin{picture}(400, 300)
\put(0,0){\includegraphics[width=.8\linewidth]{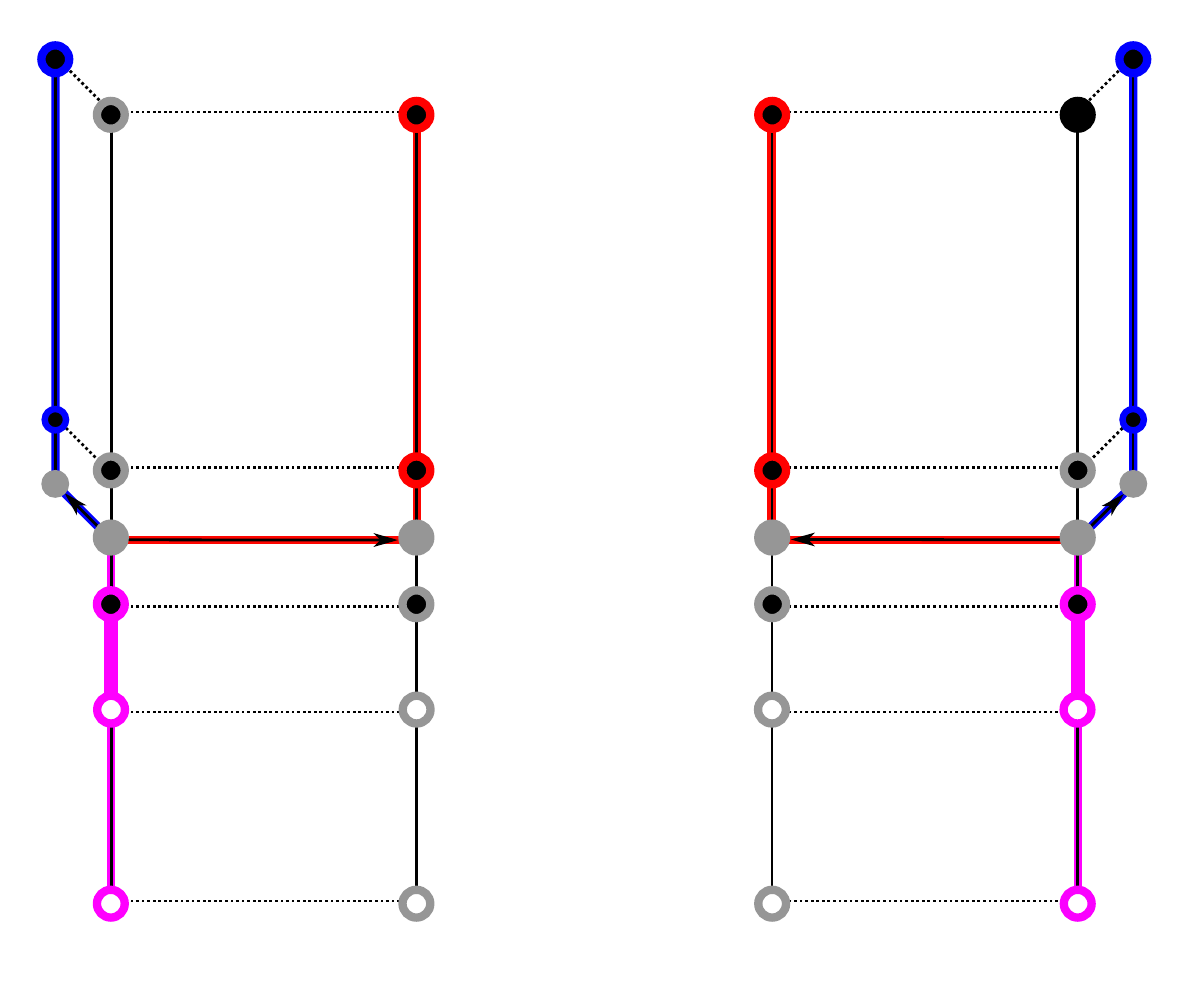}}
\put(10, 8){$x=x_0=z_0$}
\put(125, 8){$x^{(i)}$}
\put(130, 290){$x_m = x^{(S)}$}
\put(10, 308){$z_m$}
\put(143, 163){$x_{j_1}$}
\put(0, 180){$z_{j_1}$}
\put(-15, 85){$x_{j_2} = z_{j_2}$}
\put(-15, 120){$x_{j_3} = z_{j_3}$}
\put(80, 147){$i$}
\put(0, 140){$n+1$}

\put(243, 8){$x$}
\put(310, 8){$x^{(i)} = y_0 = w_0$}
\put(240, 290){$y_m = x^{(S \setminus \{i\})}$}
\put(355, 308){$w_m$}
\put(353, 85){$y_{j_2} = w_{j_2}$}
\put(353, 120){$y_{j_3} = w_{j_3}$}
\put(370, 180){$w_{j_1}$}
\put(227, 163){$y_{j_1}$}
\put(295, 147){$i$}
\put(357, 145){$n+1$}
\end{picture}
\caption{Example executions of $\BinarySearch(f, x, S, \pi)$ and $\BinarySearch(f, x, S', \pi')$ on the left-hand side, and executions of $\BinarySearch(f, x^{(i)}, S, \pi)$ and $\BinarySearch(f, x^{(i)}, S', \pi')$ on the right-hand side, assuming that $\calC(x)$ is satisfied, and corresponding to the case when $k > r$. Queries made only during executions of $\BinarySearch(f, x, S, \pi)$ and $\BinarySearch(f, x^{(i)}, S, \pi)$ are displayed by red dots, and the corresponding paths considered are outlined in red; queries made only during executions of $\BinarySearch(f, x, S', \pi')$  and $\BinarySearch(f, x^{(i)}, S', \pi')$ are displayed by blue dots, and the corresponding paths considered are outlined in blue; queries which are made during both are displayed with purple dots, and the intersection of the paths considered in both are purple. Similarly to Figure~\ref{fig:binary-search-2}, points filled in evaluate to $1$ under $f$, and points which are not filled in evaluates to $0$ under $f$. Dotted lines implies points evaluate to the same value under $f$. Note that $\BinarySearch(f, x, S, \pi)$ considers the path (drawn in purple) between $x_{j_2}$ and $x_{j_3}$, and $\BinarySearch(f, x, S', \pi')$ considers the same path between $z_{j_2}$ and $z_{j_3}$; thus, both output the same variable. Similarly, $\BinarySearch(f, x^{(i)}, S, \pi)$ considers the path (drawn in purple) between $y_{j_2}$ and $y_{j_3}$, and $\BinarySearch(f, x^{(i)}, S', \pi')$ considers the same path between $w_{j_2}$ and $w_{j_3}$; as a result, both output the same variable.}\label{fig:binary-search-1}
\end{figure}

Claim \ref{binaryclaim} gives the following corollary using a union bound:

\begin{corollary}\label{testcorollary}
Let $i\in S\subseteq [n]$ and $\pi$ be an ordering of $S $. 
Let $S' = \Sub(S, i)$
  and $\pi'$ be the ordering of $S'$ obtained from $\pi$ by
  substituting $i$ with $n+1$.
Then we have  
$$
\dtv\Big(\text{$\TestAB\hspace{0.05cm}(f ,S,\pi,\xi),\hspace{0.05cm}\TestAB\hspace{0.05cm}(f ,S',\pi',\xi)$}\Big)\le 
O\left(\frac{\log^\red{5}n}{\xi}\right)\cdot \Inf_f[i].
$$
\end{corollary}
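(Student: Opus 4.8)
The plan is to reduce the statement to Claim~\ref{binaryclaim} by coupling the internal randomness of the two runs of $\TestAB$, so that the only source of discrepancy is a single call to $\BinarySearch$ on a random subset containing $i$, and then to combine the $\log^4 n/\xi$ iterations by a standard hybrid argument. Write $N=\log^4 n/\xi$. Unrolling Figure~\ref{fig:test-ab}, the first step is to observe that the output of $\TestAB\hspace{0.05cm}(f,S,\pi,\xi)$ is a fixed deterministic function of the sequence of $N$ independent \emph{iteration results} $\boldsymbol{o}_1,\dots,\boldsymbol{o}_N$, where $\boldsymbol{o}_t$ is the variable returned by $\BinarySearch$ in the $t$th iteration ($\nil$ if that call returns $\nil$), and the function simply outputs the first non-$\nil$ entry (or $\nil$ if all are $\nil$). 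Each $\boldsymbol{o}_t$ is produced by drawing $\bx_t\sim\{0,1\}^n$ uniformly, flipping a fair coin to set the target size to $\lfloor|S|/2\rfloor$ or $\lfloor|S|/2\rfloor+1$, drawing $\bT_t\subseteq S$ uniform of that size, and running $\BinarySearch(f,\bx_t,\bT_t,\pi|_{\bT_t})$. The output of $\TestAB\hspace{0.05cm}(f,S',\pi',\xi)$ is the same function of the analogous $\boldsymbol{o}'_1,\dots,\boldsymbol{o}'_N$. Since applying a deterministic function and passing to product distributions cannot increase total variation distance beyond subadditivity, it suffices to prove the per-iteration bound $\dtv(\boldsymbol{o}_1,\boldsymbol{o}'_1)\le O(\log n)\cdot\Inf_f[i]$ and multiply by $N$.

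For the per-iteration bound I would couple the two iterations using the same coin and by taking the subset in the $S'$-run to be $\Sub(\bT_1,i)$. One checks that $|S'|=|S|$ and that $\bT\mapsto\Sub(\bT,i)$ is a size-preserving bijection between subsets of $S$ and subsets of $S'$ (the identity when $i\notin\bT$, and swapping $i$ for the placeholder $n+1$ when $i\in\bT$, since $n+1\notin\bT\subseteq[n]$), so $\Sub(\bT_1,i)$ is uniform among subsets of $S'$ of the prescribed size and the coupling is legitimate; moreover the ordering of $\Sub(\bT_1,i)$ induced by $\pi'$ is exactly the one obtained from $\pi|_{\bT_1}$ by substituting $i$ with $n+1$. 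Conditioning on $\bT_1$: if $i\notin\bT_1$ the two iterations are literally the same experiment, so the results are identically distributed; if $i\in\bT_1$, Claim~\ref{binaryclaim} applied with $\bT_1$ in the role of $S$ and $\Sub(\bT_1,i)$ in the role of $S'$ bounds the total variation distance between the output of $\BinarySearch$ on $\bT_1$ with ordering $\pi|_{\bT_1}$ and a uniform input point, and the output of $\BinarySearch$ on $\Sub(\bT_1,i)$ with its induced ordering and a uniform input point, by $O(\log n)\cdot\Inf_f[i]$; discarding the returned point and keeping only the variable only decreases this. Averaging over $\bT_1$ and the coin and using convexity of $\dtv$ gives $\dtv(\boldsymbol{o}_1,\boldsymbol{o}'_1)\le\Pr[i\in\bT_1]\cdot O(\log n)\cdot\Inf_f[i]\le O(\log n)\cdot\Inf_f[i]$, and multiplying by $N=\log^4 n/\xi$ yields the claimed $O(\log^5 n/\xi)\cdot\Inf_f[i]$.

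I expect the main (and essentially only) obstacle to be bookkeeping around the substitution operator: verifying that $\Sub(\cdot,i)$ is a size-preserving bijection on the relevant families of subsets so that the coupling is valid, and that it transports the restricted orderings exactly as the hypothesis of Claim~\ref{binaryclaim} requires, together with checking that the two coin outcomes (sizes $\lfloor|S|/2\rfloor$ and $\lfloor|S|/2\rfloor+1$) line up under the coupling. No genuinely new idea is needed beyond Claim~\ref{binaryclaim} itself, which already did the hard work of showing that binary search along a long path is insensitive to a low-influence variable.
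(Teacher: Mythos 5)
Your proposal is correct and matches the paper's proof in essence: the paper also couples the two runs of $\TestAB$ round by round (same coin, same $\bx$, with $\bT$ in one run paired with $\Sub(\bT,i)$ in the other), invokes Claim~\ref{binaryclaim} to bound the per-round disagreement probability by $O(\log n)\cdot\Inf_f[i]$, and union-bounds over the $\log^4 n/\xi$ rounds. Your extra bookkeeping (the $\Sub(\cdot,i)$ bijection, the induced orderings, and the split on whether $i\in\bT$) just makes explicit what the paper leaves implicit.
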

\begin{proof}
We use the following coupling to run $\TestAB\hspace{0.05cm}(f ,S,\pi,\xi)$
  and $\TestAB\hspace{0.05cm}(f ,S',\pi',\xi)$ in parallel.
  
For each round of $\TestAB$  we first flip a fair coin and draw a subset
 $\bT$ of $S$ of the size indicated by the coin uniformly.
Then we couple the binary search on $\bx\sim \{0,1\}^n$ and $\bT$ 
  and the binary search on $\bz\sim \{0,1\}^n$ and $\Sub(\bT,i)$ using the best coupling between them.
%  and use it to run binary search in $\TestAB\hspace{0.05cm}(f ,S,\pi,\xi)$.
%For $\TestAB\hspace{0.05cm}(f ,S',\pi',\xi)$ we run binary search using
%  $\Sub(\bT,i)$.

%They use the same set of $\log^3 n/\xi$ random points.
%For each random point $\bx$, we draw $2\log^4n$ subsets $\bT$ of $S$ as in step 1(b) of
%  $\TestAB\hspace{0.05cm}(f,S,\pi,\xi)$ and use $\Sub(\bT,i)$ to run
%  binary search in $\TestAB\hspace{0.05cm}(f,S',\pi',\xi)$. 
  
%By using the best coupling between running binary search on each pair $\bx$ and $\bT$
  
It then follows from Claim \ref{binaryclaim} and a union bound over the $ \log^\red{4}n/\xi$ 
  rounds that the probability of this coupling of 
  $\TestAB\hspace{0.05cm}(f ,S,\pi,\xi)$ and $\TestAB\hspace{0.05cm}(f,S',\pi',\xi)$
  returning different results is at most
$$
(\log^\red{4} n/\xi)\cdot \Inf_f[i]\cdot O(\log n).
$$
This finishes the proof of the corollary.
\end{proof}

Now we prove Lemma \ref{lem:main}.

\begin{proof}[Proof of Lemma \ref{lem:main}]
Let $m=|S|=|S'|$.
For each $j\in [m]$, let $\bX_j$ denote the output of the $j$th call to
  $\TestAB$ in $\Prune\hspace{0.05cm}(f,S_0,\pi,\xi)$ with $\bX_j$ set to $\nil$ by
  default if the procedure terminates before the $j$th call.
Similarly we use $\bY_j$ to denote the output of the $j$th call in $\Prune\hspace{0.05cm}(f,S_0',\pi',\xi)$.
Let $\bX=(\bX_1,\ldots,\bX_m)$ and $\bY=(\bY_1,\ldots,\bY_m)$.
Then $\bX=\bY$ implies that $\bS=\Prune\hspace{0.05cm}(f,S_0,\pi,\xi)$ is the same
  as $\bS'=\Prune\hspace{0.05cm}(f,S_0',\pi',\xi)$.
As a result, it suffices to show that
$$
\dtv(\bX,\bY)\le \frac{m\hspace{0.05cm}\red{\log^5 n}}{\xi}\cdot \Inf_f[i].
$$

To this end, we first note that by Corollary~\ref{testcorollary}
  the total variation distance between $\bX_1$ and $\bY_1$
  is at most $\beta:=O(\log^5 n/\xi)\cdot \Inf_f[i]$.
On the other hand, note that if the outputs from the first $\ell-1$ calls in 
  $\Prune\hspace{0.05cm}(f,S_0,\pi,\xi)$ and $\Prune\hspace{0.05cm}(f,S_0',\pi',\xi)$
  are the same, say $a_1,\ldots,a_{\ell-1}$, then
  before the $\ell$th call,
  the set $\bS$ in the former still contains $i$ and the $\bS'$ in the latter 
  can be obtained by substituting its $i$ with $n+1$.
It follows from Corollary \ref{testcorollary}
  that, for any $\ell>1$ and any $a_1,\ldots,a_{\ell-1}$, 
%\begin both the total variation distance between $\bX_1$ and $\bY_1$ 
  the total variation distance between the distribution of $\bX_\ell$
  conditioning on $\bX_1=a_1,\ldots,\bX_{\ell-1}=a_{\ell-1}$ and 
  the distribution of $\bY_\ell$ conditioning on
  $\bY_1=a_1,\ldots, \bY_{\ell-1}=a_{\ell-1}$ is  
  also at most $\beta$.
We prove that these properties together 
  imply that $\dtv(\bX,\bY)\le m\beta$,
  from which the lemma follows.\footnote{We 
  suspect that this is probably known in the literature but were not able to find a reference.}
  
For this purpose we use the following coupling of $\bX$ and $\bY$.
First we use the best coupling for the distribution of $\bX_1$ and the
  distribution of $\bY_1$ to draw $(\ba_1,\bb_1)$.
Then we draw $(\ba_2,\bb_2)$ from the the best coupling for the distribution of 
  $\bX_2$ conditioning on $\bX_1=\ba_1$ and the distribution of 
  $\bY_2$ conditioning on $\bY_1=\bb_1$. % (so they are different with probability at most $\beta$);
We then repeat until $(\ba_m, \bb_m)$ is drawn.
It follows from the description that the marginal distribution of $\ba=(\ba_1,\ldots,\ba_m)$
  is the same as $\bX$ and the marginal distribution of $\bb=(\bb_1,\ldots,\bb_m)$ is the same as $\bY$.
Moreover, we have 
\begin{align*}
\dtv(\bX,\bY)&\le 
\Pr\big[\ba\ne \bb\big]\\[0.3ex]&=\Pr\big[\ba_1\ne \bb_1\big]+\Pr\big[\ba_1=\bb_1 \wedge \ba_2\ne \bb_2\big]
+\cdots +\Pr\big[\ba_j=\bb_j\ \text{for $j<m$} \wedge \ba_m\ne \bb_m\big],
\end{align*}
which is at most $m\beta$ by the description of the coupling and properties of $\bX$ and $\bY$.
\end{proof}

\newcommand{\Sample}{\mathtt{Sample}}

\section{The Scores Lemma}\label{sec:score}

By definition when $f$ is $\eps$-far from unate, 
  $f(x\oplus a)$ is $\eps$-far from monotone for every $a\in \{0,1\}^n$.
  %, and $\eps$-far from anti-monotone. 
This means that we can utilize the directed isoperimetric inequality of \cite{KMS15} to show the existence of relatively large and almost-regular bipartite graphs
  that consist of bichromatic edges (see Definition \ref{goodbipartite}
  and Lemma \ref{lem:KMS}). 
The goal of this section is to show that, using these bipartite graphs,
  there exist certain probability distributions over subsets of variables
  such that a set $\bS$ drawn from any of these distributions can be
  used to search for bichromatic edges via $\AESearch$ efficiently. 
  
%  The goal will be to design an algorithm which allows us to quickly find a violation to unateness by finding a pair of bichromatic edges along the same coordinate where one edge is from $G_{+}$ and the other is from $G_{-}$. In order to do so, we will sample coordinates according to the procedures in Section~\ref{sec:sample-dirs}, hoping that these coordinates will agree with the directions of a significant fraction of edges from $G_{+}$ and $G_{-}$.

%The primary technical challenge is that the procedure from Section~\ref{sec:sample-dirs} makes multiple adaptive queries to the underlying function $f$ before outputting a set of coordinates $S \subset [n]$. A priori, an adversary could design an $\eps$-far from unate function $f\colon \{0,1\}^n \to \{0,1\}$ whose queries in Section~\ref{sec:sample-dirs} lead the algorithm to output a subset $S \subset [n]$ which avoids most of the edges of the graphs $G_{+}$ and $G_{-}$. If such an adversary could succeed, the algorithm would never find a violation to unateness in directions in $S$. The proof will show that while an adversary could define a function $f$ which controls whether the output subset $S \subset [n]$ contains a coordinate $i$, doing so would require increasing the influence of coordinate $i$. In this case, an algorithm based on edge sampling would perform better.

To this end, we start by introducing three distributions $\calH_{\xi,m},
  \calD_{\xi,m}$ and $\calP_{i,m}$ in Section \ref{dist:sec}.
We then use them to define a \emph{score} for each variable $i\in [n]$   which aims to quantify the chance of finding a bichromatic edge along $i$
  using $\AESearch$ and a set $\bS$ drawn from some of those distributions.
Finally we prove the Scores Lemma in Section \ref{sec:scorelemma},
  which shows that the sum of scores over $i\in [n]$ is large
  when $f$ is $\eps$-far from unate and has total influence $O(\sqrt{n})$.

%We will now give a couple of definitions as well as state a theorem concerning the distributions $\calH_{\xi}$ and $\calD_{\xi}$, when used on $\eps$-far-from-unate Boolean functions. We first state the relevant definitions capturing the behavior of the distribution $\calH_{\xi}$, and then proceed to the relate $\calH_{\xi}$ and $\calD_{\xi}$. 

\subsection{Distributions $\calD_{\xi,m}, \calH_{\xi,m}$ and $\calP_{i,m}$ and the definition of scores}\label{dist:sec}

We start by defining two distributions $\calD_{\xi,m}$ and $\calH_{\xi,m}$.

\begin{definition}
Given $\xi \in(0,1)$ and $m:1\le m\le n$,
  we let $\calD_{\xi,m}$ denote the following distribution supported on subsets of $[n]$:
  $\bS \sim \calD_{\xi,m}$ is drawn by first sampling a subset $\bS_0$ of $[n]$ of size $m$
  and an ordering $\bpi$ of $\bS_0$ uniformly at random.
We then call $\Prune\hspace{0.04cm}(f,\bS_0,\bpi,\xi)$
  to obtain $\bS$.

Similarly, let $\calH_{\xi,m}$ denote the following distribution supported on subsets~of $[n+1] $:
  $\bS\sim \calH_{\xi,m}$ is drawn by first sampling a subset $\bS_0$~of $[n+1] $ of size $m$
  with $n+1$ $\in \bS_0$ and an ordering $\bpi$ of~$\bS_0$ uniformly at random.
We then call $\Prune\hspace{0.04cm}(f,\bS_0,\bpi,\xi)$ to obtain $\bS$.
 Notice that as $n+1$ is just a placeholder, we always have $n+1\in \bS\sim \calH_{\xi,m}$. 
\end{definition}

As it will become clear later, our unateness tester will sample 
  subsets according to the distribution $\calD_{\xi,m}$ and use them to 
  find an edge violation to unateness when $f$ is far from unate.
While this section is mainly concerned about $\calH_{\xi,m}$, it will 
  only be used in the analysis to help us understand how good those samples from $\calD_{\xi,m}$ are 
  in terms of revealing an edge violation to unateness.
  
%   to understand to what extent an adversary can influence a draw from $\calD_{\xi}$.
%  be used in 
%  our unateness tester while $\calH_{\xi,m}$ will only be used to help its analysis.
Let $\Lambda=\lceil 2 \log (n/\eps)\rceil$ in the rest of the paper.
Given $i\in [n]$ and $m:1\le m\le n-1$ we use $\calP_{i,m}$ to denote the uniform distribution
  over all size-$m$ subsets of $[n]\setminus \{i\}$.

Next we use $\calH_{\xi,m}$ and $\calP_{i,m}$ to define strong edges.

%The distribution $\calH_{\xi}$ is given by letting $\bS \sim \calH_{\xi}$ be generated by $\bS \leftarrow \Sample^{(2)}_{f^*}(m)$ for a parameter $m \lsim \frac{n}{\log n}$ (see Figure~\ref{fig:sample-set-2}). The core of our analysis of the binary search procedure will analyze in which cases the distribution $\calD_{\xi}$ can be replaced with the distribution $\calH_{\xi}$. The following claim easily follows from Lemma~\ref{lem:prune}, and the fact that coordinate $n+1$ acts as a ``dummy variable''.
%\begin{definition}
%\end{definition}
%\red{Move the definition of $\calH_{m,\xi}$ here. }

%This section defines and proves the scores lemma (Lemma~\ref{lem:scores-lemma}), as well as sets the values for some relevant parameters. We note that even though this section considers sets of coordinates $S \subset [n]$ sampled from the distribution $\calH_{\xi}$, the final algorithm will \emph{not} sample sets of coordinates from $\calH_{\xi}$. The final algorithm will sample subsets according to the distribution $\calD_{\xi}$. $\calH_{\xi}$ is simply used as an analytical tool to understand to what extent an adversary can influence a draw from $\calD_{\xi}$.

\begin{definition}[Strong edges]\label{def:strong-edge}
Let $e$ be a bichromatic edge of $f$ along variable $i\in [n]$.
We say~$e$~is \emph{$\ell$-strong}, for some integer $\ell\in [\Lambda]$,
%\footnote{Xi: Basically
%  $1/2^\ell$ tries to approximate $\sigma\sqrt{n}$.} 
if the following two conditions hold:
\begin{flushleft}\begin{enumerate}
\item For every $m\le n^{2/3}$ as a power of $2$ and every 
  $\xi=1/2^k$ with $\ell\le k\le \Lambda$,
%$\xi: 0<\xi\le 2^\ell/\sqrt{n}\le 1$
%  such that $1/\xi$ is a power of $2$, 
  the edge $e$ is $\bS$-persistent 
  \emph{(}recall Definition \ref{def:persistency}\emph{)} with probability at least $1-\red{(1/\log n)}$ when $\bS\sim \calH_{\xi,m}$. %with $\xi=2^\ell/\sqrt{n}$
%  satisfies both (1) $i\notin \bS$\footnote{I am pretty sure that we don't need this condition.
%  Will remove it in the next round.} and (2) $e$ is $\bS$-persistent
%   with probability at least $3/4$;
  %\footnote{Xi: Here $1/2^\ell$ is supposed to approximate $\sigma$.} %and
\item 
 %\footnote{Xi: Here $\sqrt{n}/2^\ell$ is the size of the set, not $2^{\sqrt{n}/2^\ell}$.}
%When $\bS\sim \calP_{i,\ell}$, 
 The edge $e$ is $\bS$-persistent with probability at least $1-\red{(1/\log n)}$ when $\bS\sim\calP_{i,\lceil \sqrt{n}/2^\ell\rceil}$.
\end{enumerate}\end{flushleft}
\end{definition}
%\begin{definition}[Scoring Edges]
%A bi-chromatic edge $(x, x^{(i)})$ of $f$ is said to be $\beta$-scoring if with probability at least $1 - \beta$ %over the draw of $\bS \sim \calH_{\xi}$, $\bS$ and $(x, x^{(i)})$ form a good pair.
%\end{definition}

\newcommand{\Score}{\textsc{Score}}
\def\PE{\text{PE}}

For each $i \in [n]$ and $\ell\in [\Lambda]$, we define
%$\xi$ (in the range indicated above), and a parameter $\sigma \in (0, 1]$, we let:
\[ 
\Score^{+}_{i,\ell} (f) = \frac{1}{2^n} \cdot \text{number of $\ell$-strong monotone edges
  along variable $i$}.
\]
%\left| \left\{ x \in \{0,1\}^n : (x, x^{(i)}) \text{ is a positive bi-chromatic edge of $f$ which is $\frac{\xi}{\sigma}$-scoring }\right\} \right|. \]
We analogously define $\Score_{i,\ell}^{-}(f)$ for anti-monotone edges along variable $i$. Finally we define
\begin{align}
\Score_i^{+}(f) &= \max_{ \ell\in [\Lambda]} \left\{ \Score^{+}_{i,\ell} (f) \cdot\frac{1}{2^\ell}\right\},  \label{eq:score-def}
\end{align}
and we analogously define $\Score^{-}_i(f)$.

\subsection{The Scores Lemma}\label{sec:scorelemma}

We state the Scores Lemma:

\begin{lemma}[The Scores Lemma]\label{lem:scores-lemma}
%Let $\xi > 0$ be any parameter with $\frac{1}{\sqrt{n}} \leq \xi \leq \frac{1}{\log n}$, and
Let 
 $f \colon \{0,1\}^n \to \{0,1\}$ be a Boolean function that is $\eps$-far from unate with 
   total influence $\I_f < 6 \sqrt{n}$. Then we have
\[ \sum_{i\in [n]} \min \Big\{ \Score^+_i(f),\hspace{0.03cm} \Score_i^{-} (f) \Big\} \ge \Omega\left(\frac{\eps^2}{\Lambda^8}\right). \]
\end{lemma}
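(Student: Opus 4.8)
The plan is to use the directed isoperimetric inequality of \cite{KMS15} as the starting point: since $f$ is $\eps$-far from unate, for \emph{every} orientation $a\in\{0,1\}^n$ the shifted function $f(x\oplus a)$ is $\eps$-far from monotone, so we obtain (via Lemma \ref{lem:KMS}) a large, almost-regular bipartite graph of anti-monotone bichromatic edges of $f(\cdot\oplus a)$. Translating back to $f$, by a careful choice of $a$ (or by applying the inequality to two complementary-looking orientations and combining), we extract two bipartite graphs $G^+$ and $G^-$ whose edges are monotone bichromatic edges and anti-monotone bichromatic edges of $f$ respectively, such that \emph{for many variables $i$} the graph $G^+$ (resp.\ $G^-$) contains $\Omega(2^n/(n\cdot\polylog))$-many edges along variable $i$ on both sides, simultaneously. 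The total number of edges in each graph is $\Omega(\eps\cdot 2^n/\polylog(n))$, and the almost-regularity plus the bound $\I_f < 6\sqrt n$ will be what forces these edges to be spread out across enough variables, rather than concentrated on a few. I expect this combinatorial extraction step — getting monotone \emph{and} anti-monotone structure along a common set of variables, with enough edges per variable — to be the crux of the argument.

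The second step is to argue that, for each such variable $i$, a constant fraction (up to $\polylog$ losses, which is where the $\Lambda^{-8}$ comes from) of the monotone edges along $i$ in $G^+$ are $\ell$-strong for some single scale $\ell\in[\Lambda]$, and likewise for anti-monotone edges. This is a persistency statement: an edge $e=(x,x^{(i)})$ along $i$ is $\ell$-strong if $e$ is $\bS$-persistent with high probability both when $\bS\sim\calH_{\xi,m}$ (for all $m\le n^{2/3}$ powers of two and all $\xi=2^{-k}$ with $k\ge\ell$) and when $\bS\sim\calP_{i,\lceil\sqrt n/2^\ell\rceil}$. For the $\calP_{i,m}$ part this is the classical influence-vs-persistency bound: a random path of length $\sqrt n/2^\ell$ in coordinates avoiding $i$ crosses a bichromatic edge with probability $\lesssim \I_f\cdot\sqrt n/2^\ell/n \approx 2^{-\ell}$, so most points, and hence (by a Markov/averaging argument within the $\Omega(2^n/(n\polylog))$-sized vertex set of $G^+$ — here we need $G^+$ to be large enough that "most of $\{0,1\}^n$ is persistent" still leaves a dense subgraph) most edges of $G^+$, are $\calP_{i,\cdot}$-persistent. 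For the $\calH_{\xi,m}$ part we invoke Lemma \ref{lem:prune2}: after $\Prune$ runs with parameter $\xi$, at least $(1-\xi)$-fraction of $\{0,1\}^n$ is $\bS$-persistent, and summing the failure probabilities over the $O(\log^2 n)$ choices of $(m,\xi)$ shows that all but a $\polylog$-fraction of points survive all tests — and then the same averaging inside the dense bipartite graph shows most of its edges are $\ell$-strong for the right $\ell$. The scale $\ell$ is chosen as the largest one for which the number of edges per variable is still $\Omega(2^n/(2^\ell\cdot n\cdot\polylog))$; this is exactly the quantity being maximized in the definition (\ref{eq:score-def}) of $\Score_i^+(f)$, so for this $i$ we get $\Score_i^+(f)\cdot 2^\ell \ge \Omega(\text{\#edges along }i/(2^n\polylog))$ and similarly for $\Score_i^-$.

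The final step is the counting: let $\calI$ be the set of variables that are "good" in both $G^+$ and $G^-$. Because the total number of monotone bichromatic edges in $G^+$ is $\Omega(\eps 2^n/\polylog)$ and each variable carries at most $\Inf_f[i]\cdot 2^n/2 \le (6\sqrt n)\cdot 2^n/2$... more carefully, because $G^+$ is almost-regular on the variable side as well, the good set $\calI$ must have size $\Omega(\eps\cdot n/\polylog)$, and for each $i\in\calI$ we have shown $\min\{\Score_i^+(f),\Score_i^-(f)\} \ge \Omega((\text{edges per variable})/(2^n\polylog)) \ge \Omega(\eps/(n\polylog))$ (using almost-regularity to lower bound the per-variable edge count by the average, $\Omega(\eps 2^n/(n\polylog))$, times the appropriate $2^{-\ell}$ absorbed into one $\polylog$ factor). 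Summing over $i\in\calI$ gives $\sum_i \min\{\Score_i^+,\Score_i^-\} \ge |\calI|\cdot\Omega(\eps/(n\polylog)) = \Omega(\eps^2/\polylog(n))$, and tracking the $\Lambda$-dependence through the two persistency scales, the union bound over $(m,\xi)$ pairs, and the almost-regularity slack yields the stated $\Omega(\eps^2/\Lambda^8)$. The main obstacle, as noted, is the first step — producing monotone and anti-monotone bipartite structures that overlap on a dense set of variables, each locally dense — since \cite{KMS15} only directly gives the monotone-side structure for one shift at a time, so some care (or a two-orientation trick) is needed to get both types along common variables while respecting the $\I_f<6\sqrt n$ budget.
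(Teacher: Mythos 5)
There is a genuine gap, and it is exactly the step you flag as the crux. Your route needs two bipartite graphs $G^+$ and $G^-$ that are simultaneously dense along a common set of $\Omega(\eps n/\polylog(n))$ variables, with $\Omega(2^n/(n\cdot\polylog(n)))$ edges per variable, and you hope to force this from Lemma~\ref{lem:KMS} plus the budget $\I_f<6\sqrt n$. But Lemma~\ref{lem:KMS} applied to any single shift $f(\cdot\oplus a)$ gives only one anti-monotone graph with no control whatsoever over which variables carry its edges, applying it to two different shifts gives two graphs whose variable supports may be disjoint, and the influence bound does not force spreading: a function $\eps$-far from unate can have all of its violations concentrated on, say, $\Theta(\sqrt n)$ variables, in which case your claims $|\calI|=\Omega(\eps n/\polylog(n))$ and ``$\Omega(2^n/(n\polylog(n)))$ edges per variable'' are simply false. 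Consequently the final counting step, which multiplies a per-variable score lower bound by $|\calI|$, does not go through; the Scores Lemma neither asserts nor needs that the scores are spread over many variables—only that their sum is large.

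The idea you are missing is how the paper removes the minimum: define $a\in\{0,1\}^n$ coordinate-wise by $a_i=0$ if $\Score_i^+(f)\ge\Score_i^-(f)$ and $a_i=1$ otherwise, and set $g(x)=f(x\oplus a)$. Because the distributions $\calH_{\xi,m}$ and the notion of $S$-persistency are invariant under this shift, one gets $\min\{\Score_i^+(f),\Score_i^-(f)\}=\Score_i^-(g)$ for every $i$, so it suffices to prove the one-sided statement (Lemma~\ref{lem:helper-structural}) for a function that is $\eps$-far from monotone; no simultaneous two-sided structure along common variables is ever produced. Then a single application of Lemma~\ref{lem:KMS} to $g$ gives one left- or right-$d$-good graph $G=(U,V,E)$ with $\sigma^2 d=\Theta(\eps^2/\log^4 n)$, a single global scale $\ell$ is fixed by $2^{-\ell}\approx\sigma/\Lambda^4$ (not per variable), and the bound is obtained by summing $\Score_{i,\ell}^-\cdot 2^{-\ell}$ over all $i$, using only that at least half the edges of $E$ are $\ell$-strong: $\Omega(\sigma d)\cdot\sigma/\Lambda^4=\Omega(\eps^2/\Lambda^8)$. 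Your persistency sketch for the $\calP_{i,\cdot}$ and $\calH_{\xi,m}$ conditions is in the right spirit, but note that transferring ``all but a $\xi$- (or $2^{-\ell}\polylog$-) fraction of points are persistent'' to ``all but $o(1)$-fraction of the edges of $G$ are persistent'' uses the degree bound $2d$ against the total edge count $\sigma d\cdot 2^n$ (giving bad fraction $O(\xi/\sigma)=O(1/\Lambda^4)$), not the size of the vertex set; this is precisely why $\ell$ must be tied to $\sigma$ rather than to any per-variable edge count.
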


We note that our Scores Lemma above looks very similar to  Lemma 4.3 from \cite{CWX17b}. 
Thus  the proof  follows a similar trajectory.
The main difference is that we are varying the distributions from which the set $\bS$ of variables is drawn.
Compared to \cite{CWX17b}  we not only consider the~quality of $\bS$ drawn from the $\calP$
  distribution in the definition of strong edges but also those drawn from $\calH_{\xi,m}$
  with a number of possible combinations of $\xi$ and $m$ in the indicated range. 
This makes the proof of the lemma slightly more involved than that of Lemma 4.3 in \cite{CWX17b}.

We prove Lemma~\ref{lem:scores-lemma} by proving the following simpler version, which avoids the minimum. 
\begin{lemma}\label{lem:helper-structural}
Assume that $f$ is 
  $\eps$-far from {monotone} and satisfies $\I_f < 6 \sqrt{n}$. 
Then we have
\[ \sum_{i\in [n]} \Score_{i}^{- }(f) \ge \Omega\left(\frac{\eps^2}{\Lambda^8}\right). \]
\end{lemma}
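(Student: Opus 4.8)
\textbf{Proof proposal for Lemma~\ref{lem:helper-structural}.}
The plan is to follow the structural argument of Lemma~4.3 in \cite{CWX17b}, feeding in the directed isoperimetric inequality of \cite{KMS15} (Lemma~\ref{lem:KMS}) to produce a large, almost-regular bipartite graph $G$ of anti-monotone bichromatic edges, and then to show that a significant fraction of the edges of $G$ are $\ell$-strong for an appropriate scale $\ell$. Concretely: since $f$ is $\eps$-far from monotone and $\I_f<6\sqrt n$, the inequality of \cite{KMS15} gives a bipartite graph on anti-monotone edges with total size $\Omega(\eps\cdot 2^n/\polylog n)$ that is almost-regular in both the left-degree (in terms of vertices) and the per-variable degree; after a standard bucketing of edges by (left degree, right degree, variable-degree scale), one bucket $B$ still carries an $\Omega(\eps/\Lambda^{O(1)})$ fraction of the edge mass. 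Write $d$ for the (common, up to a factor of $2$) vertex-degree in $B$ and pick the scale $\ell\in[\Lambda]$ with $2^\ell\approx$ the relevant degree parameter; the point of the definition of $\Score_i^-$ as a max over $\ell$ with the $1/2^\ell$ weighting is precisely to absorb this choice.

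The heart of the argument is to show that \emph{most} edges $e=(x,x^{(i)})$ in the bucket $B$ are $\ell$-strong, i.e.\ $\bS$-persistent with probability $\ge 1-1/\log n$ both when $\bS\sim\calH_{\xi,m}$ for all $\xi=2^{-k}$, $\ell\le k\le\Lambda$, and all powers of two $m\le n^{2/3}$, and when $\bS\sim\calP_{i,\lceil\sqrt n/2^\ell\rceil}$. For the $\calP_{i,m}$ condition this is the classical persistency-of-a-random-point argument (a variable set of total influence $O(\sqrt n)$, intersected with a random set of size $\approx\sqrt n/2^\ell$, flips the value of a typical vertex with probability $O(1/2^\ell)\le 1/2^\ell$, then Markov over the edges of $B$ loses only a constant fraction). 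For the $\calH_{\xi,m}$ conditions one invokes Lemma~\ref{lem:prune2}: after $\Prune$, at least a $(1-\xi)$-fraction of \emph{all} points of $\{0,1\}^n$ are $\bS$-persistent, so by Markov the fraction of edges of $B$ both of whose endpoints fail to be $\bS$-persistent is at most $O(\xi\cdot 2^n/(\text{edge mass of }B))=O(\xi\Lambda^{O(1)}/\eps)$; summing this over the $O(\Lambda^2)$ pairs $(\xi,m)$ with $\xi\le 2^{-\ell}$ and choosing $\ell$ so that $2^{-\ell}\le\eps/\Lambda^{O(1)}$ (which is exactly what the $1/2^\ell$ weight in $\Score_i^-$ lets us pay for), a union bound keeps a constant fraction of $B$'s edges $\ell$-strong. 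Hence $\sum_i\Score^-_{i,\ell}(f)=\Omega(\text{edge mass of }B)=\Omega(\eps/\Lambda^{O(1)})$, so $\sum_i\Score_i^-(f)\ge\sum_i\Score^-_{i,\ell}(f)/2^\ell=\Omega(\eps^2/\Lambda^{O(1)})$, giving the stated $\Omega(\eps^2/\Lambda^8)$ after tracking the polylog factors.

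The main obstacle I anticipate is the interaction between the two persistency requirements in the definition of an $\ell$-strong edge and the fact that $\Prune$'s guarantee (Lemma~\ref{lem:prune2}) is only about \emph{average} persistency over all of $\{0,1\}^n$, whereas the edges of $B$ occupy only an $\eps/\polylog$ fraction of the cube. The weighting by $1/2^\ell$ in $\Score_i^-$ is what makes this affordable, but one must be careful that the \emph{same} scale $\ell$ simultaneously (i) is small enough that $2^{-\ell}$ beats the $\eps/\polylog$ density loss in all the $\calH_{\xi,m}$ bounds, and (ii) is large enough that $\sqrt n/2^\ell$ is still a meaningful set size for the $\calP_{i,m}$ condition and that $2^\ell$ matches the degree parameter coming out of the \cite{KMS15} bucketing; reconciling these three constraints, while only losing $\polylog$ factors, is the delicate bookkeeping step. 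A secondary subtlety is that $\calH_{\xi,m}$ includes the placeholder variable $n+1$ and is obtained by running $\Prune$ on a random $m$-set \emph{containing} $n+1$; one should check that Lemma~\ref{lem:prune2} applies verbatim to such sets (it does, since $n+1$ is a genuine dummy), so that the ``$(1-\xi)$-fraction of points are $\bS$-persistent'' guarantee is available for the $\calH$ distributions as well. I would also record, as in \cite{CWX17b}, that all of this is done with the \emph{fixed} function $f$, so the only randomness is in $\bS$; the bipartite graph $G$ and bucket $B$ are deterministic objects once $f$ is fixed.
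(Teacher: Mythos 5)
Your overall route is the same as the paper's: apply Lemma~\ref{lem:KMS} to get an almost-regular bipartite graph of anti-monotone edges, pick one scale $\ell\in[\Lambda]$, show that most edges of that graph are $\ell$-strong (using Lemma~\ref{lem:prune2} for the $\calH_{\xi,m}$ conditions and an influence/persistency argument for the $\calP_{i,\lceil\sqrt n/2^\ell\rceil}$ condition, with a union bound over the $O(\Lambda^2)$ pairs $(\xi,m)$), and then cash in the $1/2^\ell$ weight in $\Score_i^-$. The paper handles the $\calP$ condition by invoking the robust-set lemmas of \cite{CWX17b} rather than re-deriving them, but that is a presentational difference, not a different argument.

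However, there is a genuine gap exactly at the bookkeeping step you flag as delicate. First, your Markov step undercounts: a single non-$\bS$-persistent point can be an endpoint of up to $2d$ edges of the almost-regular graph $G=(U,V,E)$, so from the $\le\xi 2^n$ non-persistent points Lemma~\ref{lem:prune2} allows, up to $2d\xi 2^n$ edges can be spoiled; the spoiled fraction is therefore $O(\xi/\sigma)$ (the degree $d$ cancels against $|E|\ge\sigma d\,2^n$), not $O(\xi\cdot 2^n/|E|)=O(\xi/(\sigma d))$ as you wrote. Second, and consequently, the scale must be calibrated to the vertex density $\sigma$ of the KMS subgraph, i.e.\ $2^{-\ell}\asymp\sigma/\Lambda^4$ as in the paper, not to $\eps$: Lemma~\ref{lem:KMS} only guarantees $\sigma=\Omega\bigl(\eps/(\sqrt n\log^2 n)\bigr)$, so $\sigma$ can be far smaller than $\eps/\Lambda^{O(1)}$ (take $d\approx n$), and with your choice $2^{-\ell}\le\eps/\Lambda^{O(1)}$ the correct spoiled fraction $\xi/\sigma$ can exceed $1$, so no constant fraction of $E$ survives the union bound over $(\xi,m)$; the same density correction (an extra $1/\sigma$, since the endpoints of $E$ occupy only a $\sigma$-fraction of the cube and have degree up to $2d$) is missing from your sketch of the $\calP_{i,\lceil\sqrt n/2^\ell\rceil}$ condition. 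With the correct choice of $\ell$, the final bound does not come from a product of two ``$\eps/\polylog$'' factors but from $\sum_i\Score^-_{i,\ell}\ge\Omega(\sigma d)$ times $2^{-\ell}\asymp\sigma/\Lambda^4$, which equals $\Omega(\eps^2/\Lambda^8)$ by the KMS relation $\sigma^2 d=\Theta(\eps^2/\log^4 n)$. (Your assertion that the KMS subgraph carries $\Omega(\eps/\polylog)$ edge mass is in fact true, since $d\ge1$ forces $\sigma=O(\eps/\log^2 n)$, but it cannot substitute for tracking $\sigma$ and $d$ separately in the two steps above.)
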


\begin{proof}[Proof of Lemma~\ref{lem:scores-lemma} assuming Lemma~\ref{lem:helper-structural}]
Let $f \colon \{0,1\}^n \to \{0,1\}$ be a Boolean function which is $\eps$-far from unate. We let $a \in \{0,1\}^n$ be defined by setting, for each $i \in [n]$, 
\[ a_i =\left\{ \begin{array}{cl} 0 & \text{\ if $\Score_{i}^{+}(f) \geq \Score_{i}^{-}(f)$} \\[0.3ex] 
					    1 & \text{\ otherwise} \end{array} \right..\]
Consider the function $g \colon \{0,1\}^n \to \{0,1\}$ defined using $f$ by $g(x) = f(x \oplus a)$. We note that  $\I_f=\I_g$ and $g$ is $\eps$-far from unate and thus, $\eps$-far from monotone. So Lemma \ref{lem:helper-structural} implies that  
\begin{equation}\label{eqeq1}
 \sum_{i\in [n]} \Score_{i}^{-}(g) \ge \Omega\left(\frac{\eps^2}{\Lambda^8}\right).
\end{equation}
Finally, we claim that out choice of $s$ implies that
\begin{equation}\label{eqeq2}
\min\big\{ \Score_{i}^{+ }(f) ,\hspace{0.03cm} \Score_{i}^{- }(f) \big\} = \Score_{i}^{- }(g).
\end{equation}
This can be observed by checking that (1) the distributions $\calH_{\xi,m}$ defined using $f$ and $g$
  are exactly the same; % is the same with respect to $f$ and $g$;
  and (2) a point $x\in \{0,1\}^n$ is $S$-persistent for some $S\subseteq [n+1]$ 
  in $f$ if and only if $x\oplus a$ is
  $S$-persistent in $g$.
As a result, an edge $(x,x^{(i)})$ is $\ell$-strong in $f$ if and only if $(x\oplus a,x^{(i)}\oplus a)$
  is $\ell$-strong in $g$ but of course whether they are monotone or anti-monotone may change depending on $a_i$.
(\ref{eqeq2}) follows from these observations and Lemma \ref{lem:scores-lemma} follows
  from (\ref{eqeq1}) and (\ref{eqeq2}).   
%  This can be observed by the following: suppose $s_i = 0$, and let $(x, x^{(i)})$ be a negative edge of $f$ which is $\beta$-scoring. Then, letting $y = x \oplus s$, we have that $(y, y^{(i)})$ is a negative bi-chromatic edge in $g$ which is $\beta$-scoring. On the other hand, if $s_i = 1$ and $(x, x^{(i)})$ is a positive bi-chromatic edge for $f$ which is $\beta$-scoring, then letting $y = x \oplus s$, $(y, y^{(i)})$ is a negative bi-chromatic edge for $g$ which is $\beta$-scoring.
\end{proof}

Before proving Lemma \ref{lem:helper-structural},
  we need a definition and a key technical lemma from \cite{KMS15}.

%Moving forward, we will prove Lemma~\ref{lem:helper-structural} for a fixed function $f \colon \{0,1\}^n \to \{0,1\}$ which is $\eps$-far from unate and has $I_f < 6\sqrt{n}$. Thus, in order to simplify notation, we will replace $\Score_{i}^{+,\xi}(f)$ with $\Score_{i}^{+, \xi}$, and proceed analogously for the related notation.

\begin{definition}\label{goodbipartite}
Given a Boolean function $f:\{0,1\}^n\rightarrow \{0,1\}$, we write 
  $G_f^-$ to denote its bipartite graph of anti-monotone edges:
\begin{flushleft}\begin{enumerate}
\item Vertices on the LHS of $G_f^-$ correspond to points $x\in \{0,1\}^n$
  with $f(x)=1$ and vertices on the RHS correspond to points $y\in \{0,1\}^n$ 
  with $f(y)=0$;
\item $(x,y)$ is an edge in $G_f^-$ if and only if $(x,y)$ is an anti-monotone edge in $f$.
\end{enumerate}\end{flushleft}

Let $G=(U,V,E)$ be a subgraph of $G_f$,  
where $U$ is a set of points $x$ with
$f(x) = 1$, $V$ is a set~of points $y$ with $f(y) = 0$, and $E$ 
  consists of all anti-monotone edges between $U$ and $V$ (i.e., 
  $G$ is~the induced subgraph of $G_f^-$ on $(U,V)$. 
We say that $G$ is \emph{right-$d$-good} for some positive integer $d$  
  if the degree of every $y\in V$ lies in $[d:2d]$
and the degree of every $x\in U$ is at most $2d$;
We say that $G$ is \emph{left-$d$-good} if the degree of every $x\in U$ lies in
  $[d:2d]$ and the degree of every $y\in V$ is at most $2d$.
\end{definition}

\begin{lemma}[Lemma 7.1 in \cite{KMS15}]\label{lem:KMS}
If $f:\{0,1\}^n\rightarrow \{0,1\}$ is $\eps$-far from monotone, 
 $G_f^-$ contains a bipartite  subgraph $G = (U, V, E)$ induced on $(U,V)$ that 
  satisfies one of the following conditions:
\begin{flushleft}\begin{enumerate}
\item $G$ is left-$d$-good for some positive integer $d$ and $\sigma = |U|/2^n$
  satisfies
\begin{equation}\label{hehehehe1}
\sigma^2 d=\Theta\left(\frac{\eps^2}{\log^4 n}\right).
\end{equation}
\item $G$ is right-$d$-good for some positive integer $d$ and $\sigma=|V|/2^n$
  satisfies (\ref{hehehehe1}).
\end{enumerate}\end{flushleft}
\end{lemma}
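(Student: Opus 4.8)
This is precisely Lemma~7.1 of \cite{KMS15}, so the plan is to invoke it unchanged as a black box; I will not reprove it, since it is the one ingredient of this paper that is borrowed verbatim and our contribution lies entirely in what is built on top of it. For orientation, here is the shape of the argument one would give, and where the normalization in~(\ref{hehehehe1}) comes from.

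The engine is the robust directed isoperimetric (Talagrand-type) inequality of \cite{KMS15}. For $x\in\{0,1\}^n$ write $I_f^-(x)$ for the number of anti-monotone bichromatic edges of $f$ incident to $x$; note $I_f^-(x)\in\{0,1,\dots,n\}$, and summing over each color class separately, $\sum_{x:f(x)=1}I_f^-(x)=\sum_{y:f(y)=0}I_f^-(y)=|E(G_f^-)|$. In its robust form the inequality asserts that, when $f$ is $\eps$-far from monotone, $\sum_{x\in\{0,1\}^n}\sqrt{I_f^-(x)}=\tilde\Omega(\eps\cdot 2^n)$, the $\tilde\Omega$ absorbing the polylogarithmic loss of the robust form.

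Given this, the first step of the plan is a dyadic bucketing that already produces~(\ref{hehehehe1}): split the sum by color class, so that without loss of generality $\sum_{x:f(x)=1}\sqrt{I_f^-(x)}=\tilde\Omega(\eps 2^n)$, and then bucket the $f=1$ points by the value of $I_f^-(x)$; since $I_f^-(x)\le n$ there are only $O(\log n)$ nonempty buckets, so some bucket $U=\{x:f(x)=1,\ I_f^-(x)\in[d,2d)\}$ satisfies $|U|\sqrt{d}=\tilde\Omega(\eps 2^n)$, i.e.\ with $\sigma=|U|/2^n$ we get $\sigma^2 d=\tilde\Omega(\eps^2)$. The induced subgraph of $G_f^-$ on $(U,\{y:f(y)=0\})$ already has every left-degree in $[d,2d)$, so we are one step away from a left-$d$-good graph; the remaining step is to also control the degrees on the $f=0$ side, by a further round of dyadic bucketing (and peeling) there, landing either in a left-$d$-good graph or in a right-$d'$-good one. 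The definition of ``$d$-good'' is engineered (one side pinned to $[d:2d]$, the other merely bounded above by $2d$) exactly so that one of the two outcomes is reached regardless of whether the degree scale extracted on the second side ends up above or below the first; the additional polylogarithmic loss in this second round is what turns the bound above into the stated $\Theta(\eps^2/\log^4 n)$ form of~(\ref{hehehehe1}).

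The step I expect to be the genuine obstacle — and the reason I would defer to \cite{KMS15} rather than redo it — is precisely this coordination of the two degree scales: restricting to the $f=0$ vertices of a single degree scale can leave some $U$-vertices with far fewer than $d$ neighbours (and symmetrically, pinning the $U$-side first can collapse regularity on the other side), so a naive two-round bucketing need not preserve almost-regularity on both sides at once. Handling this requires a careful choice of which side to fix first, together with an interleaved bucketing/peeling argument kept to $O(1)$ rounds so that only a $\polylog(n)$ factor is lost overall. Since none of this interacts with the new ideas of this paper, we simply cite Lemma~7.1 of \cite{KMS15}.
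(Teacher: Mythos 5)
Your proposal matches the paper exactly: the paper states this as Lemma~7.1 of \cite{KMS15} and gives no proof, simply importing it as a black box, which is precisely what you do. Your additional sketch of the robust directed isoperimetric inequality and dyadic bucketing is reasonable orientation but is not required, since the paper itself supplies no argument here.
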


We note that each vertex in $G_f^-$ has degree at most $n$.
As a result, the two parameters $d$ and $\sigma$ in Lemma \ref{lem:KMS}
  always satisfy that $1\le d\le n$ and 
\begin{equation}\label{hehehehe2}
1\ge \sigma\ge \Omega\left(\frac{\eps}{\sqrt{n}\log^2n}\right).
\end{equation}

%\red{Now this proof needs to be updated.
%Basically we follow the FOCS paper to choose $\ell$, and I think it automatically satisfies
%  that $2^\ell/\sqrt{n}\ll \sigma$ for the $\sigma$ of the bipartite graph,
%  and then it follows from the property of our pruning procedure running on $\calH$.}

\begin{proof}[Proof of Lemma~\ref{lem:helper-structural}]
Let $f:\{0,1\}^n\rightarrow \{0,1\}$ be a function that is $\eps$-far from monotone
  with total variance $\I_f\le 6\sqrt{n}$.
It follows from Lemma \ref{lem:KMS} that there is a subgraph $G=(U,V,E)$ of $G_f^-$
  that satisfies one of the two conditions in Lemma \ref{lem:KMS}.
Below we assume without loss of generality that $G$ is left-$d$-good and $\sigma=|U|/2^n$ satisfies (\ref{hehehehe1});
%Copy Definition 6.2 and Theorem 2 from the FOCS paper. In particular,
%$$
%\sigma^2d=\Theta\left(\frac{\eps^2}{\log^4 n}\right).
%$$
%Let $G = (U, V, E_{-})$ be a bipartite graph consisting of negative bi-chormatic edges of $f \colon \{0,1\}^n \to \{0,1\}$. Specifically, $U, V \subset \{0,1\}^n$ where $f(x) = 1$ when $x \in U$ and $f(x) = 0$ when $x \in V$. Furthermore, every edge of $E_{-}$ is of the form $(x, x^{(i)}) \in E_-$ for some $i \in [n]$. By \cite{KMS15}, there exists two parameters $d \in [n]$ and $0 < \sigma \leq 1$ where $G$ is semi-regular with degree $d$\footnote{write precise definition}, and $|U| \asymp |V| = 2^n \cdot \sigma$; these parameters satisfy $\sigma^2 d \gsim \eps^2$. 
%Below we assume without loss of generality that $G$ is left-$d$-good;
  the proof for $G$ being right-$d$-good is symmetric.

In the rest of the proof we set $\ell$ to be the positive integer such that
$$
\frac{1}{2^{\ell}}< \frac{\sigma}{\Lambda^4}\le 
\frac{1}{2^{\ell-1}}.
%\frac{\eps}{\sqrt{n}\log^6 n}.
$$
So $\ell\in [\Lambda]$ using (\ref{hehehehe2}).
Our goal is to show that at least half of edges in $G$ are $\ell$-strong.
As a result,
$$
\sum_{i\in [n]} \Score_{i}^-(f)
\ge \sum_{i\in [n]} \Score_{i,\ell}^-(f)\cdot \frac{1}{2^\ell} 
\ge \frac{\Omega(|E|)}{2^n}\cdot \frac{1}{2^\ell}
=\Omega(\sigma d)\cdot \frac{1}{2^\ell}
=\Omega\left(\sigma d\cdot \frac{\sigma}{\Lambda^4} \right)
=\Omega\left(\frac{\eps^2}{\Lambda^8}\right).
$$  
The fact that at least half of edges in $G$ are 
  $\ell$-strong follows directly from the next two claims:\vspace{0.07cm}
  
\begin{claim}\label{simpleclaim}
%
%  be two parameters such that $m$ and $1/\xi$ are both powers of $2$, 
%  $\bS\sim \calH_{m,\xi}$ %with $\xi=2^\ell/\sqrt{n}$
%  satisfies both 
At least $(1-o(1))$-fraction of edges in $G$
  satisfy the first condition of being  $\ell$-strong.
\end{claim}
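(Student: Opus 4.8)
\textbf{Proof proposal for Claim~\ref{simpleclaim}.}

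The plan is to show that, for each of the finitely many pairs $(\xi, m)$ that appear in the first condition of being $\ell$-strong, all but an $o(1)$-fraction of the edges in $G$ are $\bS$-persistent with probability at least $1 - (1/\log n)$ when $\bS \sim \calH_{\xi, m}$; a union bound over the $O(\Lambda^2)$ such pairs then gives the claim. Fix such a pair $(\xi, m)$ with $m \le n^{2/3}$ a power of $2$ and $\xi = 1/2^k$ with $\ell \le k \le \Lambda$, so in particular $\xi \le 1/2^\ell < \sigma/\Lambda^4$. By Lemma~\ref{lem:prune2}, a set $\bS_0' \subseteq [n]$ of size $m-1$, preprocessed by $\Prune$, yields a set $\bS'$ for which at least $(1-\xi)$-fraction of all points of $\{0,1\}^n$ are $\bS'$-persistent, except with probability $\exp(-\Omega(\log^2 n))$. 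Here is where I would split the sampling of $\bS \sim \calH_{\xi,m}$: since $n+1 \in \bS$ always, think of $\bS_0$ as $\bS_0' \cup \{n+1\}$ where $\bS_0'$ is a uniform size-$(m-1)$ subset of $[n]$; the first step is to argue that, for a typical $\bS_0'$, most points of $\{0,1\}^n$ are still $\bS'$-persistent \emph{after} preprocessing, and then to transfer this from ``most points'' to ``most endpoints of edges in $G$.''

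The second step — the transfer to $G$ — is where the low-influence hypothesis and Lemma~\ref{lem:main} enter, and I expect it to be the main obstacle. The subtlety flagged in the technical overview is precisely that $G$ covers only a $\Theta(\sigma)$-fraction of the cube (and individual $P_i^\pm$ can be tiny), so one cannot directly conclude that endpoints of $G$-edges inherit persistency from the ``most points'' statement; a clever $f$ could conceivably make $\Prune$ avoid exactly these points. The fix is to observe that the variable $i$ along which a given edge $e = (x, x^{(i)}) \in G$ lies has low influence — $G$ is left-$d$-good and $\I_f \le 6\sqrt n$, so the relevant variables have small influence on average, and one restricts attention to edges along variables $i$ with $\Inf_f[i]$ below a suitable threshold (losing only an $o(1)$-fraction of $|E|$ by Markov, using $\sigma^2 d = \Theta(\eps^2/\log^4 n)$ to control how small the threshold can be). For such an $i$, Lemma~\ref{lem:main} says that running $\Prune$ on $S_0 \ni i$ versus on $\Sub(S_0, i)$ produces results that are $O(m \log^5 n / \xi) \cdot \Inf_f[i]$-close in total variation. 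Since in the ``$\Sub(S_0,i)$'' world the variable $i$ has been replaced by the placeholder $n+1$, the preprocessing is genuinely oblivious to $i$, so the distribution of $\bS' = \bS \setminus \{i\}$ there is independent of whatever happens along variable $i$; hence for that world one can condition on $x$ and $x^{(i)}$ being $\bS'$-persistent using only the ``most points are persistent'' statement applied to the \emph{uniform} point underlying the edge (here one pays the degree regularity of $G$ to say that a uniformly random edge endpoint of $G$ behaves like a uniformly random point up to a $\Theta(\sigma)$ or $\Theta(1/d)$ factor, which is absorbed because $\xi < \sigma/\Lambda^4$ leaves enough slack). Transferring back to the original ``$S_0 \ni i$'' world costs the total-variation distance from Lemma~\ref{lem:main}, which the influence threshold makes $o(1)$.

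Putting it together: fix $(\xi, m)$; restrict to the $(1-o(1))$-fraction of edges along variables of influence below the threshold; for each such edge $e$ along variable $i$, in the oblivious (placeholder) world bound $\Pr[e \text{ not } \bS'\text{-persistent}]$ by the Lemma~\ref{lem:prune2} guarantee plus the $O(\sigma)$-vs-uniform loss, getting something like $O(\xi \cdot d)$ or $O(\xi/\sigma) = o(1/\log n)$ in expectation over $\bS_0'$, so that all but an $o(1)$-fraction of these edges are $\bS'$-persistent with probability $\ge 1 - 1/\log n$; then use Lemma~\ref{lem:main} to carry this over to the real world with an additional $o(1)$ loss; finally union-bound over the $O(\Lambda^2)$ pairs $(\xi, m)$. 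The routine parts are the Chernoff/Markov estimates and bookkeeping of logarithmic factors; the conceptual heart is the oblivious-world argument that lets Lemma~\ref{lem:main} rescue persistency on the small set $G$.
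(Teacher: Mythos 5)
There is a genuine gap, and it sits exactly where you placed the ``conceptual heart'' of your argument. Your plan restricts attention to edges of $G$ along variables $i$ whose influence is below a threshold small enough for Lemma~\ref{lem:main} to give $o(1)$ total-variation distance, i.e.\ roughly $\Inf_f[i] \ll \xi/(m\log^5 n)$, and you assert that only an $o(1)$-fraction of $|E|$ is lost ``by Markov.'' Nothing in the hypotheses supports this: at this point the only assumption is $\I_f < 6\sqrt n$, which bounds the \emph{total} influence, and the edges of $G$ may lie almost entirely along variables of individual influence far above your threshold (with $m$ as large as $n^{2/3}$ and $\xi$ as small as $2^{-\Lambda}$, the threshold can be polynomially small, while Case~2 of the algorithm is precisely the regime where the relevant variables have $\Inf_f[i] \ge \eps^2 n^{1/3}/(|\calI|\Lambda^{13})$ — and the Scores Lemma, hence this claim, must hold there too). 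So the Markov step can discard essentially all of $E$, and the oblivious-world transfer via Lemma~\ref{lem:main} cannot be made to work in general. The ``clever adversary'' worry you imported from the technical overview is about a different, later statement (persistency of the tiny sets $P_i^{\pm}$, of measure $O(1/n)$, in the analysis of Case~1); it does not apply here.

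The claim in fact has a short direct proof, which is essentially the counting you relegated to a parenthetical. Fix one pair $(\xi,m)$. By Lemma~\ref{lem:prune2}, with probability $1-\exp(-\Omega(\log^2 n))$ over $\bS\sim\calH_{\xi,m}$ at most $\xi\cdot 2^n$ points are not $\bS$-persistent; since every vertex of $G$ is incident to at most $2d$ edges of $E$, at most $2\xi d\cdot 2^n$ edges of $E$ have a non-persistent endpoint, out of $|E|\ge \sigma d\cdot 2^n$. Hence $\Prx_{\be,\bS}[\be \text{ is not } \bS\text{-persistent}] \le \exp(-\Omega(\log^2 n)) + O(\xi/\sigma) = O(1/\Lambda^4)$, using $\xi \le 1/2^{\ell} < \sigma/\Lambda^4$. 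An averaging argument then shows that the fraction $\alpha$ of edges that fail to be $\bS$-persistent with probability at least $1/\log n$ satisfies $\alpha = O(\log n/\Lambda^4) = o(1)$, and a union bound over the $O(\Lambda^2)$ choices of $(m,\xi)$ finishes the claim — no restriction to low-influence variables and no use of Lemma~\ref{lem:main} is needed.
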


\begin{claim}\label{claim:FOCS}
At least $(1-o(1))$-fraction of edges in 
  $G$ satisfy the second condition\vspace{0.1cm}
  of being $\ell$-strong. % $\ell$-strong edges.
%\footnote{Xi: The original writeup
%  has a small bug that is easy to fix. We need to discuss two cases. If
%  $\sigma \sqrt{n}/\log^4 n <1$, then the size of the set is $1$ and we should 
%  say that trivially all $E$ satisfy the condition.
%On the other hand, if $\sigma\sqrt{n}/\log^4 n\ge 1$ then the original proof works.}
\end{claim}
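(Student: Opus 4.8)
\textbf{Proof plan for Claim~\ref{claim:FOCS}.} The goal is to show that almost all anti-monotone edges $e = (x, x^{(i)})$ in the left-$d$-good subgraph $G = (U, V, E)$ satisfy the second condition of being $\ell$-strong, namely that $e$ is $\bS$-persistent with probability at least $1 - (1/\log n)$ when $\bS \sim \calP_{i, \lceil \sqrt{n}/2^\ell \rceil}$. Since $\calP_{i,m}$ is just the uniform distribution over size-$m$ subsets of $[n] \setminus \{i\}$, this is precisely the kind of persistency statement that was established in the analysis of \cite{CWX17b}, and I expect the proof to mirror that argument: it is a direct random-walk / union-bound calculation, not requiring the new ideas about $\Preprocess$.

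The plan is as follows. First I would fix the endpoint, say the point $u \in U$ (the $f(u) = 1$ side), and bound the probability that $u$ fails to be $\bS$-persistent for $\bS \sim \calP_{i, \lceil \sqrt{n}/2^\ell \rceil}$. By Definition~\ref{def:persistency}, non-persistence of $u$ means that a uniformly random subset $\bT \subseteq \bS$ of size roughly $|\bS|/2$ satisfies $f(u) \neq f(u^{(\bT)})$ with probability more than $1/\log^2 n$; equivalently, flipping a random set of about $\sqrt{n}/2^{\ell+1}$ coordinates among $[n] \setminus \{i\}$ changes the value of $f$ with nontrivial probability. The key quantitative input is that the walk length is $m' \approx \sqrt{n}/2^{\ell+1}$, while $1/2^\ell \approx \sigma/\Lambda^4$, so $m' \approx \sqrt{n}\sigma/\Lambda^4$. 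Using the total-influence bound $\I_f \le 6\sqrt{n}$, the expected number of bichromatic edges crossed by a random walk of length $m'$ from a uniformly random point is $O(m' \I_f / n) = O(\sigma / \Lambda^4)$, which is small. By Markov's inequality, a $(1 - 1/\log^{10} n)$-fraction of all points $z \in \{0,1\}^n$ have the property that a random walk of length $m'$ from $z$ stays monochromatic with probability at least $1 - 1/\log^2 n$ — call these points \emph{good}. The standard argument then shows a good point is $\bS$-persistent with probability at least $1 - 1/\log n$ for $\bS \sim \calP_{i,m}$ (one has to be slightly careful that $\bT$ is a \emph{subset} of a random set rather than itself a uniformly random set, but since $\bS$ is uniform and $\bT|\bS$ is uniform of the right size, $\bT$ is uniform of size $m'$ among subsets of $[n]\setminus\{i\}$, so this is immediate).

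The second step is to transfer "most points in $\{0,1\}^n$ are good" to "most endpoints of edges in $G$ are good." Here I would use that $|U| = \sigma 2^n$ with $\sigma \ge \Omega(\eps/(\sqrt n \log^2 n))$, and that $\sigma$ dominates $1/\Lambda^{O(1)}$ up to the relevant thresholds — more precisely, the fraction of non-good points is at most $1/\log^{10} n$, which is $o(\sigma)$ is \emph{not} automatically true, so instead I would choose the Markov threshold inside the definition of "good" to be tiny enough (e.g. make the walk-crossing probability bound $O(\sigma/\Lambda^4) \cdot \text{polylog}$, then Markov out to failure fraction $o(\sigma/\Lambda^{\text{large}})$) so that even after restricting to $U$ (losing a factor $1/\sigma$) the non-good fraction among $U$ is $o(1)$. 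The same for $V$, using that every vertex of $V$ in $G$ has degree at least $1$ and at most $2d$ and $|E| = \Theta(\sigma d 2^n)$, so the $V$-endpoints of edges of $G$ are spread over a set of size $\ge |E|/(2d) = \Omega(\sigma 2^n)$, giving the same loss factor. A union bound over the two endpoints then shows a $(1-o(1))$-fraction of edges of $G$ have both endpoints good, hence are $\bS$-persistent with probability $\ge 1 - 2/\log n$ for $\bS \sim \calP_{i,m}$; absorbing the constant $2$ into the polylog (or tightening the thresholds) gives the stated $1 - 1/\log n$.

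The main obstacle is bookkeeping the polylogarithmic slack: one must make sure the failure fraction from Markov's inequality, after being divided by $\sigma$ (which can be as small as $\eps/(\sqrt n \log^2 n)$), is still $o(1)$, and that the walk length $m' \approx \sqrt{n}\sigma/\Lambda^4$ is small enough relative to $n/\I_f \gtrsim \sqrt n$ that the crossing probability is genuinely $\ll 1/\log^2 n$ — this is exactly where the hypothesis $\I_f < 6\sqrt n$ and the choice of $\ell$ (so that $1/2^\ell \approx \sigma/\Lambda^4$, in particular $2^\ell$ is polynomially bounded by $\sqrt n / \sigma \le \sqrt n \cdot O(\sqrt n \log^2 n /\eps)$) are both used. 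I do not expect any genuinely new idea to be needed here beyond the analysis already present in \cite{CWX17b}; the novelty of this paper lives in Claim~\ref{simpleclaim} (the first condition, involving $\calH_{\xi,m}$ and hence $\Preprocess$), not in Claim~\ref{claim:FOCS}.
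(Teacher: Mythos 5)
Your plan is the natural influence--Markov--edge-counting argument, and in spirit it matches what the paper relies on; but note that the paper does not reprove Claim~\ref{claim:FOCS} at all --- it invokes the ``robust sets'' machinery of \cite{CWX17b} (their Definition~6.4 together with Lemmas~6.11 and~6.12), and that machinery is doing real work that your reconstruction does not capture. In the regime where the sampled set is genuinely long, i.e.\ $2^\ell \le \sqrt{n}$ so that $m=\lceil \sqrt{n}/2^\ell\rceil = \Theta(\sqrt{n}\sigma/\Lambda^4)$, your calculation closes: the average crossing probability over a uniform start point is $O(m\hspace{0.03cm}\I_f/n)=O(\sigma/\Lambda^4)$, Markov at threshold $1/\polylog(n)$ and the degree-counting transfer to $U$ and to the $V$-endpoints (each losing a factor $1/\sigma$, respectively via $|U|=\sigma 2^n$ and $|E|\ge \sigma d\hspace{0.03cm}2^n$ with max degree $2d$) leave a failure fraction $O(\polylog(n)/\Lambda^4)=o(1)$, and the nested Markov over $\bS$ and then $\bT\subseteq\bS$ only costs further polylogs, as you say.

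The gap is the complementary regime $2^\ell \gtrsim \sqrt{n}$, equivalently $\sigma \lesssim \Lambda^4/\sqrt{n}$, which is allowed by (\ref{hehehehe2}) (σ can be as small as $\Theta(\eps/(\sqrt{n}\log^2 n))$) and is exactly where your ``main obstacle'' paragraph claims the bookkeeping still works --- it does not. There $m=O(\Lambda^2)$ (possibly $m=1$), so the average crossing probability has an additive term $\Theta(m/\sqrt{n})=\Omega(1/\sqrt{n})$ that does not scale with $\sigma$; after dividing by $\sigma$ in the transfer step the residual is $\tilde{\Omega}(1/(\sigma\sqrt{n}))=\tilde{\Omega}(1/\eps)$, which is vacuous. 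Worse, the $U$-endpoints of edges of $G$ are not typical points: by left-$d$-goodness they each have at least $d=\Theta(\eps^2/(\sigma^2\log^4 n))\gg \sqrt{n}$ bichromatic edges in this regime, so their own sensitivity dwarfs the average $\I_f$, and for very small $m$ their persistence with respect to $\bS\sim\calP_{i,\lceil\sqrt{n}/2^\ell\rceil}$ is essentially a statement about their pointwise sensitivity, which a total-influence averaging argument (the only tool in your sketch) cannot certify --- contrast this with Claim~\ref{simpleclaim}, where Lemma~\ref{lem:prune2} builds the needed $\xi\le 1/2^\ell<\sigma/\Lambda^4$ scaling into the guarantee before the $1/\sigma$ loss is taken. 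Handling this small-$\sigma$/large-$d$ regime is precisely what the cited Lemmas~6.11--6.12 of \cite{CWX17b} are engineered for, so your proposal needs either to import that argument or to add a separate treatment of the case $2^\ell\gtrsim\sqrt{n}$ before it can be considered a proof of Claim~\ref{claim:FOCS}.
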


%We prove the following claim in the rest of the proof.

%Claim \ref{claim:FOCS} was proved previously in \cite{CWX17b}, though under 
 % slightly different notation. 
\red{The proof of Claim~\ref{claim:FOCS} follows from the arguments in Section~6.2 in \cite{CWX17b}. Specifically, given the definition of \emph{robust sets} for a bichromatic edge $e$ of a certain size in Definition~6.4 of \cite{CWX17b}, Claim~\ref{claim:FOCS} is equivalent to applying Lemma~6.11 and Lemma~6.12 twice. }
%(This is where we need the assumption of 
 % $\I_f\le 6\sqrt{n}$.) 
%We include its proof in Appendix \ref{sec:claimFOCS} for completeness.

We prove Claim \ref{simpleclaim} in the rest of the proof.
%It follows a union bound (since there are only $O(\Lambda^2)$ many
%  choices of $m$ and $\xi$) that the number of $\ell$-strong edges is $\Omega(|E|)$
%  and thus,
To this end, let $m\le n^{2/3}$ and $\xi\le 1/2^\ell$ such that both $m$ and $1/\xi$ 
  are powers of $2$. We consider the quantity $\alpha$
  as the fraction of $e\in E$ such that $e$ is not $\bS$-persistent with probability at~least $1/\log n$ when
  $\bS\sim \calH_{\xi,m}$. % violates one of the two conditions with probability 
%  at least $1/4$. 
Using $\alpha$ we have
\[ \Prx_{\substack{\be, \bS}}
\big[\be \text{ is  
  $\bS$-persistent}\big] \leq (1 - \alpha) + \alpha\left(1 - \frac{1}{\log n}\right)=1-\alpha/\log n, \]
where $\be$ is drawn uniformly from $E$ and $\bS\sim \calH_{\xi,m}.$ % \\ (\bx, \bx^{(\bi)})\sim E}
%we use $(\bx,\bx^{(\bi)})\sim E$ to denote an edge drawn uniformly from $E$.  
%\[ \alpha = \Prx_{\be \sim E}\left[ (\bx, \bx^{(\bi)}) \text{ is not $\frac{\xi}{\sigma}$-scoring}\right]. \]
On the other hand, we consider the probability %we have
 \begin{align*}
 \Prx_{\substack{{\be,\bS}}}
 \big[\be \text{ is not 
   $\bS$-persistent}\big].\end{align*} %\leq (1 - \alpha) + 3\alpha/4=1-\alpha/4. \]
%\\ &\le \Prx_{\substack{\bS \sim \calH_{\xi,m} \\ (\bx, \bx^{(\bi)})\sim E}}
%\left[ \bi\in \bS\right] %\ \text{or}\ (\bx, \bx^{(\bi)}) \text{ is not 
  %$\bS$-persistent}\right]+
%  +
%  \Prx_{\substack{\bS \sim \calH_{\xi,m} \\ (\bx, \bx^{(\bi)})\sim E}}
%\left[  (\bx, \bx^{(\bi)}) \text{ is not 
%  $\bS$-persistent}\right].
%\end{align*}
%The first probability is at most $(m-1)/n\le 1/n^{1/3}$,
%  as the initial set $\bS_0$ in $\calH_{\xi,m}$ is uniform over all subsets of $[n+1]$ of size
%  $m$ that contain $n+1$ (and note that $\bi$ takes values in $[n]$). 
%For the second probability,
By Lemma~\ref{lem:prune2}, as well as the fact that each vertex of $G$ is incident to at most $2d$ edges in $E$, for~at least $\smash{(1 - \exp (-\Omega(\log^2 n) ) )}$-fraction of $\smash{\bS \sim \calH_{\xi,m}}$, there are at most 
  $\smash{2{\xi} d \cdot 2^n}$ many edges which are not  $\bS$-persistent out of a total of at least $\sigma d \cdot 2^n$ edges in $E$. Therefore, we have
\begin{align*} 
\Prx_{\be,\bS}
\big[  \be \text{ is not 
  $\bS$-persistent}\big]
%\Prx_{\substack{\bS \sim \calH_{\xi} \\ (\bx, \bx^{(\bi)})\sim E_{-}}}\left[ \bS,(\bx, \bx^{(\bi)}) \text{ are good pair}\right] 
%&= \Ex_{\bS \sim \calH_{\xi}}\left[ \Prx_{(\bx, \bx^{(\bi)})}\left[\bS , (\bx, \bx^{(\bi)}) \text{ are good pairs}\right] \right] \\
	&\le 
	\exp\left(-\Omega(\log^2 n)\right)+
	\left(1 - \exp\left(-\Omega(\log^2 n)\right) \right) %\left( 1 - 
	 \cdot O\left( {\xi}/{\sigma}\right),
\end{align*}
which is $O(1/\Lambda^4)$.
Combining these inequalities we have that $\alpha=O(\log n/\Lambda^4)$.
Claim \ref{simpleclaim} follows from a union bound over $O(\log n)\cdot \Lambda\le O(\Lambda^2)$
  many choices of the two parameters $m$ and $\xi$.
%Note that if $\sigma \sqrt{n}/\log^4 n<1$, we have $\ell=1$
%Using $$\xi\le 2^\ell/\sqrt{n}\le 2\left\lceil \frac{\sigma\sqrt{n}}{\log^4 n}
%\right\rceil \cdot \frac{1}{\sqrt{n}}$$
%Furthermore, 
%which, combined with the fact that $\exp\left(-\Omega(\log^2 n) \right) \ll \frac{\xi}%{\sigma \log n}$, implies that $\alpha \lsim \frac{1}{\log n}$. Therefore, if we consider $j \in [\log n]$ where $\frac{\sigma}{2} \leq \sigma_j \leq \sigma$, we have that:
%\begin{align*}
%\sum_{i=1}^{n} \Score_{i, \sigma_j}^{-, \xi} \cdot \sigma_j \gsim \sigma \sum_{i=1}^{n} %\Score_{i, \sigma}^{-, \xi} \gsim \sigma^2 d \gsim \eps^2.
%\end{align*}
\end{proof}

\subsection{Bucketing scores}

We will now use standard grouping techniques to make Lemma~\ref{lem:scores-lemma} easier to use.
%$, as well as establish a parameter setting for $\xi$. Then, we state Lemma~\ref{lem:summary-scores} to summarize the discussion of the scores lemma in the way that will be used in the next sections.

From (\ref{eq:score-def}), we say that $i \in [n]$ is of \emph{type}-$(s,t)$ for some $s,t \in [\Lambda]$ if 
\[ \Score_{i}^{+ } = \Score_{i,  {s}}^{+ } \cdot \frac{1}{2^{s}} 
  \quad\text{and}\quad \Score_{i}^{- } = \Score_{i,t}^{-} \cdot \frac{1}{2^{t}}. \]
From Lemma~\ref{lem:scores-lemma}, there exist $s, t \in [\Lambda]$ such that
\begin{align}
\sum_{\substack{i\in [n] \\ \text{type-$(s,t)$}}} \min \left\{ \Score_{i}^{+ }, \Score_{i}^{- }\right\} &\ge \Omega\left(\frac{\eps^2}{\Lambda^{10}}\right) . \label{eq:bucket-2}
\end{align}
%Furthermore, we divide the set of type-$(s,t)$ directions into 
%  $3\Lambda$ many buckets: 
Furthermore, we say a variable $i\in [n]$ has \emph{weight} $k$ for some positive integer $k$ if
%we damong these $i \in [n]$ of type-$(s,t)$, we will say that $i$ has weight $k$ if 
\[ \frac{1}{2^{k }} < \min\left\{ \Score_{i}^{+ }, \Score_{i}^{- } \right\} \leq \frac{1}{2^{k-1}}. \]
%Note that when the value is at most $\eps^3/n^3$ it does not belong to any bucket but 
%  all such directions contribute at most $\eps^3/n^2$ in the sum, which is negligible compared
%    to $\eps^2/\Lambda^10$. 
%As a result, there exist an $h\in [3\Lambda]$ and $\calI\subseteq [n]$  such that
Therefore, we have
\begin{align*}
\sum_{\substack{i\in [n] \\ \text{type-$(s,t)$}}} \min \left\{ \Score_{i}^{+ }, \Score_{i}^{- }\right\} &= \sum_{k\ge 1} \sum_{\substack{i\in [n] \\ \text{type-$(s,t)$} \\ \text{weight $k$}}} \min \left\{ \Score_{i}^{+ }, \Score_{i}^{- }\right\} \\
	&\leq \sum_{k\in [3\Lambda]}  \sum_{\substack{i\in [n] \\ \text{type-$(s,t)$} \\ \text{weight $k$}}} \min \left\{ \Score_{i}^{+ }, \Score_{i}^{- }\right\} + n \cdot \left(\frac{\eps}{n}\right)^{3},
\end{align*}
which implies by (\ref{eq:bucket-2}) that there exists some $h \in [3\Lambda]$ such that
\begin{align}
\sum_{\substack{i \in [n] \\ \text{type-$(s,t)$} \\ \text{weight $h$}}} \min\left\{ \Score_{i}^{+ }, \Score_{i}^{- } \right\} \ge \Omega\left(\frac{\eps^2}{\Lambda^{11}}\right).  \label{eq:bucket-3}
\end{align}
We let $$\calI^* = \big\{ i \in [n] : i \text{ is of type-$(s, t)$ and weight $h$}\big\},$$
and let $\calI$ be a subset of $\calI^*$ such that 
  $|\calI|$ is the largest power of $2$ that is not larger than $|\calI^*|$.
  
We summarize the above discussion in the following lemma.
\begin{lemma}\label{lem:score-summary}
Let $f \colon \{0,1\}^n \to \{0,1\}$ be $\eps$-far from unate with $\I_f < 6 \sqrt{n}$.
%Let $\xi$ be a parameter with $\frac{1}{\sqrt{n}} \leq \xi \leq \frac{1}{\log n}$, and  
Then there are $s, t \in [\Lambda]$, $h \in [3\Lambda]$ and a set $\calI \subseteq [n]$
  such that $|\calI|$ is a power of $2$, $|\calI|/2^h=\Omega(\eps^2/\Lambda^{11})$ and 
  every $i\in \calI$ has  
% denote indices $i \in [n]$ of type-$(j_1, j_2)$ and weight $k$, we have:
%\[ |I| \gsim \frac{\eps^2 \cdot 2^{k}}{\log^2 n \log(n/\eps)}, \]
%and every $i \in I$ satisfies:
\begin{equation}\label{hehehehehehe} \min\left\{\Score_{i,s}^{+ }\cdot \frac{1}{2^s},\hspace{0.06cm}\Score_{i,t}^{- }\cdot 
\frac{1}{2^t}\right\} \geq \frac{1}{2^h}  .
\end{equation}
\end{lemma}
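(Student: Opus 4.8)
The statement to prove is Lemma~\ref{lem:score-summary}, which is exactly the ``bucketing'' consequence of the Scores Lemma. The plan is to carry out three standard pigeonhole/grouping steps on the quantity in Lemma~\ref{lem:scores-lemma}, each time losing only a $\poly(\Lambda)$ factor, and then package the surviving index set. There is genuinely nothing deep here once Lemma~\ref{lem:scores-lemma} is in hand; the work is in being careful with the loss factors.

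\textbf{Step 1 (split by type).} Start from Lemma~\ref{lem:scores-lemma}: $\sum_{i \in [n]} \min\{\Score_i^+(f), \Score_i^-(f)\} \ge \Omega(\eps^2/\Lambda^8)$. Recall from (\ref{eq:score-def}) that $\Score_i^+ = \Score_{i,s}^+ \cdot 2^{-s}$ for \emph{some} maximizing $s = s(i) \in [\Lambda]$, and similarly $\Score_i^- = \Score_{i,t}^- \cdot 2^{-t}$ for some $t = t(i) \in [\Lambda]$; call $i$ of type-$(s,t)$. Partitioning $[n]$ into the at most $\Lambda^2$ type classes and applying pigeonhole, there exist fixed $s, t \in [\Lambda]$ with $\sum_{i\ \text{type-}(s,t)} \min\{\Score_i^+, \Score_i^-\} \ge \Omega(\eps^2/\Lambda^{10})$; this is exactly (\ref{eq:bucket-2}).

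\textbf{Step 2 (truncate and bucket by weight).} Next define the weight of $i$ to be the integer $k$ with $2^{-k} < \min\{\Score_i^+, \Score_i^-\} \le 2^{-(k-1)}$ (variables with the min equal to $0$ have no weight and simply do not appear). Split the sum in (\ref{eq:bucket-2}) over weights $k \ge 1$. The tail with $k > 3\Lambda$ contributes at most $n \cdot 2^{-3\Lambda} \le n \cdot (\eps/n)^3$ by the choice $\Lambda = \lceil 2\log(n/\eps)\rceil$, which is negligible compared to the right-hand side of (\ref{eq:bucket-2}); hence the restriction to $k \in [3\Lambda]$ still has sum $\Omega(\eps^2/\Lambda^{10})$. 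Pigeonholing over the $3\Lambda$ possible weights, there is a fixed $h \in [3\Lambda]$ with $\sum_{i\ \text{type-}(s,t),\ \text{weight } h} \min\{\Score_i^+, \Score_i^-\} \ge \Omega(\eps^2/\Lambda^{11})$, which is (\ref{eq:bucket-3}).

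\textbf{Step 3 (extract $\calI$ and rewrite).} Let $\calI^* = \{i \in [n] : i \text{ is type-}(s,t) \text{ and weight } h\}$. Every $i \in \calI^*$ has $\min\{\Score_i^+, \Score_i^-\} > 2^{-h}$, i.e.\ by definition of type, $\min\{\Score_{i,s}^+ \cdot 2^{-s}, \Score_{i,t}^- \cdot 2^{-t}\} > 2^{-h}$; up to adjusting constants this gives the $\ge 2^{-h}$ in (\ref{hehehehehehe}) (one can absorb the factor-of-two slack, or simply state the bound with $1/2^{h+1}$ and rename $h$). Also, since each term in (\ref{eq:bucket-3}) is at most $2^{-(h-1)}$, we get $|\calI^*| \cdot 2^{-(h-1)} \ge \Omega(\eps^2/\Lambda^{11})$, so $|\calI^*|/2^h = \Omega(\eps^2/\Lambda^{11})$. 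Finally take $\calI \subseteq \calI^*$ with $|\calI|$ the largest power of $2$ not exceeding $|\calI^*|$; this loses at most a factor of $2$, preserving $|\calI|/2^h = \Omega(\eps^2/\Lambda^{11})$, and every $i \in \calI$ still satisfies (\ref{hehehehehehe}). This is the claimed conclusion.

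\textbf{Main obstacle.} There is no real obstacle once Lemma~\ref{lem:scores-lemma} is granted; the only thing to be careful about is bookkeeping the polylog losses and verifying that the $k > 3\Lambda$ tail is genuinely dominated, which uses the specific choice $\Lambda = \lceil 2\log(n/\eps)\rceil$ so that $2^{-3\Lambda} \le (\eps/n)^3$ and hence $n \cdot 2^{-3\Lambda}$ is a lower-order term against $\eps^2/\Lambda^{10}$ (using the trivial $\eps \ge 2^{-n}$-type lower bound, or just that $\eps$ is bounded below by an inverse polynomial in the regime of interest). The rounding of $|\calI|$ to a power of two — needed downstream for sampling arguments — costs only a constant factor and does not affect the asymptotics.
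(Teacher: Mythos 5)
Your proposal is correct and follows essentially the same route as the paper: pigeonhole over the $\Lambda^2$ types to get (\ref{eq:bucket-2}), truncate weights at $3\Lambda$ using $\Lambda=\lceil 2\log(n/\eps)\rceil$ and pigeonhole again to get (\ref{eq:bucket-3}), then define $\calI^*$ and round down to a power of two. The only remark is that your worry about a factor-of-two slack in (\ref{hehehehehehe}) is unnecessary, since weight $h$ gives $\min\{\Score_{i,s}^{+}\cdot 2^{-s},\,\Score_{i,t}^{-}\cdot 2^{-t}\} > 2^{-h}$, which already implies the stated $\geq 2^{-h}$.
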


\section{The Main Algorithm}\label{sec:mainalg}

We now describe the main algorithm for testing unateness.
The algorithm rejects a function $f$ only when an edge violation
  has been found. 
As a result, for its correctness it suffices to show that 
  when the input function $f$ is $\eps$-far from unate, the algorithm 
  finds an edge violation with probability at least $2/3$.
For convenience we will suppress  $\text{polylog}(n/\eps)$ factors using $\tilde{O}(\cdot)$
  in the rest of analysis. % of the algorithm.

The main algorithm has four cases. Case $0$ is when the input function $f$ satisfies 
  $\I_f> 6\sqrt{n}$.
In this case an $\tilde{O}(\sqrt{n})$-query algorithm is known \cite{BCPRS17} (also
  see Lemma 2.1 of \cite{CWX17b}). 
  
%\begin{lemma}\label{simplealg}
%There is an $\tilde{O}(\sqrt{n})$-query, non-adaptive algorithm that, given any $f\colon\{0,1\}^n\rightarrow \{0,1\}$
%  with $\I_f>6\sqrt{n}$, finds an edge violation of $f$ to unateness with probability at least $2/3$.
%\end{lemma}

From now on, we assume that $f$ is not only $\eps$-far from unate but also satisfies 
  $\I_f\le 6\sqrt{n}$. 
Then there are parameters $s,t\in [\Lambda]$ and $h\in [3\Lambda]$ and a set $\calI\subseteq [n]$
  with which Lemma \ref{lem:score-summary} holds for~$f$.
We~may assume that the algorithm knows $s,t,h$ and $|\calI|=2^\ell$~(by~trying~all possibilities, which~just incurs an addition factor of $O(\Lambda^4)$ in the query complexity).
We may further assume  
  without loss of generality that
  $s\ge t$ since the case of $s<t$ is symmetric. %: Once we have an algorithm for 
%  $s\ge t$, we  can simply run the same algorithm 
%  on $f' = f \oplus 1$ to handle the case when $s<t$.

 % other than 
%  $\smash{|\calI|/2^h=\tilde{\Omega}(\eps^2)}$
%  by Lemma \ref{lem:score-summary}.

We consider the following three cases of $f$:

\begin{enumerate}
\item[] \textbf{Case 1:} $|\calI|/2^t\ge n^{2/3}$ and %$2^s\le n^{1/3}$;
  and at least half of $i\in \calI$ satisfy
\begin{equation}\label{infcondition}
\Inf_f[i]\le \left(\frac{\eps^2}{\Lambda^{13}}\right)\cdot \frac{n^{1/3}}{|\calI|}\hspace{0.1cm};
\end{equation}
%$ |\calI|/2^s\ge n^{2/3}\log^2 n$ and at least half of $i\in \calI$ 
%  have
\item[] \textbf{Case 2:} $|\calI|/2^t \ge n^{2/3}$ and %$ |\calI|/2^s\ge n^{2/3}\log^2 n$ 
and at least half of $i\in \calI$ violate (\ref{infcondition}); and 
%\begin{equation}\label{infcondition}
%\Inf_f(i)< \frac{n^{1/3}}{|\calI| \log^{9} n}\hspace{0.1cm};
%\end{equation}
\item[] \textbf{Case 3:} $|\calI|/2^t\le n^{2/3}$.%$ |\calI|/2^s< n^{2/3}\log^2 n$.
\end{enumerate}

We prove the following two lemmas in Section \ref{algsec:case1} and \ref{algsec:case2} 
  which cover the first two cases.

\begin{lemma}\label{lemmaforcase1}
Let $s\ge t\in [\Lambda]$, $h\in [3\Lambda]$, and $\ell\in [\lfloor \log n\rfloor]$ with $2^{\ell}/2^t\ge n^{2/3}$.~There~is a $\tilde{O}(n^{2/3}/\eps^2)$-query algorithm with the following property. 
Given any Boolean function $f\colon\{0,1\}^n\rightarrow \{0,1\}$ that satisfies (i) Lemma \ref{lem:score-summary} holds for $f$
  with $s,t,h$ and a set $\calI\subseteq [n]$ with $|\calI|=2^{\ell}$;
  and (ii) at least half of $i\in \calI$ satisfy (\ref{infcondition}), 
  the algorithm finds an edge violation to unateness with probability at least $2/3$.
\end{lemma}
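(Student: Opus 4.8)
\textbf{Proof plan for Lemma~\ref{lemmaforcase1} (Case 1).}
The plan is to implement the preprocessing-based strategy sketched in the technical overview. First sample $\bS_0\subseteq[n]$ of size $m=n^{2/3}$ (a power of $2$) together with a uniformly random ordering $\bpi$, set $\xi=1/2^{\,t+O(\Lambda)}$ (chosen small enough that $(1-\xi)$-persistence of almost all points forces, for each surviving variable, that almost all points are persistent after deleting that variable), and run $\bS=\Prune(f,\bS_0,\bpi,\xi)$; by Lemma~\ref{lem:prune2} this costs $\tilde O(m/\xi)=\tilde O(n^{2/3}/\eps^2)$ queries and with overwhelming probability leaves $\bS$ with at least a $(1-\xi)$-fraction of $\{0,1\}^n$ being $\bS$-persistent. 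Then repeatedly draw $\bx\sim\{0,1\}^n$ and run $\AESearch(f,\bx,\bS)$ for $\tilde O(n^{2/3})$ rounds, collecting every returned variable together with whether the edge found is monotone or anti-monotone; output an edge violation if some variable is ever seen with both orientations. Since $\AESearch$ uses $O(\log n)$ queries per call, the total query complexity is $\tilde O(n^{2/3}/\eps^2)$ as required, and since we only reject on an explicit edge violation, one-sided error is automatic; the content is the probabilistic lower bound on success when $f$ satisfies (i) and (ii).

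The heart of the argument is to show that for a constant fraction of the randomness of $\bS_0$, a large, roughly balanced pool of strong bichromatic edges along low-influence variables of $\calI$ survives into $\bS$ and is "persistent enough" for $\AESearch$ to find. Concretely: restrict attention to the $\ge|\calI|/2$ variables $i\in\calI$ satisfying the influence bound~(\ref{infcondition}); call this set $\calI_{\mathrm{low}}$. For each such $i$, the Scores Lemma summary (Lemma~\ref{lem:score-summary}) gives $\Score^{+}_{i,s}\cdot 2^{-s}\ge 2^{-h}$ and $\Score^{-}_{i,t}\cdot 2^{-t}\ge 2^{-h}$, i.e.\ there are $\ge 2^{n}\cdot 2^{-(h-s)}$ many $s$-strong monotone edges and $\ge 2^n\cdot 2^{-(h-t)}$ many $t$-strong anti-monotone edges along $i$; by definition of strong edges (Definition~\ref{def:strong-edge}, condition~1 with $\xi=1/2^k$, $k\ge s$ resp.\ $k\ge t$, and $m\le n^{2/3}$), these edges are $\bS$-persistent with probability $\ge 1-1/\log n$ when $\bS\sim\calH_{\xi,m}$ — but $\bS$ here is drawn from $\calD_{\xi,m}$, not $\calH_{\xi,m}$. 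This is exactly where Lemma~\ref{lem:main} enters: since $i\in\calI_{\mathrm{low}}$ has influence at most $(\eps^2/\Lambda^{13})\cdot n^{1/3}/|\calI|$, the total variation distance between the output of $\Prune$ on $\bS_0$ (which contains $i$) after substituting $n+1$ for $i$ and the output of $\Prune$ on $\Sub(\bS_0,i)$ (which puts $i$ into the placeholder slot) is $O(m\log^5 n/\xi)\cdot\Inf_f[i]=\tilde O(n^{2/3}/\eps^2)\cdot\Inf_f[i]$, which I will choose the parameters in~(\ref{infcondition}) to make $o(1/\Lambda^{O(1)})$. Hence the behaviour of $\Prune$ — and therefore persistence of any fixed edge after preprocessing, and whether $i$ survives — is essentially independent of whether $i$ was a genuine coordinate of $\bS_0$ or the dummy coordinate. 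Transferring from $\calD_{\xi,m}$ to $\calH_{\xi,m}$ for each fixed $i\in\calI_{\mathrm{low}}$ this way, I conclude: with $\Omega(1)$ probability over $\bS_0$, for $\Omega(|\calI_{\mathrm{low}}|\cdot m/n)$ many $i$ we have $i\in\bS$ and a $(1-o(1))$-fraction of its $s$-strong monotone and $t$-strong anti-monotone edges are $(\bS\setminus\{i\})$-persistent (using that $\bS\subseteq\bS_0$, persistence can only improve when we shrink the set, plus the post-$\Prune$ guarantee that deleting a surviving variable keeps almost all points persistent).

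Given such a good $\bS$, I invoke the property of $\AESearch$ (quote~(\ref{quote:aesearch}) / Lemma~\ref{lem:aesearch}): if $(x,x^{(i)})$ is bichromatic, $i\in\bS$, and the edge is $(\bS\setminus\{i\})$-persistent, then $\AESearch(f,\bx,\bS)$ returns $i$ with probability $\ge 2/3$ whenever $\bx$ hits an endpoint of such an edge. The union of endpoints of the good strong monotone/anti-monotone edges along the surviving low-influence variables of $\calI$ is, after accounting for the weight-$h$ normalisation and $|\calI|/2^h=\Omega(\eps^2/\Lambda^{11})$ together with $|\calI|/2^t\ge n^{2/3}$ and $|\bS\cap\calI|=\Omega(|\calI|\cdot m/n)$, a $\tilde\Omega(1/n^{2/3})$-fraction of $\{0,1\}^n$; so in $\tilde O(n^{2/3})$ rounds we expect to sample such a point many times and, conditioned on that, $\AESearch$ returns the corresponding variable with its orientation. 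Finally, a birthday-paradox / collision argument: the returned variable is (approximately) uniform over the $\tilde\Omega(n^{2/3})$ surviving good variables in $\calI$, and for each such variable we are (approximately) equally likely to witness a monotone or an anti-monotone edge (by the balanced counts $\Score^+_{i,s}2^{-s},\Score^-_{i,t}2^{-t}\ge 2^{-h}$), so over $\tilde O(n^{2/3})$ successful draws we see both orientations of some variable with probability $\ge 2/3$, yielding an edge violation. The main obstacle is the second paragraph: carefully propagating the total-variation bound of Lemma~\ref{lem:main} across all $i\in\calI_{\mathrm{low}}$ simultaneously (one cannot union-bound naively over $|\calI|$ variables, since each incurs a $\tilde O(n^{2/3}/\eps^2)\cdot\Inf_f[i]$ term — instead one argues in expectation, as in the "Property of the Random Set $\bS$" sketch, that the \emph{expected} number of good surviving variables is large, then applies Markov/Paley–Zygmund), and choosing the constants in~(\ref{infcondition}) and in $\xi$ so that all the $o(\cdot)$ claims hold with the stated $\Lambda$-powers.
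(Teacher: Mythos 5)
There is a genuine gap, and it is quantitative rather than cosmetic: your parameter choice $m=n^{2/3}$ (with $\xi=1/2^{t+O(\Lambda)}$) cannot simultaneously respect the query budget and make $\Prune$ oblivious to a variable $i$ satisfying (\ref{infcondition}). The preprocessing step costs $\tilde{O}(m/\xi)$ queries and the obliviousness bound from Lemma~\ref{lem:main} is $O\big(m\log^5 n/\xi\big)\cdot \Inf_f[i]$. With $m=n^{2/3}$ and $\xi\le 2^{-t}$ the query cost is already $n^{2/3}2^t$, which can be as large as $n$; and even pushing $\xi$ up as far as possible, (\ref{infcondition}) only guarantees $\Inf_f[i]\le(\eps^2/\Lambda^{13})\,n^{1/3}/|\calI|$, so with $|\calI|\ge 2^t n^{2/3}$ the total-variation bound is of order $\eps^2 n^{1/3}\log^5 n/(\xi\,\Lambda^{13}2^t)$, which is $o(1)$ only if $\xi\gtrsim \eps^2 n^{1/3}/2^t$ -- impossible (it exceeds $1$) whenever $2^t$ is small and $\eps$ is a constant. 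Note also that you cannot ``choose the constants in (\ref{infcondition})'': that bound is fixed by the lemma's hypothesis. The paper resolves exactly this tension by taking $\xi=2^{-s}$ and, crucially, scaling the preprocessed set down with $\xi$: $m=\Theta(\xi|\calI|/n^{1/3})$ as in (\ref{eq:def-m}), so that $m/\xi=|\calI|/n^{1/3}\le n^{2/3}$ and the distance in Claim~\ref{trickyclaim} becomes $\tilde{O}(|\calI|/n^{1/3})\cdot\Inf_f[i]=o(1)$ precisely under (\ref{infcondition}). Your $m=n^{2/3}$ is the overview's simplified setting, where every $i\in\calI$ has influence $O(1/n)$; the lemma allows influence up to roughly $n^{1/3}/|\calI|$, and then the preprocessed set must be much smaller than $n^{2/3}$.

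Two further problems follow. First, your closing birthday-paradox step assumes the variable returned by $\AESearch$ is roughly uniform over the good surviving variables and roughly equally likely to come with a monotone or an anti-monotone edge; Lemma~\ref{lem:score-summary} gives only the one-sided bounds $\Score_{i,s}^+\ge 2^{s-h}$ and $\Score_{i,t}^-\ge 2^{t-h}$, which are unbalanced when $s>t$ and give no upper bounds, so neither near-uniformity nor orientation balance holds. The paper instead argues in two asymmetric phases with the same preprocessed set: it first shows anti-monotone edges are found along $\Omega(m2^t/n^{1/3})$ caught variables (Lemma~\ref{lem:a-cap-j}), and then uses $s\ge t$ to lower-bound the density of monotone strong-edge endpoints over those specific variables by $\Omega(\eps^2/(n^{2/3}\Lambda^{11}))$, so the second batch of $q=O(n^{2/3}\Lambda^{13}/\eps^2)$ rounds produces the collision. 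Second, once $m$ is scaled correctly a regime appears that your plan does not cover: when $m2^t<n^{1/3}\log^2 n$ (the paper's Case 1.2) a single preprocessed set may catch no variable of $\calI^*$ at all, and the paper needs a genuinely different algorithm (an unpreprocessed set $\bT$ of size $\Theta(\sqrt{n}/2^t)$ to harvest anti-monotone edges, followed by many preprocessed subsets $\bS_0\subset\bT$ to find a matching monotone edge). Finally, two steps you lean on are unsupported: persistence is not monotone under shrinking the set (halving a smaller set is a different distribution), and the post-$\Prune$ guarantee that most points are $\bS$-persistent does not give $(\bS\setminus\{i\})$-persistence of the points in $P_i^{\pm}$ -- that is exactly the subtlety the strong-edge definitions over $\calH_{\xi,m}$, the catching argument, and running $\AESearch$ on $\bS\cup\{n+1\}$ rather than $\bS$ are designed to handle.
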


\begin{lemma}\label{lemmaforcase2}
Let $s\ge t\in [\Lambda]$, $h\in [3\Lambda]$, and $\ell\in [\lfloor \log n\rfloor]$  
  with $2^{\ell}/2^{t}\ge n^{2/3}$.
There is an~\mbox{algorithm} that makes $\smash{\tilde{O}(n^{2/3}/\eps^2)}$ queries and satisfies the following property. 
Given any Boolean function $f\colon\{0,1\}^n\rightarrow \{0,1\}$ that satisfies (i) Lemma \ref{lem:score-summary} holds for $f$
  with $s,t,h$ and a set $\calI\subseteq [n]$ with $|\calI|=2^{\ell}$
  and (ii) at least half of $i\in \calI$ violate (\ref{infcondition}), 
  the algorithm finds an edge violation to unateness with probability at least $2/3$.
\end{lemma}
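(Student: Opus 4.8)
\noindent\textbf{Proof plan for Lemma~\ref{lemmaforcase2}.}
The plan is to locate a single high-influence variable of $\calI$ together with its \emph{name}, and then complete an edge violation along that one variable. To set up, write $\theta=(\eps^2/\Lambda^{13})\cdot n^{1/3}/|\calI|$ for the threshold in~(\ref{infcondition}), and let $\calI_{\mathrm{high}}=\{i\in\calI:\Inf_f[i]>\theta\}$, so that $|\calI_{\mathrm{high}}|\ge|\calI|/2$ by hypothesis~(ii). I would begin by extracting two quantitative facts from Lemma~\ref{lem:score-summary}. Because it gives $\Score_{i,s}^+(f)\ge 2^{s-h}$ and $\Score_{i,t}^-(f)\ge 2^{t-h}$ for every $i\in\calI$, summing over $\calI_{\mathrm{high}}$ and using $|\calI|/2^h=\Omega(\eps^2/\Lambda^{11})$ shows that there are $\ge\Omega(\eps^2 2^n/\Lambda^{11})$ many $s$-strong monotone edges along $\calI_{\mathrm{high}}$, and likewise $\ge\Omega(\eps^2 2^n/\Lambda^{11})$ many $t$-strong anti-monotone edges along $\calI_{\mathrm{high}}$; since $f$ has at most $\I_f\cdot 2^{n-1}<3\sqrt n\cdot 2^n$ bichromatic edges in all, each of these two families is an $\tilde{\Omega}(\eps^2/\sqrt n)$-fraction of all bichromatic edges. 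On the other hand, since $\I_f<6\sqrt n$ and every variable of $\calI_{\mathrm{high}}$ has influence above $\theta$, at most $\I_f/\theta=\tilde{O}(|\calI|\cdot n^{1/6}/\eps^2)$ variables have influence above $\theta$, so $\calI_{\mathrm{high}}$ is an $\tilde{\Omega}(\eps^2/n^{1/6})$-fraction of them. In short, the variables along which an edge violation resides (every $i\in\calI$ has both strong monotone and strong anti-monotone edges) are plentiful, both absolutely and relative to all high-influence variables.

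The algorithm I would build has two phases. In the first, it runs the procedure of Section~\ref{sec:highinf} --- which performs $\BinarySearch$ on long random paths --- to produce a strong bichromatic edge along a variable of $\calI$ together with the name of that variable. Since $\BinarySearch$ on a random path of length $\tilde{\Theta}(n/\I_f)$ crosses $O(1)$ bichromatic edges in expectation and returns (essentially) a uniformly random one of them, and since the strong edges along $\calI$ are an $\tilde{\Omega}(\eps^2/\sqrt n)$-fraction of all bichromatic edges --- by the count above together with Claim~\ref{claim:FOCS}, which guarantees that a constant fraction of the edges of the graph produced by \cite{KMS15} are strong --- repeating the procedure $\tilde{O}(\sqrt n/\eps^2)$ times yields such an edge with high probability. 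Bucketing the variables by influence into $O(\log n)$ classes and passing to the densest one, I may further assume the discovered variable $\bi$ lies in $\calI_{\mathrm{high}}$ and that $\Inf_f[\bi]$ is known up to a factor of two.

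In the second phase I would complete a violation along the single discovered variable $\bi\in\calI_{\mathrm{high}}$. Say the first phase returned a strong anti-monotone edge along $\bi$ (the other case being symmetric with $s$ and $t$ exchanged); we need a monotone edge along $\bi$. Write $\Inf_f[\bi]=\Inf_f^{+}[\bi]+\Inf_f^{-}[\bi]$ for its split into monotone and anti-monotone parts. If $\Inf_f^{+}[\bi]\ge\Inf_f[\bi]/2>\theta/2$, then monotone edges along $\bi$ touch a $>\theta/4$-fraction of $\{0,1\}^n$, so $\tilde{O}(1/\theta)=\tilde{O}(|\calI|/(\eps^2 n^{1/3}))$ random probes of $f(\bx)$ and $f(\bx^{(\bi)})$ find one. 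Otherwise $\Inf_f^{-}[\bi]>\theta/2$, and we instead invoke the $\AESearch$ machinery of \cite{CWX17b}: by Lemma~\ref{lem:score-summary} there are $\ge 2^{s-h}2^n$ points $x$ for which $(x,x^{(\bi)})$ is an $s$-strong monotone edge, and by condition~2 of being $s$-strong such an edge is $\bS_0$-persistent with high probability when $\bS_0\sim\calP_{\bi,\lceil\sqrt n/2^s\rceil}$, so by Lemma~\ref{lem:aesearch} a uniformly random $\bx$ and $\bS_0\sim\calP_{\bi,\lceil\sqrt n/2^s\rceil}$ make $\AESearch(f,\bx,\bS_0\cup\{\bi\})$ return a monotone edge along $\bi$ with probability $\tilde{\Omega}(2^{s-h})$, so $\tilde{O}(2^{h-s})=\tilde{O}(|\calI|/\eps^2)$ such calls suffice. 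In this second branch the first phase is run somewhat longer, to collect strong edges of both monotonicities over $\calI_{\mathrm{high}}$, so that (using $|\calI_{\mathrm{high}}|\ge|\calI|/2$ and the abundance estimates) the probing and search costs balance out.

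The main obstacle is precisely this balancing, and with it the exact guarantee that must be proved for the Section~\ref{sec:highinf} procedure: that with $\tilde{O}(q)$ queries it returns a bichromatic edge whose variable is, within each influence bucket, close to uniform over the few variables of that bucket, so that an $\tilde{\Omega}(\eps^2/n^{1/6})$-fraction of the edges it produces lie along $\calI$; one then has to choose the bucketing thresholds so that the $\tilde{O}(1/\theta)$ direct-probing cost, the $\tilde{O}(2^{h-s})$ $\AESearch$ cost, and the $\tilde{O}(\sqrt n/\eps^2)$ binary-search cost all collapse to $\tilde{O}(n^{2/3}/\eps^2)$ once $\theta$, $2^h$ and $|\calI|$ are tied together through Lemma~\ref{lem:score-summary}. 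This is where being in Case~2 matters: a naive $\AESearch$ over a uniformly random $\sqrt n$-sized set, as in \cite{CWX17b}, yields only $\tilde{O}(n^{3/4})$, and one cannot instead run $\AESearch$ over a preprocessed $n^{2/3}$-sized set (the route of Case~1) because $\Preprocess$ would delete exactly the high-influence variables of $\calI$; it is the largeness of $\theta$ --- i.e.\ being in Case~2 --- that lets short binary-search paths reach $\calI$ often enough to recover the extra polynomial factor. One must also avoid a bipartite-birthday argument over the variables of $\calI_{\mathrm{high}}$: since the variables of the discovered monotone and anti-monotone edges can be very unevenly distributed, waiting for an accidental collision is too lossy, which is exactly why the algorithm re-uses the name of the variable found in the first phase.
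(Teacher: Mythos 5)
There is a genuine gap, and it sits exactly where you chose to deviate from a ``many variables, then collide'' structure. Your Phase~2 tries to complete the violation along the \emph{single} named variable $\bi$. In Case~2 the density of $s$-strong monotone edges along $\bi$ is only guaranteed to be about $2^{s-h}$ (that is what Lemma~\ref{lem:score-summary} gives), and being in Case~2 means precisely that $\Inf_f[\bi]$ can vastly exceed this score; the excess influence may be entirely anti-monotone. So your first branch (direct probing) can fail --- $\Inf_f^{+}[\bi]$ need not be $\ge\theta/2$; it can be as small as $\Theta(2^{s-h})$ --- and your second branch costs $\tilde{O}(2^{h-s})$ calls to $\AESearch$, which is \emph{not} bounded by $n^{2/3}/\eps^2$: e.g.\ with $|\calI|=\Theta(n)$, $2^s=2^t=O(1)$ and $2^{-h}=\Theta(\eps^2/n)$ (perfectly consistent with $\I_f\le 6\sqrt n$ and with (\ref{infcondition}) being violated on all of $\calI$), one has $2^{h-s}=\tilde\Omega(n/\eps^2)$. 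The ``run Phase~1 longer so the costs balance'' remark cannot repair this, because the bottleneck is per-variable: once you commit to a single $\bi$, no amount of extra Phase~1 work lowers the $\tilde\Omega(\min\{1/\Inf_f^{+}[\bi],\,2^{h-s}\})$ cost of finding a monotone edge along that particular $\bi$. The paper escapes exactly by the device you explicitly reject: it first harvests bichromatic edges along $\Omega(|\calI^*|/n^{1/3})$ \emph{many} variables of $\calI^*$ inside a random $n^{2/3}$-sized set (via \texttt{Find-Hi-Inf}, i.e.\ revealing points plus the regularization Lemma~\ref{lem:reg-graph}), and then the second stage only needs $\AESearch$ over a random $\lceil\sqrt n/2^s\rceil$-sized subset to hit a monotone edge along \emph{any one} of those caught variables; the union of their persistent-edge sets has measure $\tilde\Omega(\eps^2/\sqrt n)$ per trial, giving the $\tilde{O}(\sqrt n/\eps^2)$ second stage. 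This is not a birthday-collision argument over two unknown sides --- the asymmetry (many known anti-monotone variables, one monotone hit) is what makes it cheap.

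A secondary but also real problem is your Phase~1 guarantee: you assume that binary search on a long random path returns an (essentially) uniformly random bichromatic edge among those the path crosses, so that strong edges along $\calI$ (an $\tilde\Omega(\eps^2/\sqrt n)$-fraction of all bichromatic edges) are hit after $\tilde{O}(\sqrt n/\eps^2)$ repetitions. The paper explicitly flags this as the step that is hard to control --- the output variable of binary search depends on $f$ adaptively and need not be close to uniform over the crossed edges --- and its Case~2 procedure is built to avoid ever needing such a claim: \texttt{Find-Revealing} locates points that are sensitive along many variables of the hidden high-influence set simultaneously and extracts one bichromatic edge per variable from a single such point, which is also what achieves the $\tilde{O}(|S|+|S|/(\alpha m))$ query bound needed to stay within $\tilde{O}(n^{2/3}/\eps^2)$.
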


Case 3 can be handled using an algorithm presented in \cite{CWX17b}.
We include its description and the proof of the following lemma 
  in Section~\ref{algsec:case3} for completeness.
  
\begin{lemma}\label{lemmaforcase3}
Let $s\ge t\in [\Lambda]$, $h\in [3\Lambda]$, and $\ell\in [\lfloor \log n\rfloor]$ with $2^{\ell}/2^{t}\le n^{2/3}$.
There~is~an~\mbox{algorithm} that makes $\tilde{O}(n^{2/3} + \sqrt{n}/\eps^2)$ queries and satisfies 
the following property. 
Given any Boolean func\-tion $f\colon\{0,1\}^n\rightarrow \{0,1\}$ that satisfies Lemma \ref{lem:score-summary}  
  with $s,t,h$ and a set $\calI\subseteq [n]$ of size $|\calI|=2^{\ell}$,
%  and (ii) at least half of $i\in \calI$ violate (\ref{infcondition}), 
  the algorithm finds an edge violation of $f$ to unateness with probability at least $2/3$.
\end{lemma}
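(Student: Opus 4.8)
\textbf{Proof proposal for Lemma~\ref{lemmaforcase3}.}

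The plan is to run the two-round algorithm of \cite{CWX17b} verbatim, but with the size of the random set of variables tuned to the parameters $(s,t,h,\ell)$ guaranteed by Lemma~\ref{lem:score-summary}, and then argue correctness using the $\Score$ bound \eqref{hehehehehehe} together with the property of $\AESearch$ in Lemma~\ref{lem:aesearch}. Concretely, since $2^\ell/2^t \le n^{2/3}$, we have $|\calI| = 2^\ell \le 2^t \cdot n^{2/3} \le n^{2/3}\cdot \polylog(n)$ (recall $t \le \Lambda = O(\log(n/\eps))$, so $2^t$ is only polylogarithmic once we are in Case~3 of the real argument — wait, $2^t$ can be as large as $n/\eps$; so the correct reading is just that $|\calI|/2^t \le n^{2/3}$). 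The algorithm samples a uniformly random set $\bS_0 \subseteq [n]$ of an appropriate size $m$ (a power of $2$ of order $\min\{n^{2/3}, |\calI|\cdot 2^{\Theta(s)}\}\cdot \polylog$, mirroring \cite{CWX17b}), optionally preprocesses it via $\Prune(f,\bS_0,\bpi,\xi)$ with $\xi = 1/2^h$ or similar, obtaining $\bS$, and then repeatedly calls $\AESearch(f,\bx,\bS)$ on fresh uniform $\bx$ for $\tilde O(n^{2/3} + \sqrt n/\eps^2)$ rounds, collecting all bichromatic edges found; it rejects if it ever sees two bichromatic edges along the same variable, one monotone and one anti-monotone.

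The analysis proceeds in the following steps. First, by Lemma~\ref{lem:score-summary}, every $i \in \calI$ has both $\Score_{i,s}^+(f)/2^s \ge 2^{-h}$ and $\Score_{i,t}^-(f)/2^t \ge 2^{-h}$, so there are $\ge 2^{s-h}\cdot 2^n$ many $s$-strong monotone edges and $\ge 2^{t-h}\cdot 2^n$ many $t$-strong anti-monotone edges along each such $i$; summing over $i \in \calI$ and using $|\calI|/2^h = \Omega(\eps^2/\Lambda^{11})$ gives that a $\Omega(\eps^2/\polylog)$-fraction of the hypercube is incident (on the appropriate side) to a strong edge along $\calI$. Second, by the definition of strong edges (Definition~\ref{def:strong-edge}) and the performance guarantee of $\AESearch$ (Lemma~\ref{lem:aesearch}), whenever $\bx$ is an endpoint of such a strong edge along $i \in \calI$ and $\bS$ (coming from $\calP_{i,\lceil\sqrt n/2^\ell\rceil}$ or from the appropriate $\calH_{\xi,m}$, which a uniform/preprocessed $\bS_0$ restricted to $[n]\setminus\{i\}$ realizes) makes that edge persistent, $\AESearch$ returns $i$ with probability $\ge 2/3$. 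Third, a birthday-paradox/coupon-collector computation: since $|\calI| \le n^{2/3}$ and the per-round success probability of hitting $\calI$ is $\Omega(\eps^2/(|\calI|\cdot\polylog))$ on the monotone side and likewise on the anti-monotone side (using the $s$-strong and $t$-strong counts respectively), after $\tilde O(|\calI| \cdot \max\{1,\ |\calI|^{1/2}/\text{something}\}/\eps^2) = \tilde O(n^{2/3} + \sqrt n/\eps^2)$ rounds we collect, with probability $\ge 2/3$, two edges along a common variable in $\calI$ of opposite direction — an edge violation. The splitting into the $n^{2/3}$ term (collision within a set of size $\le n^{2/3}$) and the $\sqrt n/\eps^2$ term (detecting at least one edge) is exactly the trade-off exploited in \cite{CWX17b}.

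The main obstacle — and the reason this is merely ``handled by \cite{CWX17b}'' rather than reproved — is matching the distribution from which $\bS$ is effectively drawn to the distributions $\calP_{i,m}$ and $\calH_{\xi,m}$ appearing in Definition~\ref{def:strong-edge}, so that the persistency guarantees of strong edges actually apply; one must verify that conditioning a uniform (or $\Prune$-preprocessed) $\bS_0$ on containing, or not containing, a fixed low/high-influence $i \in \calI$ lands in the right family, and here the closeness lemma (Lemma~\ref{lem:main}) plus the independence trick from the technical overview do the work. A secondary subtlety is ensuring the number of rounds is genuinely $\tilde O(n^{2/3} + \sqrt n/\eps^2)$ and not larger: this requires that the per-variable score lower bound $2^{-h}$, combined with $|\calI| \le n^{2/3}$, keeps the collision time below $\tilde O(n^{2/3})$ even in the worst allocation of scores across $\calI$, which follows because $h = O(\Lambda)$ makes $2^{-h}$ only polylogarithmically small. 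Since all of these ingredients are either quoted above (Lemmas~\ref{lem:aesearch}, \ref{lem:prune2}, \ref{lem:main}, \ref{lem:score-summary}) or are the content of the cited \cite{CWX17b} algorithm, the proof is a careful but routine assembly; we defer the full details to Section~\ref{algsec:case3}.
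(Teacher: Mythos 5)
Your proposal does not match the paper's Case~3 argument, and as written it has genuine gaps. The paper's algorithm (Figure~\ref{fig:algcase3}) uses \emph{no} preprocessing and \emph{no} birthday-paradox collision: it draws a set $\bT$ of size $\lceil \sqrt{n}/2^t\rceil$ and a nested set $\bS\subset\bT$ of size $\lceil\sqrt{n}/2^s\rceil$, and exploits the fact that every $i\in\calI$ has both $s$-strong monotone and $t$-strong anti-monotone edges: once a single $i\in\bS\cap\calI$ is caught, $\AESearch$ with $\bT$ finds an anti-monotone edge along that same $i$ within $\tilde{O}(2^h/2^t)$ samples and $\AESearch$ with $\bS$ finds a monotone edge along that same $i$ within $\tilde{O}(2^h/2^s)$ samples, using only the $\calP_{i,\lceil\sqrt{n}/2^t\rceil}$ and $\calP_{i,\lceil\sqrt{n}/2^s\rceil}$ part of Definition~\ref{def:strong-edge} (via Lemmas~\ref{lem:informative-for-antimon} and~\ref{lem:informative-for-mon}). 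The nesting $\bS\subset\bT$ and the repetition counts are what make the query budget $\tilde{O}(n^{2/3}+\sqrt{n}/\eps^2)$ work out; there is no need to collect edges along many variables and wait for a monotone/anti-monotone collision.

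The concrete gaps in your route: (1) your collision-time bound explicitly uses ``$|\calI|\le n^{2/3}$'', but Case~3 only guarantees $|\calI|/2^t\le n^{2/3}$, and $|\calI|$ can be close to $n$ when $2^t$ is polynomially large (your own parenthetical notes the misreading, yet the round count still relies on it). In that regime, finding independently a monotone edge and an anti-monotone edge along a \emph{common} variable by collecting edges with a single set costs substantially more than $\tilde{O}(n^{2/3}+\sqrt{n}/\eps^2)$, because the edges found are spread over up to $|\calI|$ variables; the paper's reuse of the same caught variable for both orientations is precisely the idea that removes this dependence, and your sketch does not contain it (the round count even contains the literal placeholder ``something''). (2) Your fallback to a preprocessed set of size up to $n^{2/3}$ with $\xi=1/2^h$ and Lemma~\ref{lem:main} is unsupported here: Definition~\ref{def:strong-edge} only guarantees persistency under $\calH_{\xi,m}$ for $\xi=2^{-k}$ with $k\le\Lambda$, while $h$ may be as large as $3\Lambda$; and Lemma~\ref{lem:main} is only useful for variables of \emph{low} influence, a hypothesis that Case~3 (unlike Case~1) does not provide. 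Finally, deferring ``the full details to Section~\ref{algsec:case3}'' is circular, since that section is exactly the proof you are being asked to supply.
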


Theorem \ref{thm:main} follows by combining all these lemmas.

\section{The Algorithm for Case 1}\label{algsec:case1}

Let $s\ge t\in [\Lambda]$, $h\in [3\Lambda]$ and $\ell\in [\lfloor \log n\rfloor]$.
In Case 1 the input function $f\colon\{0,1\}^n\rightarrow \{0,1\}$~satisfies
  Lemma \ref{lem:score-summary} with parameters $s,
  t,h$ and $\calI\subseteq [n]$ of size $|\calI|=2^\ell$, with $|\calI|/2^t \geq n^{2/3}$.
At least half of the variables $i\in \calI$
  have low influence~as given in (\ref{infcondition}).
Let $i$ be such a variable. Then by (\ref{infcondition}),
$$
\left(\frac{\eps^2}{\Lambda^\red{13}}\right)\cdot \frac{n^{1/3}}{|\calI|}
>\Inf_f[i]\ge 2\cdot \Score_{i,s}^+
\ge \frac{2^s}{2^h}.
$$
Letting $\xi=1/2^s$ throughout this section, it follows from Lemma \ref{lem:score-summary} that
\begin{equation}\label{boundforxi}
\xi= \Omega\left(\frac{\Lambda^{13}}{\eps^2}\cdot \frac{|\calI|}{2^hn^{1/3}}\right)
=\Omega\left(\frac{\Lambda^{2}}{n^{1/3}}\right).
\end{equation}

%. Then we have 
%  $\xi|\calI|=2^{k-s}\ge n^{2/3}\log^2n$.

%Then we have $\xi|\calI|\ge n^{2/3}$ and 

\subsection{Informative sets}\label{sec:inform}

We start with the notion of \emph{informative sets}.
Note that we will have different notions of informative sets
  in different cases of the algorithm.
We use the same name because they serve similar purposes. % in all three cases.

%We introduce more notation that will be used in the analysis of our algorithm later.
%\red{S contains n+1}
Given $i\in [n]$ and a set $S\subseteq [n+1]$ we use 
  $\PE_i^+(S)$ to denote the set of $s$-strong monotone edges along variable 
  $i$ that are $S$-persistent.
We define $\PE_i^-(S)$ similarly for antimonotone edges.

\begin{definition}[Informative Sets]\label{def:informative}
A set $S \subseteq [n+1]$ is $i$-\emph{informative} for monotone edges if 
%$S$ forms  a good pair with at least $\frac{\gamma_{i}}{100} \cdot 2^n$ groups.
\begin{equation}\label{bulbul-mon}
\frac{|\emph{\PE}_i^+(S)|}{2^n}\ge \frac{\Score_{i,s}^+}{4}\ge \frac{2^{s-h}}{4} 
\end{equation}
and that $S \subseteq [n+1]$ is $i$-\emph{informative} for anti-monotone edges if
\begin{equation}\label{bulbul-antimon}
\frac{|\emph{\PE}_i^-(S)|}{2^n}\ge \frac{\Score_{i,t}^-}{4}\ge \frac{2^{t-h}}{4}.
\end{equation}
We simply say that $S\subseteq [n+1]$ is $i$-\emph{informative} if $S$ satisfies both (\ref{bulbul-mon}) and (\ref{bulbul-antimon}).
\end{definition}

\begin{lemma}\label{ff}
For each $i\in \calI$ and each positive integer $m\le n^{2/3}$ that is a power of $2$,  
  $\bS\sim \calH_{\xi,m}$~is $i$-informative with probability at least $1 - o(1)$.
\end{lemma}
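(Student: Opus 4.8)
The plan is to show that for a fixed $i \in \calI$ and a fixed $m \le n^{2/3}$ a power of $2$, a random set $\bS \sim \calH_{\xi,m}$ is $i$-informative with probability $1-o(1)$, separately for the monotone condition (\ref{bulbul-mon}) and the anti-monotone condition (\ref{bulbul-antimon}); a union bound then gives the claim. By symmetry it suffices to handle the monotone case, i.e.\ to show that with probability $1-o(1)$ over $\bS \sim \calH_{\xi,m}$, at least a quarter (in the normalized measure) of the $s$-strong monotone edges along $i$ are $\bS$-persistent. Recall that $i \in \calI$ means (\ref{hehehehehehe}) holds, so the number of $s$-strong monotone edges along $i$, normalized by $2^n$, is exactly $\Score_{i,s}^+ \ge 2^{s-h}$, and moreover $\xi = 1/2^s \le 1/2^\ell$ with $\ell \le s$ — wait, I need $s \ge \ell$; actually $\ell \le k$ where $\xi = 1/2^k$ and $\ell$ is the strong-edge parameter, so I should be careful. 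The key point is that $\xi = 1/2^s$ and $s \in [\Lambda]$, and since every edge counted in $\Score_{i,s}^+$ is by definition $s$-strong, Definition~\ref{def:strong-edge}(1) applies to it: with $\xi = 1/2^s$ (so $k = s \ge \ell$, which holds because $i$ being type-$(s,t)$ and the max in (\ref{eq:score-def}) being achieved at $\ell$ forces $\ell \le s$ — or more simply, $s$-strong edges are by fiat $\bS$-persistent w.p.\ $\ge 1 - 1/\log n$ for this very $\xi$ and this $m$, as $m \le n^{2/3}$), each such edge $e$ satisfies $\Prx_{\bS \sim \calH_{\xi,m}}[e \text{ is } \bS\text{-persistent}] \ge 1 - 1/\log n$.

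First I would set $N^+ = 2^n \cdot \Score_{i,s}^+$ to be the number of $s$-strong monotone edges along $i$, and let $\bX$ be the number of those edges that are $\bS$-persistent for $\bS \sim \calH_{\xi,m}$. By linearity of expectation and the strong-edge property, $\Ex[\bX] \ge (1 - 1/\log n) N^+$, so $\Ex[N^+ - \bX] \le N^+/\log n$. By Markov's inequality applied to the nonnegative random variable $N^+ - \bX$,
\[
\Prx_{\bS}\left[ N^+ - \bX > \frac{3}{4} N^+ \right] \le \frac{N^+/\log n}{(3/4)N^+} = \frac{4}{3\log n} = o(1),
\]
so with probability $1 - o(1)$ we have $\bX \ge N^+/4$, i.e.\ $|\PE_i^+(\bS)|/2^n \ge \Score_{i,s}^+/4$, which is exactly (\ref{bulbul-mon}) (the second inequality $\Score_{i,s}^+ \ge 2^{s-h}$ is just (\ref{hehehehehehe}), restricted to $i \in \calI$). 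The anti-monotone case is identical, using $\xi = 1/2^s$ again and noting that $s$-strong edges are in particular persistent for every $\xi = 1/2^k$ with $s \le k \le \Lambda$; but here I need the $t$-strong edges, and the definition of $\Score_{i,t}^-$ counts $t$-strong anti-monotone edges, for which Definition~\ref{def:strong-edge}(1) guarantees $\bS$-persistence w.p.\ $\ge 1 - 1/\log n$ for all $\xi = 1/2^k$ with $t \le k \le \Lambda$ and all $m \le n^{2/3}$ — and since $s \ge t$, $\xi = 1/2^s$ falls in this range. So the same Markov argument gives (\ref{bulbul-antimon}) with probability $1 - o(1)$.

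Finally, I would union-bound the two failure events (monotone side fails, anti-monotone side fails), each of probability $O(1/\log n)$, to conclude that $\bS \sim \calH_{\xi,m}$ is $i$-informative with probability $1 - o(1)$. The only subtle point — and the part I expect to require the most care in the write-up — is verifying that the $\xi = 1/2^s$ we use actually lies in the range $[\ell : \Lambda]$ (equivalently $\ell \le s \le \Lambda$) demanded by Definition~\ref{def:strong-edge}(1), and similarly $t \le s$ for the anti-monotone side; the former should follow from the fact that $\Score_{i,s}^+$ realizes the maximum in (\ref{eq:score-def}) (so heuristically $s$ is not too small relative to the scale $\ell$ at which the bipartite graph lives) together with the constraint $|\calI|/2^t \ge n^{2/3}$ from the Case~1 hypothesis and the bound (\ref{boundforxi}) on $\xi$; the latter is immediate from $s \ge t$. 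Once this bookkeeping is pinned down, the probabilistic content is just the one-line Markov estimate above.
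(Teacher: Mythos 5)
Your argument is correct and is essentially the paper's own proof: both amount to averaging the strong-edge guarantee $\Prx_{\bS\sim\calH_{\xi,m}}[e\text{ is }\bS\text{-persistent}]\ge 1-1/\log n$ over the $s$-strong (resp.\ $t$-strong) edges along $i$ and applying a Markov/reverse-averaging step, giving failure probability $O(1/\log n)$ for each of (\ref{bulbul-mon}) and (\ref{bulbul-antimon}) and then a union bound. The ``bookkeeping'' you flag at the end is a non-issue: the edges counted in $\Score_{i,s}^+$ (resp.\ $\Score_{i,t}^-$) are by definition $s$-strong (resp.\ $t$-strong), so condition (1) of Definition~\ref{def:strong-edge} applies to $\xi=1/2^s$ and every $m\le n^{2/3}$ directly, since $s\le s\le\Lambda$ (resp.\ $t\le s\le\Lambda$, using $s\ge t$), with no further relation between $s$ and any other scale required.
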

\begin{proof}%[Need to rewrite a bit]
We first show that $\bS\sim\calH_{\xi,m}$ satisfies (\ref{bulbul-mon}) with probability
  at least $1 - o(1)$.
The same argument works to show $\bS \sim \calH_{\xi, m}$ satisfies (\ref{bulbul-antimon}). The lemma then follows from a union bound.~To this end, let $\alpha$ be the probability of $\bS\sim\calH_{\xi,m}$ being
  $i$-informative for monotone edges.
We examine 
$$
\Prx_{\be,\bS} \big[\text{$\be$ is $\bS$-persistent}\big],
$$
where $\be$ is an $s$-strong monotone edge along variable $i$ drawn uniformly at random and 
  $\bS\sim\calH_{\xi,m}$.
It follows from the definition of strong edges that the probability is at least $1-1/\log n$.
On the other hand, we can also upperbound the probability using $\alpha$ (and the definition
  of $i$-informative sets) as
$ 
(1-\alpha)/4+\alpha.
$
Solving the inequality we get $\alpha\ge 1 - o(1)$.
%Recall that for each group contains $\frac{2^{j_1}}{2^{j_2}}$ positive bi-chromatic edges, where each is $\frac{2^{j_2}}{2^{j_1}}$-scoring, and one negative bi-chromatic edge which is $\frac{1}{4}$-scoring. Thus, we may take a union bound over the $\frac{2^{j_1}}{2^{j_2}}$ positive bi-chromatic edges in $g_{\ell}^{(i)}$ to conclude that with probability at least $\frac{3}{4}$ over the draw of $\bS \sim \calH_{\xi}$, every positive bi-chromatic edge in $g_{\ell}^{(i)}$ is a good pair with $\bS$. Furthermore, with probability at least $\frac{3}{4}$ over the draw of $\bS \sim \calH_{\xi}$, the negative bi-chromatic edge in $g_{\ell}^{(i)}$ is a good pair with $\bS$, so we may union bound over these two events to get the desired bound.
\end{proof}

Next we introduce two new families of distributions that will help
  us connect $\calH_{\xi,m}$ with $\calD_{\xi,m}$.

\begin{definition}
Given $\xi \in(0,1)$, $m\colon1\le m\le n$ and $i\in [n]$,
  we let $\calD_{\xi,m,i}$ denote the following distribution supported on subsets of $[n]$:
  $\bS \sim \calH_{\xi,m,i}$ is drawn by first sampling a subset $\bS_0$ of $[n] $ of size $m$
  with $i\in \bS_0$
  and an ordering $\bpi$ of $\bS_0$ uniformly at random.
We then call $\Prune\hspace{0.04cm}(f,\bS_0,\bpi,\xi)$
  and set $\bS$ to be its~output.

Similarly,  $\calH_{\xi,m,i}$ denotes the following distribution supported on subsets of $[n+1]$:
  $\bS \sim \calH_{\xi,m,i}$ is drawn by first sampling a subset $\bS_0$ of $[n+1]\setminus \{i\}$ of size $m$
  with $n+1\in \bS_0$
  and an ordering $\bpi$ of $\bS_0$ uniformly at random.
We then call $\Prune\hspace{0.04cm}(f,\bS_0,\bpi,\xi)$
  and set $\bS$ to be its~output.
%We let $\calH_{\xi,m,i}$ be the distribution given from sampling $\bS \sim \calH_{\xi}$ conditioning on 
%  that the initial set we sample does not contain direction $i$.
%$i \notin \bS$, and outputting $\bS \cup \{ i \}$.
\end{definition}

Using the fact that the total variation distance between the $\bS_0$ used in $\calH_{\xi,m}$ (at the beginning 
  of the process)
  and the $\bS_0$ used in $\calH_{\xi,m,i}$ is at most $m/n$,
  we have $$\dtv\big(\calH_{\xi,m},\calH_{\xi,m,i}\big)\le m/n$$ and the following corollary
  from Lemma \ref{ff}.

\begin{corollary}\label{coro1}
For every $i \in \calI$ and every positive integer $m\le n^{2/3}$ as a power of $2$, we have that
  $\bS\sim \calH_{\xi,m,i}$ is $i$-informative with probability at least $1-o(1)$.
%with probability at least $\frac{1}{3}$ over the draw of $\bS \sim \calH_{\xi}(i)$, $\bS \setminus \{ i \}$ is $i$-informative.
\end{corollary}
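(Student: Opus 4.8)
The plan is to combine Lemma~\ref{ff} with the bound on the total variation distance between $\calH_{\xi,m}$ and $\calH_{\xi,m,i}$ that is asserted in the text right before the corollary. First I would recall that, by Lemma~\ref{ff}, for any $i \in \calI$ and any power of $2$ $m \le n^{2/3}$, a set $\bS \sim \calH_{\xi,m}$ is $i$-informative with probability at least $1 - o(1)$. The event ``$\bS$ is $i$-informative'' is a fixed event in the sample space of subsets of $[n+1]$ (it depends only on $f$, $i$, $s$, $h$ and the set $\bS$ itself, via the conditions \eqref{bulbul-mon} and \eqref{bulbul-antimon}). Therefore, for any distribution $\mu$ over subsets of $[n+1]$,
\[
\Big| \Prx_{\bS \sim \calH_{\xi,m}}[\bS \text{ is $i$-informative}] - \Prx_{\bS \sim \mu}[\bS \text{ is $i$-informative}] \Big| \le \dtv\big(\calH_{\xi,m}, \mu\big).
\]

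Next I would justify the claimed bound $\dtv(\calH_{\xi,m}, \calH_{\xi,m,i}) \le m/n$. Both distributions are obtained by sampling an initial set $\bS_0$ (together with a uniformly random ordering) and then applying the deterministic-input, randomized procedure $\Prune(f, \bS_0, \bpi, \xi)$; the only difference is in the distribution of the pair $(\bS_0, \bpi)$. For $\calH_{\xi,m}$, $\bS_0$ is a uniformly random size-$m$ subset of $[n+1]$ conditioned on $n+1 \in \bS_0$; for $\calH_{\xi,m,i}$, $\bS_0$ is a uniformly random size-$m$ subset of $[n+1]\setminus\{i\}$ conditioned on $n+1 \in \bS_0$. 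Equivalently, the first is uniform over size-$(m-1)$ subsets of $[n]$ with $n+1$ adjoined, and the second is uniform over size-$(m-1)$ subsets of $[n]\setminus\{i\}$ with $n+1$ adjoined; since a uniform size-$(m-1)$ subset of $[n]$ contains $i$ with probability $(m-1)/n \le m/n$, conditioning on its complement (not containing $i$) changes the distribution by at most $m/n$ in total variation. Pushing these through the same random ordering and the same call to $\Prune$ can only decrease the total variation distance, so $\dtv(\calH_{\xi,m}, \calH_{\xi,m,i}) \le m/n$.

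Finally I would combine the two displays: since $m \le n^{2/3}$, we have $\dtv(\calH_{\xi,m}, \calH_{\xi,m,i}) \le m/n \le n^{-1/3} = o(1)$, so
\[
\Prx_{\bS \sim \calH_{\xi,m,i}}[\bS \text{ is $i$-informative}] \ge \Prx_{\bS \sim \calH_{\xi,m}}[\bS \text{ is $i$-informative}] - o(1) \ge 1 - o(1),
\]
which is exactly the claim. I do not anticipate a real obstacle here; the only point requiring a touch of care is the coupling argument for the total variation bound between the initial-set distributions, which is the standard fact that conditioning a uniform random subset on omitting one fixed low-probability element costs at most the probability of including that element in total variation, and that applying a common (possibly randomized) map to two coupled inputs does not increase total variation distance.
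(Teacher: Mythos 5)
Your proposal is correct and matches the paper's argument: the paper likewise derives Corollary~\ref{coro1} from Lemma~\ref{ff} together with the bound $\dtv(\calH_{\xi,m},\calH_{\xi,m,i})\le m/n$ (which comes from the $m/n$ total variation distance between the two initial sets $\bS_0$, preserved under the common call to $\Prune$), using $m\le n^{2/3}$ so that $m/n=o(1)$. Your write-up simply spells out the coupling and data-processing step that the paper leaves implicit.
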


The next two lemmas allow us to draw random subsets and still obtain $i$-informative sets. They enable us to use techniques from \cite{CWX17b} for particular cases of our algorithm.
%; however, note that in the following two lemmas, the sizes of the sets are limited to $\lceil \sqrt{n}/2^t\rceil$ and $\lceil\sqrt{n}/2^s\rceil$, respectively.

\begin{lemma}\label{lem:informative-for-antimon}
For every $i \in \calI$ we have $\bT \sim \calP_{i, \lceil \sqrt{n}/2^t \rceil}$ is $i$-informative for anti-monotone edges with probability at least $1-o(1)$.
\end{lemma}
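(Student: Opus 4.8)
The plan is to mirror the proof of Lemma~\ref{ff}, but with the distribution $\calH_{\xi,m}$ replaced by $\calP_{i,\lceil\sqrt{n}/2^t\rceil}$ and with the first condition of Definition~\ref{def:strong-edge} replaced by its \emph{second} condition, which was tailored precisely so that a set drawn from $\calP_{i,\lceil\sqrt{n}/2^t\rceil}$ keeps most $t$-strong edges along $i$ persistent. The key observation is simply that, by Lemma~\ref{lem:score-summary}, every $i\in\calI$ is of type-$(s,t)$ and weight $h$, so $t$ is the parameter attaining $\Score_i^-(f)$ and $\Score_{i,t}^-(f)\ge 2^{t-h}>0$; in particular the set $E$ of all $t$-strong anti-monotone edges along $i$ is nonempty, and $\PE_i^-(\bT)$ is exactly the set of edges of $E$ that are $\bT$-persistent, with $|E| = 2^n\cdot\Score_{i,t}^-(f)$.

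First I would fix $i\in\calI$ and let $\alpha$ denote the probability that $\bT\sim\calP_{i,\lceil\sqrt{n}/2^t\rceil}$ is $i$-informative for anti-monotone edges. Then I would examine the quantity $\Prx_{\be,\bT}\big[\be\text{ is }\bT\text{-persistent}\big]$, where $\be$ is drawn uniformly from $E$ and $\bT\sim\calP_{i,\lceil\sqrt{n}/2^t\rceil}$ independently. On one hand, since $t\in[\Lambda]$, the second condition of Definition~\ref{def:strong-edge} (applied with $\ell=t$) says that every fixed $e\in E$ is $\bT$-persistent with probability at least $1-1/\log n$, so by averaging over $\be$ the quantity above is at least $1-1/\log n$. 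On the other hand, whenever $\bT$ fails to be $i$-informative for anti-monotone edges, the first inequality in~(\ref{bulbul-antimon}) must fail (the second inequality there holds unconditionally for $i\in\calI$ by Lemma~\ref{lem:score-summary}), i.e. fewer than $|E|/4$ of the edges in $E$ are $\bT$-persistent; hence, conditioning on whether or not $\bT$ is $i$-informative for anti-monotone edges, the quantity is at most $\alpha\cdot 1+(1-\alpha)\cdot\frac14$.

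Combining the two estimates gives $1-1/\log n\le \alpha+(1-\alpha)/4$, which rearranges to $\alpha\ge 1-\frac{4}{3\log n}=1-o(1)$, proving the lemma. I do not foresee a genuine obstacle here: the argument is a one-line averaging/Markov computation, and the only things that need to be checked carefully are that $t\in[\Lambda]$ (so Definition~\ref{def:strong-edge} applies with $\ell=t$) and that $\Score_{i,t}^-$ is the relevant quantity and is bounded below by $2^{t-h}$ for every $i\in\calI$ — both of which are immediate from Lemma~\ref{lem:score-summary}.
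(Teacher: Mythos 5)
Your proposal is correct and follows essentially the same route as the paper: both fix $i$, examine $\Prx_{\be,\bT}[\be\text{ is }\bT\text{-persistent}]$ for a uniform $t$-strong anti-monotone edge $\be$ and $\bT\sim\calP_{i,\lceil\sqrt{n}/2^t\rceil}$, lower-bound it by $1-1/\log n$ via the second condition of Definition~\ref{def:strong-edge}, upper-bound it by $\alpha+(1-\alpha)/4$ via the definition of $i$-informativeness, and solve for $\alpha$. Your extra remarks (that $E$ is nonempty and that the second inequality in~(\ref{bulbul-antimon}) holds automatically by Lemma~\ref{lem:score-summary}) are accurate clarifications of points the paper leaves implicit.
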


\begin{proof}
Similarly to the proof of Lemma~\ref{ff}, we write $\alpha$ to denote the probability of $\bT\sim\calP_{i,\lceil \sqrt{n}/2^t \rceil}$ being
  $i$-informative for anti-monotone edges. We examine 
$$
\Prx_{\be,\bT} \big[\text{$\be$ is $\bT$-persistent}\big],
$$
where $\be$ is a $t$-strong anti-monotone edge along variable $i$ drawn uniformly and 
  $\bT\sim\calP_{i,\lceil \sqrt{n}/2^t\rceil}$.
It follows from the definition of strong edges that this probability is at least $1-1/\log n$.
On the other hand, we can also upperbound the probability using $\alpha$ (and the definition
  of $i$-informative sets for anti-monotone edges) as
$ 
(1-\alpha)/4+\alpha.
$
Solving the inequality we get $\alpha\ge 1-o(1)$.
\end{proof}

Similarly, we may conclude the analogous lemma for monotone edges, whose proof follows similarly to Lemma~\ref{lem:informative-for-antimon}.
\begin{lemma}\label{lem:informative-for-mon}
For every $i \in \calI$ we have that $\bS \sim \calP_{i, \lceil\sqrt{n}/2^s\rceil}$ is $i$-informative for monotone edges with probability at least $1 - o(1)$.
\end{lemma}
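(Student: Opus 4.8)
The plan is to mirror the proof of Lemma~\ref{lem:informative-for-antimon} essentially verbatim, swapping anti-monotone edges for monotone ones and the index $t$ for $s$. First I would fix $i \in \calI$ and let $\alpha$ denote the probability that $\bS \sim \calP_{i, \lceil \sqrt{n}/2^s\rceil}$ is $i$-informative for monotone edges, i.e.\ that $\bS$ satisfies \eqref{bulbul-mon}. The goal is to show $\alpha \ge 1 - o(1)$.

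Next I would consider the quantity
\[
\Prx_{\be, \bS}\big[\text{$\be$ is $\bS$-persistent}\big],
\]
where $\be$ is a uniformly random $s$-strong monotone edge along variable $i$ and $\bS \sim \calP_{i, \lceil \sqrt{n}/2^s\rceil}$. (Such edges exist and are reasonably numerous since $i \in \calI$ and Lemma~\ref{lem:score-summary} gives $\Score_{i,s}^+ \cdot 2^{-s} \ge 2^{-h}$, so $\Score_{i,s}^+ > 0$.) On the one hand, the second condition in the definition of strong edges (Definition~\ref{def:strong-edge}), applied with the value $\ell = s$ — here we use that the index in $\calP_{i,\lceil\sqrt{n}/2^\ell\rceil}$ matches — gives that for each fixed $s$-strong edge $e$ along $i$, $e$ is $\bS$-persistent with probability at least $1 - 1/\log n$; averaging over $\be$ keeps this lower bound of $1 - 1/\log n$. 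On the other hand, conditioning on whether $\bS$ is $i$-informative for monotone edges, an $i$-informative $\bS$ contributes at most $1$, while a non-informative $\bS$ has, by the failure of \eqref{bulbul-mon}, fewer than $\tfrac14 \Score_{i,s}^+ \cdot 2^n$ of the $s$-strong monotone edges along $i$ being $\bS$-persistent, so over the random choice of $\be$ (uniform among the $\Score_{i,s}^+ \cdot 2^n$ such edges) the conditional probability is at most $1/4$. Hence the probability is at most $(1-\alpha)/4 + \alpha$.

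Combining the two bounds yields $1 - 1/\log n \le (1-\alpha)/4 + \alpha$, i.e.\ $3\alpha/4 \ge 3/4 - 1/\log n$, so $\alpha \ge 1 - (4/3)/\log n = 1 - o(1)$, as desired. There is no real obstacle here; the only point requiring a moment's care is checking that the index in the distribution $\calP_{i, \lceil \sqrt{n}/2^s\rceil}$ is precisely the one appearing in the second clause of Definition~\ref{def:strong-edge} when we take $\ell = s$, so that the persistency guarantee for $s$-strong edges is exactly what we invoke — and that $s \in [\Lambda]$ so that ``$s$-strong'' is a meaningful notion. Everything else is the same short averaging argument already used for Lemma~\ref{lem:informative-for-antimon}.

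\begin{proof}
Fix $i \in \calI$ and let $\alpha$ denote the probability that $\bS \sim \calP_{i,\lceil \sqrt{n}/2^s\rceil}$ is $i$-informative for monotone edges. We examine
\[
\Prx_{\be,\bS}\big[\text{$\be$ is $\bS$-persistent}\big],
\]
where $\be$ is an $s$-strong monotone edge along variable $i$ drawn uniformly at random and $\bS \sim \calP_{i,\lceil\sqrt{n}/2^s\rceil}$. By the second condition in the definition of strong edges (with $\ell = s$), this probability is at least $1 - 1/\log n$. On the other hand, conditioning on whether $\bS$ is $i$-informative for monotone edges and using the definition of $i$-informative sets, this probability is at most $(1-\alpha)/4 + \alpha$. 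Solving the resulting inequality gives $\alpha \ge 1 - o(1)$.
\end{proof}
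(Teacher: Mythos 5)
Your proof is correct and is exactly the argument the paper intends: the paper proves Lemma~\ref{lem:informative-for-mon} by noting it "follows similarly to Lemma~\ref{lem:informative-for-antimon}," and your write-up is precisely that averaging argument with $t$ replaced by $s$ and anti-monotone edges replaced by monotone ones, invoking the second condition of Definition~\ref{def:strong-edge} with $\ell = s$.
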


\subsection{Catching variables: Relating $\calD_{\xi,m}$ and $\calH_{\xi,m}$}

Now we focus on the variables in $\calI$ that satisfy (\ref{infcondition}).
To this end, we let $\calI^*$ be a subset of $\calI$ of size $\lceil |\calI|/2\rceil$
  such that all variables in $\calI^*$ satisfy (\ref{infcondition}).
Given that the algorithm knows the size of $\calI$, it also knows the 
  size of $\calI^*$ (though not variables within).
%We use $\calI^*$ to denote this subset of $\calI$. Then $|\calI^*|\ge |\calI|/2$. 
Next we use $m$ to denote the largest~power of $2$ that is at most 
$ \xi |\calI|/ n^{1/3}$. In other words, $m$ is the unique power of 2 satisfying
\begin{align}
\dfrac{\xi |\calI|}{2n^{1/3}} < m \leq \dfrac{\xi |\calI|}{n^{1/3}}.  \label{eq:def-m}
\end{align}
Given that $|\calI| \geq |\calI|/2^t \geq n^{2/3}$ and (\ref{boundforxi}), we have 
$m\gg 1$ and $m=
\Theta( {\xi |\calI|}/{n^{1/3}} )$.

%So we have   
%Next we let $\calI^*$ denote the set of $i\in \calI$ with
%$$\Inf_f[i] \leq \frac{\xi}{m \log^9 n}.$$
%As a result we have
%$$
%(|\calI|- |\calI^*|)\cdot \frac{\xi}{m \log^{10} n}\le \Inf_f(\calI)< \frac{n^{1/3}}{\log^{11} n}
%$$
%and it follows that
%$$
%|\calI^*|\ge \left(1-O\left(\frac{1}{\log n}\right)\right)\cdot |\calI|\ge \frac{|\calI|}{2}.
%$$
%We also let

%We will now divide the coordinates of $I$ into two groups according to their influence. We let 
%\[ I_s = \left\{ i \in I : \Inf_f[i] \leq \frac{\xi}{m \cdot \log^9 n} \right\}, \]
%and $I_{\overline{s}} = I \setminus I_s$. We note that either $|I_s| \geq \frac{|I|}{2}$, or $|I_{\overline{s}}| %\geq \frac{|I|}{2}$.

%\subsection{: Distribution $\calD_{\xi}$}

\newcommand{\Caught}{\textsc{Caught}}

We now turn to analyzing the distribution $\calD_{\xi,m}$ with the $m$ defined above.

\begin{definition}[Catching Variables]
Let $i\in \calI^*$. We say that a set $S\subseteq [n]$ 
  \emph{catches} the variable $i$ if $i\in S$ and $\Sub(S,i)=(S\cup \{n+1\})\setminus \{i\}$ is $i$-informative (see Definition~\ref{def:informative}). We let 
\[ \Caught (S) = \big\{ i \in \calI^* : S \text{ catches } i\big\}. \]
\end{definition}

Intuitively, if we sample $\bS \sim \calD_{\xi,m}$ and $i \in \Caught (\bS)$, then 
  we have an upper bound for how many samples $\bx$ we need for $\AESearch(f,\bx,\bS\cup \{n+1\})$ 
  to reveal a bichromatic edge along $i$. 

\begin{claim}\label{trickyclaim}
For every $i \in \calI^*$, we have %$\dtv(\calD_{\xi}(i), \calH_{\xi}(i)) \lsim m \cdot \frac{\log^8 n}{\xi} \cdot \Inf_f[i]$.
$$
\Prx_{\bS\sim \calD_{\xi,m,i}}\big[\bS\ \emph{\text{catches\ }} i\big]
\ge \Prx_{\bT\sim \calH_{\xi,m,i}}\big[\bT\ \emph{\text{is $i$-informative\ }}\big]-o(1).
$$
\end{claim}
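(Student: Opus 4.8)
\textbf{Proof proposal for Claim~\ref{trickyclaim}.}

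The plan is to relate the two distributions $\calD_{\xi,m,i}$ and $\calH_{\xi,m,i}$ through the substitution operation $\Sub(\cdot,i)$ and Lemma~\ref{lem:main}, which bounds the impact of a low-influence variable on the output of $\Prune$. First I would unwind the definitions: $\bS \sim \calD_{\xi,m,i}$ samples a set $\bS_0 \subseteq [n]$ of size $m$ with $i \in \bS_0$, a uniform ordering $\bpi$, and sets $\bS = \Prune(f,\bS_0,\bpi,\xi)$; while $\bT \sim \calH_{\xi,m,i}$ samples $\bT_0 \subseteq [n+1]\setminus\{i\}$ of size $m$ with $n+1 \in \bT_0$, a uniform ordering, and sets $\bT = \Prune(f,\bT_0,\bpi,\xi)$. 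The key observation is that the map $\bS_0 \mapsto \Sub(\bS_0,i)$ is a bijection between the support of the starting sets for $\calD_{\xi,m,i}$ and the support of the starting sets for $\calH_{\xi,m,i}$ (it swaps the role of $i$ and the placeholder $n+1$), and it preserves the uniform distribution on orderings. So up to this relabeling, the two processes differ only in that one runs $\Prune$ with $i$ "live" and the other runs it with $i$ replaced by the dummy $n+1$.

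Next I would observe that "$\bS$ catches $i$" is, by definition, exactly the event that $i \in \bS$ and $\Sub(\bS,i)$ is $i$-informative. Since $\Prune$ only removes variables from its input set, if $\bS_0$ starts with $i$ then either $i \in \bS = \Prune(f,\bS_0,\bpi,\xi)$ or $i$ was removed along the way; and observe that $\Sub(\bS,i)$ being $i$-informative requires $\Sub(\bS,i)$ to contain $n+1$ as a placeholder in the position of $i$, i.e. forces $i \in \bS$. Meanwhile, on the $\calH$ side, $\bT$ always contains $n+1$, and "$\bT$ is $i$-informative" is a property of $\bT$ directly. Using the bijection on starting sets together with the ordering-substitution bijection, I would set up a coupling: draw $\bS_0$ for $\calD_{\xi,m,i}$, let $\bT_0 = \Sub(\bS_0,i)$ and use the matched ordering $\bpi' = $ the $\Sub$-substituted ordering. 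Lemma~\ref{lem:main} then gives that the distributions of $\Prune(f,\bS_0,\bpi,\xi)$ and $\Sub\big(\Prune(f,\bT_0,\bpi',\xi),i\big)$ are within total variation distance $O(|S_0|\log^5 n/\xi)\cdot \Inf_f[i] = O(m\log^5 n/\xi)\cdot\Inf_f[i]$ of each other. Under the coupling where this TV distance is realized, except on a bad event of probability at most that bound, we have $\bS = \Sub(\bT,i)$, hence $\Sub(\bS,i) = \bT$ (since $\Sub(\cdot,i)$ is an involution), so $\bS$ catches $i$ if and only if $i \in \bS$ and $\bT$ is $i$-informative. The latter condition ("$\bT$ is $i$-informative") has probability at least the $\calH_{\xi,m,i}$-probability, and the condition $i \in \bS$ is implied (as noted) whenever $\Sub(\bS,i)=\bT$ contains $n+1$, which it always does. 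Chaining these inequalities yields
\[
\Prx_{\bS\sim\calD_{\xi,m,i}}\big[\bS\ \text{catches\ } i\big]\ \ge\ \Prx_{\bT\sim\calH_{\xi,m,i}}\big[\bT\ \text{is $i$-informative}\big]\ -\ O\!\left(\frac{m\log^5 n}{\xi}\right)\cdot \Inf_f[i].
\]

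The final step is to check that the error term is genuinely $o(1)$ for the parameters at hand. Here I would plug in $m = \Theta(\xi|\calI|/n^{1/3})$ from (\ref{eq:def-m}), so that $m\log^5 n/\xi = \Theta(|\calI|\log^5 n/n^{1/3})$, and then use the low-influence hypothesis (\ref{infcondition}), $\Inf_f[i] \le (\eps^2/\Lambda^{13})\cdot n^{1/3}/|\calI|$, valid since $i \in \calI^*$. Multiplying, the $|\calI|$ and $n^{1/3}$ factors cancel and the error term becomes $O(\eps^2 \log^5 n/\Lambda^{13}) = O(\polylog(n)/\Lambda^{13}) = o(1)$ because $\Lambda = \lceil 2\log(n/\eps)\rceil$ dominates any fixed power of $\log n$ for large $n$. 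I expect the main obstacle to be getting the bookkeeping around $\Sub$ exactly right --- in particular carefully verifying that the event "$\bS$ catches $i$" translates cleanly under the coupling without an off-by-one in whether $i$ versus $n+1$ is the live variable, and confirming that $i \in \bS$ is automatically guaranteed rather than an extra condition to lose probability over --- but this is a matter of careful unwinding of definitions rather than any new idea, and the quantitative slack from $\Lambda$ being polylogarithmic makes the error bound comfortable.
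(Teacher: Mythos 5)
Your proposal is correct and follows essentially the same route as the paper: the same coupling of the starting sets via $\Sub(\cdot,i)$ with the matched ordering, an application of Lemma~\ref{lem:main} to bound the total variation distance, the observation that $\Sub(\bT,i)$ always contains $i$ and catches $i$ whenever $\bT$ is $i$-informative, and the same plug-in of (\ref{eq:def-m}) and (\ref{infcondition}) to make the error $o(1)$. The only (immaterial) difference is that you carry the $\log^5 n$ factor from Lemma~\ref{lem:main} while the paper's displayed bound (\ref{tttt}) writes $\log^8 n$; both are $o(1)$ under the Case~1 parameters.
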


\begin{proof}
We show the total variation distance between 
  $\bS\sim \calD_{\xi,m,i}$ and $\Sub(\bT,i)$ over $\bT\sim \calH_{\xi,m,i}$ is  
\begin{equation}\label{tttt}
O\left(\frac{m\log^8 n}{\xi}\cdot \Inf_f[i]\right)
=O\left(\frac{|\calI|\log^8 n}{n^{1/3}}\cdot \Inf_f[i]\right)
=o(1),
\end{equation}
given (\ref{infcondition}) and $i\in \calI^*$. The lemma follows from the observations that
  $\Sub(\bT,i)$ contains $i$ and when $\bT$ is $i$-informative, $\Sub(\bT,i)$ catches $i$.
  
To upperbound the total variation distance between $\bS\sim \calD_{\xi,m,i}$ and $\Sub(\bT,i)$ over $\bT\sim \calH_{\xi,m,i}$, we use the following coupling.
First we draw a subset $\bS_0$ of $[n]$ with $i\in \bS_0$ 
  and an ordering $\bpi$ of $\bS_0$ uniformly at random.
Then we set $\bS_0'=\Sub(\bS_0,i)$ and $\bpi'$ to be the ordering of $\bS_0'$
  obtained from $\bpi$ by replacing $i$ with $n+1$.
Finally we draw the output from the best coupling for $\Prune\hspace{0.04cm}(f,\bS_0,\bpi,\xi)$
  and $\Sub(\Prune\hspace{0.04cm}(f,\bS_0',\bpi',\xi),i)$.
The upper bound in (\ref{tttt}) follows directly from Lemma \ref{lem:main}. 
\end{proof}

\def\cc{\mathbf{c}}
\def\ff{\boldsymbol{f}}
\def\gg{\boldsymbol{g}}

\newcommand{\AlgorithmC}{\mathtt{AlgorithmCase1.1}}

\begin{figure}[t!]
\begin{framed}
\noindent Procedure $\AlgorithmC\hspace{0.04cm} (f)$

\begin{flushleft}
\noindent {\bf Input:} Query access to a Boolean function $f \colon \{0, 1\}^n \to \{0, 1\}$
%, a point $x \in \{0,1\}^n$, a nonempty set $S \subseteq [n]$ satisfying $f(x) \neq f(x^{(S)})$, and an order $\pi \colon [|S|] \to S$.

\noindent {\bf Output:} Either ``unate,'' or two edges constituting an edge violation of $f$ to unateness.\begin{enumerate}
\item Repeat the following $O(1)$ times:
\item \ \ \ \ \ \ \ \ Draw $\bS\sim \calD_{\xi,m}$: First draw a size-$m$ 
  subset $\bS_0$ of $[n]$ and an ordering $\bpi$ of $\bS_0$\\
  \ \ \ \ \ \ \ \  uniformly at random and 
  then call $\Prune\hspace{0.04cm}(f,\bS_0,\bpi,\xi)$.
%  as a random size-$m$ subset of $[n]$ and th
\item \ \ \ \ \ \ \ \ Repeat $q$ times, where $\smash{q=O\left(n^{2/3}\Lambda^{13}\big/\eps^2\right)}$:
\item \ \ \ \ \ \ \ \ \ \ \ \ \ \ \ \ Draw an $\bx\in \{0,1\}^n$ uniformly and run
  $\AESearch\hspace{0.04cm}(f,\bx,\bS\cup\{n+1\})$
\item \ \ \ \ \ \ \ \ Let $\bA$ be the set of $i\in [n]$ such that an anti-monotone edge along $i$ is found
\item \ \ \ \ \ \ \ \ Repeat $q$ times:
\item \ \ \ \ \ \ \ \ \ \ \ \ \ \ \ \ Draw an $\by\in \{0,1\}^n$ uniformly and run 
  $\AESearch\hspace{0.04cm}(f,\by,\bS\cup \{n+1\})$
\item \ \ \ \ \ \ \ \ Let $\bB$ be the set of $i\in [n]$ such that a monotone edge along variable $i$ is found
\item \ \ \ \ \ \ \ \ If $\bA\cap \bB\ne \emptyset$, output an edge violation of $f$ to unateness.
\item Output ``unate.''
\end{enumerate}
\end{flushleft}\vskip -0.14in
\end{framed}\vspace{-0.2cm}
\caption{Algorithm for Case 1.1}\label{fig:algcase1-1}
%Description of the binary search subroutine for finding a bi-chromatic edge from a set.} \label{fig:binarysearch}
\end{figure}

\subsection{Algorithm for Case 1.1}\label{sec:case1-1}

There are two sub-cases in Case 1. Specifically, for the remainder of 
  Section \ref{sec:case1-1}, we assume that 
\begin{equation}\label{case11}
m\ge \frac{n^{1/3}\log^2 n}{2^t},
\end{equation}
and handle the other case in Case 1.2. We let
\begin{align}
r:= \frac{m|\calI^*|}{n}=\Omega(\log^{2}n), \label{eq:r-value}
\end{align}
the expected size of the intersection of a random size-$m$ subset of $[n]$ with $\calI^*$, where we used (\ref{case11}) and $|\calI^*|/2^t = \Omega(n^{2/3})$.
Note that both $m$ and $r$ are known to the algorithm. We prove Lemma \ref{lemmaforcase1} assuming (\ref{case11}) using $\AlgorithmC$ in Figure~\ref{fig:algcase1-1}, with the following query complexity.

 \begin{claim} 
The query complexity of $\AlgorithmC$ is $\tilde{O}(n^{2/3}/\eps^2)$.
%  \[ \tilde{O}\left(\dfrac{|\calI|}{n^{1/3}} + \dfrac{n^{2/3}}{\eps^2 \sqrt{2^t}} \right) = \tilde{O}(n^{2/3}/\eps^2). \]
%  queries to $f$.
 \end{claim}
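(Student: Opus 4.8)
The plan is to simply tally the queries made by $\AlgorithmC$ line by line and check that the total is $\tilde{O}(n^{2/3}/\eps^2)$. First I would account for the call to $\Prune\hspace{0.04cm}(f,\bS_0,\bpi,\xi)$ in line 2: by Lemma~\ref{lem:prune2} this uses $O(|S_0|\log^5 n/\xi) = O(m\log^5 n/\xi)$ queries. Since $m = \Theta(\xi|\calI|/n^{1/3})$ by (\ref{eq:def-m}), this is $O(|\calI|\log^5 n / n^{1/3}) = \tilde{O}(n^{2/3})$ using $|\calI|\le n$. Next, lines 3--4 and 6--7 each run $\AESearch$ for $q = O(n^{2/3}\Lambda^{13}/\eps^2)$ iterations, and by Lemma~\ref{lem:aesearch} each call to $\AESearch$ costs $O(\log n)$ queries; so these loops together use $O(q\log n) = \tilde{O}(n^{2/3}/\eps^2)$ queries. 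Finally, the outer loop in line 1 repeats everything $O(1)$ times, which does not change the asymptotics. Summing, the query complexity is $\tilde{O}(n^{2/3}) + \tilde{O}(n^{2/3}/\eps^2) = \tilde{O}(n^{2/3}/\eps^2)$.

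The only mild subtlety — which I would spell out carefully — is bounding the $\Prune$ cost: one needs $m/\xi = O(|\calI|/n^{1/3}) = O(n^{2/3})$, which uses the \emph{upper} bound $m \le \xi|\calI|/n^{1/3}$ from (\ref{eq:def-m}) together with $|\calI| \le n$. I would also note that line~9 (outputting an edge violation when $\bA \cap \bB \ne \emptyset$) makes no queries beyond those already counted, and the $O(\Lambda^4)$ factor from guessing $s,t,h,|\calI|$ (mentioned in Section~\ref{sec:mainalg}) is absorbed into the $\tilde{O}(\cdot)$ notation. I do not expect any real obstacle here; this is a routine bookkeeping argument, and the claim follows immediately. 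The proof is essentially:
$$
\underbrace{O(1)}_{\text{line 1}} \cdot \Big( \underbrace{O\big(m\log^5 n/\xi\big)}_{\text{line 2, Lemma~\ref{lem:prune2}}} + \underbrace{O(q\log n)}_{\text{lines 3--8, Lemma~\ref{lem:aesearch}}} \Big) = \tilde{O}(n^{2/3}) + \tilde{O}(n^{2/3}/\eps^2) = \tilde{O}(n^{2/3}/\eps^2),
$$
where we used $m/\xi = \Theta(|\calI|/n^{1/3}) = O(n^{2/3})$ and $q = \tilde{O}(n^{2/3}/\eps^2)$.
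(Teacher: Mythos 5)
Your proposal is correct and follows essentially the same route as the paper: bound the $\Prune$ call via Lemma~\ref{lem:prune2} and the estimate $m/\xi = O(|\calI|/n^{1/3}) = O(n^{2/3})$ from (\ref{eq:def-m}) and $|\calI|\le n$, then charge $O(\log n)$ per $\AESearch$ call over the $O(q)$ iterations. The extra remarks (line 9 costing nothing, the $O(\Lambda^4)$ guessing factor being absorbed) are fine and do not change the argument.
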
 
\begin{proof}
By Lemma~\ref{lem:prune2}, line 2 of $\AlgorithmC$ requires $\tilde{O}({m}/{\xi})$ queries. By (\ref{eq:def-m}), 
\[ \frac{m}{\xi} = O\left( \frac{\xi |\calI|}{\xi \cdot n^{1/3}}\right) = O\left( \frac{|\calI|}{n^{1/3}}\right)=O(n^{2/3})  \] using the trivial bound of $|\calI|\le n$.
%Thus, the remaining query complexity is $\tilde{O}(q)$ due to lines 4 and 7. We consider the case when $2^h \geq \sqrt{2^t2^s r}$, since otherwise we have $q = O(\log^2 n)$. By (\ref{eq:r-value}), (\ref{eq:def-m}), and $\xi = 1/2^s$, we obtain
%\begin{align*}
%q &= \tilde{O}\left(\dfrac{2^h}{\sqrt{2^t 2^s r}}\right) \leq \tilde{O}\left(\dfrac{n^{1/6} \cdot 2^h}{\sqrt{2^t 2^s m|\calI^*|}}\right) \leq \tilde{O}\left(\dfrac{n^{2/3}\cdot 2^h }{\sqrt{2^t}\cdot |\calI^*|}\right) \leq \tilde{O}\left(\dfrac{n^{2/3}}{\eps^2\sqrt{2^t}} \right).
%\end{align*}
The claim then follows from our choice of $q$ in the algorithm.
\end{proof}

% \[ q = \left\lceil\dfrac{2^h}{\sqrt{2^s 2^t r}} \right\rceil \cdot \log^2 n, \]
% and note that since $r, 2^s, 2^t \geq 1$ with $s \geq t$, we have 
 %\begin{align} 
% q \geq \left\lceil \dfrac{2^h}{2^t r} \right\rceil \cdot \log^2 n\geq \left\lceil \dfrac{2^h}{2^s r} \right%\rceil \cdot \log^2 n. \label{eq:q-lb}
 %\end{align}

The algorithm for Case 1.1 starts by sampling a 
  set $\bS\sim \calD_{\xi,m}$.
It then keeps drawing points $\bx$ uniformly at random to run $\AESearch\hspace{0.04cm}(f,\bx,\bS\cup \{n+1\})$
  to find bichromatic edges, with the hope to find an edge violation along 
  one of the variables in $\calI^*$.
We break lines 3--8 into the search of anti-monotone edges and the search of monotone edges separately only for 
  the analysis later; algorithm wise there is really no need to do so. 
(The reason why we use $\bS\cup \{n+1\}$ instead of $\bS$ in
  the algorithm will become clear in the proof of Lemma~\ref{lem:catching-lemma};
  roughly speaking, we need it to establish a connection between $\calD_{\xi,m}$ and $\calH_{\xi,m}$ so that
  we can carry the analysis on $\calH_{\xi,m}$ that has been done so far over to $\calD_{\xi,m}$.)  
On the one hand,
  recall from Lemma~\ref{lem:aesearch} that if $i\in S\subseteq [n]$ and 
  a bichromatic edge $e$ along variable $i$ is $\Sub(S,i)=(S\cup\{n+1\})\setminus \{i\}$-persistent,
  then running $\AESearch$ on $S\cup \{n+1\}$ and any of the two points of $e$ would 
  reveal $e$ with high probability.
On the other hand, if a set (e.g., $\Sub(S,i)$) is $i$-informative then it is persistent on
  a large fraction of edges along variable $i$.

Our first goal is to prove Lemma \ref{lem:catching-lemma}, which states that 
  $\bS\sim\calD_{\xi,m}$ catches many variables.

\begin{lemma}\label{lem:catching-lemma}
We have $|\Caught(\bS)|\ge r/6$ with probability $\Omega(1)$. 
%over
%$\bS\sim \calD_{\xi,m}$,  coordinates in $\calI^*$.
\end{lemma}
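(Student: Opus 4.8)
The plan is to analyze the random variable $|\Caught(\bS)|$ for $\bS\sim\calD_{\xi,m}$ via a first-moment computation combined with a union bound over $\calI^*$. First I would compute the expectation. Fix a variable $i\in\calI^*$. The distribution $\calD_{\xi,m}$ conditioned on the event $i\in\bS_0$ (the initial random set) is exactly $\calD_{\xi,m,i}$, and since $\bS_0$ is a uniform size-$m$ subset of $[n]$, we have $\Pr[i\in\bS_0]=m/n$. On that event, Claim~\ref{trickyclaim} gives $\Pr_{\bS\sim\calD_{\xi,m,i}}[\bS\text{ catches }i]\ge \Pr_{\bT\sim\calH_{\xi,m,i}}[\bT\text{ is }i\text{-informative}]-o(1)$, and Corollary~\ref{coro1} bounds the latter probability from below by $1-o(1)$ (using $m\le n^{2/3}$). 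Hence for each $i\in\calI^*$,
\[
\Prx_{\bS\sim\calD_{\xi,m}}\big[i\in\Caught(\bS)\big]\;=\;\frac{m}{n}\cdot\Prx_{\bS\sim\calD_{\xi,m,i}}\big[\bS\text{ catches }i\big]\;\ge\;\frac{m}{n}\cdot\big(1-o(1)\big).
\]
Summing over $i\in\calI^*$ and using linearity of expectation gives $\E\big[|\Caught(\bS)|\big]\ge (1-o(1))\cdot m|\calI^*|/n = (1-o(1))\,r$, where $r$ is as defined in (\ref{eq:r-value}).

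Next I would upgrade this expectation bound into a high-probability (or at least $\Omega(1)$-probability) statement that $|\Caught(\bS)|\ge r/6$. The clean way is a reverse Markov / averaging argument: since $|\Caught(\bS)|\le|\calI^*|\le |\calI|$ deterministically, and its expectation is at least $(1-o(1))r$, the probability that $|\Caught(\bS)|<r/6$ is bounded away from $1$. Concretely, writing $Z=|\Caught(\bS)|$ with $Z\in[0,N]$ for $N=|\calI^*|$ and $\E[Z]\ge(1-o(1))r$, we get $\Pr[Z\ge r/6]\ge (\E[Z]-r/6)/(N-r/6)$; this is $\Omega(1)$ provided $r=\Theta(N)$. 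But in general $r=m|\calI^*|/n$ can be much smaller than $|\calI^*|$, so this crude bound is not enough on its own. I would instead prove that $Z$ concentrates. The natural route is to observe that $Z=\sum_{i\in\calI^*}\mathbf{1}[i\in\Caught(\bS)]$ and that the dominant source of randomness/variance is the initial choice of $\bS_0$ (whether $i\in\bS_0$), which is a balls-into-bins type selection; one can bound $\Var[Z]$ using negative correlation of the indicators $\mathbf{1}[i\in\bS_0]$ (uniform size-$m$ subsets are negatively associated) plus the fact that conditioned on $i\in\bS_0$ the catching event has probability $1-o(1)$. Then Chebyshev gives $Z\ge r/6$ (in fact $Z\ge (1-o(1))r$) with probability $1-o(1)$, which certainly implies the $\Omega(1)$ claimed in the lemma.

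The main obstacle I expect is handling the dependence introduced by $\Prune$: the event "$\bS$ catches $i$" depends not just on whether $i\in\bS_0$ but on the entire execution of $\Prune(f,\bS_0,\bpi,\xi)$, which couples all coordinates together. So bounding $\Var[Z]$ rigorously requires care — either one argues that conditioning on $\bS_0$ the catching events are "almost independent enough," or, more robustly, one avoids second moments entirely and runs the whole Case 1.1 argument in expectation, i.e., shows the \emph{expected} number of caught variables is $\Omega(r)$ and feeds that into a Markov-type bound on the failure of the subsequent $\AESearch$ phase. Given that the lemma only asserts $\Omega(1)$ success probability, I would lean toward the lighter-weight argument: combine $\E[Z]\ge(1-o(1))r$ with the deterministic upper bound $Z\le |\calI^*|$ and the bound $r=\Omega(\log^2 n)$ from (\ref{eq:r-value}); if the ratio $r/|\calI^*|=m/n$ is bounded below by a constant we are done immediately, and otherwise one needs the negative-association variance bound sketched above. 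Either way the key inputs are Corollary~\ref{coro1} (informativeness of $\calH_{\xi,m,i}$) and Claim~\ref{trickyclaim} (closeness of $\calD_{\xi,m,i}$ to $\calH_{\xi,m,i}$ via Lemma~\ref{lem:main}), with the low-influence hypothesis (\ref{infcondition}) being exactly what makes the $o(1)$ error terms in Claim~\ref{trickyclaim} small.
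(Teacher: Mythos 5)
Your first-moment computation is exactly the paper's: $\Prx_{\bS\sim\calD_{\xi,m}}[\bS\text{ catches }i]\ge (m/n)\cdot(1-o(1))$ via Claim~\ref{trickyclaim} and Corollary~\ref{coro1}, hence $\Ex[|\Caught(\bS)|]\ge(1-o(1))r$. The gap is in the second step, converting this into an $\Omega(1)$-probability bound. Your primary route (a variance bound via negative association plus Chebyshev) is not carried out, and as you yourself note it is problematic: the catching events are coupled through the entire execution of $\Prune$, so the indicators are not simply negatively associated functions of the membership events $\{i\in\bS_0\}$, and controlling $\Var[|\Caught(\bS)|]$ would require a genuinely new argument. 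Your fallback (reverse Markov with the deterministic bound $|\Caught(\bS)|\le|\calI^*|$) only works when $m/n=\Omega(1)$, which never holds here since $m\le n^{2/3}$; so as written the proposal does not close.

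The paper avoids any concentration statement about $|\Caught(\bS)|$ itself. The observation you are missing is that $\Caught(\bS)\subseteq\bS\cap\calI^*\subseteq\bS_0\cap\calI^*$, and $|\bS_0\cap\calI^*|$ is controlled by the \emph{initial} uniform sampling of $\bS_0$, before $\Prune$ introduces any dependence: by Lemma~\ref{appendix1} its mean is $r$ and it is at most $4r$ except with probability $\exp(-\Omega(r))=\exp(-\Omega(\log^2 n))$, using $r=\Omega(\log^2 n)$ from (\ref{eq:r-value}). One then runs your averaging argument with the upper bound $4r$ in place of $|\calI^*|$: writing $\alpha=\Prx[|\Caught(\bS)|\ge r/6]$,
\[
(1-o(1))\,r\;\le\;\Ex\big[|\Caught(\bS)|\big]\;\le\; m\cdot\exp\big(-\Omega(\log^2 n)\big)+(1-\alpha)\cdot\frac{r}{6}+\alpha\cdot 4r,
\]
which gives $\alpha=\Omega(1)$ directly. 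So the missing idea is not a second-moment bound but the pointwise domination of $|\Caught(\bS)|$ by a pre-$\Prune$ quantity that concentrates trivially; with that substitution your argument becomes the paper's proof.
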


\begin{proof}
Let $\alpha$ be the probability we are interested in:
\[ \alpha = \Prx_{\bS \sim \calD_{\xi,m}}\left[ \big|\Caught(\bS)\big| \geq \frac{r}{6}\hspace{0.02cm}\right]. \]
For each $i \in \calI^*$, as the $\bS_0$ drawn in $\calD_{\xi,m}$ 
  at the beginning contains $i$ with probability $m/n$, we have
\begin{align*}
\Prx_{\bS \sim \calD_{\xi,m}}\big[  \bS\ \text{catches $i$}\big] &\ge \frac{m}{n}\cdot  \Prx_{\bS \sim \calD_{\xi,m,i}}\big[ \bS\ \text{catches $i$} \big] \\
	&\geq \frac{m}{n} \cdot \left( \Prx_{\bT \sim \calH_{\xi,m,i} }\big[ \bT \text{ is $i$-informative} \big] - o(1)\right) \geq (1-o(1)) \cdot \frac{m}{n}.
\end{align*}
Furthermore, since $\bS \sim \calD_{\xi,m}$ is a subset of $\bS_0$
  drawn at the beginning which is a random subset of $[n]$ of
   size $m$, we have by Lemma~\ref{appendix1} that with probability at least $1 - \exp( -\Omega(r))$ 
   over the draw~of $\bS \sim \calD_{\xi,m}$ that $|\bS \cap \calI^*| \leq 4r$. 
Therefore, we have
\begin{align*}
(1-o(1)) r=(1-o(1)) \cdot \frac{m  |\calI^*|}{n} &\leq \sum_{i \in \calI^*} \Prx_{\bS \sim \calD_{\xi,m}}\big[\bS\ \text{catches $i$}\big] = \Ex_{\bS \sim \calD_{\xi,m}}\Big[ \big| \Caught(\bS)\big|\Big]\\
				&\leq m \cdot \exp\left( -\Omega(\log^2 n)\right) + (1-\alpha) \cdot \frac{r}{6} + \alpha \cdot 4r.%\left(\frac{4m \cdot |I_s|}{n} + \log^2 n\right),
\end{align*}
Solving for $\alpha$ gives the desired bound of $\alpha=\Omega(1)$.
\end{proof}

Given Lemma~\ref{lem:catching-lemma}, a constant fraction of the intersection of $\bS$ and $\calI^*$ will be caught, and therefore, $\AESearch\hspace{0.05cm}(f,\bx, \bS \cup \{ n + 1\})$ will output a bichromatic edge along a variable from these caught coordinates for sufficiently many points $\bx$. 
%It remains to show that indeed, an edge violation will be found in line 9 with probability at least $2/3$. 
In the rest of the proof, we fix $S$ to be a set that catches at least $r/6$ many variables in $\calI^*$,
  and prove in the rest of the proof that 
%Next we show that $\AlgorithmC$ finds an edge violation with probability at least $2/3$.
%First it follows from Lemma \ref{lem:catching-lemma} that 
%  in one of the main loops, $\bS$ sampled in line 2 catches at least $r/6$
%  many variables in $\calI^*$.
%Fixing $S$ to be such a set, we prove in the rest of the proof that 
  during this loop, 
  an edge violation is found with probability $1-o(1)$. 

Given $S$, we write $\calJ\subseteq \calI^*$ to denote 
  the set of variables caught by $S$ with $|\calJ|\ge r/6$.
Then by definition we have that $\calJ\subseteq S$ and $\Sub(S,j)$ is $j$-informative for every $j\in \calJ$.

We start by showing that $\bA\cap \calJ$ is large with high probability.

\begin{lemma}\label{lem:a-cap-j}
We have $ |\bA\cap \calJ|\ge \Omega\left( 
%\left(r, 
{m2^t}/{n^{1/3}}\right)$ 
  with probability at least $1-o(1)$.
\end{lemma}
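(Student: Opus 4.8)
\textbf{Proof proposal for Lemma~\ref{lem:a-cap-j}.}

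The plan is to lower bound, for each $j \in \calJ$, the probability that a uniformly random $\bx$ leads $\AESearch\hspace{0.04cm}(f, \bx, S \cup \{n+1\})$ to output an anti-monotone edge along $j$, and then sum over $j \in \calJ$ and apply a concentration argument to the number of distinct variables caught this way. First I would fix $j \in \calJ$. Since $S$ catches $j$, the set $\Sub(S,j) = (S \cup \{n+1\}) \setminus \{j\}$ is $j$-informative, hence in particular $i$-informative for anti-monotone edges, so $|\PE_j^-(\Sub(S,j))|/2^n \geq \Score_{j,t}^-/4 \geq 2^{t-h}/4$ by (\ref{bulbul-antimon}). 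Every edge $e$ counted in $\PE_j^-(\Sub(S,j))$ is an $s$-strong (in particular bichromatic) anti-monotone edge along $j$ that is $(S \cup \{n+1\} \setminus \{j\})$-persistent; note $j \in S \subseteq S \cup \{n+1\}$. So by Lemma~\ref{lem:aesearch}, running $\AESearch(f, \bx, S \cup \{n+1\})$ on either endpoint $x$ of such an edge returns $j$ with probability at least $2/3$. Since each such edge has two endpoints, the number of points $x$ with $f(x)=1$-side-or-$0$-side that are an endpoint of some edge in $\PE_j^-(\Sub(S,j))$ is at least $|\PE_j^-(\Sub(S,j))|$ (actually $2|\PE_j^-|$, but a weaker bound suffices), so a uniformly random $\bx$ is such an endpoint with probability at least $|\PE_j^-(\Sub(S,j))|/2^n \geq 2^{t-h}/4$; conditioned on that, $\AESearch$ returns $j$ (hence an anti-monotone edge along $j$, since the edge found lies on the persistent edge and is anti-monotone) with probability $\geq 2/3$. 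Thus in a single iteration of the inner loop, $j$ is added to $\bA$ with probability $\Omega(2^{t-h})$.

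Next I would handle the amplification over the $q$ iterations. Over $q = \Theta(n^{2/3}\Lambda^{13}/\eps^2)$ independent draws of $\bx$, the probability that $j$ is never added to $\bA$ is at most $(1 - \Omega(2^{t-h}))^q$. I need this to be tiny; using $2^{h} = O(\Lambda^{11}/\eps^2) \cdot 2^{\ell}/|\calI|\cdot$ — more carefully, from Lemma~\ref{lem:score-summary} we have $|\calI|/2^h = \Omega(\eps^2/\Lambda^{11})$, so $2^h = O(|\calI|\Lambda^{11}/\eps^2) \leq O(n\Lambda^{11}/\eps^2)$, hence $2^{t-h} \geq 2^t \eps^2/(n\Lambda^{11})$. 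Combined with $q \gtrsim n^{2/3}\Lambda^{13}/\eps^2$ this does not immediately give $q \cdot 2^{t-h} \gg 1$ in the worst case, so I would instead use the more refined relation from this subcase: by (\ref{eq:def-m}) and (\ref{eq:r-value}), $m$ and $r$ are tied to $|\calI|$ and $\xi$, and $\xi = 1/2^s \le 1/2^t$, and in Case 1.1 we have $m \geq n^{1/3}\log^2 n/2^t$. I expect that plugging these in shows $q \cdot 2^{t-h}$ is at least $\polylog(n)$, so each fixed $j \in \calJ$ fails to enter $\bA$ with probability $\exp(-\Omega(\log^2 n)) = o(1/n)$.

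Then, by linearity of expectation $\Ex[|\bA \cap \calJ|] \geq (1 - o(1/n))|\calJ| \geq (1-o(1)) r/6$. To turn the expectation bound into a high-probability bound I would argue that $|\calJ| - |\bA \cap \calJ|$ is a sum (over $j \in \calJ$) of indicator variables, each with expectation $o(1/n)$, and these are independent across $j$ given the fixed $S$ — actually they are not independent since the same $\bx$'s are used, but I can use that $|\calJ \setminus \bA|$ is dominated in expectation by $o(1)$ and apply Markov's inequality to get $|\calJ \setminus \bA| \leq r/12$, say, with probability $1 - o(1)$ (using $r = \Omega(\log^2 n)$). Hence $|\bA \cap \calJ| \geq r/12 = \Omega(m|\calI^*|/n) = \Omega(m \cdot |\calI|/n)$, and since $|\calI|/n \geq |\calI|/(2^t n^{1/3}) \cdot 2^t/n^{2/3} = \Omega(2^t/n^{2/3})$ using $|\calI|/2^t \geq n^{2/3}$, we get $|\bA \cap \calJ| = \Omega(m 2^t/n^{1/3})$ as claimed.

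\textbf{Main obstacle.} The delicate point is verifying that $q$ is genuinely large enough relative to $2^{h-t}$ so that each caught variable enters $\bA$ with overwhelming probability — this requires carefully chasing the relations among $m$, $r$, $\xi = 1/2^s$, $h$, $t$, $|\calI|$ from Lemma~\ref{lem:score-summary}, (\ref{boundforxi}), (\ref{eq:def-m}), (\ref{case11}) and (\ref{eq:r-value}), rather than using only the crude bounds $|\calI| \leq n$ and $s \geq t$. A secondary subtlety is the (mild) dependence between the events ``$j \in \bA$'' for different $j \in \calJ$, which I would sidestep with Markov's inequality on $|\calJ \setminus \bA|$ rather than attempting a Chernoff bound.
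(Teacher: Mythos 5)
There is a genuine gap, and it sits exactly where you flagged it. Your plan needs every fixed $j\in\calJ$ to land in $\bA$ with probability $1-o(1/n)$, i.e.\ you need $q\cdot 2^{t-h}\gg \log n$. But the only lower bound available on $2^{t}/2^{h}$ comes from $|\calI|/2^h=\Omega(\eps^2/\Lambda^{11})$, which gives $q\cdot 2^{t-h}=\Omega\big(n^{2/3}\Lambda^{2}\cdot 2^t/|\calI|\big)$, and in Case 1 we only know $|\calI|\ge 2^t n^{2/3}$ — the inequality goes the wrong way. For instance, with $|\calI|=\Theta(n)$, $2^t$ constant and $2^h=\Theta(|\calI|\Lambda^{11}/\eps^2)$ (all consistent with Lemma \ref{lem:score-summary}, (\ref{boundforxi}), (\ref{eq:def-m}) and (\ref{case11})), one gets $q\cdot 2^{t-h}=\Theta(\Lambda^{2}/n^{1/3})=o(1)$: a fixed $j$ is then found with probability $o(1)$, not $1-o(1/n)$, and your Markov step on $|\calJ\setminus\bA|$ collapses. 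This is not a fixable bookkeeping issue: your route actually tries to prove the stronger statement $|\bA\cap\calJ|=\Omega(r)$, whereas the lemma only claims $\Omega(m2^t/n^{1/3})$, which in this regime is smaller than $r$ by a factor of $|\calI^*|/(2^t n^{2/3})$ (up to $n^{1/3}$); with only $q$ rounds the stronger statement is simply not attainable.

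The paper's proof avoids per-variable amplification entirely. It defines, for each $j\in\calJ$, the set $X_j$ of points from which $\AESearch$ returns an anti-monotone edge along $j$ with probability $\ge 2/3$ ($|X_j|=\Omega(2^{t-h})2^n$), notes the $X_j$ are \emph{disjoint} by Corollary \ref{AESearchCorollary}, and then splits the $q$ rounds into $\lceil m2^t/n^{1/3}\rceil$ phases of $O\big((2^h/(r2^t))\log^2 n\big)$ rounds each. At the start of a phase, either $\Omega(r)$ variables of $\calJ$ are already in $\bA$ (done), or the union of $X_j$ over the not-yet-found $j$ has measure $\Omega(r2^{t-h})$ — a factor $r$ larger than a single $X_j$ — so the phase finds a \emph{new} variable with probability $1-1/\poly(n)$. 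Since $r=\Omega(m2^t/n^{1/3})$, the number of phases already gives the claimed bound. If you want to salvage your approach, you would have to replace "each $j$ is found w.h.p." by an argument about the number of \emph{distinct} variables hit across rounds (which is exactly what the phase decomposition, powered by the disjointness of the $X_j$, accomplishes); an expectation computation plus Markov does not give the required lower-tail concentration.
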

\begin{proof}
For each $j\in \calJ$, we let $X_j$ be the set of $x\in \{0,1\}^n$
  such that $\AESearch\hspace{0.04cm}(f,x,S\cup\{n+1\})$ returns an anti-monotone edge along 
  $h$ with probability at least ${2/3}$.
Then by definition we have
$$
|X_j|=\Omega\left(\frac{2^t}{2^h}\right)\cdot 2^n,
$$
and the $X_j$'s are disjoint by Corollary \ref{AESearchCorollary}
  so we have $r2^t/2^h=O(1)$.
To analyze $\bA\cap \calJ$,
  we break the rounds on line 3 into $\lceil m2^t/n^{1/3}\rceil$ many phases, each consisting of 
$$
 \left\lceil \frac{2^h}{r 2^t} \right\rceil \cdot \log^2 n
=O\left(\frac{2^h}{r 2^t}\cdot\log^2 n\right)
$$ many iterations of line 4 (using $r2^t/2^h=O(1)$).
The $q$ rounds we have are enough since
  $$
\left\lceil \frac{m2^t}{n^{1/3}}\right\rceil
\cdot O\left(\frac{2^h}{r 2^t}\cdot\log^2 n\right)
=O\left(\frac{m 2^h\log^2 n}{n^{1/3}r}\right)=O\left(n^{2/3}\Lambda^{13}/\eps^2\right) 
$$
using $r=\Omega(m|\calI|/n)$ and $|\calI|/2^h=\Omega(\eps^2/\Lambda^{11})$.
%$$
%\frac{q}{p}=q\cdot \Omega\left(\frac{r2^t}{2^h}\right)=\Omega\left(\frac{n^{2/3}\Lambda^{13}}{\eps^2}\cdot %\frac{r2^t}{2^h\log^2 n}\right%)
%=\Omega\left(\frac{m2^t}{n^{1/3}}\right)=\Omega(\log^2 n).
%$$
%and thus, we get to have $\Omega(q/p)=\Omega(m2^t/n^{1/3})$ many phases.
%and each time when this happens we succeed with probability at least $2/3$.
%The lemma follows.
%This implies that $r2^t/2^h=O(1)$.
%Using the definition of $r$ and $m$ and Lemma \ref{lem:score-summary}, we have
%
%\ignore{Let ${\alpha=|\calI|/n^{2/3}\ge n^{1/6}}$. 
%Using the definition of $r$ and $$m=\Theta\big(\xi|\calI|/n^{1/3}\big)
%  \ge \Omega\big(n^{1/3}\log^2 n\big),$$ we have $\red{\alpha=o(r)}$\footnote{Xi: This is no longer true
%  in general but with the min in the statement I think it is ok.}.}
%.
%so we get to run $\Omega(q/p)$ many phases. On the other hand, we also have
%$$%
%$$
%As a result, the number of phases in total is
%$$
%\Omega\left(q\cdot \frac{ r2^t}{2^h\log^2 n}\right)
%=r\cdot \Omega\left(\frac{n^{2/3}\Lambda^{14}}{\eps^2}\cdot \frac{2^t\eps^2}{|\calI|\Lambda^{11}\log^2 n}
%\right) 
%$$ 
%using $|\calI|/2^t\ge n^{2/3}$ and Lemma \ref{lem:score-summary}.
Also note $r=\Omega(m2^t/n^{1/3})$ using 
  $|\calI|\ge n^{2/3}2^t$.

At the beginning of each phase, either $\Omega(r)$ anti-monotone edges along different
  variables in~$\calJ$ have 
  already been found, and we are done, or the number of 
  variables in $\calJ\setminus \bA$ at the moment is at least
  $\Omega(r)$.
As a result, their union of $X_j$ for such $j$ is $\Omega(r 2^t/2^h)\cdot 2^n$.
Using the number of rounds in each phase, the probability 
  of not finding any new anti-monotone~edge along $\calJ$ during this phase is at most $1/\poly(n)$, 
  and it remains negligible
  even after a union bound over the number of phases.
The lemma follows from the fact that the number of phases is at least
  $\Omega(m2^t/n^{1/3})$.
%As a result, we have 
%\[ |\bA\cap \calJ|\ge \dfrac{q \cdot r \cdot 2^t}{2^h \cdot \log^2 n} \geq \sqrt{\dfrac{r 2^t}{2^s}}.   \]
%  at the end with probability $1-o(1)$. Finally, note that since $2^s \geq 2^t$, we have $r/12 \geq %\sqrt{r2^t/2^s}$.
%So (need to make it more formal, similar to some proofs in the FOCS paper) we spend $n^{4/3}\Lambda^{11}/(\eps^2 |\calI|)$ queries to add one direction to $\bA\cap \calJ$.
%In total we should be able to find 
%$$
%\frac{n^{2/3}\Lambda^{11}}{\eps^2}\cdot \frac{\eps^2 |\calI|}{n^{4/3}\Lambda^{11}}=\frac{|%\calI|}{n^{2/3}}
%$$
%directions. (Also note that this is much smaller than $r$ so there is nothing fishy here.)
\end{proof}

Fix an $A$ such that $C=A\cap \calJ$  satisfies the lower bound of Lemma~\ref{lem:a-cap-j}. 
We finally show that $C\cap \bB$ with high probability.
This finishes the proof of correctness in Case 1.

\begin{lemma}
We have that $C\cap \bB$ is not empty with probability at least $1-o(1)$.
\end{lemma}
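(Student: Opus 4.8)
The plan is to follow the template of Lemma~\ref{lem:a-cap-j}, but now on the monotone side and restricted to $C\subseteq\calJ$. The key simplification is that we only need $C\cap\bB$ to be \emph{nonempty}, so a single union-of-sets estimate followed by a Chernoff bound over the $q$ independent iterations of line~7 of $\AlgorithmC$ will suffice; there is no need to break the loop into phases as in Lemma~\ref{lem:a-cap-j}.

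First I would fix $S$, $\calJ$ and $A$ as in the statement, so that $\calJ\subseteq S$, every $j\in\calJ$ has $\Sub(S,j)$ being $j$-informative, and $C=A\cap\calJ$ has $|C|=\Omega(m2^t/n^{1/3})$ by Lemma~\ref{lem:a-cap-j}. For each $j\in C$, informativeness of $\Sub(S,j)=(S\cup\{n+1\})\setminus\{j\}$ for monotone edges gives $|\PE_j^+(\Sub(S,j))|/2^n\ge\Score_{j,s}^+/4\ge 2^{s-h}/4$. Let $Y_j$ be the set of $y\in\{0,1\}^n$ such that $\AESearch\hspace{0.03cm}(f,y,S\cup\{n+1\})$ returns a monotone edge along $j$ with probability at least $2/3$. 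Applying Lemma~\ref{lem:aesearch} with the set $S\cup\{n+1\}$ and the variable $j\in S$: for every edge in $\PE_j^+(\Sub(S,j))$, which is bichromatic along $j$ and $\Sub(S,j)$-persistent, running $\AESearch$ from either of its two endpoints returns $j$ with probability at least $2/3$, and the reported edge is monotone since it lies in $\PE_j^+$. Hence both endpoints of each such edge lie in $Y_j$; since edges along $j$ form a matching these endpoints are distinct, so $|Y_j|/2^n=\Omega(2^s/2^h)$. Moreover the sets $\{Y_j:j\in C\}$ are pairwise disjoint by Corollary~\ref{AESearchCorollary}, as a point in $Y_j\cap Y_{j'}$ with $j\ne j'$ would make $\AESearch\hspace{0.03cm}(f,\cdot,S\cup\{n+1\})$ output two distinct variables each with probability at least $2/3$.

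Next I would bound the probability that a uniform $\by$ hits $\bigcup_{j\in C}Y_j$: by disjointness it equals $\sum_{j\in C}|Y_j|/2^n\ge|C|\cdot\Omega(2^s/2^h)$. Substituting $|C|=\Omega(m2^t/n^{1/3})$, the relation $m=\Theta(\xi|\calI|/n^{1/3})$ with $\xi=1/2^s$ (so that $m\hspace{0.03cm}2^s=\Theta(|\calI|/n^{1/3})$), and $|\calI|/2^h=\Omega(\eps^2/\Lambda^{11})$ from Lemma~\ref{lem:score-summary}, this probability is $\Omega(|\calI|\hspace{0.03cm}2^t/(2^hn^{2/3}))=\Omega(\eps^2/(\Lambda^{11}n^{2/3}))$. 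Consequently, in each of the $q$ independent iterations of line~7, $\AESearch\hspace{0.03cm}(f,\by,\bS\cup\{n+1\})$ returns a monotone edge along some variable of $C$ with probability at least $(2/3)\cdot\Omega(\eps^2/(\Lambda^{11}n^{2/3}))$, and in that event $C\cap\bB\ne\emptyset$. Since $q=\Theta(n^{2/3}\Lambda^{13}/\eps^2)$, the probability that all $q$ iterations fail is at most $\exp(-\Omega(\Lambda^2))=o(1)$, which proves the claim; together with Lemmas~\ref{lem:catching-lemma} and~\ref{lem:a-cap-j} and a union bound this completes the correctness proof for Case~1.1.

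I do not expect a serious obstacle here: the two structural ingredients — that $\Sub(S,j)$ is $j$-informative and that $\AESearch$ succeeds when started from an endpoint of a persistent bichromatic edge — are already available. The only points requiring care are the monotone-edge bookkeeping (that $\AESearch$ started at \emph{either} endpoint of an edge in $\PE_j^+(\Sub(S,j))$ reports a monotone edge, and that here the relevant score parameter is $s$, not $t$), and carrying the parameter substitutions through to verify that the per-iteration success probability times $q$ is $\Omega(\Lambda^2)$, so that $\exp(-\Omega(\Lambda^2))=o(1)$.
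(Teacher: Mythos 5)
Your proposal is correct and follows essentially the same route as the paper's proof: define $Y_j$ for each $j\in C$ via the success sets of $\AESearch\hspace{0.03cm}(f,\cdot,S\cup\{n+1\})$, lower-bound $|Y_j|$ by $\Omega(2^s/2^h)\cdot 2^n$ using that $S$ catches $j$ together with Lemma~\ref{lem:aesearch}, use disjointness from Corollary~\ref{AESearchCorollary} to get a per-iteration hit probability of $\Omega(\eps^2/(\Lambda^{11}n^{2/3}))$, and conclude from the choice of $q$. The only difference is that you spell out the parameter substitutions and the final Chernoff-type bound that the paper leaves implicit.
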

\begin{proof}
For each $j\in C$, let $Y_j$ denote the set of $y\in \{0,1\}^n$ such that
  $\AESearch\hspace{0.04cm}(f,y,S\cup\{n+1\})$ returns a monotone edge along variable $j$
  with probability at least $2/3$.
Then we have  
$$
{|Y_j|=\Omega\left(\frac{2^s}{ 2^h}\right) \cdot 2^n.
} %\le \Omega\left(\frac{2^s}{ 2^h}\right)\cdot 2^n
 % =\Omega\left(\frac{1}{\xi 2^h}\right)\cdot 2^n.} 
$$
Since they are disjoint, the fraction of points that $Y_j$, $j\in C$, cover 
  is at least 
$$
\Omega\left(\frac{2^s}{2^h}\cdot \frac{m2^t}{n^{1/3}}\right)\ge
%$$
%On the one hand, we have 
%$$
%\frac{2^s}{2^h}\cdot r=\Omega\left(\frac{2^sm|\calI|}{2^h n}\right)\ge
%\Omega\left(\frac{2^tm|\calI|}{2^h n}\right) \ge \Omega\left(\frac{\eps^2}{n^{2/3}\Lambda^{11}}\right).
%$$
%On the other hand, we have (using the definition of $m$ and $2^t\ge 1$)
%$$
%\frac{2^s}{2^h}\cdot \frac{m2^t}{n^{1/3}}\ge \ge
\Omega\left(\frac{2^s}{2^h}\cdot \frac{\xi|\calI|}{n^{1/3}}\cdot \frac{2^t}{n^{1/3}}\right)
=\Omega\left(\frac{\eps^2}{n^{2/3}\Lambda^{11} }\right)
$$
using $s\ge t$ and $m2^t\ge n^{1/3}\log^2n$.
The lemma follows from our choice of $q$ in the algorithm.
%$$
%\frac{2^s}{2^h}\ge 2^{t}\cdot \Omega\left(\frac{\eps^2 }{\Lambda^{11}|\calI|}\right).
%$$
%As a result, the union of $Y_j$'s over $j\in C$ is at least
%(using (\ref{case11}), Lemma \ref{lem:score-summary} and $2^s\ge 1$)
%$$
%\Omega\left(\frac{2^s}{2^h}
%\cdot 
%\sqrt{\frac{r2^t}{2^s}}\right)
%\cdot 2^n 
%\ge 
%\Omega\left( \frac{\sqrt{r 2^t 2^s}}{2^h}
%\right)\cdot 2^n
%$$
%So given the number of rounds on line 7, the algorithm succeeds with probability $1-o(1)$.
\end{proof}

\ignore{\subsection{Algorithm for Case 1.2}

In Case 1.2 of the algorithm we assume that
\begin{equation}\label{case12}
m<\frac{\sqrt{n}}{2^t \log^2 n}.
\end{equation}
Let $q=\lceil \sqrt{n}/2^t\rceil$ as in the definition of strong edges
  and let 
$$
w=\min\left(\xi |\calI|,\hspace{0.05cm} n^{2/3}\right).
$$
Some explanation about $w$, in Lemma 6.8, if we run $\AESearch$ $w$ times
  then we will find 
\begin{equation}\label{heha1}
\ell=\min(\Omega(r),\hspace{0.05cm}|\calI|/n^{2/3})
= \Omega\left(\frac{w|\calI|}{n^{4/3}}\right).
\end{equation}
many variables. So maybe we should make it a separate lemma and use in both subcases.
Also note that since $\xi\ge 1/n^{1/3}$, we have $w\ge m/\xi$.
  %$w=\lfloor q/m\rfloor$. Then we have $w=\Theta(q/m)$ and $w=\Omega(\log^2n)$.
We use $\AlgorithmCC$ described in Figure \ref{fig:algcase21}. % for this case.
It has query complexity $\tilde{O}(n^{2/3})$ (since the number of queries for $\Prune$ 
  is $m/\xi\le w$.

I will be informal in the analysis below.
First I think we can use a union bound to show that $\bT$ and most of the $\bS$
  drawn on line 4 are good :).
For some reason I need the following two cases for analysis but we can try to merge them.

The first case is when $r\ge |\calI|/n^{2/3}$.
In this case we have $w=\xi |\calI|$ and $\ell=\Omega(r)$.
For each good $\bS$, by the choice of $w$ earlier, we get $\ell$ many variables.
So in total we get
$$
N=\min\left(\frac{n^{2/3}\ell}{w},\hspace{0.05cm} \frac{q\ell}{m}\right)
=\min\left(\frac{|\calI|}{n^{2/3}},\hspace{0.05cm}\frac{q|\calI|}{n}\right).
%=\Omega\left(\frac{w|\calI|}{n^{4/3}}\right)\cdot \min\left(\frac{n^{2/3}}{w},\hspace{0.05cm}\frac{q}{m}\right)
%=\Omega\left(\frac{|\calI|}{n^{2/3}},\hspace{0.05cm}
%\frac{wq|\calI|}{n^{4/3}m}\right).
$$
Note that the second component in the $\min$ happens when $n^{2/3}/w$ is bigger than
  $q/m$; when this happens the sets $\bS$ start to overlap with themselves. 
When this many variables have been found in $\bA$, each sample in the second stage finds a collision
  with probability (ignoring polylog and $\eps$):
$$
\Omega\left(\frac{N 2^t}{2^{h}}\right)= 
\Omega\left(\min\left(\frac{2^t}{n^{2/3}},\hspace{0.05cm}\frac{1}{\sqrt{n}}\right)\right)=\Omega\left(\frac{1}{n^{2/3}}\right).
$$

In the other case, we assume that $r<|\calI|/n^{2/3}$.
We have $\ell=|\calI|/n^{2/3}$ and $w=n^{2/3}$.
Plugging in $r=m|\calI|/n$ we also have $m<n^{1/3}$.
Thus,
$$
N=\min\left(\frac{n^{2/3}\ell}{w},\hspace{0.05cm} \frac{q\ell}{m}\right)
\ge \min\left(\frac{|\calI|}{n^{2/3}},\hspace{0.05cm}\frac{q|\calI|}{n}\right).
$$
The rest of the proof is the same.

\begin{figure}[t!]
\begin{framed}
\noindent Procedure $\AlgorithmCC\hspace{0.04cm} (f)$

\begin{flushleft}
\noindent {\bf Input:} Query access to a Boolean function $f \colon \{0, 1\}^n \to \{0, 1\}$
%, a point $x \in \{0,1\}^n$, a nonempty set $S \subseteq [n]$ satisfying $f(x) \neq f(x^{(S)})$, and an order $\pi \colon [|S|] \to S$.

\noindent {\bf Output:} Either ``unate,'' or two edges constituting an edge violation of $f$ to unateness.\begin{enumerate}
\item Repeat the following $O(1)$ times:
\item \ \ \ \ \ \ \ \ Draw a set $\bT\subseteq [n]$ of size $q+1$ uniformly at random.

\item \ \ \ \ \ \ \ \ Repeat $O(n^{2/3}/w)$ times:

\item \ \ \ \ \ \ \ \ \ \ \ \ \ \ \ \ 
Draw $\bS$ by first drawing a subset $\bS_0$ of $\bT$
  of size $m$ and an ordering\\ 
  \ \ \ \ \ \ \ \ \ \ \ \ \ \ \ \ $\bpi$ of $\bS_0$ both uniformly at random and 
  then call $\Prune\hspace{0.04cm}(f,\bS_0,\bpi,\xi)$.
%  as a random size-$m$ subset of $[n]$ and th
\item \ \ \ \ \ \ \ \ \ \ \ \ \ \ \ \ Repeat $w$ times:
\item \ \ \ \ \ \ \ \ \ \ \ \ \ \ \ \ \ \ \ \ \ \ \ Draw $\bx\in \{0,1\}^n$ uniformly and run 
  $\AESearch\hspace{0.04cm}(f,\bx,\bS\cup \{n+1\}).$
\item \ \ \ \ \ \ \ \ Let $\bA$ be the set of $i\in [n]$ such that a monotone edge in direction $i$ is found.
\item \ \ \ \ \ \ \ \ Repeat $O(n^{2/3})$ times:
\item \ \ \ \ \ \ \ \ \ \ \ \ \ \ \ \ Draw an $\by\in \{0,1\}^n$ uniformly and run
  $\AESearch\hspace{0.04cm}(f,\by,\bT)$
\item \ \ \ \ \ \ \ \ Let $\bB$ be the set of $i\in [n]$ such that an anti-monotone edge in direction $i$ is found
\item \ \ \ \ \ \ \ \ If $\bA\cap \bB\ne \emptyset$, output an edge violation of $f$ to unateness.
\item Output ``unate.''
\end{enumerate}
\end{flushleft}\vskip -0.14in
\end{framed}\vspace{-0.2cm}
\caption{Algorithm for Case 1.1}\label{fig:algcase21}
%Description of the binary search subroutine for finding a bi-chromatic edge from a set.} \label{fig:binarysearch}
\end{figure}}

% !TEX root = main.tex

\newcommand{\AlgorithmCC}{\mathtt{AlgorithmCase1.2}}

\subsection{Algorithm for Case 1.2}

\begin{figure}[t!]
\begin{framed}
\noindent Procedure $\AlgorithmCC\hspace{0.04cm} (f)$

\begin{flushleft}
\noindent {\bf Input:} Query access to a Boolean function $f \colon \{0, 1\}^n \to \{0, 1\}$

\noindent {\bf Output:} Either ``unate,'' or two edges constituting an edge violation of $f$ to unateness.\begin{enumerate}
\item Repeat the following $O(1)$ times:
\item \ \ \ \ \ \ \ \ Draw a set $\bT\subset [n]$ of size $p$ uniformly at random.

\item \ \ \ \ \ \ \ \ Repeat $\smash{O\left(n^{2/3}\Lambda^{11} \log^2 n\big/\eps^2\right)}$ times:
\item \ \ \ \ \ \ \ \ \ \ \ \ \ \ \ \ Draw an $\bx\in \{0,1\}^n$ uniformly and run
  $\AESearch\hspace{0.04cm}(f,\bx,\bT)$
\item \ \ \ \ \ \ \ \ Let $\bA$ be the set of $i\in [n]$ such that an anti-monotone edge along $i$ is found.

\item \ \ \ \ \ \ \ \ Repeat  $\smash{O\left({n^{1/3} \log^3 n}\big/({m 2^t}) \right)}$ times:

\item \ \ \ \ \ \ \ \ \ \ \ \ \ \ \ \ 
Draw $\bS$ by first drawing a subset $\bS_0$ of $\bT$
  of size $m$ and an ordering\\ 
  \ \ \ \ \ \ \ \ \ \ \ \ \ \ \ \ $\bpi$ of $\bS_0$ both uniformly at random and 
  then call $\Prune\hspace{0.04cm}(f,\bS_0,\bpi,\xi)$.
%  as a random size-$m$ subset of $[n]$ and th
\item \ \ \ \ \ \ \ \ \ \ \ \ \ \ \ \ Repeat $\smash{O\left((2^{h}/2^s) \cdot \log n\right)}$ times:
\item \ \ \ \ \ \ \ \ \ \ \ \ \ \ \ \ \ \ \ \ \ \ \ Draw $\by\in \{0,1\}^n$ uniformly and run 
  $\AESearch\hspace{0.04cm}(f,\by,\bS\cup \{n+1\}).$
\item \ \ \ \ \ \ \ \ Let $\bB$ be the set of $i\in [n]$ such that a monotone edge along variable $i$ is found.
\item \ \ \ \ \ \ \ \ If $\bA\cap \bB\ne \emptyset$, output an edge violation of $f$ to unateness.
\item Output ``unate.''
\end{enumerate}
\end{flushleft}\vskip -0.14in
\end{framed}\vspace{-0.2cm}
\caption{Algorithm for Case 1.2}\label{fig:algcase21}
%Description of the binary search subroutine for finding a bi-chromatic edge from a set.} \label{fig:binarysearch}
\end{figure}

The second subcase of Case 1 occurs when
\begin{equation}\label{case12}
m<\frac{n^{1/3} \log^2 n}{2^t}, 
\end{equation}
and the crucial difference is that unlike Case 1.1, we cannot conclude with (\ref{eq:r-value}). More specificially, two potential issues which were not present in Section~\ref{sec:case1-1} may arise: 1) sampling a set $\bS \sim \calD_{\xi, m}$ may result in $\Caught(\bS) = \emptyset$, and 2) even if $|\Caught(\bS)|$ is large, too few points $\bx$ may result in $\AESearch\hspace{0.05cm}(f,\bx, \bS)$ returning an anti-monotone edge (for instance, when $2^t = 1$ and $2^h = n$). Thus, we address these two problems with $\AlgorithmCC$, which is described in Figure~\ref{fig:algcase21}. 

At a high level, $\AlgorithmCC$ proceeds by sampling a set $\bT\subseteq [n]$ of size 
$$p:=\lceil \sqrt{n}/2^t \rceil+1=\Theta(\sqrt{n}/2^t)$$
uniformly at random, %Since the set $\bT$ is not too large when searching for anti-monotone variables, there is no need to run $\Preprocess$ on $\bT$. The algorithm 
  and directly uses the set $\bT$ to search for anti-monotone edges
  by repeating $\AESearch\hspace{0.04cm}(f, \bx, \bT)$ for $\tilde{O}(n^{2/3}/\eps^2)$ many iterations.
We show at the end the algorithm will obtain anti-monotone edges along 
  at least $\Omega(n^{1/6})$ variables in $\bT \cap \calI$ (as an anti-monotone edge is~found every $\tilde{O}(\sqrt{n}/\eps^2)$ iterations of $\AESearch\hspace{0.04cm}(f, \bx, \bS)$). The algorithm will then sample $\bS_0 \subset \bT$ of~size $m$ (notice that $m\ll p$ by (\ref{case12})), pass it through
  $\Preprocess$ to obtain $\bS\subseteq \bS_0$, and then repeat $\AESearch\hspace{0.04cm}(f, \by, \bS_0)$ in hopes of observing an edge violation. In order to do so, $\bS$ must contain a variable where the algorithm has already observed an anti-monotone edge. Since the number of variables with anti-monotone edges observed may be $O(n^{1/6})$ and 
  $m\cdot n^{1/6}$ could be much smaller than $p$, the algorithm needs to sample the subset $\bS_0$
  from $\bT$ multiple times. %In addition, since $m$ may be larger than $\lceil\sqrt{n}/2^s\rceil$, the algorithm will need to run $\Preprocess$ on $\bS$ each time $\bS$ is sampled.

We state the query complexity of the algorithm for Case 1.2, where we note that the upper bound will follow from (\ref{case12}), (\ref{boundforxi}), (\ref{eq:def-m}), the fact that $2^t \geq 1$ and $2^h/\calI \geq \tilde{\Omega}(\eps^2)$.
\begin{lemma} The query complexity of
$\AlgorithmCC$ is (using (\ref{boundforxi}))
\[ \tilde{O}\left( \frac{n^{2/3}}{\eps^2} \right) + \tilde{O}\left( \frac{n^{1/3}}{m   2^t}\right) \left( \tilde{O}\left(\frac{m}{\xi}\right) + \tilde{O}\left(\frac{2^{h}}{2^s}\right) \right) = \tilde{O}(n^{2/3}/\eps^2). \]
\end{lemma}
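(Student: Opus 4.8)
The plan is to tally the queries made by $\AlgorithmCC$ (Figure~\ref{fig:algcase21}) and then collapse the resulting expression using the parameter relations already in hand. The only operations that cost queries are calls to $\AESearch$, each of which makes $O(\log n)$ queries by Lemma~\ref{lem:aesearch}, and calls to $\Prune(f,\bS_0,\bpi,\xi)$ with $|\bS_0|=m$, each of which makes $O(m\log^5 n/\xi)=\tilde{O}(m/\xi)$ queries by Lemma~\ref{lem:prune2}; the outermost ``repeat $O(1)$ times'' loop contributes only a constant factor.

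First I would account for lines~3--4: $O(n^{2/3}\Lambda^{11}\log^2 n/\eps^2)$ iterations of $\AESearch(f,\bx,\bT)$, contributing $\tilde{O}(n^{2/3}/\eps^2)$ queries. Then lines~6--9 form a loop of $O(n^{1/3}\log^3 n/(m2^t))$ iterations; each iteration draws $\bS$ via one call to $\Prune$ on a size-$m$ subset, at cost $\tilde{O}(m/\xi)$, and then repeats $\AESearch(f,\by,\bS\cup\{n+1\})$ for $O((2^h/2^s)\log n)$ iterations, at total cost $\tilde{O}(2^h/2^s)$. Summing these gives exactly the displayed expression
\[
\tilde{O}\!\left(\frac{n^{2/3}}{\eps^2}\right)+\tilde{O}\!\left(\frac{n^{1/3}}{m2^t}\right)\left(\tilde{O}\!\left(\frac{m}{\xi}\right)+\tilde{O}\!\left(\frac{2^h}{2^s}\right)\right).
\]

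It then remains to show the second summand is also $\tilde{O}(n^{2/3}/\eps^2)$, which I would do term by term. In the $m/\xi$ term the factors of $m$ cancel, leaving $\tilde{O}(n^{1/3}/(2^t\xi))$; since $\xi=1/2^s$ and (\ref{boundforxi}) gives $1/\xi=O(n^{1/3}/\Lambda^2)$, and $2^t\ge 1$, this is $\tilde{O}(n^{2/3})\le\tilde{O}(n^{2/3}/\eps^2)$. In the $2^h/2^s$ term I would substitute $1/m=\Theta(n^{1/3}/(\xi|\calI|))$ from (\ref{eq:def-m}) together with $\xi=1/2^s$; the powers of $2^s$ cancel and the term becomes $\Theta\bigl(n^{2/3}2^h/(|\calI|2^t)\bigr)$, which is $\tilde{O}(n^{2/3}/\eps^2)$ because $2^h/|\calI|=O(\Lambda^{11}/\eps^2)$ by Lemma~\ref{lem:score-summary} and $2^t\ge 1$. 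There is no genuine obstacle here — the argument is pure bookkeeping, and the only care needed is to track which of $m,\xi,2^s,2^t,2^h,|\calI|,n$ are tied together by which relation; in particular the Case~1.2 hypothesis (\ref{case12}) is what makes $\AlgorithmCC$ well defined and correct (e.g.\ ensuring $m\ll p$) but is not needed for this query count.
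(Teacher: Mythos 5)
Your accounting and algebra are correct and match the paper's own (very terse) justification, which simply cites (\ref{boundforxi}), (\ref{eq:def-m}), $2^t\ge 1$ and $|\calI|/2^h=\Omega(\eps^2/\Lambda^{11})$ for exactly the two cancellations you carry out. Your side remark is also right: although the paper lists (\ref{case12}) among the facts used, it only lower-bounds the loop count on line 6 (and matters for well-definedness/correctness), not for the $\tilde{O}(n^{2/3}/\eps^2)$ upper bound itself.
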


In order to analyze $\AlgorithmCC$ we consider one of the main iterations and let 
  $T$ denote the set drawn in line 2.
We define the
  following subset $C \subseteq T \cap \calI^*$ to capture how good $T$ is:
A variable $i\in T\cap \calI^*$ belongs to $C$ if it satisfies both of the following conditions:
\begin{flushleft}\begin{enumerate}
%\bC &= \left\{ i \in \bT \cap \calI^*:\hspace{-0.1cm} \begin{array}{ll} \text{(i)} &\text{
\item[(i)] The set $T \setminus \{ i\}$ is $i$-informative for anti-monotone edges; and
\item[(ii)] 
If we draw $\bS$ by first drawing a subset $\bS_0$ of $T$ of size $m$ conditioning 
  on $i\in \bS_0$ and 
  an ordering of $\bpi$ of $\bS_0$ uniformly at random and then calling 
  $\Prune\hspace{0.04cm}(f,\bS_0,\bpi,\xi)$ to\\ get $\bS$, then 
  the  probability that $\bS$ catches $i$ is at least $1/2$. 
% \mid \bS \subset \bT \right] \geq 1/2.$$   
\end{enumerate}\end{flushleft}

We prove that when $\bT$ is a random size-$p$ set, the set $\bC$  
  is large with constant probability. 

\begin{lemma}\label{lem:c-large}
With probability at least $\Omega(1)$ over the draw of $\bT \subset [n]$ in line 2, we have $$|\bC| = \Omega\left(\frac{ p|\calI^*| }{n}\right).$$
\end{lemma}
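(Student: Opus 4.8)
The plan is to establish Lemma~\ref{lem:c-large} via a first-moment argument on $|\bC|$ over the random choice of $\bT\subset[n]$ of size $p$, exactly parallel to how Lemma~\ref{lem:catching-lemma} was proved for $\calD_{\xi,m}$. First I would fix any $i\in\calI^*$ and lower bound $\Prx_{\bT}[i\in\bC]$. Conditioning on $i\in\bT$ (which happens with probability $p/n$), the set $\bT\setminus\{i\}$ is distributed as $\calP_{i,p-1}=\calP_{i,\lceil\sqrt n/2^t\rceil}$, so Lemma~\ref{lem:informative-for-antimon} gives that condition (i) holds with probability $1-o(1)$. For condition (ii) I would bound the probability that condition (ii) \emph{fails} conditioned on $i\in\bT$: since $\bS_0$ is then a uniform size-$m$ subset of $[n]$ containing $i$ with a uniform ordering, passing it through $\Prune$ yields $\bS\sim\calD_{\xi,m,i}$, and by Claim~\ref{trickyclaim} together with Corollary~\ref{coro1}, $\Prx_{\bS\sim\calD_{\xi,m,i}}[\bS\text{ catches }i]\ge 1-o(1)$. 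By Markov's inequality applied to the ``bad conditioning'' of $\bT\setminus\{i\}$, at most an $o(1)$ fraction of choices of $\bT$ (conditioned on $i\in\bT$) can have the inner catching probability below $1/2$. Hence condition (ii) holds for $1-o(1)$ of the relevant $\bT$'s, and a union bound gives $\Prx_{\bT}[i\in\bC\mid i\in\bT]\ge 1-o(1)$, so $\Prx_{\bT}[i\in\bC]\ge(1-o(1))\,p/n$.

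Summing over $i\in\calI^*$ gives $\Ex_{\bT}[|\bC|]\ge(1-o(1))\,p|\calI^*|/n$. To turn this into a high-probability lower bound on $|\bC|$ I would pair it with an upper tail bound on $|\bT\cap\calI^*|$ (which dominates $|\bC|$): since $\bT$ is a uniform size-$p$ subset of $[n]$, Lemma~\ref{appendix1} gives that $|\bT\cap\calI^*|\le 4\,p|\calI^*|/n$ except with probability $\exp(-\Omega(p|\calI^*|/n))$, and here $p|\calI^*|/n=\Omega(\sqrt n\cdot|\calI^*|/(2^t n))=\Omega(n^{1/6})$ is large (using $|\calI^*|/2^t=\Omega(n^{2/3})$), so this failure probability is negligible. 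Writing $R:=p|\calI^*|/n$ and $\alpha:=\Prx_\bT[|\bC|\ge R/C_0]$ for a suitable constant $C_0$, the expectation bound reads $(1-o(1))R\le \exp(-\Omega(R))\cdot p + (1-\alpha)\cdot R/C_0 + \alpha\cdot 4R$, and solving for $\alpha$ gives $\alpha=\Omega(1)$, which is the claimed statement.

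The main obstacle I anticipate is getting condition (ii) handled cleanly: unlike condition (i), which is a statement purely about the random set $\calP_{i,p-1}$, condition (ii) is itself a probabilistic statement (``the inner probability over $\bS$ is at least $1/2$'') and one has to be careful about which randomness is being averaged. The key point is that the product of the two nested expectations — first over $\bT$ containing $i$, then over $\bS_0\subset\bT$ of size $m$ containing $i$ — collapses to simply drawing $\bS_0\subset[n]$ of size $m$ containing $i$, i.e.\ to $\calD_{\xi,m,i}$, because a uniform size-$m$ subset of a uniform size-$p$ superset (both containing $i$) is just a uniform size-$m$ subset of $[n]$ containing $i$. Once that tower-property observation is made, Markov on the $\bT$-marginal of the ``catching probability'' function, together with the near-$1$ bound from Claim~\ref{trickyclaim}, is routine. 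A minor secondary point to check is that the hypothesis $i\in\calI^*$ (so that $\Inf_f[i]$ satisfies~(\ref{infcondition})) is exactly what Claim~\ref{trickyclaim} requires, so no extra influence assumption is needed.
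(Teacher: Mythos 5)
Your proposal is correct and follows essentially the same route as the paper: conditioning on $i\in\bT$, you handle condition (i) via Lemma~\ref{lem:informative-for-antimon} and condition (ii) via the same collapse of the nested sampling to $\calD_{\xi,m,i}$ plus Claim~\ref{trickyclaim} and Corollary~\ref{coro1} with a reverse-Markov averaging step (this is exactly the paper's Claim~\ref{final-touch}), and then you finish with the identical first-moment argument combining the lower bound on $\Ex[|\bC|]$ with the Lemma~\ref{appendix1} upper tail on $|\bT\cap\calI^*|$ and solving for $\alpha$.
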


\begin{proof}
We first note that $p|\calI^*|/ n = \Omega(n^{1/6})$ since $|\calI^*| /2^t = \Omega(n^{2/3})$. By Lemma~\ref{appendix1}, $$|\bC| \leq |\bT \cap \calI^*| \leq 4\hspace{0.05cm}p|\calI^*|/ {n}$$ with probability at least $1 - \exp ( -\Omega(n^{1/6}) )$. We consider the quantity
\[ \alpha = \Prx_{\bT \subset [n]}\left[ |\bC| \geq \frac{1}{10}\cdot \frac{p|\calI^*|}{n} \right], \]
which we will lower bound by $\Omega(1)$ by proving both upper and lower bounds for $\Ex_{\bT}[|\bC|]$. 

For the lower bound, we use the following claim which we prove next. % (which we prove next) to establish

\begin{claim}\label{final-touch}
For every $i \in \calI^*$, we have
\[ \Prx_{\bT \subset [n]}\Big[\text{$\bT$ and $i$ satisfy conditions (i) and (ii)}\hspace{0.08cm}\Big|\hspace{0.08cm} i \in \bT \Big] \geq 1-o(1). \]
\end{claim}
It follows from Claim \ref{final-touch} that
\begin{align*}
\Ex_{\bT \subset [n]}\big[ |\bC|\big] &= \sum_{i \in \calI^*} \Prx_{\bT\subset[n]}\big[i \in \bT \big] \cdot \Prx_{\bT\subset [n]}\Big[\text{(i) and (ii) satisfied}
\hspace{0.08cm}\Big|\hspace{0.08cm}  i \in \bT \Big] \geq (1-o(1)) \cdot \frac{p|\calI^*|}{n}.
\end{align*} 
On the other hand, we may upper bound $\Ex_\bT[|\bC|]$ using the definition of $\alpha$,
\[ \Ex_{\bT \subset [n]}\big[ |\bC| \big] \leq n\cdot \exp\left( -\Omega(n^{1/6})\right) + \left(1-\exp\left(-\Omega(n^{1/6} \right)\right) \left( \alpha \cdot \frac{4\hspace{0.03cm}p|\calI^*|}{n} + (1-\alpha) \cdot \frac{p|\calI^*|}{10n}\right), \]
which in turn, implies that $\alpha = \Omega(1)$.
\end{proof}

\begin{proof}[Proof of Claim \ref{final-touch}]
Consider a fixed $i \in \calI^*$, we will show that sampling $\bT$ and conditioning on $i \in \bT$, the probability of $i$ satisfying condition (i) is at least $1 - o(1)$, and likewise, the probability of $i$ satisfying condition (ii) is also at least $1 - o(1)$. The claim then follows from a union bound. 

First, $\bT \subset [n]$ conditioned on $i\in \bT$ is distributed exactly as $\bT' \cup \{i\}$ where $\bT' \sim \calP_{i,\lceil \sqrt{n}/2^t\rceil}$, by our choice of $p$. 
The part on condition (i) follows directly from 
  Lemma~\ref{lem:informative-for-antimon}.
  
Second, let $\beta$ be the probability over $\bT \subset [n]$ conditioning on $i\in \bT$ that (ii) is not satisfied. We consider the following quantity
$$
\Prx_{\bS \sim \calD_{\xi, m, i}}\big[ \bS \text{ catches $i$}\big].
$$
On the one hand, we can we bound it from above by $1-o(1)$ by combining Claim~\ref{trickyclaim} and Corollary \ref{coro1}. On the other hand, $\bS \sim \calD_{\xi, m, i}$ is exactly distributed as first sampling $\bT \subset [n]$, then sampling $\bS_0\subset \bT$ conditioned on $i\in \bS_0$ and 
  finally running $\Preprocess$ using $\bS_0$ to get $\bS$.
%as line 7 of $\AlgorithmCC$; 
Thus, we may use the definition of $\beta$ to upperbound the probability by $\beta/2 + 1-\beta$, which implies $\beta = o(1)$.
%
%For any $i \in \calI^*$, by Lemma~\ref{lem:informative-for-antimon}, 
%\begin{align*}
%\Prx_{\bT \subset [n]}\left[ \text{(i) is satisfied in (\ref{eq:def-c})} \mid i \in \bT\right]&= \Prx_{\bT' \sim \calP_{i, \lceil \sqrt{n}/2^t\rceil }}\left[ \bT' \text{ is $i$-informative for anti-monotone edges}\right] \geq 1-o(1).
%\end{align*}
%Furthermore, for any $i \in \calI^*$, let $\beta$ be the probability over $\bT \subset [n]$ with $i \in \bT$, that \\$\Prx_{\bS \sim \calD_{\xi, m, i}}\left[ \bS \text{ catches $i$} \mid \bS \subset \bT\right] \leq \frac{1}{2}$. By combining Claim~\ref{trickyclaim} with Corollary~\ref{coro1},
%\[ \Prx_{\bS \sim \calD_{\xi, m, i}}\left[ \bS \text{ catches $i$}\right] \geq 1 - o(1). \]
%However, we may use the definition of $\beta$ to conclude
%\begin{align*}
%\Prx_{\bS \sim \calD_{\xi, m, i}}\left[ \bS \text{ catches $i$}\right] \leq \frac{\beta}{2} + (1 - \beta), 
%\end{align*}
%which, in turn, implies $\beta = o(1)$. By a union bound, we obtain the desired claim.
\end{proof}

Having established Lemma~\ref{lem:c-large}, we consider a fixed iteration of line 2 where the set $C$
  obtained from $T$  
  satisfies the size lower bound from Lemma~\ref{lem:c-large}. Note that since line 2 is executed $O(1)$ times, this will happen with large constant probability. %, so fix $\bT$ and $\bC$ for the rest of the analysis, 
After fixing $T$ and $C$, we will show that in this iteration, $\AlgorithmCC$ finds an edge violation to unateness with high probability.

\begin{lemma}\label{lem:a-large}
With probability $1-o(1)$ over the randomness in lines 3--5, $|\bA \cap C| \geq \Omega(n^{1/6})$. 
\end{lemma}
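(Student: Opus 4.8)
\medskip
\noindent\textbf{Proof plan for Lemma~\ref{lem:a-large}.} The plan is a birthday‑style accumulation argument over the anti‑monotone edges that $\AESearch(f,\cdot,T)$ discovers, but---and this is the one place where Case~1.2 genuinely differs from Case~1.1---we only aim to collect $\Theta(n^{1/6})$ distinct variables of $C$, not a constant fraction of $C$. (When $2^t$ is small, $C$ itself can be as large as $\Theta(\sqrt{n})$, and exhausting it would cost $\Omega(n)$ queries.) Recall $T,C$ have been fixed so that $|C|\ge c_0\hspace{0.03cm}p\hspace{0.03cm}|\calI^*|/n$ for an absolute constant $c_0$ (Lemma~\ref{lem:c-large}); since $p=\Theta(\sqrt{n}/2^t)$, $|\calI^*|=\Theta(|\calI|)$, and $|\calI|/2^t\ge n^{2/3}$ in Case~1, this gives $|C|=\Omega(|\calI|/(2^t\sqrt{n}))=\Omega(n^{1/6})$, a bound I will use twice.

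\smallskip
\noindent First I would attach to each $i\in C$ the set $X_i\subseteq\{0,1\}^n$ of points $x$ for which $\AESearch(f,x,T)$ returns an anti‑monotone edge along $i$ with probability at least $2/3$. Since $i\in C$ satisfies condition~(i), the set $T\setminus\{i\}$ is $i$‑informative for anti‑monotone edges, so by Definition~\ref{def:informative} (using $\Score_{i,t}^-\ge 2^{t-h}$, valid as $i\in\calI$ by Lemma~\ref{lem:score-summary}) at least $(2^{t-h}/4)\cdot 2^n$ anti‑monotone edges along $i$ are $(T\setminus\{i\})$‑persistent; as $i\in T$, Lemma~\ref{lem:aesearch} places both endpoints of each such edge into $X_i$, so $|X_i|\ge 2^{t-h-1}\cdot 2^n$. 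The sets $\{X_i\}_{i\in C}$ are pairwise disjoint: $x\in X_i\cap X_j$ with $i\ne j$ would force $\AESearch(f,x,T)$ to return $i$ and to return $j$, each with probability at least $2/3$ (one may also invoke Corollary~\ref{AESearchCorollary}).

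\smallskip
\noindent Next I would fix a target $\tau=\Theta(n^{1/6})$ small enough that $\tau\le |C|/2$, split the $O(n^{2/3}\Lambda^{11}\log^2 n/\eps^2)$ iterations of line~4 into $\tau$ phases of $K:=\lceil C_1\log n/p^*\rceil$ iterations each, where $p^*:=|C|\hspace{0.03cm}2^{t-h}/6$ and $C_1$ is a large constant, and analyze one phase at a time. At the start of a phase with $|\bA\cap C|<\tau$ we have $|C\setminus\bA|\ge|C|/2$, so a fresh uniform $\bx$ lands in $\bigcup_{i\in C\setminus\bA}X_i$ with probability at least $|C|\hspace{0.03cm}2^{t-h}/4$, and conditioned on $\bx\in X_i$ for such an $i$ the call $\AESearch(f,\bx,T)$ adds the new variable $i$ to $\bA$ with probability at least $2/3$; hence one iteration adds a new variable of $C$ with probability at least $p^*$, the phase fails to do so with probability at most $(1-p^*)^K\le e^{-C_1\log n}$, and a union bound over the $\tau\le n$ phases shows that with probability $1-o(1)$ every phase (before the target is hit) adds a new variable, so $|\bA\cap C|\ge\tau=\Omega(n^{1/6})$.

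\smallskip
\noindent Finally I would check the query budget: the number of iterations consumed is $\tau K=O\!\big(n^{1/6}\log n\cdot 2^{h-t}/|C|\big)$, and substituting $2^h=O(|\calI|\hspace{0.03cm}\Lambda^{11}/\eps^2)$ (Lemma~\ref{lem:score-summary}) together with $|C|=\Omega(|\calI|/(2^t\sqrt{n}))$ cancels the $|\calI|/2^t$ factors, giving $2^{h-t}/|C|=O(\Lambda^{11}\sqrt{n}/\eps^2)$ and $\tau K=O(n^{2/3}\Lambda^{11}\log n/\eps^2)$, comfortably inside the budget of line~3. I expect the main obstacle to be precisely this balancing act: one must resist recovering all of $C$, see that the weaker target $\Theta(n^{1/6})$ is simultaneously enough for the downstream edge‑violation search and affordable, and carry out the cancellation of the $|\calI|/2^t$ terms; the rest is routine Chernoff and union‑bound bookkeeping.
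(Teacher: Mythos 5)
Your proposal is correct and follows essentially the same route as the paper's proof: disjoint sets $X_i$ of measure $\Omega(2^{t-h})$ obtained from condition (i) together with Lemma~\ref{lem:aesearch} and Corollary~\ref{AESearchCorollary}, followed by a phase-by-phase accumulation argument with Chernoff and union bounds, using $|C|=\Omega(|\calI^*|/(2^t\sqrt{n}))=\Omega(n^{1/6})$. The only difference is cosmetic parametrization (your target $\tau$ and phase length tuned to $p^*$ versus the paper's fixed-length phases with the ``either $|C|/2$ variables already found or a fresh one appears'' dichotomy), which yields the same bound within the same query budget.
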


\begin{proof}
We consider breaking up the execution of lines 3--4 into $O(n^{1/6})$ phases, each consists of $$O\left(\frac{\sqrt{n}\hspace{0.04cm} \Lambda^{11} \log^2 n}{\eps^2}\right)$$ execution of line 4 each. We note that 
$$|C| = \Omega\left(\frac{p|\calI^*|}{n} \right) = \Omega(n^{1/6})$$ since $|\calI|/2^t \ge n^{2/3}$. Consider a particular phase of the algorithm, and assume that the number~of anti-monotone edges along variables in $C$ observed is less than $|C|/2$ (otherwise, we are done
  since $|C|=\Omega(n^{1/6})$). In this case, for each $j \in C$ along which the algorithm has not observed an anti-monotone edge, we let $X_j$ be the set of points $x \in \{0,1\}^n$ such that $\AESearch\hspace{0.04cm}(f,x,T)$ will output $j$ with probability at least $2/3$. By condition (i), we have that 
\[ |X_j| = \Omega\left(\frac{2^t}{2^h}\right) \cdot 2^n,\]
and that the $X_j$'s are disjoint by Corollary~\ref{AESearchCorollary}. As a result, there are at least
\[ \Omega\left( \frac{2^t}{2^h} \right) \cdot 2^n \cdot \dfrac{|\calI^*|}{2^t \sqrt{n}} = \Omega\left( \dfrac{|\calI^*|}{2^h \sqrt{n} }\right) \cdot 2^n = \Omega\left(\frac{\eps^2}{\sqrt{n} \cdot \Lambda^{11}} \right) \cdot 2^n \]
many such points $x \in \{0,1\}^n$. By the number of queries we have in each phase, we will observe an anti-monotone edge along some variable in $C$ not yet seen with high probability.
\end{proof}

By Lemma~\ref{lem:a-large} we consider the case when $|\bA \cap C| = \Omega(n^{1/6})$. Fix a particular   $A$ and~a~subset of $\calJ=A\cap C$ such that $|\calJ|=\Theta(n^{1/6})$.
% $\calJ \subset \bA \cap \bC$ of size $|\calJ| = \Theta(n^{1/6})$. 

\begin{lemma}
With probability at least $1-o(1)$, %$\Omega(\frac{m \cdot 2^t}{n^{1/3} \log^2 n})$, 
  at least one of the $\bS$ sampled in line 7 catches 
  at least one variable in $ \calJ$. %some $\bS_0$ $\bS_0 \cap \calJ \neq \emptyset$.
\end{lemma}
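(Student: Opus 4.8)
The plan is to reduce the claim to a single iteration of lines 6--7 and then amplify over the $R := \Theta\big(n^{1/3}\log^3 n/(m\hspace{0.04cm}2^t)\big)$ mutually independent iterations of line~6. Concretely, I will show that in one iteration the set $\bS$ produced in line~7 catches at least one variable of $\calJ$ with probability $q_1 = \Omega\big(\min\{1,\hspace{0.06cm}|\calJ|\hspace{0.04cm}m/p\}\big)$, where $p = |T| = \Theta(\sqrt n/2^t)$ (using $2^t\le n^{1/3}$, which holds since $|\calI|/2^t\ge n^{2/3}$ and $|\calI|\le n$) and $|\calJ| = \Theta(n^{1/6})$, so that $q_1 = \Omega\big(\min\{1,\hspace{0.06cm}m\hspace{0.04cm}2^t/n^{1/3}\}\big)$.

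For the single-iteration bound, fix $i\in\calJ$; since $\calJ\subseteq C\subseteq T$, the uniformly random size-$m$ subset $\bS_0$ of $T$ drawn in line~7 satisfies $\Pr[i\in\bS_0] = m/p$, and conditioned on $i\in\bS_0$ the pair $(\bS_0,\bpi)$ is distributed exactly as in condition~(ii) of the definition of $C$, so $\Pr[\bS\text{ catches }i\mid i\in\bS_0]\ge 1/2$ and hence $\Pr[\bS\text{ catches }i]\ge m/(2p)$. Moreover, if $\bS$ catches $i$ then $i\in\bS\subseteq\bS_0$, so two distinct variables $i\ne i'$ of $\calJ$ can both be caught by a single $\bS$ only if $\{i,i'\}\subseteq\bS_0$, which has probability at most $m^2/p^2$. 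A Bonferroni inequality then gives, for any $\calJ'\subseteq\calJ$,
\[ \Pr\nolimits_{\text{line }7}\big[\bS\text{ catches some }i\in\calJ'\big] \;\ge\; \frac{|\calJ'|\hspace{0.04cm}m}{2p} - \binom{|\calJ'|}{2}\cdot\frac{m^2}{p^2} \;\ge\; \frac{|\calJ'|\hspace{0.04cm}m}{2p}\Big(1-\frac{|\calJ'|\hspace{0.04cm}m}{p}\Big). \]
If $|\calJ|\hspace{0.04cm}m/p\le 1/2$ I take $\calJ'=\calJ$; otherwise, using that $p/m = \Theta\big(\sqrt n/(2^t m)\big)\to\infty$ in Case~1.2 (so $p\ge 2m$ and $\lfloor p/(2m)\rfloor\ge 1$), I take $\calJ'\subseteq\calJ$ of size $\lfloor p/(2m)\rfloor$. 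Either way the right-hand side is $\Omega\big(\min\{1,\hspace{0.06cm}|\calJ|\hspace{0.04cm}m/p\}\big)$, which proves the claimed bound on $q_1$.

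To finish, observe that distinct iterations of line~6 draw $\bS_0$, $\bpi$ and the internal randomness of $\Prune$ independently, and whether the resulting $\bS$ catches a variable of $\calJ$ depends only on these, so the $R$ iterations are independent; therefore
\[ \Pr\big[\text{no }\bS\text{ sampled in line }7\text{ catches any }i\in\calJ\big] \;\le\; (1-q_1)^R \;\le\; \exp(-R\hspace{0.04cm}q_1). \]
Since $R\hspace{0.04cm}q_1 = \Omega\big(\min\{R,\hspace{0.06cm}R\hspace{0.04cm}m\hspace{0.04cm}2^t/n^{1/3}\}\big) = \Omega\big(\min\{n^{1/3}\log^3 n/(m\hspace{0.04cm}2^t),\hspace{0.06cm}\log^3 n\}\big)$ and $m\hspace{0.04cm}2^t< n^{1/3}\log^2 n$ in Case~1.2 by (\ref{case12}), we get $R\hspace{0.04cm}q_1 = \Omega(\log n) = \omega(1)$, so the failure probability is $\exp(-\Omega(\log n)) = o(1)$, as desired.

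The main obstacle is the single-iteration estimate: condition~(ii) only provides a guarantee for $\bS_0$ conditioned on containing a \emph{fixed} variable $i$, while the catching events for different $i\in\calJ$ are positively correlated through the shared random set $\bS_0$; the Bonferroni step — together with passing to a thinned subfamily $\calJ'\subseteq\calJ$ when $|\calJ|\hspace{0.04cm}m/p$ is not small — is what controls this correlation. One also has to be slightly careful that $\Prune$ is invoked inside the loop, but since each invocation uses fresh randomness this creates no dependence across the $R$ iterations.
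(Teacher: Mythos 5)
Your proof is correct, and it differs from the paper's in the key step of turning the first-moment bound into a per-iteration success probability. The paper also starts from $\Ex_{\bS}[\#\{j\in\calJ:\bS\text{ catches }j\}]=\Omega(m2^t/n^{1/3})$ (via $\Pr[j\in\bS_0]=m/p$ and condition (ii)), but then converts this to a probability by capping the count: since $|\bS\cap\calJ|\le|\bS_0\cap\calJ|=O(\log^2 n)$ with probability $1-\exp(-\Omega(\log^2 n))$ by Lemma~\ref{appendix1}, the expectation is at most $\beta\cdot O(\log^2 n)$ plus a negligible term, giving a per-iteration bound $\beta=\Omega\big(m2^t/(n^{1/3}\log^2 n)\big)$; the $\log^3 n$ in the repetition count of line 6 then absorbs the $\log^2 n$ loss. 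You instead control the positive correlation directly with a Bonferroni (first minus second moment) bound, using that two variables can only both be caught if both land in $\bS_0$, together with the thinning to a subfamily $\calJ'$ when the first moment exceeds a constant; this avoids any appeal to Lemma~\ref{appendix1} and yields the sharper per-iteration bound $\Omega\big(\min\{1,\,m2^t/n^{1/3}\}\big)$, after which the amplification over the independent repetitions of lines 6--7 is the same in both arguments (and your use of (\ref{case12}) to guarantee $Rq_1=\Omega(\log n)$ is exactly the role the paper's repetition count plays). Either route suffices; the paper's is a bit shorter because the truncation lemma is already available, while yours is self-contained at that point and loses no logarithmic factor in the single-iteration estimate.
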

\begin{proof}
Let $\beta$ be the probability
  that $\bS$ catches at least one variable in $\calJ$.
Consider 
$$
\Ex_{\bS}\Big[\big\{j\in \calJ: \text{$\bS$ catches $j$}\big\}\Big].
$$
Using condition (ii) on variables in $\calJ$, we have
$$
\Ex_{\bS}\Big[\big\{j\in \calJ: \text{$\bS$ catches $j$}\big\}\Big]=
\sum_{j\in \calJ} \Pr \big[\bS_0\ \text{contains $j$}\big]\cdot 
  \Pr \Big[\bS\ \text{catches $j$}\hspace{0.08cm}\Big|\hspace{0.08cm}\text{$j\in \bS_0$}\Big]
 =\Omega\left(\frac{m2^t}{n^{1/3}}\right).
$$

%\%[ \Ex_{\bS_0 \subset \bT}\left[ |\bS_0 \cap \calJ|\right].\]
Since $\bS_0$ is a uniform subset of $\bT$ of size $m$, and $|\calJ| = \Theta(n^{1/6})$, the expectation of $|\bS_0\cap \calJ|$ is $$\Theta\left(\frac{m n^{1/6}}{\sqrt{n}/2^t}\right) = 
\Omega\left(\frac{m 2^t}{n^{1/3}}\right)=O(\log^2n).$$ 
Therefore, the probability of $|\bS\cap \calJ|\le |\bS_0\cap \calJ|$ being at most 
  $O(\log^2 n)$ is at least $1-\exp(-\Omega(\log^2 n))$ by 
  %. 
 %n the other hand, by (\ref{case12}), the above expectation is at most $O(\log^2 n)$, so by 
  Lemma~\ref{appendix1}. %, $|\bS_0 \cap \calJ| \leq 4\log^2 n$ with probability at least $1 - \exp(-\Omega(\log^2 n))$. 
  Thus, we may upperbound the above expectation using the definition of $\beta$ by 
$$\beta \cdot O(\log^2 n) + |\calJ|\cdot \exp\big(-\Omega(\log^2 n)\big),$$ which gives the following lower bound on $\beta$:
$$
\beta\ge \Omega\left(\frac{m2^t}{n^{1/3}\log^2 n}\right).
$$
The lemma follows from the number of times we repeat in line 7.
%Since $\bS_0 \subset \bT$ is a uniform random set of size $m$, we have that
%\[ \Ex_{\bS_0 \subset \bT}\left[|\bS_0 \cap \calJ | \right] = \dfrac{m \cdot n^{1/6}}{\lceil\sqrt{n}/2^t\rceil} = \Theta\left(\frac{m \cdot 2^t}{n^{1/3}}\right), \]
%and by (\ref{case12}), this quantity is at most $O(\log^2 n)$. By Lemma~\ref{appendix1}, $|\bS_0 \cap \calJ| \geq 4\log^2 n$  with probability at most $\exp\left(-\Omega(\log^2 n) \right)$. As a result,
%\[ \Prx_{\bS_0\subset \bT}\left[ \bS_0 \cap \calJ \neq \emptyset \right] \cdot \log^2 n + m \exp\left(-\Omega(\log^2 n)\right) \geq \frac{m \cdot 2^t}{n^{1/3}}, \]
%which gives the desired lower bound.
\end{proof}

Finally we show that if $S$ catches a $j\in \calJ$, then
  line 9 finds a monotone edge along $j$.

\begin{lemma}
%$If for at least $\Omega(\log n)$ executions of 7, $\bS \cap \calJ \neq \emptyset$, then a violation to $unateness is found with high probability.
If $S$ catches $j\in \calJ$ in line 7, then line 9 finds a monotone edge along variable $j$ with probability
  at least $1-o(1)$.
\end{lemma}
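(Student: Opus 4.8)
The plan is to show that $\AESearch(f,\cdot,S\cup\{n+1\})$ returns a monotone edge along $j$ on an $\Omega(2^{s-h})$-fraction of all points, and then to finish by a standard repetition argument using the number of iterations in line 9.

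First I would unpack the hypothesis. ``$S$ catches $j$'' means $j\in S$ and $\Sub(S,j)=(S\cup\{n+1\})\setminus\{j\}$ is $j$-informative, hence in particular $j$-informative for monotone edges; since $j\in\calJ\subseteq\calI$, inequality (\ref{bulbul-mon}) of Definition~\ref{def:informative} together with Lemma~\ref{lem:score-summary} gives
\[ \frac{\big|\PE_j^+(\Sub(S,j))\big|}{2^n}\;\ge\;\frac{\Score_{j,s}^+}{4}\;\ge\;\frac{2^{s-h}}{4}. \]
Let $Y_j\subseteq\{0,1\}^n$ be the set of endpoints of the edges in $\PE_j^+(\Sub(S,j))$; since distinct edges along the same variable have disjoint pairs of endpoints, $|Y_j|\ge (2^{s-h}/2)\cdot 2^n$.

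The key step is to show that for every $y\in Y_j$, a single call to $\AESearch(f,y,S\cup\{n+1\})$ finds a monotone edge along $j$ with probability at least $2/3$. By definition of $Y_j$, the edge $(y,y^{(j)})$ is an $s$-strong monotone edge along $j$ that is $\big((S\cup\{n+1\})\setminus\{j\}\big)$-persistent, and $j\in S\subseteq S\cup\{n+1\}$; so Lemma~\ref{lem:aesearch}, applied with the set $S\cup\{n+1\}$ and the variable $j$, says that $\AESearch(f,y,S\cup\{n+1\})$ outputs $j$ with probability at least $2/3$. Whenever it outputs $j$, the edge it returns is $(y,y^{(j)})$, which is bichromatic, and since $y$ is an endpoint of a monotone edge along $j$ we have $y_j=f(y)$, so $(y,y^{(j)})$ is itself a \emph{monotone} edge. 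Hence one execution of line 9 (draw $\by\sim\{0,1\}^n$, run $\AESearch(f,\by,S\cup\{n+1\})$) finds a monotone edge along $j$ with probability at least $(2/3)\cdot(|Y_j|/2^n)=\Omega(2^s/2^h)$. Since line 9 is repeated $O\big((2^h/2^s)\log n\big)$ times with fresh independent randomness, the probability that no execution finds such an edge is at most
\[ \Big(1-\Omega\big(2^s/2^h\big)\Big)^{\Omega\left((2^h/2^s)\log n\right)}\;\le\;\exp\big(-\Omega(\log n)\big)\;=\;o(1), \]
as desired.

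The argument is essentially routine. The only points that need care are (a) checking that the persistency hypothesis required by Lemma~\ref{lem:aesearch} for the set $S\cup\{n+1\}$ is exactly $\Sub(S,j)$-persistency, which is precisely what $j$-informativeness of $\Sub(S,j)$ supplies (these agree because $n+1$ is only a placeholder variable), and (b) observing that when $\AESearch$ outputs $j$ on an endpoint $y$ of a monotone edge, the edge $(y,y^{(j)})$ is necessarily monotone because $y_j=f(y)$. I do not anticipate any substantive obstacle.
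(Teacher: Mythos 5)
Your proposal is correct and follows essentially the same route as the paper: catching $j$ gives $j$-informativeness of $\Sub(S,j)=(S\cup\{n+1\})\setminus\{j\}$, hence an $\Omega(2^{s-h})$-fraction of points on which $\AESearch(f,\cdot,S\cup\{n+1\})$ returns a monotone edge along $j$ with probability at least $2/3$ (via Lemma~\ref{lem:aesearch}), and the repetition count in line 8 finishes the argument. You merely spell out details the paper leaves implicit (the endpoint count, the identification of $(S\cup\{n+1\})\setminus\{j\}$ with $\Sub(S,j)$, and that the returned edge is indeed monotone), all of which are fine.
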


\begin{proof}
%Consider an execution of line 7 where $\bS_0 \subset \bT$ satisfies $i \in \bS_0 \cap \calJ$ for some $i$. Then sample $\bS$ is distributed according to sampling $\bS \sim \calD_{\xi, m, i}$ conditioned on $\bS \subset \bT$. As a result, by condition (ii),
%\[ \Prx_{\bS}\left[ \bS \text{ catches some $i \in \calJ$} \mid \bS_0 \cap \calJ \neq \emptyset\right] \geq \frac{1}{2}. \]
%Therefore, if $\Omega(\log n)$ executions have $\bS_0 \cap \calJ\neq\emptyset$, one such execution will satisfy $\bS$ catches some $i \in \calJ$.
%When this occurs, there exists a set of points in $\{0,1\}^n$ of size 
Since $S$ catches $j$, the number of points $x\in \{0,1\}^n$ such that $\AESearch\hspace{0.05cm}(f,x,S)$ returns
  a monotone edge along $j$ with probability at least $2/3$ is at least
\[ \Omega\left(\frac{2^s}{2^h}\right) \cdot 2^n \]
%such that if some $\by$ sampled in line 9 falls in this set, an edge violation would be found. Since the number of executions of line 9 is at least $\Omega(\frac{2^h}{2^s} \cdot \log^2 n)$, one such violation is found with high probability. 
The lemma then follows from the number of times the algorithm repeats in line 8.
\end{proof}

\ignore{\begin{lemma}
Query complexity is
\[ O(n^{2/3}/\eps^2) + \dfrac{n^{1/3} \log^2 n}{m \cdot 2^t} \left(\frac{m}{\xi} + \frac{2^h}{2^s} \log^2 n \right) = \]
\end{lemma}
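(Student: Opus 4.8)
My plan is a straightforward line-by-line query count for $\AlgorithmCC$ (Figure~\ref{fig:algcase21}), followed by an algebraic simplification using the parameter relations fixed earlier in Section~\ref{algsec:case1}. The two ingredients I would use are: every call to $\AESearch$ costs $O(\log n)$ queries (Lemma~\ref{lem:aesearch}), and every call to $\Prune$ on a size-$m$ set with parameter $\xi$ costs $O(m\log^5 n/\xi)=\tilde{O}(m/\xi)$ queries (Lemma~\ref{lem:prune2}).

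First I would account for lines~3--5: this loop runs $O(n^{2/3}\Lambda^{11}\log^2 n/\eps^2)$ times and makes one $\AESearch$ call per iteration, contributing $\tilde{O}(n^{2/3}/\eps^2)$ queries. For lines~6--10, each iteration of the outer loop runs $\Prune$ once on a size-$m$ subset of $\bT$ (line~7) at cost $\tilde{O}(m/\xi)$, and then makes $O((2^h/2^s)\log n)$ calls to $\AESearch$ (lines~8--9) at cost $\tilde{O}(2^h/2^s)$ in total; here one uses that (\ref{case12}) forces the outer-loop count $O(n^{1/3}\log^3 n/(m2^t))$ to be at least $\log n$ (and likewise $h\ge s$, a consequence of Lemma~\ref{lem:score-summary}, makes the inner count at least $\log n$), so the $O(\cdot)$'s are legitimate upper bounds on the numbers of iterations. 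Hence lines~6--10 contribute $\tilde{O}\big(n^{1/3}/(m2^t)\big)\cdot\big(\tilde{O}(m/\xi)+\tilde{O}(2^h/2^s)\big)$ queries, and since line~1 repeats everything $O(1)$ times, the total matches the first expression displayed in the lemma.

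It then remains to check this is $\tilde{O}(n^{2/3}/\eps^2)$. For the $\tilde{O}(m/\xi)$ part, $\tilde{O}\big(n^{1/3}/(m2^t)\big)\cdot\tilde{O}(m/\xi)=\tilde{O}\big(n^{1/3}/(2^t\xi)\big)$; since (\ref{boundforxi}) gives $1/\xi=O(n^{1/3}/\Lambda^2)$ and $2^t\ge 1$, this is $\tilde{O}(n^{2/3})\le\tilde{O}(n^{2/3}/\eps^2)$. For the $\tilde{O}(2^h/2^s)$ part, (\ref{eq:def-m}) gives $m=\Theta(\xi|\calI|/n^{1/3})$, so
\[
\tilde{O}\Big(\frac{n^{1/3}}{m2^t}\Big)\cdot\tilde{O}\Big(\frac{2^h}{2^s}\Big)
= \tilde{O}\Big(\frac{n^{2/3}\,2^h}{\xi\,|\calI|\,2^t\,2^s}\Big),
\]
and using $\xi=1/2^s$ (so the factor $\xi 2^s$ is $1$), $|\calI|/2^h=\Omega(\eps^2/\Lambda^{11})$ from Lemma~\ref{lem:score-summary}, and $2^t\ge 1$, this equals $\tilde{O}(n^{2/3}/\eps^2)$. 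Summing the three contributions finishes the proof. The computation is elementary; the only point that needs care is tracking the parameter dependencies — in particular that $\xi=1/2^s$ and that the $\Prune$ cost scales as $m/\xi$ (not $m\xi$), so that the $\xi$ and $2^s$ factors cancel in the second contribution instead of compounding, and that (\ref{case12}) is precisely what makes the stated line-6 iteration count a valid upper bound.
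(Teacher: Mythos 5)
Your proof is correct and follows essentially the same route the paper intends: the paper states the bound and simply cites (\ref{case12}), (\ref{boundforxi}), (\ref{eq:def-m}), $2^t\geq 1$ and the bound $|\calI|/2^h=\Omega(\eps^2/\Lambda^{11})$ from Lemma~\ref{lem:score-summary} (together with $\xi=1/2^s$), and your line-by-line accounting plus the cancellation of the $\xi$ and $2^s$ factors is exactly the intended calculation, just spelled out.
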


Let $q=\lceil \sqrt{n}/2^t\rceil$ as in the definition of strong edges
  and let 
$$
w=\min\left(\xi |\calI|,\hspace{0.05cm} n^{2/3}\right).
$$
Some explanation about $w$, in Lemma 6.8, if we run $\AESearch$ $w$ times
  then we will find 
\begin{equation}\label{heha1}
\ell=\min(\Omega(r),\hspace{0.05cm}|\calI|/n^{2/3})
= \Omega\left(\frac{w|\calI|}{n^{4/3}}\right).
\end{equation}
many variables. So maybe we should make it a separate lemma and use in both subcases.
Also note that since $\xi\ge 1/n^{1/3}$, we have $w\ge m/\xi$.
  %$w=\lfloor q/m\rfloor$. Then we have $w=\Theta(q/m)$ and $w=\Omega(\log^2n)$.
We use $\AlgorithmCC$ described in Figure \ref{fig:algcase21}. % for this case.
It has query complexity $\tilde{O}(n^{2/3})$ (since the number of queries for $\Prune$ 
  is $m/\xi\le w$.

I will be informal in the analysis below.
First I think we can use a union bound to show that $\bT$ and most of the $\bS$
  drawn on line 4 are good :).
For some reason I need the following two cases for analysis but we can try to merge them.

The first case is when $r\ge |\calI|/n^{2/3}$.
In this case we have $w=\xi |\calI|$ and $\ell=\Omega(r)$.
For each good $\bS$, by the choice of $w$ earlier, we get $\ell$ many variables.
So in total we get
$$
N=\min\left(\frac{n^{2/3}\ell}{w},\hspace{0.05cm} \frac{q\ell}{m}\right)
=\min\left(\frac{|\calI|}{n^{2/3}},\hspace{0.05cm}\frac{q|\calI|}{n}\right).
%=\Omega\left(\frac{w|\calI|}{n^{4/3}}\right)\cdot \min\left(\frac{n^{2/3}}{w},\hspace{0.05cm}\frac{q}{m}\right)
%=\Omega\left(\frac{|\calI|}{n^{2/3}},\hspace{0.05cm}
%\frac{wq|\calI|}{n^{4/3}m}\right).
$$
Note that the second component in the $\min$ happens when $n^{2/3}/w$ is bigger than
  $q/m$; when this happens the sets $\bS$ start to overlap with themselves. 
When this many variables have been found in $\bA$, each sample in the second stage finds a collision
  with probability (ignoring polylog and $\eps$):
$$
\Omega\left(\frac{N 2^t}{2^{h}}\right)= 
\Omega\left(\min\left(\frac{2^t}{n^{2/3}},\hspace{0.05cm}\frac{1}{\sqrt{n}}\right)\right)=\Omega\left(\frac{1}{n^{2/3}}\right).
$$

In the other case, we assume that $r<|\calI|/n^{2/3}$.
We have $\ell=|\calI|/n^{2/3}$ and $w=n^{2/3}$.
Plugging in $r=m|\calI|/n$ we also have $m<n^{1/3}$.
Thus,
$$
N=\min\left(\frac{n^{2/3}\ell}{w},\hspace{0.05cm} \frac{q\ell}{m}\right)
\ge \min\left(\frac{|\calI|}{n^{2/3}},\hspace{0.05cm}\frac{q|\calI|}{n}\right).
$$
The rest of the proof is the same.}

% !TEX root = main.tex

\section{Finding Bichromatic Edges of High Influence Variables}\label{sec:highinf}

\newcommand{\FindHiInf}{\texttt{Find-Hi-Inf}}
\newcommand{\CheckRevealing}{\texttt{Check-Revealing}}
\newcommand{\GetRevealingEdges}{\texttt{Get-Revealing-Edges}}
\newcommand{\Good}{\textsc{Good}}
\newcommand{\NoGood}{\textsc{NoGood}}

The goal of this section is to present a procedure \FindHiInf\  
  which will be used in our algorithm for Case 2.
The procedure \FindHiInf\ takes four inputs: query access 
  to a Boolean function $f$, a set $S\subseteq [n]$ of variables,
  a positive integer $m$ and a parameter $\alpha\in (0,1]$.
When $f$ satisfies: % the following condition:
\begin{equation}\label{cond:many}
%\textbf{Many high-influence variables:\ }
\text{There is a hidden
  subset $H\subseteq S$
  such that $|H|\ge m$ and every $h\in H$ satisfies $\Inf_f[h]\ge \alpha$,}
\end{equation}
the procedure \FindHiInf\ efficiently finds a bichromatic edge for 
  \emph{almost all} variables in $H$.
  
%at least $|R|-m$ many~variables in $R$~with high probability (we will apply this procedure when
%   $|R|$ is much bigger than $m$; here \FindHiInf\ succeeds on most variables in the hidden set $R$).  
  %s along variables which include sufficiently many from a hidden set of high influence. 

At a high level, the procedure \FindHiInf\ (described in Figure~\ref{fig:find-hi-influence}) will exploit  the fact that if there is a hidden set $H$ of variables with high influence, there will be sufficiently many points~in the hypercube which are sensitive along many variables from $H$. As a result, every time one such high sensitivity point is identified, we may use it to observe bichromatic edges along multiple variables. 

%For the remainder of the section, consider query access to a Boolean function $f \colon \{0,1\}^n \to \{0,1\}$, and suppose there exists a hidden set $I \subset [n]$ such that for a parameter $\alpha > 0$,
%\begin{align}
%\Inf_i(f) \geq \alpha \qquad\text{ for all } i \in I. \label{eq:high-inf-alpha}
%\end{align}

\subsection{Revealing points}

We start with the definition of \emph{revealing points}.
These are points that, once identified,
  can be~used to~observe bichromatic edges along multiple variables.
We then present two subroutines that will be~used in  
  \FindHiInf.
Given a revealing point,
  the first subroutine 
  $\GetRevealingEdges$ (see Figure~\ref{fig:reveal-edges}) 
  uses it to find bichromatic edges along a large number of variables. % variables.
While it is expensive to run, we 
  present a second subroutine $\CheckRevealing$ (see Figure~\ref{fig:degree-test})  
  that can check whether a given point is revealing
  so we only run the first one when we are sure that the point is revealing.

%The first subroutine is cheap to run; the second subroutine is expensive but we only
%  run it when we are sure that the point is revealing.
  %  which efficiently check whether a vertex is likely to reveal many bi-chromatic edges, and obtains many bi-chromatic edges along different variables, respectively. 

\newcommand{\Reveal}{\textsc{Reveal}}

\begin{definition}
Let $\delta \in (0, 1]$ and $T = \{ T_j \}_{j\in [r]}$ be a collection of disjoint
  (but possibly empty)~subsets of $[n]$ for some $r\ge 1$. Given a point $x \in \{0,1\}^n$, we consider the set:
\[ \Reveal\hspace{0.04cm}(x, T, \delta) = \left\{j \in [r] : \Prx_{\bR \subseteq T_j}\left[ f(x) \neq f(x^{(\bR )}) \right] \geq \delta \right\}, \]
where $\bR \subseteq T_j$ is subset of $T_j$ drawn uniformly at random
  (i.e., each element of $T_j$ is included in $\bR$ with probability $1/2$ independently).
Moreover, for $\gamma \in (0, 1]$, we say $x$  is $(\gamma, \delta)$-\emph{revealing} for $T$ if \[ \big|\Reveal\hspace{0.04cm}(x, T,\delta)\big| \geq \gamma \cdot r. \]
\end{definition}

Next we present the first subroutine $\GetRevealingEdges$.
%design an algorithm which efficiently outputs many bi-chromatic edges when there are many revealing vertices. 
Given $x\in \{0,1\}^n$, $T=\{T_j\}_{j\in [r]},$ and $\delta\in (0,1]$, $\GetRevealingEdges\hspace{0.04cm}(f,x,T,\delta)$ (see Figure~\ref{fig:reveal-edges}) makes $\smash{\tilde{O}(r/\delta)}$ queries and 
  outputs a set of bichromatic edges which contains one bichromatic edge for each variable in $\Reveal\hspace{0.05cm}(x,T,\delta)$.
%outputs $\gamma r$ bi-chromatic edges along different directions. 

\begin{figure}[t!]
\begin{framed}
\noindent Subroutine $\GetRevealingEdges\hspace{0.04cm}(f,x, T, \delta)$
\begin{flushleft}
\noindent {\bf Input:} Query access to $f \colon \{0,1\}^n \to \{0,1\}$, a point $x \in \{0,1\}^n$, a collection of disjoint subsets $T = \{ T_j \}_{j\in [r]}$ of $[n]$ for some $r\ge 1$, and $\delta \in (0, 1]$. \\
\noindent {\bf Output:} Two random sets $\bB$ and $\bQ$, where $\bB$ is a set of bichromatic edges of $f$ and $\bQ$ is a subset of the union of $T_j$'s.
\begin{itemize}
\item Initialize $\bB = \bQ = \emptyset$. 
%, query $f(x)$, and 
Repeat the following for each $j \in [r]$ for $ {\log^2 n}/{\delta}$ iterations each.
\begin{enumerate}
\item Sample $\bR \subseteq T_j$ uniformly at random, and let $\bpi$ be an arbitrary ordering on $\bR $.\vspace{0.06cm}
\item Run $\BinarySearch\hspace{0.04cm}(f, x, \bT , \bpi)$. If it outputs $\nil$, do nothing. If it outputs a bichromatic edge $\be$ along variable $\bi \in \bT $. Set
$\bB \leftarrow \bB \cup \{ \be\}$ and $\bQ \leftarrow \bQ \cup \{ \bi \}.$
\end{enumerate}
\item Output $\bB$ and $\bQ$.
\end{itemize}
\end{flushleft}\vskip -0.11in
\end{framed}\vspace{-0.2cm}
\caption{The subroutine $\GetRevealingEdges$.} 
%aims to obtain bi-chromatic edges given access to a point $x \in \{0,1\}^n$ and a collection $T = \{ T_m \}_{m=1}^r$ of subsets in $[n]$. If $x$ is $(\gamma, \delta)$-revealing with respect to $T$, with $\delta \geq \delta_0$, then the procedure will output many bi-chromatic edges with high probability.}
\label{fig:reveal-edges}
\end{figure}

%\begin{lemma}[Query complexity of \GetRevealingEdges]\label{lem:get-revealing-queries}
%Let $\delta_0 \in (0, 1]$, $x \in \{0,1\}^n$, and $T = \{ T_m \}_{m=1}^r$ be a collection of disjoint subsets of $[n]$. Then, $\GetRevealingEdges(x, T, \delta_0)$ makes at most $O(\frac{r \log^3 n}{\delta_0})$ queries to $f$.
%\end{lemma}

\begin{lemma}%[Correctness of \GetRevealingEdges~on $(\gamma,\delta)$-revealing vertices]
\label{lem:revealing}
Let $x\in \{0,1\}^n$, $\delta\in (0,1]$ and $T = \{ T_j \}_{j\in [r]}$ be a collection of $r$
  disjoint subsets of $[n]$ for some $r\ge 1$. 
Then, $\emph{\GetRevealingEdges}\hspace{0.04cm}(f,x, T, \delta)$ uses
  $\smash{\tilde{O}(r/\delta)}$ queries and always returns a set $\bQ$ of variables 
  and a set $\bB$ of bichromatic edges such that $\bQ$ is a subset of the union of $T_j$'s and 
  $\bB$ contains a bichromatic edge along each variable in $\bQ$.
Furthermore, $\bQ$ contains $\Reveal\hspace{0.04cm}(x,T,\delta)$
  with probability at least $1-1/\poly(n)$.
  %If $x \in \{0,1\}^n$ is $(\gamma,\delta)$-revealing with respect to $T$, then, for any $0 < \delta_0 \leq \delta$, the outputs of , $\bB$ and $\bQ$, satisfy $|\bQ| \geq \gamma \cdot r$ with probability at least $1 - \frac{1}{\poly(n)}$.
\end{lemma}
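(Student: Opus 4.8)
The plan is to dispatch the two deterministic assertions by inspection of the pseudocode (Figure~\ref{fig:reveal-edges}) together with the guarantee of $\BinarySearch$, and then to establish the probabilistic guarantee by a one-index concentration bound followed by a union bound.

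\textbf{Deterministic part.} First I would count queries: the outer loop ranges over the $r$ sets $T_j$, and for each it runs $\log^2 n/\delta$ iterations, each of which issues a single call to $\BinarySearch$ at a cost of $O(\log n)$ queries (the two evaluations $f(x)$ and $f(x^{(\bR)})$ performed inside $\BinarySearch$ are included in this count). Hence the total is $O(r\log^3 n/\delta)=\tilde O(r/\delta)$. For the "always returns'' claim, observe that whenever $\BinarySearch(f,x,\bR,\bpi)$ returns a pair $(\bi,\be)$ rather than $\nil$, the $\BinarySearch$ guarantee gives $\bi\in\bR\subseteq T_j$ and that $\be$ is a bichromatic edge along $\bi$; since $\bQ$ receives only such $\bi$'s and $\bB$ only the corresponding $\be$'s, at termination $\bQ\subseteq\bigcup_{j\in[r]}T_j$ and $\bB$ contains a bichromatic edge along every variable of $\bQ$, as required.

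\textbf{Probabilistic part.} Fix $j\in\Reveal(x,T,\delta)$; the goal is to show that with probability $1-1/\poly(n)$ at least one variable of $T_j$ is inserted into $\bQ$ (this is the content of ``$\bQ$ contains $\Reveal(x,T,\delta)$'', since the index $j$ is recorded in $\bQ$ only through the elements of $T_j$). By the definition of $\Reveal$, a uniformly random $\bR\subseteq T_j$ satisfies $f(x)\ne f(x^{(\bR)})$ with probability at least $\delta$; this event forces $\bR\ne\emptyset$ because $x^{(\emptyset)}=x$, so $\BinarySearch(f,x,\bR,\bpi)$ is well-defined and, when it occurs, must return a bichromatic edge along some variable $\bi\in\bR\subseteq T_j$, which is then added to $\bQ$. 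The $\log^2 n/\delta$ iterations for index $j$ use independent draws of $\bR$, so the probability that none of them adds a variable of $T_j$ is at most $(1-\delta)^{\log^2 n/\delta}\le e^{-\log^2 n}$. Since the nonempty sets among $T_1,\dots,T_r$ are disjoint subsets of $[n]$, there are at most $n$ indices in $\Reveal(x,T,\delta)$, so a union bound gives that $\bQ$ misses some $T_j$ with $j\in\Reveal(x,T,\delta)$ with probability at most $n\,e^{-\log^2 n}=1/\poly(n)$.

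\textbf{Main obstacle.} There is no genuinely hard step here; the only points needing care are (i) confirming that the event witnessing $j\in\Reveal(x,T,\delta)$ yields a \emph{nonempty} $\bR$, so that the call to $\BinarySearch$ is legal and necessarily outputs an edge (rather than $\nil$); and (ii) reconciling the mild type mismatch between $\Reveal(x,T,\delta)\subseteq[r]$ and $\bQ\subseteq\bigcup_j T_j$ by reading ``$\bQ$ contains $\Reveal(x,T,\delta)$'' as ``$\bQ\cap T_j\ne\emptyset$ for every $j\in\Reveal(x,T,\delta)$''. Beyond these, the argument is the routine counting and Chernoff-style bound described above.
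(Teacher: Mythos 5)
Your proposal is correct and follows essentially the same argument as the paper: per index $j\in\Reveal(x,T,\delta)$, the chance that all $\log^2 n/\delta$ independent draws of $\bR\subseteq T_j$ satisfy $f(x)=f(x^{(\bR)})$ is at most $(1-\delta)^{\log^2 n/\delta}\ll 1/\poly(n)$, a success triggers $\BinarySearch$ to output a bichromatic edge along a variable of $T_j$, and a union bound over the indices of $\Reveal(x,T,\delta)$ finishes the proof, with the deterministic claims read off the pseudocode. Your extra remarks (nonemptiness of $\bR$ on the success event, and reading ``contains'' via $\bQ\cap T_j\ne\emptyset$ using disjointness of the $T_j$'s) are harmless clarifications consistent with the paper's intent.
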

\begin{proof}
%We note that $|\Reveal(x, T,\delta)| \geq \gamma r$ since $x$ is $(\gamma,\delta)$-revealing. 
For each index $j \in \Reveal(x, T,\delta)$, the probability that
  none of the ${\log^2 n}/{\delta}$ many 
  random subsets $\bR$ of $T_j$ satisfies $f(x)\ne f(x^{(\bR)})$ is at most
%  iterations on line 2 of $\GetRevealingEdges(x, T, \delta_0)$, $\bT_m' \subset T_m$ always satisfies $f(x) = f(x^{(\bT_m')})$ is at most
\[ (1 - \delta)^{\frac{\log^2 n}{\delta}} \ll {1}/{\poly(n)},\]
When this happens
  $\BinarySearch\hspace{0.05cm}(f, x, \bR, \bpi)$ finds a bichromatic edge along a variable in $\bR$ which is also in $T_j$.
The lemma then follows from a union bound over all $j\in \Reveal(x,T,\delta)$.
%  , we have that with probability at least $1 - \frac{m}{\poly(n)}$, $|\bQ| \geq |\Reveal(x, T,\delta)| \geq \gamma \cdot r$.
\end{proof}

Next we describe the second subroutine $\CheckRevealing$ in Figure~\ref{fig:degree-test},
   which we use to 
  check whether a given point $x$ is $(\gamma,\delta)$-revealing with respect to $T$. 
  %The procedure, \CheckRevealing, is given in . 

\begin{figure}[t!]
\begin{framed}
\noindent Subroutine $\CheckRevealing\hspace{0.04cm}(f,x, T, \gamma, \delta)$
\begin{flushleft}
\noindent {\bf Input:} Query access to  $f \colon \{0,1\}^n \to \{0,1\}$, a point $x \in \{0,1\}^n$, a collection of disjoint subsets $T = \{ T_j \}_{j\in [r]}$ of $[n]$ for some $r\ge 1$, and $\gamma, \delta \in (0, 1]$. \\
 {\bf Output:} Either ``accept'' or ``reject.''

\begin{enumerate}
\item Initialize $\bb = 0$.%, and query $f(x)$.
\item Repeat the following steps for ${\log^2 n}/{\gamma}$ iterations:
\begin{itemize} 
%, sample  indices $\bm_1, \dots, \bm_t \sim [r]$ independently and uniformly at random. 
\item Sample an index $\bj\sim [r]$ uniformly at random and initialize $\bc= 0$.
\item Repeat the following ${\log^2 n}/{\delta}$ iterations:\vspace{0.06cm}
\begin{itemize}
\item Sample a subset $\bR \subseteq T_{\bj}$ uniformly and increment $\bc$
  if $f(x^{(\bR)}) \neq f(x)$. \vspace{0.1cm}
\end{itemize}
\item If $\bc \geq  {3\log^2 n}/{4}$, increment $\bb$.\vspace{0.05cm}
\end{itemize}
\item If $\bb \geq {3 \log^2 n}/{4}$, output ``accept;'' otherwise, output ``reject.''
\end{enumerate}
\end{flushleft}\vskip -0.11in
\end{framed}\vspace{-0.2cm}
\caption{The subroutine $\CheckRevealing$.}
% checks whether a vertex $x \in \{0,1\}^n$ is $(\gamma,\delta)$-revealing with respect to collection $T$ for $\gamma > \gamma_0$ and $\delta > \delta_0$.}
\label{fig:degree-test}
\end{figure}

\begin{lemma}\label{lem:check}%[Query complexity of $\CheckRevealing$]\label{lem:check-query}
Let $\gamma, \delta \in (0, 1]$, $x \in \{0,1\}^n$ and $T = \{ T_j \}_{j\in [r]}$ be a collection of disjoint subsets of~$[n]$ for some $r\ge 1$. Then, $\emph{\CheckRevealing}\hspace{0.04cm}(f,x, T, \gamma, \delta)$ makes $\smash{\tilde{O}(1/(\gamma \delta))}$ many queries and satisfies the following two conditions:
%\end{lemma}
%\begin{lemma}[Correctness of $\CheckRevealing$]
%Let $\gamma, \delta \in (0,1]$, $\delta_0 \in (0, \delta]$ and $\gamma_0 \in (0, \gamma]$, and $T = \{ T_m \}_{m=1}^r$ be a collection of disjoint subsets of $[n]$.
(1) If $x$ is $(\gamma, \delta)$-revealing with respect to $T$, then it outputs ``accept" with probability at least $1 - 1/{\poly(n)}$; (2)
If $x$ is not $({\gamma}/2, {\delta}/2)$-revealing with respect to $T$, then it outputs ``reject" with probability at least $1 - 1/{\poly(n)}$.
%\end{itemize}\end{flushleft}
\end{lemma}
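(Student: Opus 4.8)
The plan is a routine two-level Chernoff argument. For the query bound: the outer loop of $\CheckRevealing$ runs $\log^2 n/\gamma$ times, each outer iteration runs the inner loop $\log^2 n/\delta$ times, and each inner iteration makes a single fresh query $f(x^{(\bR)})$ (with $f(x)$ queried once and cached), for a total of $O(\log^4 n/(\gamma\delta))=\tilde O(1/(\gamma\delta))$ queries.

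For correctness, fix $x$ and for $j\in[r]$ write $p_j=\Pr_{\bR\subseteq T_j}[f(x)\neq f(x^{(\bR)})]$, so that $j\in\Reveal(x,T,\eta)$ iff $p_j\ge\eta$. I would first analyze a single outer iteration conditioned on the sampled index $\bj=j$: there $\bc$ is a sum of $\log^2 n/\delta$ independent $\mathrm{Bernoulli}(p_j)$ random variables with mean $p_j\cdot\log^2 n/\delta$. If $p_j\ge\delta$ this mean is at least $\log^2 n$, so a multiplicative Chernoff lower-tail bound gives $\Pr[\bc<(3/4)\log^2 n]\le\exp(-\Omega(\log^2 n))$; if $p_j<\delta/2$ this mean is below $(1/2)\log^2 n$, and a Chernoff upper-tail bound (the target $(3/4)\log^2 n$ is at least a $3/2$ factor above the mean, and much more when the mean is small) gives $\Pr[\bc\ge(3/4)\log^2 n]\le\exp(-\Omega(\log^2 n))$. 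In words, one inner test ``accepts'' every index in $\Reveal(x,T,\delta)$ and ``rejects'' every index outside $\Reveal(x,T,\delta/2)$, each except with probability $\exp(-\Omega(\log^2 n))=1/\poly(n)$.

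Finally I would combine over the $\log^2 n/\gamma$ outer iterations, which use independent randomness, so $\bb$ is a sum of independent indicators. In the completeness case ($x$ is $(\gamma,\delta)$-revealing) each outer iteration draws $\bj\in\Reveal(x,T,\delta)$ with probability at least $\gamma$ and then increments $\bb$ with probability at least $\gamma\,(1-1/\poly(n))$; hence $\Ex[\bb]\ge\log^2 n\,(1-1/\poly(n))$, and a Chernoff lower-tail bound gives $\Pr[\bb<(3/4)\log^2 n]\le\exp(-\Omega(\log^2 n))$, so the subroutine accepts with high probability. In the soundness case ($x$ is not $(\gamma/2,\delta/2)$-revealing) the probability that $\bj$ has $p_{\bj}\ge\delta/2$ is less than $\gamma/2$, and conditioned on $p_{\bj}<\delta/2$ the inner test passes with probability $1/\poly(n)$; hence each outer iteration increments $\bb$ with probability less than $\gamma/2+1/\poly(n)$, so $\Ex[\bb]<(1/2)\log^2 n+o(\log^2 n)$, and a Chernoff upper-tail bound gives $\Pr[\bb\ge(3/4)\log^2 n]\le\exp(-\Omega(\log^2 n))$, so the subroutine rejects with high probability.

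The argument presents no real obstacle; the part to get right is the bookkeeping of the three constant thresholds. The factor-$2$ slack between $(\gamma,\delta)$ and $(\gamma/2,\delta/2)$ is exactly what places the inner threshold $(3/4)\log^2 n$ strictly between the two possible conditional means of $\bc$ (at least $\log^2 n$ versus less than $(1/2)\log^2 n$), and the same threshold at the outer level strictly separates $\Ex[\bb]\gtrsim\log^2 n$ from $\Ex[\bb]\lesssim(1/2)\log^2 n$; all failure probabilities are $\exp(-\Omega(\log^2 n))$, which is $1/\poly(n)$ whenever $\gamma,\delta\ge1/\poly(n)$, as holds in every invocation of $\CheckRevealing$ in this paper.
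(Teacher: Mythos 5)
Your proof is correct and takes essentially the same route as the paper's: a two-level Chernoff analysis of the inner counter $\bc$ and the outer counter $\bb$ (the paper phrases completeness as a Chernoff bound on the number of outer iterations whose index lands in $\Reveal(x,T,\delta)$ followed by a union bound, whereas you apply Chernoff directly to $\bb$, but this is the same standard argument, and your soundness case matches what the paper leaves as ``follows similarly''). Your closing observation that the $\exp(-\Omega(\log^2 n))$ failure probabilities absorb the union bound only when $\gamma,\delta\ge 1/\poly(n)$ is a fair bookkeeping point that the paper's terse proof leaves implicit and that holds in every invocation.
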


\begin{proof}
Suppose that $x$ is $(\gamma, \delta)$-revealing.
It follows from Chernoff bound that with probability at least $1-1/\poly(n)$,
  the number of iterations in which the index $j\in [r]$ lies in $\Reveal(x,T,\delta)$
  is at least $3\log^2 n/4$.
Moreover, for every such index $j$, the counter $b$ is incremented with
  probability at least $1-1/\poly(n)$.
By a union bound, the subroutine accepts with probability $1-\poly(n)$.
%This part of the lemma follows from a union bound.

%, and that for some $j \in [t]$, $\bm_j \in \Reveal(x, T, \delta)$. Thus, in each of $\frac{\log^2 n}{\delta_0}$ iterations of step 2, the probability that $\bc_2$ is incremented in each iteration is at least $\delta$. By a Chernoff bound, one concludes that with probability $1 - \frac{1}{\poly(n)}$, whenever $\bm_j \in \Reveal(x, T, \delta)$, $\bc_2 \geq\frac{3 \log^2 n}{4}$ so that $\bc_1$ is incremented. Additionally, an index $\bm_j \sim [r]$ has $\bm_j \in \Reveal(x, T, \delta)$ with probability at least $\gamma$, and by a Chernoff bound again, $\bc_1 \geq \frac{3\log^2 n}{4}$ with probability at least $1 - \frac{1}{\poly(n)}$, and the algorithm accepts.

Suppose that $x$ is not $({\gamma}/{2},{\delta}/{2})$-revealing. Then we have that $|\Reveal(x, T, {\delta}/{2})| \leq  {\gamma r}/{2}$. The claim follows similarly by applying Chernoff bounds and then a union bound.
\end{proof}

We can combine Lemma~\ref{lem:revealing} and \ref{lem:check} to conclude that when
  $ {\CheckRevealing}\hspace{0.05cm}(x, T, \gamma, \delta)$ returns ``accept,''
  it is safe to run ${\GetRevealingEdges}\hspace{0.05cm}(x, T, {\delta}/{2})$
  and we should expect the latter to find at least $\gamma r/2$ many bichromatic edges along
  different variables from the union of $T_j$'s.

%\begin{corollary}\label{lem:check-and-reveal}
%Let $\gamma, \delta \in (0, 1]$, $x \in \{0,1\}^n$, and $T = \{ T_j \}_{j\in [r]}$ be a sequence of disjoint subsets of $[n]$. The probability of  and at the same time  outputting less than   ${\gamma r}/{2}$ bichromatic edges in different directions is at most ${1}/{\poly(n)}$, 
%\end{corollary}

\newcommand{\FindRevealing}{\texttt{Find-Revealing}}

\subsection{The $\FindRevealing$ procedure}

Recall that we are given an 
  $S \subseteq [n]$ and two parameters $m$ and $\alpha$, and we 
  are interested in the case when $S$ contains a hidden set $H\subseteq S$
  that satisfies (\ref{cond:many}).%and $\Inf_f[h]\ge \alpha$ for every $h\in H$. 

Assuming this is the case, we prove in Corollary \ref{coro5} that 
  there exist many 
  points $z$ such that, with probability $\Omega(1)$ over the draw of a
  random partition $\bT$ of $S$, $z$ is $(\Omega(1),\Omega(1))$-revealing for $\bT$.
We then present a procedure called $\FindRevealing$ that takes
  advantage of this property to find bichromatic edges along 
  many different variables in $S$.
Note that in establishing Corollary \ref{coro5}, we assume that 
  there is a subset $H$ of $S$ that satisfies (\ref{cond:many}), and $H$ is used in the analysis (and known to us in the proofs of these lemmas), even though $H$ is  hidden from the procedure $\FindRevealing$.
%In contrast, 
%thIn the procedure $\FindRevealing$, on the other hand, $H$ is hidden and 
%  our algorithm can only 

We start by using the set $H$ to define the following 
  bipartite graph $G_0= (U_0, V_0, E_0)$ consisting of bichromatic edges of $f$ along variables in $H$: 
\begin{itemize}
\item $U_0 \subseteq \{0,1\}^n$ is the set of left vertices that contain all $x\in \{0,1\}^n$ with $f(x) = 0$.
\item $V_0 \subseteq \{0,1\}^n$ is the set of right vertices that contain all $x\in\{0,1\}^n$ with $f(x) = 1$.
\item $E_0$ is the set of bichromatic edges of $f$ connecting vertices 
  between $U_0$ and $V_0$ along\\ variables in $H$. 
Given that $\Inf_f[i]\ge \alpha$, we have $|E_0| \geq ( \alpha |H|/2) \cdot 2^n
\ge (\alpha m/2)\cdot 2^n$. 
\end{itemize}
Recall the definition of left-$d$-good and
  right-$d$-good bipartite graphs.
The next lemma, similar to Lemma~6.5 of~\cite{KMS15}, shows the existence of an induced subgraph of $G_0$ 
  that has roughly the same number of edges as $G_0$ but is either left-$d$-good or right-$d$-good.
%where every vertex has roughly the same degree.

\begin{lemma}\label{lem:reg-graph}
There exist a positive integer $d\le |H|$ that is a power of $2$  as well as subsets $U \subseteq U_0$ and $V \subseteq V_0$ such that the subgraph $G=(U,V,E)$ of $G_0$ induced by vertices $(U,V)$ satisfies:
\begin{itemize}
\item $G$ is either left-$d$-good or right-$d$-good (letting $\sigma=|U|/2^n$ or $|V|/2^n$ accordingly), and\vspace{-0.1cm}
\item $\sigma$ and $d$ satisfy $\sigma d\ge \alpha m/(6\log n)$.\vspace{0.1cm}
%\[ \sigma \gsim \frac{\alpha \cdot \ell}{d \cdot \log n}. \]
\end{itemize}
\end{lemma}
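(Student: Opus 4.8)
The plan is to mimic the bucketing argument of Lemma~6.5 of \cite{KMS15}, which converts an arbitrary bipartite graph into an almost-regular induced subgraph of comparable size by partitioning one side according to the dyadic scale of the vertex degrees. I would first look at the graph $G_0 = (U_0, V_0, E_0)$ and decide which side to regularize. Since $|E_0| \ge (\alpha m / 2)\cdot 2^n$ and every vertex has degree at most $|H| \le n$, both sides have average degree at least $\alpha m / 2$ (divided by the appropriate side size). Without loss of generality regularize the left side $U_0$; the right-side case is symmetric and yields a right-$d$-good graph instead. For each $j$ with $1 \le 2^j \le |H|$, let $U_0^{(j)} = \{ x \in U_0 : \deg_{G_0}(x) \in [2^j : 2^{j+1}) \}$; these $O(\log n)$ buckets partition the nonisolated left vertices. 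The edges of $G_0$ incident to $U_0^{(j)}$ number between $2^j |U_0^{(j)}|$ and $2^{j+1}|U_0^{(j)}|$, and summing over $j$ recovers all of $E_0$, so by averaging there is a bucket index $j^\star$ with at least $|E_0| / \lceil \log |H| \rceil \ge (\alpha m / 2)\cdot 2^n / \log n$ edges incident to $U_0^{(j^\star)}$.

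Set $d = 2^{j^\star}$, let $U = U_0^{(j^\star)}$, and let $V = V_0$ (or, to be safe, the set of right vertices with at least one neighbor in $U$), and let $G = (U, V, E)$ be the induced subgraph of $G_0$ on $(U, V)$. By construction every $x \in U$ has degree in $[d : 2d)$ in $G_0$; because we only deleted vertices, its degree in $G$ is also at most $2d$, but we need the lower bound $\deg_G(x) \ge d$ as well. Deleting left vertices does not change left degrees, and we keep all right vertices incident to $U$, so in fact $\deg_G(x) = \deg_{G_0}(x) \in [d : 2d]$ for every $x \in U$ — this is exactly the left-$d$-good condition. The right-degree bound ``$\deg_G(y) \le 2d$'' is the one point that does not come for free: a priori a right vertex could have many neighbors in $U$. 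Here I would invoke the standard trick from \cite{KMS15}: greedily discard from $V$ (and the incident edges) any vertex of degree exceeding $2d$; since each such deletion removes more than $2d$ edges while the remaining graph still has all left degrees at least $d$ on a set of size $|U|$, one checks that this process removes at most half the edges (or one re-runs the averaging to absorb the constant), leaving $|E| = \Omega(d |U|)$ and hence $\sigma d = |U| d / 2^n = \Omega(\alpha m / \log n)$. Tracking the constants so that the bound reads $\sigma d \ge \alpha m / (6 \log n)$ is a matter of being slightly more careful in the averaging step (e.g.\ noting $\lceil \log|H|\rceil \le \log n$ for $n$ large and that the right-degree pruning costs at most a factor of $2$, or better, using that a right-$d$-good reduction is available whenever the left-pruning is too lossy).

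The main obstacle is precisely this right-degree pruning step and the bookkeeping of the resulting constant $1/(6\log n)$: one must argue that after fixing the left-degree scale, the vertices on the other side can also be made bounded-degree without destroying more than a constant fraction of the edges. The clean way to handle this, and the way I would present it, is to appeal directly to Lemma~6.5 of \cite{KMS15} as a black box applied to $G_0$ (which guarantees an induced subgraph that is left- or right-$d$-good with $\sigma^2 d$ — or here $\sigma d$, since we only need one factor of $\sigma$ — comparable to the edge density), and then simply translate the conclusion into the normalization $\sigma d \ge \alpha m / (6 \log n)$ using $|E_0| \ge (\alpha m/2) 2^n$ and the fact that all degrees are at most $n$. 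Everything else is routine dyadic bucketing.
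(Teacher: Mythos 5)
There is a genuine gap in the step you yourself flag as the main obstacle: the right-degree pruning. After you fix the bucket $U = U_0^{(j^\star)}$ and keep all of $V_0$, greedily discarding right vertices of degree exceeding $2d$ can destroy essentially \emph{all} of the edges, not at most half: nothing prevents every edge incident to $U$ from being concentrated on a few right vertices of degree up to $|H|$ (e.g.\ take $d=1$, each $x \in U$ of degree exactly $1$, and the edges grouped into stars of size $n$ around right vertices -- then every right vertex violates the bound and the pruning deletes everything; the correct output in that instance is a \emph{right}-$d$-good graph with $d \approx n$, which your left-bucketing at scale $d=1$ never finds). Moreover, your parenthetical claim that after these deletions "the remaining graph still has all left degrees at least $d$" is false: deleting right vertices removes edges at their left endpoints, so the lower bound $\deg_G(x) \ge d$, which must hold \emph{in the induced subgraph}, is not preserved. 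Your closing remark that "a right-$d$-good reduction is available whenever the left-pruning is too lossy" is exactly the missing content, not a routine patch: making it precise is what the paper's proof does, via a simultaneous two-sided peeling. The paper iterates over dyadic thresholds $d = 2^k, 2^{k-1}, \dots, 1$ from large to small, at each step deleting from \emph{both} sides all vertices of current degree at least $d$ whenever neither side has at least $\frac{\alpha m}{6 d \log n} \cdot 2^n$ such vertices, and terminating otherwise; the invariant that every surviving vertex has degree at most $2d$ (inherited from the previous threshold) gives left- or right-$d$-goodness of the induced subgraph at termination, and a count of at most $(\alpha m/3)\cdot 2^n$ deleted edges over the at most $\log n$ rounds, against $|E_0| \ge (\alpha m/2)\cdot 2^n$, shows the process must terminate.

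Your fallback of invoking Lemma~6.5 of \cite{KMS15} as a black box does not repair this: the present lemma, with the specific normalization $\sigma d \ge \alpha m/(6\log n)$ for this particular graph $G_0$ of bichromatic edges along $H$, is precisely the statement to be proved, and the paper notes it is only "similar to" the KMS15 lemma and reproves it from scratch rather than citing it. So as written the proposal does not establish Lemma~\ref{lem:reg-graph}; the two-sided iterative peeling (or an equivalent argument that treats both sides symmetrically across all dyadic scales) is the idea that is missing.
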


\begin{proof}
Note that $G_0$ has degree at most $|H|$.
Consider the following procedure:\vspace{0.1cm}
\begin{flushleft}\begin{itemize}
\item Iterate through $d = 2^k, 2^{k-1},\ldots,1$, where $2^k$ is the largest power of $2$ that is at most $|H|$, while maintaining a set of vertices $D$ (as vertices deleted so far) which is initially empty:
\begin{enumerate}
\item Consider the subgraph $G'$ of $G_0$ induced by $(U_0\setminus D,V_0\setminus D)$ and let
\[ U^* = \big\{ x \in U_0 \setminus D: \deg_{G'}(x) \geq d \big\} \quad\text{and}\quad V^* = \big\{ y \in V_0 \setminus D : \deg_{G'}(y) \geq d\big\}. 
\] 
%where $\deg_{G'}(x)$ denotes the degree of $x$ in $G'$.
\item Terminate if either $U^*$ or $V^*$ has size at least $$\frac{\alpha m}{6d \log n} \cdot 2^n.$$ 
 
\item Otherwise, we update $D \leftarrow D \cup U^* \cup V^*$ (i.e., delete all vertices in
  $U^*$ and $V^*$). \vspace{0.1cm}
\end{enumerate}
\end{itemize}\end{flushleft}
By induction, we have that at the beginning of each round, every 
  vertex in $G'$ has degree at most $2d$.
As a result, if the procedure terminates during one of the iterations,
  we have that the subgraph of $G_0$ induced by either $(U^*,V_0\setminus D)$ or 
  $(U_0\setminus D,V^*)$ satisfies both properties and we are done.

So it suffices to show that the procedure terminates.
Assume for contradiction that it does not terminate.
Then the number of edges deleted (i.e., those adjacent to vertices in $U^*\cup V^*$)
  during each iteration is at most $(\alpha m/(3\log n))\cdot 2^n$.
Since there are no more than $\log n$ iterations, the total number
  of edges deleted is at most $(\alpha m/3)\cdot 2^n$.
This contradicts with the fact that all edges of $H_0$ will be deleted at the end
  if the procedure does not terminate, and that $|E_0|\ge (\alpha m/2)\cdot 2^n$.
%We note that at iteration corresponding to $d$, every $x \in L_0 \setminus V$ has $|N(x)| \leq 2d$ and every $y \in R_0 \setminus V$ has $|N(y)| \leq 2d$. Therefore, at all iterations before $d = 1$, the number of edges adjacent to vertices in $V$ is always at most $\frac{\alpha \cdot \ell}{25} \cdot 2^n$, since there are at most $\log n$ iterations and each iteration adds $\frac{\alpha \cdot \ell}{50 d \log n} \cdot 2^n$ vertices to $V$, each of degree at most $2d$. Since initially there are $\frac{\alpha \cdot \ell}{2} \cdot 2^n$ edges in $E_0$, the number of bi-chromatic edges remaining is at least $(\frac{1}{2} - \frac{1}{25})\alpha \ell \cdot 2^n$ and each vertex in $L$ or $R$ has degree at most 2, which means that $|L|, |R| \geq \frac{\alpha \ell}{100 d \cdot \log n}$. 
\end{proof}

Consider such a subgraph $G = (U, V, E)$ of $G_0$ with parameters 
  $d$ and $\sigma$, given by Lemma~\ref{lem:reg-graph}.
%We would like to remind the reader that the algorithm
%  does not know $H$ and thus, does not know $G$ or the two parameters 
%  $d$ and $\sigma$.
%What we have proved so far is that, if $S$ contains a subset $H$ that satisfies 
%  (\ref{cond:many}), then there exists such a graph $G$
%  which our algorithm will take advantage of 
We assume without loss of generality $G$ is left-$d$-good since the proof
  of Corollary \ref{coro5} 
  is symmetric  when $G$ is right-$d$-good by considering $f'$  given by $f'(x) = f(x)\oplus 1$.
For each $z\in U\cup V$, we define
$$
N(z)=\big\{i\in H:
(x, x^{(i)})\in E\big\}.  
$$
Thus, $|N(z)|=\deg_G(z)$.
%(Note that the procedure we will describe does not know $H,G_0,d,\sigma$;
%  it only knows that given the promise that every variable in $H$ has high influence,
%  there exists such an $H_0$ with $d$ and $\sigma$ and in addition, $d$ is a power of $2$.)
%that every vertex $x \in L$ has between $d$ and $2d$ bi-chromatic edges along variables in $S \cap I$, and that every vertex $y \in R$ has at most $2d$ bi-chromatic edges along variables in $S \cap I$. Note that this is without loss of generality since we may consider the function $f' \colon \{0,1\}^n \to \{0,1\}$ given by $f'(x) = 1 - f(x)$, where the identities of $L$ and $R$ are switched.
Corollary \ref{coro5} follows from two technical lemmas (Lemma \ref{lem:good-points}
  and Lemma \ref{lem:Lleftlarge}).
Before stating them we need the following definition.

\begin{definition}
Let $\gamma_0=\delta_0=\eta_0=0.1$ be three constants and $r=100\hspace{0.03cm}d$. 
We say that a point $y\in V$ is \emph{good for $r$-partitions} of $S$ if with probability at least $\eta_0$ over a uniformly random $r$-partition $\bT =$ $ \{ \bT_j \}_{j\in [r]}$ of $S \setminus N(y)$, $y$ is $(\gamma_0, \delta_0)$-revealing with respect to $\bT$.
(To be more formal, such a partition $\bT$ is drawn by first sampling a map $\bg$ from $S\setminus N(y)$
  to $[r]$ uniformly at random, i.e., each element is mapped to an index in $[r]$
  uniformly and independently, and then setting $\bT_j=\bg^{-1}(j)$.)

We use $\Good(S)$ to denote the set of points $y\in V$ that are good
  for $r$-partitions of $S$.
%\[ \Good(S) = \big\{ y \in V : y \text{ is $(\gamma, \delta, \eta)$-good}\big\}. \]
\end{definition}

%For the remainder of the section, we consider setting the following parameters:
%\[ \gamma = \delta = \eta = \frac{1}{10} \qquad\text{and}\qquad r = 100 d.  \]

We %will divide the analysis into two cases:
%$|\Good(S)| \geq ({\sigma}/100)\cdot 2^n$ and
%$|\Good(S)| \leq ({\sigma}/100)\cdot 2^n$,
%  and prove the following two lemmas. 
  delay the proof of the following two technical lemmas.

\begin{lemma}\label{lem:good-points}
For each point $y \in \Good(S)$, with probability at least ${\eta_0}/{2}$ over the draw of a random $r$-partition $\bT$ of $S$, $y$ is $\left( {\gamma_0}/{2}, {\delta_0}/{2}\right)$-revealing for $\bT$.
\end{lemma}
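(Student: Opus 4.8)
The plan is to couple a uniformly random $r$-partition of $S$ with a uniformly random $r$-partition of $S\setminus N(y)$, and to exploit the fact that $N(y)$ is tiny: since we have assumed (without loss of generality) that $G$ is left-$d$-good, the right-vertex $y\in V$ has $|N(y)|=\deg_G(y)\le 2d$, whereas $r=100d$.

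First I would draw the random $r$-partition $\bT=\{\bT_j\}_{j\in[r]}$ of $S$ in the usual way, by sampling a uniformly random map $\bg\colon S\to[r]$ and setting $\bT_j=\bg^{-1}(j)$. The restriction $\bg|_{S\setminus N(y)}$ is then a uniformly random map from $S\setminus N(y)$ to $[r]$, so the partition $\bT'=\{\bT'_j\}_{j\in[r]}$ with $\bT'_j=\bg^{-1}(j)\cap(S\setminus N(y))$ is distributed exactly as a uniformly random $r$-partition of $S\setminus N(y)$. Hence, by the hypothesis $y\in\Good(S)$, with probability at least $\eta_0$ over the draw of $\bg$ the point $y$ is $(\gamma_0,\delta_0)$-revealing for $\bT'$, i.e.\ $|\Reveal(y,\bT',\delta_0)|\ge\gamma_0 r$.

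Next I would show that this event implies, deterministically for every realization of $\bg$, that $y$ is $(\gamma_0/2,\delta_0/2)$-revealing for $\bT$. Call an index $j\in[r]$ \emph{clean} if $\bT_j\cap N(y)=\emptyset$, so that $\bT_j=\bT'_j$; since each of the at most $2d$ variables of $N(y)$ lies in exactly one part, at most $2d$ indices fail to be clean. For every clean $j\in\Reveal(y,\bT',\delta_0)$ we have $\Prx_{\bR\subseteq\bT_j}[f(y)\ne f(y^{(\bR)})]=\Prx_{\bR\subseteq\bT'_j}[f(y)\ne f(y^{(\bR)})]\ge\delta_0\ge\delta_0/2$, so $j\in\Reveal(y,\bT,\delta_0/2)$. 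Therefore $|\Reveal(y,\bT,\delta_0/2)|\ge|\Reveal(y,\bT',\delta_0)|-2d\ge\gamma_0 r-2d=8d\ge 5d=(\gamma_0/2)r$, using $\gamma_0=0.1$ and $r=100d$. Combining the two steps, $y$ is $(\gamma_0/2,\delta_0/2)$-revealing for $\bT$ with probability at least $\eta_0>\eta_0/2$.

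There is no genuinely hard step here; the two points that need care are (i) checking that restricting the random map defining $\bT$ produces exactly the distribution over $\bT'$ used in the definition of $\Good(S)$, so that the hypothesis applies verbatim, and (ii) the bookkeeping that the number of ``dirty'' parts, at most $|N(y)|\le 2d$, is swamped by $\gamma_0 r=10d$ — which is precisely why $r$ was set to $100d$ in the definition of good points. (When $G$ is right-$d$-good one replaces $f$ by $f\oplus 1$ as already noted, after which the bound $|N(y)|\le 2d$ again holds.)
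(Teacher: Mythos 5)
Your proof is correct. It rests on the same basic decomposition as the paper's: draw the partition of $S$ via a uniform map $\bg\colon S\to[r]$, observe that its restriction to $S\setminus N(y)$ induces exactly the distribution used in the definition of $\Good(S)$, invoke goodness of $y$ there, and then argue that the few parts touched by $N(y)$ cannot destroy the revealing property, using $|N(y)|\le 2d$ (left-$d$-goodness) against $r=100d$. Where you differ is in how the ``dirty'' parts are handled, and your accounting is simpler and in fact gives a stronger conclusion. The paper keeps parts containing exactly one element of $N(y)$: for such a part it conditions on the random subset $\bR$ missing that element (probability $1/2$), which is exactly where the degradation from $\delta_0$ to $\delta_0/2$ comes from, and it then needs a probabilistic step (each revealing index receives two or more elements of $N(y)$ with probability at most $\binom{2d}{2}/r^2$, plus a Markov-type argument) to show that at most half of the revealing indices are lost; this costs a factor $1-1/2500$ and is why the statement has $\eta_0/2$. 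You instead discard every part meeting $N(y)$ — at most $2d$ indices — and note $\gamma_0 r-2d=8d\ge 5d=(\gamma_0/2)r$, so once the goodness event for the induced partition of $S\setminus N(y)$ occurs the conclusion holds deterministically, with threshold $\delta_0$ (not merely $\delta_0/2$) on the surviving indices and with probability $\eta_0$ rather than $\eta_0/2$. Given the constants $\gamma_0=0.1$, $r=100d$ this slack is available, so your shortcut is legitimate; the paper's extra machinery would only matter if $|N(y)|$ could be comparable to $\gamma_0 r$, which it cannot here.
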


\begin{lemma}\label{lem:Lleftlarge}
If $|\Good(S)| \leq  ({\sigma}/{100}) \cdot 2^n$, there exist $\sigma 2^n/2$
  points $x \in U$ such that with~probability at least $ {1}/{3}$ over the draw of a random $r$-partition $\bT$ of $S$, $x$ is $(1/4000, ({1-\delta_0})/{2})$-revealing for $\bT$.%with $$|Y(x) \cap \Good(S)| \leq \frac{3}{4} \cdot |Y(x)|$$. 
\end{lemma}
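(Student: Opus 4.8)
Here is my plan. I will argue by contradiction: assuming fewer than $\sigma 2^n/2$ points of $U$ have the stated revealing property, I will deduce $|\Good(S)|>(\sigma/100)\cdot 2^n$. Throughout I use that $G$ is left-$d$-good (the right-$d$-good case being symmetric as noted before the statement), so every $x\in U$ has $|N(x)|\ge d$ while every $y\in V$ has $|N(y)|\le 2d$, and that $\sigma 2^n=|U|$; recall $r=100d$ and $\gamma_0=\delta_0=\eta_0=\tfrac1{10}$. For $z\in\{0,1\}^n$ and $R\subseteq[n]$ write $\mathrm{sens}_z(R):=\Prx_{\bR'\subseteq R}[f(z)\ne f(z^{(\bR')})]$.

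The technical core is a \emph{per-part identity}. Fix an edge $(x,y)\in E$ along a variable $i$ (so $f(x)=0$, $f(y)=1$, $y=x^{(i)}$), a partition $\bT$ of $S$, and let $\bT_j$ be the part containing $i$. Conditioning a uniform $\bR\subseteq\bT_j$ on whether $i\in\bR$ and using $x^{(i)}=y$ gives
\[
\Prx_{\bR\subseteq \bT_j}\big[f(x)\ne f(x^{(\bR)})\big]=\tfrac12+\tfrac12\Big(\mathrm{sens}_x(\bT_j\setminus i)-\mathrm{sens}_y(\bT_j\setminus i)\Big).
\]
Hence $j\in\Reveal(x,\bT,(1-\delta_0)/2)$ whenever $\mathrm{sens}_y(\bT_j\setminus i)\le\delta_0$; and if moreover $\bT_j\cap N(y)=\{i\}$, then $\bT_j\setminus i$ equals the part corresponding to $j$ in the induced $r$-partition $\bT'$ of $S\setminus N(y)$, so $\mathrm{sens}_y(\bT_j\setminus i)\le\delta_0$ is equivalent to $j\notin\Reveal(y,\bT',\delta_0)$. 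Thus if $\bT_j\cap N(y)=\{i\}$ and $j\notin\Reveal(y,\bT',\delta_0)$, then $j\in\Reveal(x,\bT,(1-\delta_0)/2)$.

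Next I would bound, for a \emph{non-good} $y$ and $i\in N(y)$ with $x=y^{(i)}$, the chance this fails. Since the part-index $\bg(i)$ of $i$ is uniform on $[r]$ and independent of $\bT'$ (as $i\notin S\setminus N(y)$) and the parts are exchangeable, $\Prx_{\bT}[\bg(i)\in\Reveal(y,\bT',\delta_0)]=\Ex_{\bT}\big[|\Reveal(y,\bT',\delta_0)|\big]/r\le\eta_0+\gamma_0=\tfrac15$, where the last step uses that non-goodness forces $\Pr[\,|\Reveal(y,\bT',\delta_0)|\ge\gamma_0 r\,]<\eta_0$ (and $|\Reveal|\le r$ always). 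Also $\Prx_{\bT}[\bT_{\bg(i)}\cap N(y)\ne\{i\}]\le(|N(y)|-1)/r\le\tfrac1{50}$ and likewise $\Prx_{\bT}[\bT_{\bg(i)}\cap N(x)\ne\{i\}]\le\tfrac1{50}$. Combining with the previous paragraph, for a non-good $y=x^{(i)}$, with probability at least $1-\tfrac15-\tfrac1{50}-\tfrac1{50}>\tfrac34$ over $\bT$ we have $\bg(i)\in\Reveal(x,\bT,(1-\delta_0)/2)$ \emph{and} $\bT_{\bg(i)}\cap N(x)=\{i\}$; the second clause guarantees that distinct such $i$'s contribute distinct indices to $\Reveal(x,\bT,(1-\delta_0)/2)$.

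Finally, the counting step. The number of edges of $E$ incident to $\Good(S)$ is at most $2d\,|\Good(S)|\le 2d|U|/100$, whereas $|E|\ge d|U|$; by Markov at least $0.96|U|$ points $x\in U$ have at least $d/2$ non-good neighbors. Fix such an $x$ and let $Z_x$ count its non-good neighbors $i$ for which the good event above holds; then $\Ex[Z_x]\ge\tfrac34\cdot\tfrac d2$ and $0\le Z_x\le 2d$, so a reverse-Markov argument gives $\Prx_{\bT}[Z_x\ge d/20]\ge\tfrac23>\tfrac13$, and on that event $|\Reveal(x,\bT,(1-\delta_0)/2)|\ge d/20\ge r/4000$. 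Hence at least $0.96|U|>|U|/2=\sigma 2^n/2$ points of $U$ are $(1/4000,(1-\delta_0)/2)$-revealing for a random $r$-partition of $S$ with probability at least $1/3$, contradicting the assumption. I expect the main obstacle to be the bookkeeping around the two partitions: relating a uniform $r$-partition of $S$ to the induced one of $S\setminus N(y)$, handling the low-probability parts containing two elements of $N(x)$ or $N(y)$, and verifying that the absolute constants ($r=100d$, the $\tfrac1{10}$'s, the $\tfrac1{4000}$) leave enough slack for the reverse-Markov estimate.
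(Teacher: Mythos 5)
Your proposal is correct and follows essentially the same route as the paper's proof: the contrapositive framing, the per-edge event that the part containing $i$ meets $N(x)\cup N(y)$ only in $\{i\}$ and is non-sensitive for the non-good endpoint $y$ (the paper's set $A(T)$), the failure bound $\approx \eta_0+\gamma_0+O(d/r)$ per neighbor, and the final averaging over partitions are all the same ingredients, with your per-part identity and edge-counting over $U$ being cosmetic variants of the paper's argument. One small repair is needed in the last step: running reverse-Markov against the stated bound $Z_x\le 2d$ with threshold $d/20$ only yields probability $13/80<1/3$, so you should instead normalize by the number $M\ge d/2$ of non-good neighbors (using $Z_x\le M$ and $\Ex[Z_x]\ge \tfrac34 M$, threshold $M/10\ge d/20$), exactly as the paper normalizes by $|Y(x)\setminus\Good(S)|$, which gives probability at least $(3/4-1/10)/(9/10)>2/3$ and completes the argument.
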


\begin{figure}[t!]
\begin{framed}
\noindent Procedure $\FindRevealing\hspace{0.05cm}(f,S, m,\alpha)$
\begin{flushleft}
\noindent {\bf Input:} Query access to $f \colon \{0,1\}^n \to \{0,1\}$, $S\subseteq [n]$,
  $m\ge 1$ and $\alpha\in (0,1]$. %, with the promise
%  that $S$ contains a hidden set $H$ with $|H|\ge m$ such that every $i\in H$
%  satisfies $\Inf_f[i]$. %,  with a hidden set $I \subset [n]$ where every $i \in I$ satisfies $\Inf_i(f) \geq \alpha$, a subset $S \subset [n]$ where $|S| = s$ and $|S \cap I| \geq m$, and $\gamma_0, \delta_0, \eta_0 \in (0, 1]$.
  
\noindent {\bf Output:} Either a set of bichromatic edges $\bB$ 
  and a subset $\bQ$ of $S$, or ``fail.''
%of $f$ whose directions form a subset $\bQ \subset S$, or ``fail''.
\begin{flushleft}\begin{itemize}
\item Let $\eta_1=\min\big(\eta_0/2,1/3\big)$, $\gamma_1=\min\big(\gamma_0/2,1/4000\big)$ and 
  $\delta_1=\min\big(\delta_0/2,(1-\delta_0)/2\big)$.
\item Repeat the following for all $d^*$ being a power of $2$ less than $|S|$ from small to large:
\begin{enumerate}
\item Let $r^*=100d^*$ and $\sigma^* = \alpha m/(6d^* \log n)$.
Skip this iteration if $\sigma^*>1$.\vspace{0.05cm}
\item Sample $t= {\log^2 n}/{\sigma^*}$ points $\bx_1, \dots, \bx_t$ from $\{0,1\}^n$ uniformly at random.\vspace{0.05cm}
\item For each point $\bx_j$, for ${\log^2 n}/{\eta_1}$ times, let $\bT$ be an $r^*$-partition of $S$ drawn uniformly at random, and run $\CheckRevealing\hspace{0.05cm}(f,\bx_j, \bT, \gamma_1, \delta_1)$.\vspace{0.08cm}
\begin{itemize}
\item If it outputs ``reject,'' do nothing. \vspace{0.06cm}
\item If it outputs ``accept,'' run $\GetRevealingEdges\hspace{0.05cm}\left(f,\bx_j, \bT, {\delta_1}/{2}\right)$ to get $\bB$ and $\bQ$. If $|\bQ| < {\gamma_1 r^*}/{2}$, output ``fail'' and terminate; otherwise, output $(\bQ, \bB)$.\vspace{0.06cm}
\end{itemize}
\end{enumerate}
\item When this line is reached (i.e. all calls to $\CheckRevealing$ return ``fail'') output ``fail.''\vspace{0.1cm}
\end{itemize}\end{flushleft}
\end{flushleft}\vskip -0.14in
\end{framed}\vspace{-0.2cm}
\caption{The procedure $\FindRevealing$.}\label{fig:find-revealing}
\end{figure}

%Before proving these two lemmas, we describe our procedure 
%  $\FindRevealing$ in Figure \ref{fig:find-revealing}.
Using constants $\eta_1,\gamma_1$ and $\delta_1$ defined in Figure \ref{fig:find-revealing},
  we get the following corollary: % from Lemma \ref{lem:good-points} and \ref{lem:Lleftlarge}.
  
\begin{corollary}\label{coro5}
Assume that there is an $H\subseteq S$ that satisfies (\ref{cond:many}).
Then there are at least $\Omega(\sigma 2^n)$ many points $z\in \{0,1\}^n$ such that, 
  with probability at least $\eta_1$ over the draw of a random $r$-partition $\bT$ of $S$,
  $z$ is $(\gamma_1,\delta_1)$-revealing for $\bT$.
\end{corollary}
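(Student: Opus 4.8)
The plan is to obtain Corollary~\ref{coro5} as a short two-case argument assembled from Lemma~\ref{lem:reg-graph}, Lemma~\ref{lem:good-points} and Lemma~\ref{lem:Lleftlarge}; essentially all of the content lives in the two postponed technical lemmas, and here I only need to combine them and relax the constants.

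First I would apply Lemma~\ref{lem:reg-graph} to the graph $G_0$ built from the hidden set $H$ to get the induced subgraph $G = (U,V,E)$ together with a power of two $d$ and density $\sigma$ with $\sigma d \ge \alpha m/(6\log n)$, where $G$ is either left-$d$-good or right-$d$-good. I would reduce to the left-$d$-good case without loss of generality: replacing $f$ by $f'(x):=f(x)\oplus 1$ swaps the two sides of $G_0$ and turns a right-$d$-good graph into a left-$d$-good one, while the event ``$f(x)\ne f(x^{(R)})$'' — and hence the notion of being $(\gamma,\delta)$-revealing — is unchanged under this global complement. Fix $r=100d$ as in the definition preceding the two lemmas, and recall $\gamma_0=\delta_0=\eta_0=0.1$.

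Next I would split on $|\Good(S)|$. If $|\Good(S)| > (\sigma/100)\cdot 2^n$, then Lemma~\ref{lem:good-points} says that each of these $\Omega(\sigma 2^n)$ points $y\in\Good(S)$ is $(\gamma_0/2,\delta_0/2)$-revealing for a uniformly random $r$-partition $\bT$ of $S$ with probability at least $\eta_0/2$. If instead $|\Good(S)| \le (\sigma/100)\cdot 2^n$, then Lemma~\ref{lem:Lleftlarge} supplies $\sigma 2^n/2 = \Omega(\sigma 2^n)$ points $x\in U$ that are $(1/4000,(1-\delta_0)/2)$-revealing with probability at least $1/3$. In either case I have the required $\Omega(\sigma 2^n)$ points, and it remains to downgrade the parameters to the common triple $\gamma_1,\delta_1,\eta_1$ defined in Figure~\ref{fig:find-revealing}. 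This uses the obvious monotonicity of the revealing property: $\Reveal(z,T,\delta')\supseteq\Reveal(z,T,\delta)$ whenever $\delta'\le\delta$, so being $(\gamma,\delta)$-revealing implies being $(\gamma',\delta')$-revealing for all $\gamma'\le\gamma$ and $\delta'\le\delta$, and lowering the success-probability threshold can only help. Since $\gamma_1\le\min(\gamma_0/2,1/4000)$, $\delta_1\le\min(\delta_0/2,(1-\delta_0)/2)$ and $\eta_1\le\min(\eta_0/2,1/3)$, both branches yield points that are $(\gamma_1,\delta_1)$-revealing with probability at least $\eta_1$.

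The corollary itself presents no real difficulty; the ``hard part'' is the pair of lemmas it rests on, and in particular Lemma~\ref{lem:Lleftlarge}, where one must argue that if only a tiny fraction of right-vertices are good for $r$-partitions, then the left-$d$-good structure of $G$ forces a constant fraction of the left vertices to be revealing instead — this is where the anti-monotone bipartite structure is genuinely used.
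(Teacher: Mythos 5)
Your proposal is correct and matches the paper's own argument: the proof of Corollary~\ref{coro5} is exactly the two-case split on whether $|\Good(S)|$ exceeds $(\sigma/100)\cdot 2^n$, invoking Lemma~\ref{lem:good-points} in one case and Lemma~\ref{lem:Lleftlarge} in the other, with the setup via Lemma~\ref{lem:reg-graph} and the reduction to the left-$d$-good case done in the surrounding text just as you describe. Your explicit monotonicity step relaxing the parameters to $(\gamma_1,\delta_1,\eta_1)$ is left implicit in the paper but is the intended justification.
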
  
\begin{proof}
We consider two cases: $|\Good(S)| \geq ({\sigma}/100)\cdot 2^n$ and
  $|\Good(S)| \leq ({\sigma}/100)\cdot 2^n$.
Then the statement follows from Lemma \ref{lem:good-points} and Lemma \ref{lem:Lleftlarge}
 for these two cases respectively.
\end{proof}
  
The property stated in Corollary \ref{coro5}
  inspires our next procedure $\FindRevealing$ described in Figure \ref{fig:find-revealing}.
We use Corollary \ref{coro5} to prove the following lemma:  
  
\begin{lemma}\label{lem:many-good}
Let $f:\{0,1\}^n\rightarrow \{0,1\}$, $S\subseteq [n]$, $m\ge 1$,
  and $\alpha\in (0,1]$.
$\emph{\FindRevealing}\hspace{0.05cm}(f,S,m,\alpha)$ outputs either 
  ``fail'' or a pair $(\bQ,\bB)$ such that $\bQ\subseteq S$ is nonempty
  and $\bB$ contains one bichromatic edge for each variable in $\bQ$. 
%Let $\ell_0 \in \N$, $\gamma_0, \delta_0, \eta_0 \in (0, 1]$ satisfy:
%\[ \gamma_0 \leq \frac{\gamma}{2} \qquad \delta_0 \leq \frac{\delta}{2} \qquad \text{and} \qquad \eta_0 \leq \eta, \]
%and suppose that $|S \cap I| \geq \ell_0$. Let $d$ and $\sigma$ be the parameters obtained from Lemma~\ref{lem:reg-graph} with bi-chromatic edges along directions in $S \cap I$, and let $r = 100d$. If 
%If $|\Good(S)| \geq ({\sigma}/{100}) \cdot 2^n$,
The number of queries it uses can be bounded from above by
$$
\begin{cases}
%\[ \begin{array}{cl} 
\tilde{O}(|\bQ|)
  +\tilde{O}\big( |\bQ|\big/(\alpha m)\big) & \text{when it outputs a pair $(\bB,\bQ)$; and}\\[1 ex]
\tilde{O}(|S|)
  +\tilde{O}\big( {|S|}\big/{\alpha m}\big) & \text{when it outputs ``fail.''}  
%   $\FindRevealing(S, \ell_0, \gamma_0, \delta_0, \eta_0)$ outputs $(\bB, \bQ)$}\\
%O\left( \dfrac{s \cdot \log^9 n}{\ell_0 \cdot \alpha} \right) + O\left(s \log^3 n\right) \qquad&\text{if $\FindRevealing(S, \ell_0, \gamma_0, \delta_0, \eta_0)$ outputs ``fail''.}
%\end{array}
%\]
\end{cases}
$$
Furthermore, when $S$ contains a subset $H$ that satisfies (\ref{cond:many}), 
  %has worst-case query complexity $\smash{\tilde{O}(|S|)}$
  $\emph{\FindRevealing}\hspace{0.05cm}(f,S,m,\alpha)$  outputs~a pair $(\bB,\bQ)$ with probability at least $1-1/\poly(n)$.
%When this happens we have that
%  $\bQ\subseteq S$ is nonempty and $\bB$ contains a bichromatic edge along
%  each direction in $\bQ$.
%Moreover, the number of queries used by $\emph{\FindRevealing}$ is bounded from above as follows:  
%  , and the number of queries used is bounded by 
%\begin{equation}\label{querybound}
%\tilde{O}(|\bQ|)
%  +\tilde{O}\left(\frac{|\bQ|}{\alpha\ell_0}\right).
%\end{equation}
  %(S, \ell_0, \gamma_0, \delta_0, \eta_0)$ outputs $(\bB, \bQ)$, where $|\bB| = |\bQ| \geq \frac{\gamma_0 \cdot r}{2}$ with probability at least $1 - \frac{1}{\poly(n)}$.
\end{lemma}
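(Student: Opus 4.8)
The plan is to analyze the procedure $\FindRevealing$ by going through the loop over the values $d^*$ and showing that the correct value $d$ (from Lemma~\ref{lem:reg-graph}, applied to the hidden set $H$ satisfying (\ref{cond:many})) eventually gets hit, at which point the sampling succeeds with high probability. First I would verify the structural claims that don't depend on the hidden set: $\GetRevealingEdges$ always returns $\bQ \subseteq S$ (each $\bi$ found lies in some $\bT_j \subseteq S$) together with a set $\bB$ containing a bichromatic edge for each variable in $\bQ$ (this is immediate from the definition of $\BinarySearch$ in Figure~\ref{fig:binarysearch}), and that $\FindRevealing$ only outputs a non-``fail'' pair when $|\bQ| \geq \gamma_1 r^*/2 \geq 1$, so $\bQ$ is nonempty. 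This handles the first sentence of the lemma regardless of whether (\ref{cond:many}) holds.

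Next I would bound the query complexity. The dominant contributions in a single iteration with parameter $d^*$ are: (i) line~3 runs $\CheckRevealing$ at most $t \cdot (\log^2 n/\eta_1)$ times, each costing $\tilde O(1/(\gamma_1\delta_1)) = \tilde O(1)$ queries by Lemma~\ref{lem:check}, for a total of $\tilde O(t) = \tilde O(1/\sigma^*) = \tilde O(d^*/(\alpha m))$; and (ii) if some call accepts, one run of $\GetRevealingEdges$ costs $\tilde O(r^*/\delta_1) = \tilde O(r^*) = \tilde O(d^*)$. Summing the geometric series over all powers of two $d^* \le |S|$ that are not skipped gives $\tilde O(|S|) + \tilde O(|S|/(\alpha m))$ in the worst case (i.e., when the procedure outputs ``fail,'' having exhausted all $d^*$). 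When the procedure outputs a pair $(\bB,\bQ)$ at iteration $d^*$, the preceding iterations used $d' < d^*$ and hence total cost $\tilde O(d^*) + \tilde O(d^*/(\alpha m))$; since acceptance implies $|\bQ| \geq \gamma_1 r^*/2 = \Omega(d^*)$, we get $d^* = O(|\bQ|)$, yielding the claimed $\tilde O(|\bQ|) + \tilde O(|\bQ|/(\alpha m))$ bound.

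For the final (and main) claim, suppose $S$ contains $H$ satisfying (\ref{cond:many}). Apply Lemma~\ref{lem:reg-graph} to get the "good" value $d$ (a power of $2$, $d \le |H| \le |S|$) with $\sigma d \geq \alpha m/(6\log n)$, so that $\sigma \geq \alpha m/(6 d\log n) = \sigma^*$ when $d^* = d$; in particular $\sigma^* \le 1$ and the iteration $d^* = d$ is not skipped. By Corollary~\ref{coro5}, there are $\Omega(\sigma 2^n) \geq \Omega(\sigma^* 2^n)$ points $z$ that are $(\gamma_1,\delta_1)$-revealing for a random $r$-partition $\bT$ with probability at least $\eta_1$. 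Thus when $d^* = d$: with probability $1 - 1/\poly(n)$ at least one of the $t = \log^2 n/\sigma^*$ uniformly random samples $\bx_j$ lands in this $\Omega(\sigma^*)$-fraction (Chernoff/union bound); conditioned on that, across the $\log^2 n/\eta_1$ random partitions tried for $\bx_j$, with probability $1 - 1/\poly(n)$ some partition $\bT$ makes $\bx_j$ $(\gamma_1,\delta_1)$-revealing; by Lemma~\ref{lem:check}(1), $\CheckRevealing(f,\bx_j,\bT,\gamma_1,\delta_1)$ accepts with probability $1 - 1/\poly(n)$; and then $\GetRevealingEdges(f,\bx_j,\bT,\delta_1/2)$ returns, by Lemma~\ref{lem:revealing}, a set $\bQ \supseteq \Reveal(\bx_j,\bT,\delta_1) $ of size $\geq \gamma_1 r^*$ with probability $1 - 1/\poly(n)$, so in particular $|\bQ| \geq \gamma_1 r^*/2$ and the procedure outputs $(\bQ,\bB)$ rather than ``fail.'' The only remaining worry is that the procedure could output ``fail'' at some \emph{earlier} iteration $d^* < d$ because a $\CheckRevealing$ call accepted there but $\GetRevealingEdges$ then returned $|\bQ| < \gamma_1 r^*/2$; I would rule this out using Lemma~\ref{lem:check}(2) and Lemma~\ref{lem:revealing}: whenever $\CheckRevealing(f,x,\bT,\gamma_1,\delta_1)$ accepts, except with probability $1/\poly(n)$ the point $x$ is $(\gamma_1/2,\delta_1/2)$-revealing, so $|\Reveal(x,\bT,\delta_1/2)| \geq \gamma_1 r^*/2$ and $\GetRevealingEdges(f,x,\bT,\delta_1/2)$ returns $|\bQ| \geq \gamma_1 r^*/2$; hence a spurious ``fail'' happens with probability at most $1/\poly(n)$, and a union bound over the $O(\log n)$ iterations and all the events above gives the overall $1 - 1/\poly(n)$ success probability. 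The main obstacle in making this rigorous is the bookkeeping in this last step — cleanly arguing that no earlier iteration can falsely terminate the procedure — together with correctly tracking which value of $d^*$ the analysis "commits to" so that the query bound for the success case comes out in terms of $|\bQ|$ rather than $|S|$.
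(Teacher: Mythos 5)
Your proposal is correct and takes essentially the same route as the paper's proof: the structural claims via Lemma~\ref{lem:revealing}, the query bound by summing the geometric series over the powers of two up to the terminating $d'$ and using $|\bQ|=\Omega(d')$, and the success probability by combining Lemma~\ref{lem:reg-graph}, Corollary~\ref{coro5} and Lemma~\ref{lem:check} at the iteration $d^*=d$, while ruling out spurious early ``fail'' outputs via the soundness of $\CheckRevealing$ together with Lemma~\ref{lem:revealing} and a union bound. One tiny nit: $\gamma_1 r^*/2$ need not be at least $1$ for small $d^*$, but nonemptiness of $\bQ$ still follows since the threshold is strictly positive and $|\bQ|$ is an integer, exactly as in the paper.
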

\begin{proof}%[Proof of Lemma \ref{lem:many-good} Assuming Lemma \ref{lem:good-points}
  %and \ref{lem:Lleftlarge}]
It follows from Lemma \ref{lem:revealing} that
  when $\FindRevealing$ outputs a pair $(\bQ,\bB)$,
  $\bQ$ is a subset of $S$ and $\bB$ contains one bichromatic edge for each 
  variable in $\bQ$.
We also have that $\bQ$ is nonempty since we compare $|\bQ|$ with $\gamma_1 r^*/2>0$
  before returning $(\bB,\bQ)$.
  
The number of queries used  when $\FindRevealing$ outputs ``fail'' follows 
  from its description.
Next we bound the number of queries used when it outputs a pair $(\bQ,\bB)$.

%The only reason that the algorithm does not output ``fail'' is because 
%  $|\bQ|\ge \gamma_1r^*/2$. So $\bQ$ must be nonempty.
%The part of $\bQ$ being nonempty is trivial since $\bQ\ge \gamma_1r^*/2$ 
%Next we count the number of queries used by the algorithm.
Let $d'$ be the value of $d^*$ when $\FindRevealing$ terminates.
Then we have $|\bQ|\ge \gamma_1r'/2$, where $r'=100d'$.
The number of queries used by $\CheckRevealing$ in each iteration is at most
$$
\log^2 n\cdot \frac{6d^*\log n}{\alpha m} \cdot O(\log^2 n)\cdot  \tilde{O}(1) 
=\tilde{O}\left(\frac{d^*}{\alpha m}\right).
%O(\log n)\cdot O\left(\frac{\log^2 n}{\alpha\ell_0}
$$
As a result, the total number of queries used by $\CheckRevealing$ is at most
$$
\tilde{O}\left(\frac{1+2+\cdots+d'}{\alpha m}\right)=\tilde{O}\left(\frac{d'}{\alpha m}\right)=\tilde{O}\left(\frac{|\bQ|}{\alpha m}\right).
$$
%We define in Figure \ref{fig:find-revealing} three positive constants $\eta_1,\gamma_1$ and $\delta_1$.
%We prove for both cases of $|\Good(S)|$ that there exist at least $\Omega(\sigma 2^n)$ 
%  points $z$ such that, with probability at least $\eta_1$ over the draw of a 
%  random $r$-partition $\bT$ of $S$, $
On the other hand, we only make one call to $\GetRevealingEdges$ during the last iteration of $d'$.
So the number of queries it uses is at most $\tilde{O}(r')=\tilde{O}(|\bQ|)$.
It then follows that the total number of queries used can be
  bounded using the expression given in the lemma. 

It is left to show that $\FindRevealing$ returns a pair $(\bB,\bQ)$ with high probability.
Note~that~it returns ``fail'' for two cases.
Either all calls to $\CheckRevealing$ return ``fail'' or one of these~calls 
  returns ``accept'' but the next call to $\GetRevealingEdges$ returns ``fail.'' 
The first event happens with small probability because when $d^*=d$,
  we have $\sigma\ge \sigma^*$ from Lemma \ref{lem:reg-graph} and therefore,~with probability at least $1-\exp(-\log^2n)$ we~get a point $\bx$ that satisfies Corollary \ref{coro5}
   on line 2.
For~this $\bx$ with  probability $1-\exp(-\log^2 n)$ we get a $\bT$ such that 
  $\bx$ is $(\gamma_1,\delta_1)$-revealing for $\bT$ on line~3.
When this happens, it follows from Lemma \ref{lem:check} that the subroutine $\CheckRevealing$ outputs
  ``accept'' with probability at least $1-1/\poly(n)$. 
  
For the second event to happen, either one of the calls to $\CheckRevealing\hspace{0.05cm}(f,\bx,\bT,\gamma_1,\delta_1)$ returns
  ``accept'' while $\bx$ is not $(\gamma_1/2,\delta_1/2)$-revealing for $\bT$,
  or the only call to $\GetRevealingEdges$ $(f,\bx,\bT,\delta_1/2)$
  fails to find enough bichromatic edges while $\bx$ is indeed $(\gamma_1/2,\delta_1/2)$-revealing for $\bT$.
It follows from Lemma \ref{lem:revealing} and \ref{lem:check} and a union bound that this happens with low probability. 
  % $\tilde{O}(|\bQ|)+\tilde{O}(|\bQ|/(\alpha\ell_0))$.
\end{proof}

Next we prove Lemma \ref{lem:good-points}
  and Lemma \ref{lem:Lleftlarge}. We start with Lemma \ref{lem:good-points}.

%\paragraph{Case 1: $\Good(S)$ is large}

\begin{proof}[Proof of Lemma \ref{lem:good-points}]
Consider the following procedure to draw an $r$-partition $\bT = \{ \bT_j \}_{j\in [r]}$ of $S$:
\begin{enumerate}
\item Sample $\gg_0 \colon S \setminus N(y) \to [r]$ uniformly at random; let $\bT^{(0)} = \big\{ \bT_j^{(0)} \big\}_{j\in [r]}$ with $\bT^{(0)}_j = \gg_0^{-1}(j)$.
\item Sample $\gg_1 \colon N(y) \to [r]$ uniformly at random; let $\bT^{(1)} = \big\{ \bT_{j}^{(1)} \big\}_{j\in [r]}$ with  $\bT^{(1)}_j = \gg_1^{-1}(j)$.
\item Let $\bT = \big\{ \bT_j \big\}_{j\in [r]}$ be given by $\bT_j = \bT_j^{(0)} \cup \bT_{j}^{(1)}$.
\end{enumerate}
We note that the above procedure samples a uniformly random $r$-partition $\bT$ of $S$. Consider the following set $\bP \subseteq [r]$ defined using $\bT^{(0)}$ and $\bT^{(1)}$:
\[ \bP = \left\{ i \in [r] : i \in \Reveal(y, \bT^{(0)}, \delta_0) \text{ and } \big|\bT_i^{(1)}\big| \leq 1\right\}. \]
We note that every $i \in \bP$ satisfies
\begin{align*} 
\Prx_{\bR \subseteq \bT_i} \left[ f(y) \neq f(y^{(\bR )})\right] &\geq \Prx_{\bR \subseteq \bT_i}\left[ \bR \cap \bT_i^{(1)} = \emptyset \right]\cdot \Prx_{\bR \subseteq \bT_i}\left[f(y) \neq f(y^{(\bR)})\hspace{0.06cm} \big|\hspace{0.06cm} \bR \cap \bT_i^{(1)} = \emptyset \right] \\[0.5ex]
&\geq 0.5 \cdot \Prx_{\bR \subseteq \bT_i^{(0)}} \left[ f(y) \neq f(y^{(\bR)}) \right] \geq \delta_0/2,
\end{align*}
where we used the fact that $i \in \bP$ implies $|\bT_i^{(1)}| \leq 1$ so that  $\bR \cap \bT_i^{(1)} = \emptyset$ with probability at least $1/2$, and the fact that $i \in \Reveal(y, \bT^{(0)}, \delta_0)$.

Therefore, it suffices to show that with probability at least $ {\eta_0}/{2}$ over the draw of $\bT$, $|\bP| \geq  \gamma_0 r/2$. Towards this goal, we consider the event $E$ which occurs when $y$ is~$(\gamma_0, \delta_0)$-revealing with respect to $\smash{\bT^{(0)}}$. Since $y \in \Good(S)$, the event $E$  occurs with probability at least $\eta_0$.

Fix an $r$-partition $T^{(0)}$ of $S\setminus N(y)$ such that $E$ occurs, and 
  let $\smash{\Reveal=\Reveal(y,T^{(0)},\delta_0)}$.
%\[ \Prx_{\bT^{(0)}}\left[ \bE_0 \right] \geq \eta. \]
%Therefore, 
%\begin{align*} 
%\Prx_{\bT} \left[ y \text{ is } \left( \frac{\gamma}{2}, \frac{\delta}{2} \right)\text{-revealing for $\bT$}\right] &\geq \eta \cdot \Prx_{\bT^{(0)}, \bT^{(1)}}\left[ |\bP| \geq \frac{\gamma}{2} \cdot r \mid \bE_0 \right].
%\end{align*}
%Finally, consider fixing any partition $T^{(0)}$ of $S \setminus N(y)$ which satisfies event $\bE_0$, 
  %i.e., $y$ is $(\gamma, \delta)$-revealing with respect to $T^{(0)}$ or equivalently, 
  %i.e., $|\Reveal(y, T^{(0)}, \delta)| \geq \gamma r$.
Then for each $\smash{j\in \Reveal}$ the probability
  of $\smash{|\bT^{(1)}_j|\ge 2}$ is at most (using $r=100d$)
$$
 \binom{2d}{2} \cdot \frac{1}{r^2}\le \frac{1}{5000}.
$$
So
  the number of $\smash{j\in \Reveal}$ with $\smash{|\bT^{(1)}_j|\ge 2}$ being more than
  $|\Reveal|/2$ can only happen with probability at most $1/2500$, and when this does not happen, the size of $\bP$ is at least
  $|\Reveal|/2\ge\gamma_0 r/2$.
Overall, this happens with probability at least 
  $\eta(1-1/2500)\ge \eta_0/2$.
%\begin{align*}
%\Prx_{\bT^{(0)}, \bT^{(1)}}\left[|\bP| < \frac{\gamma}{2} \cdot r \mid \bT^{(0)} = T^{(0)}\right] &\leq \Prx_{\bT^{(1)}}\left[ \exists \text{ at least }\frac{\gamma r}{2} \text{ indices $m \in \Reveal(y, T^{(0)}, \delta)$ with $|\bT^{(1)}_m| \geq 2$}\right] \\
%	&\leq \dbinom{\gamma r}{ \frac{\gamma r}{2} } \left( \Prx_{\bj^{(1)}}\left[ \exists i_1, i_2 \in N(y) : \bj_1(i_1) = \bj_2(i_2) = 1\right]\right)^{\gamma r / 2} \\
%	&\leq \dbinom{\gamma r}{\frac{\gamma r}{2}} \left( \binom{2d}{2} \cdot \frac{1}{r^2} \right)^{\gamma r/2} \leq \left( 2e \cdot (e \cdot d)^2 \cdot \frac{1}{r^2} \right)^{\gamma r / 2} \leq \frac{1}{2},
%\end{align*}
%where we used the fact that $r = 100 d$ in the last inequality.
%Therefore, we have 
%\[ \Prx_{\bT^{(0)}, \bT^{(1)}}\left[ |\bP| \geq \frac{\gamma}{2} \cdot r \mid \bE_0\right] \geq \frac{1}{2},\]
\end{proof}

Finally we prove Lemma \ref{lem:Lleftlarge}.
We start with some notation.
Given a point $x \in U$, we let
\[ Y(x) = \big\{ x^{(i)} \in V : i \in N(x) \big\} \]
denote the set of neighbors of $x$ in $G$, where $ |Y(x)|=\deg_G(x)\in [d: 2d]$. 
Given a $y \in Y(x)$ and an $r$-partition $T=\{T_i\}$ of $S$,
  let $\smash{T^{(0,y)}=\{T_i^{(0,y)}\}}$ denote the $r$-partition of $S\setminus N(y)$
  with $$T_i^{(0,y)}=T_i\setminus N(y).$$
%and given a random partition $\bT = \{ \bT_m \}_{m=1}^r$ of $S$, we may defining the random partitions 
%\[ \bT^{(0, y)} = \left\{ \bT_m^{(0, y)}\right\}_{m=1}^r \qquad \text{and} \qquad \bT^{(1, y)} = \left\{ \bT_m^{(1, y)} \right\}_{m=1}^r\] 
%given by
%\[ \bT_m^{(0, y)} = \bT_m \setminus N(y), \qquad \text{and}\qquad \bT_m^{(1,y)} = \bT_m \cap N(y). \]
%  ,   where
%\[ \bT^{(0, y)} = \bj^{-1}_{0, y}(m), \qquad \bT^{(1, y)} = \bj^{-1}_{1, y}(m), \qquad\text{and}\qquad \bT_m = \bT_m^{(0,y)} \cup \bT_m^{(1, y)}, \]
%for uniformly random maps $\bj_{0, y} \colon S \setminus N(y) \to [r]$ and $\bj_{1,y} \colon N(y) \to [r]$. 

%\begin{lemma}\label{lem:partitionsplits}
%For $S \subset [n]$ where $|S \cap I| \geq \ell_0$, let $d$ and $\sigma$ be the parameters of Lemma~\ref{lem:reg-graph} and $r = 100d$. Let $x \in L$ and suppose $|Y(x) \cap \Good(S)| \leq \frac{3}{4} \cdot |Y(x)|$. Then, with probability at least $\frac{1}{3}$ over the draw of a random partition $\bT = \{ \bT_m \}_{m=1}^r$ of $S$, $x$ is $(\frac{d}{40r}, \frac{1-\delta}{2})$-revealing for $\bT$. 
%\end{lemma}

\begin{proof}[Proof of Lemma \ref{lem:Lleftlarge}]
Let $U' = \big\{ x \in U : |Y(x) \cap \Good(S)| \ge (3/4)\cdot |Y(x)|\big\}$. We note that:
\begin{align*}
|U'|\cdot d\le 
\sum_{x \in U'} \deg_G(x)\le |\Good(S)|\cdot 2d.
%\geq \sigma \cdot d \cdot 2^n - |\Good(S)| \cdot 2d - \frac{1}{4} \cdot 2d \cdot |L \setminus L'| \\%
%	&\geq \sigma\cdot d \cdot 2^n - \frac{\sigma}{50} \cdot d \cdot 2^n - \frac{\sigma}{2} \cdot d \cdot 2^n \geq \frac{\sigma}{3} \cdot d \cdot 2^n. 
\end{align*}
Using $|\Good(S)| \leq  ({\sigma}/{100}) \cdot 2^n$
  and $|U|\ge \sigma 2^n$,
  at least $\sigma 2^n/2$ many points $x\in U$
  satisfy 
%\begin{equation}\label{hehecond}
$|Y(x) \cap \Good(S)| \le  (3/4)\cdot |Y(x)|.$
%\end{equation}
%Since $|N(x)| \leq 2d$ for all $x \in L'$, $|L'| \geq \frac{\sigma}{6} \cdot 2^n$.
We prove in the rest of the proof that every such $x\in U$
  is $(1/4000,(1-\delta_0)/2)$-revealing for $\bT$ with probability at least 
  $1/3$ over the draw of a random $r$-partition $\bT$ of $S$.

Given a partition $T = \{ T_j \}_{j=1}^r$ of $S$ we consider the following 
  subset $A(T)$ of $Y(x)\setminus \Good(S)$:
\[\left\{x^{(i)} = y \in Y(x) \setminus \Good(S) : \text{for the $k \in [r]$ with $i\in T_k$}, \hspace{-0.05cm}\begin{array}{cl} \text{(i)} &\hspace{-0.15cm} T_k \cap (N(x)\cup N(y)) = \{ i \} \\[0.6ex]
					 %\text{(ii)} & \hspace{-0.15cm}T_k \cap N(y) = \{ i\} \\[0.4ex]
					 \text{(ii)} & \hspace{-0.15cm}k \notin \Reveal(y, T^{(0, y)}, \delta_0) \end{array} \hspace{-0.1cm} \right\}.\]
For each $x^{(i)} = y \in A(T)$, let $k \in [r]$ be the index with $i\in T_k$. Then we have 
\begin{align}
\Prx_{\bR \subseteq T_k}\left[ f(x) \neq f(x^{(\bR)})\right] &\geq \Prx_{\bR \subseteq T_k}\big[ i \in \bR\big] \cdot \Prx_{\bR \subseteq T_k}\left[ f(x) \neq f(x^{(\bR)})\hspace{0.08cm}\big|\hspace{0.08cm} i \in \bR\right] \nonumber \\[0.5ex]
&= 0.5 \cdot \Prx_{\bR' \subseteq T_k^{(0, y)}} \left[ f(x) \neq f(x^{(\bR' \cup \{ i\})}) \right] \label{eq:1} \\
	&= 0.5 \cdot \Prx_{\bR' \subseteq T_k^{(0, y)}} \left[ f(y) = f(y^{(\bR')})\right] \label{eq:2}\geq  (1 - \delta_0)\big/2.
\end{align}
Here (\ref{eq:1}) follows from the fact $i \in \bR$ with probability $1/2$; In that case, letting $\bR' = \bR \setminus \{i \}$~gives $x^{(\bR)} = y^{(\bR')}$, and because $T_k \cap N(y) = \{ i \}$ we have that $\bR \subseteq T_k$ conditioning on $i \in \bR$ is distributed as $\bR' \cup \{ i \}$ with $\bR' \subseteq \smash{T^{(0, y)}_k}$. Additionally, (\ref{eq:2}) follows from the fact that $f(x) \neq f(y)$ since $(x, y)$~is a bichromatic edge of $f$, and the fact that $k \notin \Reveal(y, T^{(0, y)}, \delta_0)$. 

Since $T_k \cap N(x) = \{ i\}$ the number of $k \in [r]$ for which (\ref{eq:2}) holds is at least $|A(T)|$. As a result, $x$ is $({|A(T)|}/{r}, ({1-\delta_0})/{2})$-revealing for $T$. Therefore, in order to show the lemma it suffices to show that 
$|A(\bT)| \geq {d}/{40}$ with probability at least $1/3$ over the draw of a random $r$-partition $\bT$ of $\bS$. 
%$\[ \]

For this purpose, we 
  consider a fixed $x^{(i)} = y \in Y(x) \setminus \Good(S)$ and show that
  $y\in A(\bT)$ with probability at least $1/2$. 
Similarly to the proof of Lemma~\ref{lem:good-points}, 
we can equivalently draw
a uniformly random $r$-partition $\bT = \{ \bT_{j} \}_{j\in [r]}$ of $S$ by first drawing
  two uniformly random $r$-partitions 
$$\bT^{(0,y)} = \big\{ \bT^{(0, y)}_j\big\}_{j\in [r]}\quad\text{and}\quad
\bT^{(1, y)} = \big\{ \bT^{(1,y)}_j\big\}_{j\in [r]}$$
of $S \setminus N(y)$ and $N(y)$, respectively, and then setting $\bT_j=\bT_j^{(0,y)}\cup \bT_j^{(1,y)}$ for each $j\in [r]$.

Then we have
\begin{align*}
%\Prx_{\bT}\left[ y \in A(\bT) \right] &= 1 - 
&\hspace{-0.6cm}\Prx_{\bT} \big[ y \notin A(\bT)\big]\\ &\le  \Prx_{\bT}\big[ \text{the $k \in [r]$ with\ }i\in \bT_k\ \text{satisfies}\ \bT_k \cap (N(x)\cup N(y)) \neq \{i\}\big] \\
% \text{ or } \bT_m \cap N(y) \neq \{i\}\right]  \\
					     &\qquad + \Prx_{\bT}\big[ y \text{ is $(\gamma_0, \delta_0)$-revealing for } \bT^{(0,y)}\big] \\
					     &\qquad + \Prx_{\bT} \left[ y \text{ is not $(\gamma_0, \delta_0)$-revealing for } \bT^{(0,y)} \text{ and } \bT_k \ni i \text{ has } k \notin \Reveal(y, \bT^{(0, y)}, \delta_0)\right] \\[0.8ex]
					     &\le \left(\dfrac{|N(x) \cup N(y)| - 1}{r} \right) + \eta_0 + (1-\eta_0) \gamma_0 \le \frac{4d}{r} + \eta_0 + \gamma_0 <  {1}/{2},
\end{align*}
using $r=100d$.
Therefore, the expected size of $A(\bT)$ is at least $|Y(x) \setminus \Good(S)|/{2}.$ 
Writing $$\beta = \Prx_{\bT}\left[ |A(\bT)| \geq \dfrac{|Y(x) \setminus \Good(S)|}{10} \right],$$ we have
${1}/{2} \leq \beta + (1-\beta)/10$ and thus, 
$\beta \geq ({1}/{2}) - ({1}/{10}) >1/3.$
So with probability at least $1/3$,  
$$
A(\bT)\ge \frac{|Y(x) \setminus \Good(S)|}{10} \geq \frac{1}{10} \cdot \frac{|Y(x)|}{4} \ge \frac{d}{40}$$
since $|Y(x) \cap \Good(S)| \le  (3/4)\cdot |Y(x)|.$
This finishes the proof of the lemma.
% indices $m \in [r]$ which $\Prx_{\bT_m'\subset \bT_m}\left[f(x) \neq f(x^{(\bT_m')})\right] \geq \frac{1-\delta_0}{2}$.
\end{proof}

\subsection{The $\FindHiInf$ procedure}

\begin{figure}
\begin{framed}
\noindent Subroutine $\FindHiInf\hspace{0.05cm}(f,S, m,\alpha)$
\begin{flushleft}
\noindent {\bf Input:} Query access to $f \colon \{0,1\}^n \to \{0,1\}$, $S\subseteq [n]$,
  $m\ge 1$ and $\alpha\in (0,1]$. %, with the promise
%  that $S$ contains a hidden set $H$ with $|H|\ge m$ such that every $i\in H$
%  satisfies $\Inf_f[i]$.
%Query access to a function $f \colon \{0,1\}^n \to \{0,1\}$ with a hidden set $I \subset [n]$ where every $i \in I$ satisfies $\Inf_i(f) \geq \alpha$, a subset $S \subset [n]$ where $|S| = s$ and $|S \cap I| \geq \ell_0$.

\noindent {\bf Output:} A set of bichromatic edges $\bB$ of $f$ whose variables form a subset $\bQ \subseteq S$.

\begin{enumerate}
\item Initialize $\bQ = \bB = \emptyset$, and $\bS^* = S$.
\item Repeatedly run $\FindRevealing\hspace{0.05cm}(f,\bS^*, m,\alpha)$:
\begin{itemize}
\item If it outputs ``fail,'' terminate and output $\bQ$ and $\bB$.
\item Otherwise, if $\FindRevealing$ outputs $(\bB', \bQ')$, update the sets as 
\[ \bQ \leftarrow \bQ \cup \bQ', \quad \bB \leftarrow \bB \cup \bB'\quad \text{and} \quad \bS^* \leftarrow \bS^* \setminus \bQ'. \]
\end{itemize}
%\item Output $\bQ$ and $\bB$.
\end{enumerate}
\end{flushleft}\vskip -0.14in
\end{framed}\vspace{-0.2cm}
\caption{The procedure $\FindHiInf$.}\label{fig:find-hi-influence}
% for finding bi-chromatic edges along variables with high influence.}
\end{figure}

Finally we describe the procedure $\FindHiInf$ in Figure \ref{fig:find-hi-influence}
  and prove the following lemma.

%\begin{lemma}[Query complexity of $\FindHiInf$]\label{lem:find-hi-inf-query}
%\end{lemma}

\begin{lemma}%[Correctness of $\FindHiInf$]
\label{lem:correct-hiinf}
Let $S \subseteq [n]$, $m\ge 1$ and $\alpha\in (0,1]$.
% and  of size $|S| = s$. 
Then $\emph{\FindHiInf}\hspace{0.05cm}(f,S, m,\alpha)$ makes at most 
\[ \tilde{O}\left(|S|+ \frac{|S| }{\alpha m}\right) \]
queries to $f$.
%Let $\alpha > 0$ and $f \colon \{0,1\}^n \to \{0,1\}$ have a hidden set 
When $S$ contains a subset $H$ that satisfies (\ref{cond:many}), 
%Let $S \subset [n]$ have $|S \cap I| \geq \ell$. 
  with probability at least $1-1/\poly(n)$
  $\emph{\FindHiInf}\hspace{0.05cm}(f,S, m,\alpha)$ outputs $(\bQ,\bB)$
  such that $$|\bQ \cap H| \geq |H| - m$$
  and $\bB$ contains one bichromatic edge for each variable in $\bQ$.
  %  a set of 
  %bichromatic edges $\bB$ of $f$ along variables in $\bQ \subset S$ satisfying
  %.$
%with probability at least $1 - \frac{1}{\poly(n)}$.
\end{lemma}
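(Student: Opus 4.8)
The plan is to analyze the main loop of $\FindHiInf$ one call to $\FindRevealing$ at a time, using Lemma~\ref{lem:many-good} as a black box. First I would record the deterministic structure of the loop. By Lemma~\ref{lem:many-good}, each call $\FindRevealing(f,\bS^*,m,\alpha)$ either returns ``fail'' (and the loop stops) or returns a pair $(\bB',\bQ')$ with $\emptyset\ne \bQ'\subseteq \bS^*$ together with one bichromatic edge of $f$ along each variable of $\bQ'$; since the loop then sets $\bS^*\leftarrow \bS^*\setminus \bQ'$, a straightforward induction shows that at every point in the execution $\bQ=S\setminus \bS^*$, the sets $\bQ'$ produced so far are pairwise disjoint, and $\bB$ contains a bichromatic edge along each variable of $\bQ$. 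In particular each non-terminating call strictly shrinks $\bS^*$, so there are at most $|S|+1$ calls in total. This already yields the output format claimed in the lemma.

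Next I would bound the number of queries. Using the per-call bounds from Lemma~\ref{lem:many-good}, every call that returns a pair costs $\tilde{O}\big(|\bQ'|+|\bQ'|/(\alpha m)\big)$ queries, and by disjointness $\sum|\bQ'|\le |S|$, so all such calls together cost $\tilde{O}\big(|S|+|S|/(\alpha m)\big)$; the single terminating ``fail'' call costs another $\tilde{O}\big(|S|+|S|/(\alpha m)\big)$, and summing gives the stated bound. (The polylog factor hidden in $\tilde{O}(\cdot)$ is the same for every call, so it is not amplified by the number of calls.)

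For the correctness statement, the key observation is that $\Inf_f[\cdot]$ does not depend on $\bS^*$: for any $\bS^*\subseteq S$, every variable of $\bS^*\cap H$ has influence at least $\alpha$, so as soon as $|\bS^*\cap H|\ge m$ the set $\bS^*$ contains the witness $\bS^*\cap H$ for condition~(\ref{cond:many}), and by Lemma~\ref{lem:many-good} the call $\FindRevealing(f,\bS^*,m,\alpha)$ returns a pair (does not fail) with probability at least $1-1/\poly(n)$. Since the randomness used in each call is fresh, this bound applies conditioned on the entire history of the loop; taking a union bound over the at most $|S|+1\le n+1$ calls, with probability $1-1/\poly(n)$ no call made while $|\bS^*\cap H|\ge m$ ever returns ``fail.'' On this event the loop can only terminate once $|\bS^*\cap H|<m$ (or $\bS^*=\emptyset$, which also has $|\bS^*\cap H|<m$ since $m\ge1$), and since $H\subseteq S$ we have $\bQ\cap H=(S\setminus\bS^*)\cap H=H\setminus\bS^*$, whence $|\bQ\cap H|=|H|-|\bS^*\cap H|\ge |H|-m$, as required.

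I expect the main obstacle to be bookkeeping rather than a genuine mathematical difficulty: one has to check that the $1/\poly(n)$ failure bound of Lemma~\ref{lem:many-good} can legitimately be applied conditionally at each step, and that the number of steps is polynomially bounded, so that the union bound only costs a polynomial factor — which is absorbed by choosing the $\poly(n)$ in Lemma~\ref{lem:many-good} large enough from the outset.
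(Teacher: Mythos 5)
Your proposal is correct and follows essentially the same route as the paper's proof: invoke Lemma~\ref{lem:many-good} per call, use disjointness of the sets $\bQ'$ to charge the query cost to $|S|$, and observe that $|\bQ\cap H|<|H|-m$ at termination forces some call with $|\bS^*\cap H|\ge m$ to have output ``fail,'' which a union bound over the at most $|S|\le n$ calls rules out except with probability $1/\poly(n)$. Your extra remarks about fresh randomness per call and the empty-$\bS^*$ case are just explicit versions of what the paper leaves implicit.
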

\begin{proof}
First note that the procedure terminates once $\FindRevealing$ outputs ``fail.''
On the other hand, whenever $\FindRevealing$ outputs a pair $(\bB', \bQ')$, $|\bS^*|$ decreases by $|\bQ'|$.
It then follows~from Lemma \ref{lem:many-good}
  that the total number of queries used by calls to $\FindRevealing$
  except the last one that outputs ``fail'' 
  is $\tilde{O}(|S|)+\tilde{O}(|S|/(\alpha m))$ since 
  the total size of $\bQ'$'s they output is at most $|S|$.
By Lemma \ref{lem:many-good}, the number of queries used
  by the last call can be bounded by the same expression.

% Let $t \geq 0$ be the total number of iterations where $\FindRevealing$ does not output ``fail'', and for $m \in [t]$, let $(\bB_{m}', \bQ_m')$ be the output of the $m$th execution of $\FindRevealing$. Since the $t+1$-th execution of $\FindRevealing$ outputs ``fail'', the query complexity of $\FindHiInf(S, m)$ is 
%\[ O\left(\sum_{m\in [t]}\left( \dfrac{|\bQ_m'| \log^9 n}{m \cdot \alpha} + |\bQ_m'| \log^3 n \right) \right) + O\left(\frac{s \log^9 n}{m \cdot \alpha}\right) + O(s \log^3 n) \leq O\left( \dfrac{s\log^9 n}{m \cdot \alpha}\right) + O\left( s \log^3 n\right). \]

Suppose that at some moment of the execution, we have
  $|\bQ \cap H| < |H| - m$. Then $\bS^* = S \setminus \bQ$
  satisfies $|\bS^* \cap H|>m$. 
This implies that, for $|\bQ\cap H|<  |H|-m$ to happen at the end,
  a necessary condition is that one of the calls to $\FindRevealing$
  has $|\bS^*\cap H|>m$ in input but outputs~``fail.''
It follows from Lemma \ref{lem:many-good} and a union bound on
  at most $|S|\le n$ many calls to $\FindRevealing$ that this happens with probability 
  at most $1/\poly(n)$. %  %
%As a result of Lemma~\ref{lem:find-revealing-correct}, the next iteration of $\FindRevealing(\bS_0, m)$ outputs a set $(\bB', \bQ')$ where $|\bQ| \geq 1$, and does not output ``fail'' with probability at least $1 - \frac{1}{\poly(n)}$. Since $m \geq 1$, the total number of times $\FindRevealing$ is at most $s \leq n$, so we may union bound to conclude that with probability at least $1 - \frac{1}{\poly(n)}$, whenever $\FindHiInf$ does not terminate with $|\bQ \cap I| \leq \ell - m$.
\end{proof}

% !TEX root =  main.tex

\section{The Algorithm for Case 2}\label{algsec:case2}

\newcommand{\SecondAlg}{\mathtt{AlgorithmCase2}}

\begin{figure}[t!]
\begin{framed}
\noindent Procedure $\SecondAlg(f)$

\begin{flushleft}
\noindent {\bf Input:} Query access to a Boolean function $f \colon \{0, 1\}^n \to \{0, 1\}$
%, a point $x \in \{0,1\}^n$, a nonempty set $S \subseteq [n]$ satisfying $f(x) \neq f(x^{(S)})$, and an order $\pi \colon [|S|] \to S$.

\noindent {\bf Output:} Either ``unate,'' or two edges constituting an edge violation of $f$ to unateness.\begin{enumerate}
\item Repeat the following $O(1)$ times:
\item \ \ \ \ \ \ \ \ Draw $\bS \subset [n]$ of size $n^{2/3}$ uniformly at random, and let $k = \lceil \frac{|\calI^*|}{n^{1/3} \log n} \rceil$. 
\item \ \ \ \ \ \ \ \ Let $\left(\bQ, \bB\right) \leftarrow \FindHiInf\left(f, \bS, k, \alpha\right)$.
\item \ \ \ \ \ \ \ \ Repeat $O\left(\sqrt{n} \Lambda^{14}/\eps^2\right)$ times:
\item \ \ \ \ \ \ \ \ \ \ \ \ \ \ \ \ Sample $\bT \subset \bS$ uniformly at random of size $\left\lceil \frac{\sqrt{n}}{2^s} \right\rceil$.
\item \ \ \ \ \ \ \ \ \ \ \ \ \ \ \ \ Sample $\bx\in \{0,1\}^n$ uniformly at random and run 
  $\AESearch(f,\bx,\bT)$
\item \ \ \ \ \ \ \ \ Let $\bA_{+}$ be the set of $i\in [n]$ such that a monotone edge along variable $i$ is found
\item \ \ \ \ \ \ \ \ Repeat $O\left(\sqrt{n} \Lambda^{14}/\eps^2\right)$ times:
\item \ \ \ \ \ \ \ \ \ \ \ \ \ \ \ \ Sample $\bT \subset \bS$ uniformly at random of size $\left\lceil \frac{\sqrt{n}}{2^t} \right\rceil$.
\item \ \ \ \ \ \ \ \ \ \ \ \ \ \ \ \ Sample an $\bx\in \{0,1\}^n$ uniformly at random and run
  $\AESearch(f,\bx, \bT)$
\item \ \ \ \ \ \ \ \ Let $\bA_{-}$ be the set of $i\in [n]$ such that an anti-monotone edge along variable $i$ is found
\item \ \ \ \ \ \ \ \ Output an edge violation of $f$ to unateness if one is found in $\bB, \bA_{+}$ and $\bA_{-}$.
\item Output ``unate.''
\end{enumerate}
\end{flushleft}\vskip -0.14in
\end{framed}\vspace{-0.2cm}
\caption{Algorithm for Case 2}\label{fig:algcase2}
\end{figure}

Below, we study Case 2 of the algorithm as described in Section~\ref{sec:mainalg}. Let $s \geq t \in [\Lambda]$, $h \in [3\Lambda]$, and $\ell \in [\lfloor \log n \rfloor]$. We assume in Case 2 that the input function $f \colon \{0,1\}^n \to \{0,1\}$ satisfies Lemma~\ref{lem:score-summary} with parameters $s, t, h, \ell$ on a set $\calI \subset [n]$ of size $|\calI| = 2^{\ell}$. Every variable $i \in \calI$ satisfies
\begin{align} 
\Score_{i, s}^{+} \geq 2^{s-h} \qquad\qquad \Score_{i, t}^{-} \geq 2^{t-h}, \qquad\text{and}\qquad \frac{|\calI|}{2^h} = \Omega\left( \frac{\eps^2}{\Lambda^{11}}\right) .\label{eq:score-remind}
\end{align}
We assume 
\begin{align}
|\calI| \cdot \frac{1}{2^s} \geq |\calI| \cdot \frac{1}{2^t} \geq n^{2/3}, \label{eq:size-I-lb}
\end{align}
and that for al least half of the variables $i \in \calI$, 
\begin{align} 
\Inf_f(i) \geq \alpha := \frac{\eps^2 \cdot n^{1/3}}{|\calI| \cdot \Lambda^{13}}. \label{eq:high-inf}
\end{align}

Similarly to case 1 in Section~\ref{algsec:case1}, we consider the subset $\calI^* \subseteq \calI$ of size at least $\lceil \frac{|\calI|}{2}\rceil$ satisfying (\ref{eq:high-inf}) for all $i \in \calI^*$. 

This algorithm $\SecondAlg$, show in Figure~\ref{fig:algcase2}, finds an edge violation with high probability. At a high level, the algorithm first samples a uniformly random set $\bS$ of siz e $n^{2/3}$ and uses $\FindHiInf$ to find bichromatic edges along almost all variables in $\bS \cap \calI^*$. Suppose first, that most of the bichromatic edges from $\bS \cap \calI^*$ found during $\FindHiInf$ are anti-monotone edges. Then via a similar analysis to \cite{CWX17b}, after running $\AESearch(f, \bx, \bS)$ on a uniform random sets $\bT \subset \bS$ of size $\lceil \sqrt{n}/2^s\rceil$ and a uniform point $\bx \sim \{0,1\}^n$ for $\widetilde{O}(\sqrt{n}/\eps^2)$ many iterations, we expect to find a monotone edge along some direction in $\bS \cap \calI^*$. Since the algorithm had already found an anti-monotone edge along many variables in $\bS \cap \calI^*$, the algorithm will very likely find a violation to unateness. Similarly, if most of the bichromatic edges along variables in $\bS \cap \calI^*$ during $\FindHiInf$ are monotone, then $\AESearch(f, \bx, \bT)$ will likely find an edge violation when $\bT$ is a random set of $\bS$ of size $\lceil \sqrt{n}/2^t\rceil$ after $\tilde{O}(\sqrt{n}/\eps^2)$ iterations.

\begin{lemma}[Query complexity of $\SecondAlg$]
$\SecondAlg(f)$ makes 
\[ \tilde{O}\left( n^{2/3}/\eps^2\right)\] 
queries to $f$.
\end{lemma}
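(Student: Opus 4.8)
The plan is to account for the queries of $\SecondAlg$ line by line, invoking the query bounds already established for its two subroutines: Lemma~\ref{lem:correct-hiinf} for $\FindHiInf$ and Lemma~\ref{lem:aesearch} for $\AESearch$. Since the outer loop on line~1 runs only $O(1)$ times, it suffices to bound the cost of a single iteration of lines 2--12 by $\tilde{O}(n^{2/3}/\eps^2)$; multiplying by the constant number of outer repetitions preserves this.

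First I would handle the two inner loops on lines 4 and 8. Sampling $\bT \subseteq \bS$ and $\bx \sim \{0,1\}^n$ requires no queries to $f$, and each iteration makes a single call to $\AESearch$, which costs $O(\log n)$ queries by Lemma~\ref{lem:aesearch}. As each loop runs $O(\sqrt{n}\,\Lambda^{14}/\eps^2)$ times, and $\Lambda = \lceil 2\log(n/\eps)\rceil$ contributes only a polylogarithmic factor, the two loops together contribute $\tilde{O}(\sqrt{n}/\eps^2)$ queries.

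The only remaining cost is the call to $\FindHiInf(f,\bS,k,\alpha)$ on line~3. By Lemma~\ref{lem:correct-hiinf} this makes at most $\tilde{O}\big(|\bS| + |\bS|/(\alpha k)\big)$ queries, where $|\bS| = n^{2/3}$. The one computation that needs care is the lower bound $\alpha k = \tilde{\Omega}(\eps^2)$: substituting $k = \lceil |\calI^*|/(n^{1/3}\log n)\rceil$ and $\alpha = \eps^2 n^{1/3}/(|\calI|\Lambda^{13})$ from (\ref{eq:high-inf}) gives $\alpha k \ge \eps^2|\calI^*|/(|\calI|\Lambda^{13}\log n)$, and since $|\calI^*| \ge |\calI|/2$ this is $\Omega(\eps^2/(\Lambda^{13}\log n)) = \tilde{\Omega}(\eps^2)$. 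Hence $|\bS|/(\alpha k) = \tilde{O}(n^{2/3}/\eps^2)$, so line~3 costs $\tilde{O}(n^{2/3}/\eps^2)$.

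Summing the three contributions yields $\tilde{O}(n^{2/3}/\eps^2) + \tilde{O}(\sqrt{n}/\eps^2) = \tilde{O}(n^{2/3}/\eps^2)$ per outer iteration, which is the claimed bound. There is no real obstacle here; the only things to watch are that all the sampling steps are query-free and that the parameter $k$ chosen on line~2 was calibrated precisely so that $\alpha k = \tilde{\Omega}(\eps^2)$, which is what keeps the $\FindHiInf$ cost at $\tilde{O}(n^{2/3}/\eps^2)$ rather than larger.
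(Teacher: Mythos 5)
Your proposal is correct and follows essentially the same route as the paper: bound lines 4--12 by $\tilde{O}(\sqrt{n}/\eps^2)$ via the $O(\log n)$-query cost of $\AESearch$, and bound line 3 by applying Lemma~\ref{lem:correct-hiinf} with $|\bS|=n^{2/3}$. Your explicit computation that $\alpha k = \tilde{\Omega}(\eps^2)$ (using $|\calI^*|\geq |\calI|/2$) simply spells out what the paper invokes implicitly through the Case 2 assumptions (\ref{eq:score-remind}), (\ref{eq:size-I-lb}), and (\ref{eq:high-inf}).
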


\begin{proof}
The query complexity of $\SecondAlg(f)$ follows from the description in Figure~\ref{fig:algcase2}. In particular, we lines 4--12 make a total of $\tilde{O}(\sqrt{n} / \eps^2)$, so it remains to upper-bound the query complexity of line $3$ invoking $\FindHiInf$. By Lemma~\ref{lem:correct-hiinf}, $\FindHiInf(f, \bS, k, \alpha)$, where 
\[ |\bS| = n^{2/3} \qquad\text{and} \qquad k = \lceil\frac{|\calI^*|}{n^{1/3} \log n}\rceil, \]%\dfrac{2^h \eps^2}{n^{1/3} \Lambda^{11} \log n}, \]
makes
\[ \tilde{O}\left( |\bS| + \dfrac{|\bS|}{\alpha k}\right) = \tilde{O}\left( n^{2/3}/\eps^2 \right), \]
using the assumptions of Case 2 in (\ref{eq:score-remind}), (\ref{eq:size-I-lb}), and (\ref{eq:high-inf}). 
\end{proof}

We start the analysis of $\SecondAlg$ by defining the notion of \emph{informative sets} for case 2. Recall that, in Section~\ref{algsec:case1}, Definition~\ref{def:informative} gave a different definition of informative sets for case 1. Since these definitions serve very similar purposes in the analysis of the algorithm, we use the same name. Furthermore, for $T \subset [n]$ of size $\lceil \frac{\sqrt{n}}{2^s} \rceil$, $\PE_i^+(T)$ is the set of $s$-strong monotone edges along variable $i$ which are $T$-persistent. Similarly, for $T \subset [n] \setminus \{i\}$ of size $\lceil \frac{\sqrt{n}}{2^t} \rceil$, $\PE_i^-(T)$ is the set of $t$-strong anti-monotone edges along variable $i$ that are $T$-persistent. Note that the definitions of these sets are slightly different than in Subsection~\ref{sec:inform}. 

\begin{definition}\label{def:inform-2}
We say that a set $S \subset [n] \setminus \{ i \}$ of size $(n^{2/3} -1)$ is $i$-informative if the following two conditions hold:
\begin{itemize}
\item[i.] with probability at least $1/10$ over the draw of $\bT \subset S$ of size $\lceil\sqrt{n}/2^s\rceil$, $|\PE_{i}^+(\bT)| \geq \frac{2^{s-h}}{10} \cdot 2^n$.
\item[ii.] with probability at least $1/10$ over the draw of $\bT \subset S$ of size $\lceil \sqrt{n}/2^t\rceil$, $|\PE_i^-(\bT) \geq \frac{2^{t-h}}{10} \cdot 2^n$. 
\end{itemize}
\end{definition}

\ignore{We note the following two observations which we use in this section. First, by (2) in Definition~\ref{def:strong-edge} and (\ref{eq:score-def}), 
\begin{align} 
\frac{|\PE_i^+(T)|}{2^n} \leq \Score_{i, s}^+\text{ for all } T \in \calP_{i,\lceil \sqrt{n}/2^s \rceil}, \qquad
\frac{|\PE_i^-(T)|}{2^n} \leq \Score_{i, t}^- \text{ for all } T \in \calP_{i,\lceil\sqrt{n}/2^t \rceil}. \label{eq:pe-upper}
\end{align} 
Second, by (2) in Definition~\ref{def:strong-edge}, 
\begin{align}
\Ex_{\bT \sim \calP_{i, \lceil \frac{\sqrt{n}}{2^s}\rceil}}\left[ \dfrac{|\PE_i^+(\bT)|}{2^n}\right] \geq (1-o(1)) \Score_{i, s}^+ \text{,  }\Ex_{\bT \sim \calP_{i, \lceil \frac{\sqrt{n}}{2^t}\rceil}}\left[ \dfrac{|\PE_i^-(\bT)|}{2^n}\right] \geq (1-o(1)) \Score_{i, t}^-. \label{eq:pe-lower}
\end{align}
}
\begin{lemma}\label{lem:informative-2}
For every $i \in \calI^*$, when sampling $\bS \subset [n]$ uniformly of size $n^{2/3}$, we have
\[ \Prx_{\bS \subset [n]}\left[i\in \bS \text{ and } \bS \setminus \{ i \} \text{ is $i$-informative} \right] \geq \frac{1}{2n^{1/3}}. \]
\end{lemma}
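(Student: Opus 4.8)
The plan is to first observe that the probability of the event $i \in \bS$ is exactly $n^{2/3}/n = n^{-1/3}$, so after conditioning on $i \in \bS$ it suffices to show that $\bS \setminus \{i\}$ is $i$-informative with probability at least $1/2$. Conditioned on $i \in \bS$, the set $\bS' := \bS \setminus \{i\}$ is a uniformly random subset of $[n] \setminus \{i\}$ of size $(n^{2/3} - 1)$; equivalently, it is distributed as the $\bS_0$ that one draws at the start of sampling from $\calH_{\xi, n^{2/3}, i}$ (ignoring the placeholder coordinate $n+1$, which is irrelevant to persistency). So the whole statement reduces to showing that a uniformly random $(n^{2/3}-1)$-subset of $[n] \setminus \{i\}$ is $i$-informative (in the sense of Definition~\ref{def:inform-2}) with probability at least $1/2$, which by a union bound follows from establishing each of conditions (i) and (ii) with probability at least $3/4$.

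The key step is then to prove condition (ii) (condition (i) being symmetric, with $s$ in place of $t$ and monotone in place of anti-monotone, except here $i$ is included in the set, so really one uses the same argument on $\bS'$ of size $(n^{2/3}-1)$). Here I would use the machinery already developed: Lemma~\ref{lem:informative-for-antimon} tells us that a draw $\bT \sim \calP_{i, \lceil \sqrt{n}/2^t\rceil}$ is $i$-informative for anti-monotone edges with probability $1 - o(1)$, which by Definition~\ref{def:informative} means $|\PE_i^-(\bT)|/2^n \ge \Score_{i,t}^-/4 \ge 2^{t-h}/4$ with probability $1 - o(1)$. The point is to transfer this from a uniform small set $\bT$ to a uniform small subset of a larger random set $\bS'$: if $\bS'$ is a uniformly random $(n^{2/3}-1)$-subset of $[n] \setminus \{i\}$ and then $\bT$ is a uniformly random size-$\lceil\sqrt{n}/2^t\rceil$ subset of $\bS'$, the composition is exactly a uniformly random subset of $[n] \setminus \{i\}$ of that size, i.e.\ distributed as $\calP_{i, \lceil \sqrt{n}/2^t\rceil}$ (this uses $\lceil \sqrt n /2^t\rceil \le n^{2/3}-1$, which holds by (\ref{eq:size-I-lb}) and $t \ge 1$). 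Hence
\[
\Ex_{\bS'}\Big[ \Prx_{\bT \subset \bS'}\big[\, |\PE_i^-(\bT)|/2^n \ge 2^{t-h}/10 \,\big] \Big] = \Prx_{\bT \sim \calP_{i,\lceil\sqrt n/2^t\rceil}}\big[\, |\PE_i^-(\bT)|/2^n \ge 2^{t-h}/10 \,\big] \ge 1 - o(1),
\]
using the slightly weaker threshold $2^{t-h}/10 \le 2^{t-h}/4$. Since the inner probability is a random variable in $[0,1]$ with expectation $\ge 1-o(1)$, Markov's inequality (applied to $1$ minus it) gives that with probability at least $3/4$ over $\bS'$, the inner probability is at least $1/10$ — which is precisely condition (ii). The analogous argument with parameter $s$ gives condition (i), and a union bound over the two conditions and over the (negligible) failure probabilities finishes the proof.

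The main obstacle, and the step to be careful about, is making the "composition of uniform subsets is uniform" reduction precise together with the bookkeeping that (a) the relevant size inequalities $\lceil\sqrt n/2^s\rceil, \lceil\sqrt n/2^t\rceil \le n^{2/3}-1$ hold under the Case 2 hypotheses, and (b) the threshold relaxation from $\Score^-_{i,t}/4$ to $2^{t-h}/10$ (resp.\ for the $+$ side) only loses constants and is still implied by $i$-informativeness in the sense of Definition~\ref{def:informative}, since $\Score^-_{i,t} \ge 2^{t-h}$ for $i \in \calI$ by (\ref{eq:score-remind}). Everything else is a routine Markov/union-bound argument, so I would present it compactly.
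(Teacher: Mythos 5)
Your proposal is correct and takes essentially the same route as the paper's proof: both factor out $\Pr[i \in \bS] = n^{-1/3}$, identify the conditional distribution of $\bS \setminus \{i\}$ with a uniform $(n^{2/3}-1)$-subset of $[n]\setminus\{i\}$, and then use the fact that a uniform $\lceil \sqrt{n}/2^s\rceil$- (resp.\ $\lceil\sqrt{n}/2^t\rceil$-) subset of this uniform larger set is itself uniform, together with the persistence guarantee for strong edges. The only difference is packaging: the paper re-derives the small-set guarantee inline via a uniformly random strong edge and a good/bad conditioning on $\bS$ to show each condition of Definition~\ref{def:inform-2} fails with probability $o(1)$, whereas you invoke Lemmas~\ref{lem:informative-for-antimon} and~\ref{lem:informative-for-mon} as black boxes and apply Markov's inequality to the conditional probability, which is an equivalent averaging argument.
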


\begin{proof}
Recall that for $m \in \N$, $\calP_{i, m}$ is the uniform distribution over subsets of $[n] \setminus \{i\}$ of size $m-1$. For $m = n^{2/3}$, we define the quantity:
\[ \gamma = \Prx_{\bS \sim \calP_{i, m}}\left[ \bS \text{ does not satisfy (i) in Definition~\ref{def:inform-2}}\right]. \]
For $m_1 = \left\lceil \frac{\sqrt{n}}{2^s} \right\rceil \leq m$, consider the quantity
\[ \Prx_{\be, \bT}\left[ \be \in \PE_{i}^+(\bT)\right],\]
where $\be$ is an $s$-strong monotone edge sampled uniformly at random, and $\bT \sim \calP_{i, m_1}$. By the definition of $s$-strong monotone edges, we have the above probability is at least $1-o(1)$. On the other hand, we may use the definition of $\gamma$ to upper bound the above probability by $\gamma (\frac{1}{10} + \frac{9}{10} \cdot \frac{1}{10}) + (1-\gamma)$, which implies $\gamma = o(1)$. Analogously, letting $m_2 = \lceil \frac{\sqrt{n}}{2^t}\rceil$, we define the quantity:
\[ \beta = \Prx_{\bS \sim \calP_{i, m_2}}\left[\bS \text{ does not satisfy (ii) in Definition~\ref{def:inform-2}} \right]. \]
Similarly as above, we considering $\Pr[\be \in \PE_i^-(\bT)]$, for $\bT \sim \calP_{i, m_2}$ and $\be$ a uniformly random $t$-strong anti-monotone edge, to  conclude $\beta = o(1)$.
\ignore{we may now bound $\Ex_{\bT}\left[ \frac{|\PE_i^+(\bT)|}{2^n}\right]$, where $\bT \sim \calP_{i, m_{1}}$ in two ways:
\begin{enumerate}
\item By sampling $\bT \sim \calP_{i, m_{1}}$, and
\item By first sampling $\bS \sim \calP_{i, m}$, and then considering $\bT \subset \bS$ uniformly of size $m_{1}$. 
\end{enumerate}
The above methods, in turn, correspond to the following inequalities:
\begin{align}
\Ex_{\bT \sim \calP_{i, m_{1}}}\left[ \dfrac{|\PE_{i}^+(\bT)|}{2^n}\right] &\geq (1-o(1))\cdot \Score_{i, s}^{+}, \label{eq:pe-lb} \\
\Ex_{\bT \sim \calP_{i, m_{1}}}\left[ \dfrac{|\PE_{i}^+(\bT)|}{2^n} \right] &\leq \gamma \left( \frac{1}{10} \cdot \Score_{i, s}^{+} + \frac{9}{10} \cdot \frac{1}{10} \cdot \Score_{i, s}^+\right) + (1 - \gamma) \cdot \Score_{i, s}^+, \label{eq:pe-ub}
\end{align}
where (\ref{eq:pe-lb}) follows from (\ref{eq:pe-upper}), and (\ref{eq:pe-ub}) follows from first sampling $\bS \sim \calP_{i, m}$ and conditioning on whether $\bS$ satisfies (i) in Definition~\ref{def:inform-2} or not, as well as the upper bound in (\ref{eq:pe-upper}). By combining (\ref{eq:pe-lb}) and (\ref{eq:pe-ub}), we conclude $\gamma \leq o(1)$. 

We may analogously define
\[ \beta = \Prx_{\bS \sim \calP_{i, m}}\left[ \bS \text{ does not satisfy (ii) in Definition~\ref{def:inform-2}}\right].\]
When $m_2 = \left\lceil \frac{\sqrt{n}}{2^t} \right\rceil \leq m$, we may similarly bound $\Ex_{\bT}\left[ \frac{|\PE_{i}^{-}|}{2^n} \right]$, where $\bT \sim \calP_{i, m_2}$, which similarly gives $\beta \leq o(1)$.} 
Therefore, the probability over $\bS\sim\calP_{i,m}$ that $\bS$ is $i$-informative is at least $1 - \gamma - \beta \geq 1-o(1)$. %By a union bound,
%\[ \Prx_{\bS \sim \calP_{i, m}}\left[ \bS \text{ is $i$-informative}\right] \geq 1 - \gamma - \beta \geq 1-o(1). \]
Finally,
\begin{align*}
\Prx_{\bS \subset [n]}\left[ i \in \bS \text{ and } \bS \setminus \{i\} \text{ is $i$-informative}\right] &\geq \Prx_{\bS \subset [n]}\left[ i \in \bS\right] \cdot \Prx_{\bS' \sim \calP_{i, m}}\left[ \bS' \text{ is $i$-informative}\right] \geq \frac{1}{2n^{1/3}}.
\end{align*}
\end{proof}

\newcommand{\bcalJ}{\boldsymbol{\calJ}}

When sampling a set $\bS \subset [n]$ of size $n^{2/3}$ in line 2 of $\SecondAlg(f)$, we may define the set
\begin{align} 
\bcalJ &= \left\{ i \in \calI^* : i \in \bS \text{ and } \bS \setminus \{i \} \text{ is $i$-informative} \right\}. \label{eq:def-j}
\end{align}
By Lemma~\ref{lem:informative-2}, we immediately obtain
\begin{align} 
\Ex_{\bS \subset [n]}\left[|\bcalJ|\right] \geq \dfrac{|\calI^*|}{2n^{1/3}} = \Omega\left(\dfrac{2^h \cdot \eps^2}{n^{1/3} \cdot \Lambda^{11}}\right), \label{eq:ex-j-lb}
\end{align}
where the second inequality follows from (\ref{eq:score-remind}). We consider the event $\calbE$, defined over the randomness of sampling $\bS$ in line 2, which occurs when $|\bS \cap \calI^*| \leq \frac{4|\calI^*|}{n^{1/3}}$ and $|\bcalJ| \geq \frac{|\calI^*|}{10n^{1/3}}$.

\begin{lemma}\label{lem:good-sample}
In every iteration of line 2 of $\SecondAlg(f)$, event $\calbE$ occurs with probability at least $\Omega(1)$.
\end{lemma}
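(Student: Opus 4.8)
The plan is to combine two facts, one concentration bound and one first-moment estimate, via a reverse-Markov argument. Write $B = |\calI^*|/n^{1/3}$, which is the expected size of $\bS \cap \calI^*$ when $\bS$ is a uniform size-$n^{2/3}$ subset of $[n]$, and note that by the case hypothesis (\ref{eq:size-I-lb}) we have $|\calI^*| \ge |\calI|/2 \ge n^{2/3}/2$, so $B \ge n^{1/3}/2$. First I would invoke Lemma~\ref{appendix1} to get that $|\bS \cap \calI^*| \le 4B$ with probability at least $1 - \exp(-\Omega(B)) = 1 - \exp(-\Omega(n^{1/3}))$; call this failure probability $p_0$, which is negligible. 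Second, by Lemma~\ref{lem:informative-2} and linearity of expectation (this is exactly (\ref{eq:ex-j-lb})), $\Ex_{\bS}[\hspace{0.04cm}|\bcalJ|\hspace{0.04cm}] \ge B/2$. Finally, observe the deterministic bound $|\bcalJ| \le |\bS \cap \calI^*|$, so $|\bcalJ| \le 4B$ except with probability at most $p_0$, and $|\bcalJ| \le |\calI^*| \le n$ always.

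Next I would turn the expectation lower bound into a constant-probability lower bound for the second half of $\calbE$. Writing $X = |\bcalJ|$, split the expectation according to the events $\{X > 4B\}$, $\{X < B/10\}$ and $\{B/10 \le X \le 4B\}$:
\[
\frac{B}{2} \;\le\; \Ex[X] \;\le\; \frac{B}{10}\cdot \Pr[X < B/10] \;+\; 4B\cdot \Pr[X \ge B/10] \;+\; n\cdot p_0 .
\]
Since $n\cdot p_0 = n\exp(-\Omega(n^{1/3})) = o(B)$, rearranging gives $\Pr[\hspace{0.03cm}|\bcalJ| \ge B/10\hspace{0.03cm}] \ge (B/2 - B/10 - o(B))/(4B) = \Omega(1)$.

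To finish, I would just union-bound the two bad events: $\Pr[\calbE] \ge \Pr[\hspace{0.03cm}|\bcalJ| \ge B/10\hspace{0.03cm}] - \Pr[\hspace{0.03cm}|\bS\cap\calI^*| > 4B\hspace{0.03cm}] \ge \Omega(1) - p_0 = \Omega(1)$, recalling that $B/10 = |\calI^*|/(10 n^{1/3})$ and $4B = 4|\calI^*|/n^{1/3}$ are exactly the thresholds in the definition of $\calbE$. I do not anticipate a genuine obstacle here; the only non-bookkeeping step is the reverse-Markov inequality above, and it goes through cleanly precisely because the upper tail on $|\bS\cap\calI^*|$ supplied by Lemma~\ref{appendix1} is exponentially small in $n^{1/3}$, which dominates the trivial bound $|\bcalJ| \le n$. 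One should double-check that the constants ($1/10$ and $4$) in the statement of $\calbE$ are consistent with the slack in this computation, but the margins are comfortable.
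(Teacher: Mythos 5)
Your proposal is correct and follows essentially the same route as the paper: the tail bound from Lemma~\ref{appendix1} on $|\bS\cap\calI^*|$, the first-moment bound (\ref{eq:ex-j-lb}) from Lemma~\ref{lem:informative-2}, and a reverse-Markov/averaging step to extract $\Pr\bigl[\hspace{0.03cm}|\bcalJ|\ge |\calI^*|/(10n^{1/3})\hspace{0.03cm}\bigr]=\Omega(1)$, followed by a union bound. The only cosmetic difference is that the paper bounds $|\bcalJ|$ trivially by $n^{2/3}$ (since $\bcalJ\subseteq\bS$) rather than by $n$ in the negligible bad-event term, which changes nothing.
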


\begin{proof}
By (\ref{eq:size-I-lb}) and the fact that $\frac{1}{2^s} \leq 1$, $\Ex_{\bS}[|\bS \cap \calI^*| ] \geq |\calI^*|/n^{1/3} \geq n^{1/3}$. Thus, by Lemma~\ref{appendix1}, $|\bS \cap \calI^*| \leq 4|\calI^*|/n^{1/3}$ with probability at least $1- \exp\left( -\Omega(n^{1/3})\right)$.
Let $\beta$ be the probability over $\bS \subset [n]$ that $|\bcalJ| \geq |\calI^*|/(10n^{1/3})$. %\Prx_{\bS \subset [n]}\left[ |\bcalJ| \geq \dfrac{1}{10} \cdot \dfrac{|\calI^*|}{n^{1/3}} \right]. \]
We may then upper bound $\Ex[|\bcalJ|]$ using the definition of $\beta$ by%then obtain the following upper bound
\[ \exp(-\Omega(n^{1/3})) \cdot n^{2/3} + \beta \cdot \frac{4|\calI^*|}{n^{1/3}} + (1-\beta) \cdot \frac{|\calI^*|}{10n^{1/3}}, \]
%\begin{align*}
%\Ex_{\bS \subset [n]}\left[|\bcalJ| \right] &\leq \exp\left( -\Omega(n^{1/3}) \right) \cdot n^{2/3} + \left(1 - \exp\left(-\Omega(n^{1/3}) \right) \right) \left(\beta \cdot \frac{4|\calI^*|}{n^{1/3}} + (1 - \beta) \cdot \frac{|\calI^*|}{10n^{1/3}} \right),
%\end{align*}
which, combined with the lower bound in (\ref{eq:ex-j-lb}), implies $\beta \geq \Omega(1)$. Thus, the probability $\calbE$ occurs is at least $\beta - \exp(-\Omega(n^{1/3})) = \Omega(1)$.
%\[ \Prx_{\bS \subset [n]}\left[ \calbE \right] \geq \beta - \exp\left( -\Omega(n^{1/3})\right) = \Omega(1). \]
\end{proof}

Thus, consider some iteration of $\SecondAlg(f)$ where event $\calbE$ occurs, and by Lemma~\ref{lem:good-sample} there exists such an iteration with high constant probability. The rest of this section is devoted to showing that the iteration of $\SecondAlg(f)$ where $\calbE$ occurs will find a violation to unateness with high probability. 

\begin{lemma}
Suppose that a particular iteration of $\SecondAlg(f)$, event $\calbE$ occurs. At that iteration, $\SecondAlg(f)$ finds a violation with high probability.
\end{lemma}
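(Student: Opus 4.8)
The plan is to condition on the event $\calbE$ from Lemma~\ref{lem:good-sample} and then split into two cases according to which orientation of bichromatic edges the call to $\FindHiInf$ on line~3 predominantly returns along variables in $\bcalJ$. First I would invoke Lemma~\ref{lem:correct-hiinf}: since at least half the variables of $\calI$ satisfy (\ref{eq:high-inf}) and $\bcalJ\subseteq\calI^*\subseteq\bS$ with $|\bcalJ|\ge |\calI^*|/(10n^{1/3})\gg k=\lceil |\calI^*|/(n^{1/3}\log n)\rceil$, the hidden set $H$ of high-influence variables in $\bS$ satisfies condition (\ref{cond:many}) with parameters $k,\alpha$, so with probability $1-1/\poly(n)$ the pair $(\bQ,\bB)$ returned satisfies $|\bQ\cap H|\ge |H|-k$, hence $\bB$ contains a bichromatic edge along all but at most $k$ variables of $\bcalJ$. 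Since $|\bcalJ|\ge 10k\log n \gg k$, at least, say, $|\bcalJ|/2$ variables of $\bcalJ$ have a known bichromatic edge in $\bB$; call this set $\calJ'$. Now by pigeonhole either at least $|\calJ'|/2$ of these edges are monotone or at least $|\calJ'|/2$ are anti-monotone.

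In the first case (many monotone edges in $\bB$ along $\calJ'$), it suffices to find \emph{one} anti-monotone edge along one of those variables in lines 9--11. For each $j\in\calJ'$, because $\bS\setminus\{j\}$ is $j$-informative, with probability $\ge 1/10$ a random $\bT\subset\bS$ of size $\lceil\sqrt n/2^t\rceil$ has $|\PE_j^-(\bT)|\ge (2^{t-h}/10)\cdot 2^n$; conditioning on such a $\bT$, by Lemma~\ref{lem:aesearch} any $t$-strong anti-monotone edge in $\PE_j^-(\bT)$ that is also $(\bT\setminus\{j\})$-persistent gets reported by $\AESearch(f,\bx,\bT)$ with probability $\ge 2/3$ when $\bx$ is one of its two endpoints. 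One must be slightly careful that $\bT$ is drawn fresh each iteration; the clean way is to note that for a uniformly random pair $(\bT,\bx)$ the probability of landing in $\bigcup_{j\in\calJ'}$ (good $\bT$) $\times$ (endpoint set) is $\Omega\big(\tfrac1{10}\cdot\tfrac{2^{t-h}}{10}\cdot|\calJ'|\big)=\Omega\big(2^t|\bcalJ|/2^h\big)$, and using $|\bcalJ|\ge|\calI^*|/(10n^{1/3})$, $|\calI|/2^h=\Omega(\eps^2/\Lambda^{11})$ and $2^t\ge 1$ this is $\Omega(\eps^2/(\sqrt n\,\Lambda^{11}))$; actually we need to be a bit more careful since the endpoint sets for a fixed $\bT$ could collide across $j$ — but Corollary~\ref{AESearchCorollary} shows that for a fixed $\bT$ each point $\bx$ yields at most one ``reported'' variable, so these contributions do not over-count and the $\Omega(\sqrt n\,\Lambda^{14}/\eps^2)$ iterations suffice (with a $\polylog$ cushion absorbed into $\Lambda^{14}$ vs $\Lambda^{11}$) to find, with probability $1-o(1)$, an anti-monotone edge along some $j\in\calJ'$. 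Combined with the monotone edge for that same $j$ already in $\bB$, this is an edge violation. The second case is symmetric: if $\ge|\calJ'|/2$ edges in $\bB$ are anti-monotone, lines 5--7 (random $\bT$ of size $\lceil\sqrt n/2^s\rceil$, condition (i) of $i$-informativeness, and $\Score_{i,s}^+\ge 2^{s-h}$) produce a monotone edge along some $j\in\calJ'$ after $O(\sqrt n\,\Lambda^{14}/\eps^2)$ iterations, again completing an edge violation.

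Putting the pieces together: condition on $\calbE$ (probability $\Omega(1)$ per outer iteration, so probability $\ge 2/3$ over the $O(1)$ outer repetitions); conditioned on $\calbE$, the $\FindHiInf$ call succeeds with probability $1-1/\poly(n)$, the pigeonhole case analysis is deterministic, and the surviving $\AESearch$ sweep succeeds with probability $1-o(1)$; a union bound over these few bad events shows $\SecondAlg(f)$ outputs an edge violation with probability $1-o(1)$ in the iteration where $\calbE$ holds, hence overall with probability at least $2/3$.

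I expect the main obstacle to be the bookkeeping in the $\AESearch$-sweep step: making the ``$\Omega(\sqrt n\,\Lambda^{14}/\eps^2)$ iterations suffice'' argument rigorous requires carefully chaining (a) the $1/10$ probability that a freshly drawn $\bT$ is good for a given $j$, (b) the size lower bound on $\PE_j^{\pm}(\bT)$ for such $\bT$, (c) the $2/3$ success of $\AESearch$ on endpoints of $(\bT\setminus\{j\})$-persistent strong edges, and (d) disjointness/non-overcounting across the $j\in\calJ'$ via Corollary~\ref{AESearchCorollary} — all while the set $\calJ'$ and which of its edges are monotone vs.\ anti-monotone are themselves random (determined by $\bS$ and the $\FindHiInf$ run). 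The cleanest route is to first fix $\bS$ (so $\bcalJ$ is fixed) and the output $(\bQ,\bB)$ of $\FindHiInf$ (so $\calJ'$ and its monotone/anti-monotone split are fixed), and only then take probabilities over the independent fresh randomness of lines 4--12; this decoupling, together with a standard ``coupon-collector meets birthday'' counting as in \cite{CWX17b}, should make the calculation go through.
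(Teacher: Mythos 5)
You follow the same outline as the paper's proof: apply Lemma~\ref{lem:correct-hiinf} with a hidden set inside $\bcalJ$ to get, with probability $1-1/\poly(n)$, bichromatic edges along all but at most $k$ variables of $\bcalJ$, split by pigeonhole according to whether most of those edges are monotone or anti-monotone, and then argue that the $\AESearch$ sweep with sets of the complementary size finds the opposite-orientation edge along one of these variables. The genuine gap is in the sweep analysis. You assert that, for $j\in\calJ'$, with probability at least $1/10$ a freshly drawn $\bT\subset\bS$ of size $\lceil\sqrt n/2^t\rceil$ is such that $\AESearch(f,\bx,\bT)$ reports an anti-monotone edge along $j$ whenever $\bx$ hits the endpoint set. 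This is false unless $j\in\bT$: Lemma~\ref{lem:aesearch} can only return variables belonging to the set handed to $\AESearch$, and in Definition~\ref{def:inform-2} the informative draw is over $\bT\subset\bS\setminus\{j\}$, so informativeness says nothing about $j$ being present. The event $j\in\bT$ has probability only about $\lceil\sqrt n/2^t\rceil/n^{2/3}=\Theta\bigl(1/(2^t n^{1/6})\bigr)$, and this is precisely the factor the paper inserts through the set $\bD=\bigl\{i\in\bC\cap\bT:\ |\PE_i^+(\bT\setminus\{i\})|\ge (2^{s-h}/10)\cdot 2^n\bigr\}$ and the bound $\Pr[i\in\bD]\ge \frac{m_1}{n^{2/3}}\cdot\frac1{10}$. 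Your claimed per-iteration success probability $\Omega\bigl(2^t|\bcalJ|/2^h\bigr)$ is therefore overstated by roughly $2^t n^{1/6}$; it is only because you then weaken it to $\Omega\bigl(\eps^2/(\sqrt n\,\Lambda^{11})\bigr)$ --- which happens to be the correct value once the missing factor is restored --- that the iteration count appears to suffice.

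Once that factor is included, your plan does go through and essentially becomes the paper's proof: one lower-bounds $\Ex[|\bD|]\ge |\bC|/(10\cdot 2^s n^{1/6})$ and converts this expectation into a per-iteration success probability, either by the paper's case split on whether $\Ex[|\bD|]$ is below or above $\log^2 n$ (using Lemma~\ref{appendix1}, at the cost of a $\log^2 n$ factor), or by the disjointness-plus-linearity argument you gesture at via Corollary~\ref{AESearchCorollary}, which can indeed be made rigorous for a fixed $\bT$ and then averaged over $\bT$. Two minor slips that do not affect the argument: $\calI^*\not\subseteq\bS$ in general (you only need $\bcalJ\subseteq\bS\cap\calI^*$), and $|\bcalJ|\ge 10k\log n$ should read roughly $|\bcalJ|\ge k\log n/10$; either way $|\bcalJ|-k\ge|\bcalJ|/2$ still holds.
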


\begin{proof}
Fix the particular iteration where event $\calbE$ occurs. We apply Lemma~\ref{lem:correct-hiinf} with the hidden set $H = \bcalJ$ to conclude that at line 3 of $\SecondAlg$, the set of variables $\bQ$ for which bichromatic edges are observed during $\FindHiInf$ satisfies 
\begin{align}
 |\bQ \cap \bcalJ| \geq |\bcalJ| - k \geq \frac{|\calI^*|}{20n^{1/3}}, \label{eq:sizelb}
 \end{align}
 with probability at least $1 - 1/\poly(n)$, where in the last inequality, we used the fact that $k \leq |\bcalJ|/2$, as well as the fact that $|\bcalJ| \geq |\calI^*|/(10n^{1/3})$ when $\calbE$ occurs. Assume that for most of the variables in $\bQ \cap \bcalJ$, $\bB$ contains \emph{anti-monotone} edges in these variables, and let $\bC \subset \bQ \cap \bcalJ$ be the set of these variables, which by assumption, 
\begin{align} 
|\bC| &\geq \frac{|\calI|}{40 n^{1/3}}. \label{eq:c-sizelb}
\end{align}
The case when most variables in $\bQ \cap \bcalJ$ contain monotone edges will follow by a symmetric argument. We will now show that during the execution of lines 4--7 of $\SecondAlg(f)$, $\bA_{+}$ will contain a monotone edge along some variable in $\bC$ with high probability. 

Towards this goal, consider a particular execution of line 5 which samples a set $\bT\subset \bS$ of size $m_1 = \left\lceil \frac{\sqrt{n}}{2^s} \right\rceil$ uniformly at random, and let
\[ \bD = \left\{ i \in \bC \cap \bT : \dfrac{|\PE_{i}^{+}(\bT \setminus \{i\})|}{2^n} \geq \frac{2^{s-h}}{10}\right\}. \]
We note that since $\bC \subset \bcalJ$, every $i \in \bC$ satisfies
\begin{align*}
\Prx_{\bT \subset \bS}\left[ i \in \bD \right] &= \Prx_{\bT \subset \bS}\left[ i \in \bT \right] \cdot \Prx_{\bT' \sim \calP_{i, m_1}}\left[ \dfrac{|\PE_{i}^+(\bT')|}{2^n} \geq \frac{2^{s-h}}{10}\right]\\
	&\geq \frac{m_1}{n^{2/3}} \cdot \frac{1}{10} \geq \frac{1}{10 \cdot 2^s \cdot n^{1/6}}. 
\end{align*}
which implies that the parameter
\begin{align} 
\beta = \Ex_{\bT \subset \bS}\left[ |\bD| \right] \geq |\bC| \cdot \frac{1}{10 \cdot 2^s \cdot n^{1/6}} \geq \dfrac{|\calI^*|}{400 \cdot 2^s \cdot \sqrt{n}} \label{eq:alpha-lb}
\end{align}
by (\ref{eq:c-sizelb}), and similarly, note that $\Ex[|\bC \cap \bT|] = \Theta(\beta)$.

Suppose first that $\beta \leq \log^2 n$, so that $\Ex[|\bC \cap \bT|] = O(\log^2 n)$. Then, by Lemma~\ref{appendix1}, $|\bD| \leq O(\log^2 n)$ with probability at least $1 - \exp\left(-\Omega(\log^2 n)\right)$.
This implies that during the execution of line 5 and 6, 
\begin{align}
\Prx_{\substack{\bT \subset \bS \\ \bx \sim \{0,1\}^n}}\left[\exists i \in \bD \text{ and } \bx \in \PE_{i}^+(\bT \setminus \{i\}) \right] &\geq \Prx_{\bT\subset \bS}\left[ |\bD| \geq 1\right] \cdot \frac{2^{s-h}}{10} = \Omega\left( \frac{\beta}{\log^2 n} \cdot \frac{2^{s-h}}{10}\right) \label{eq:alpha-small-2}\\
	&\geq \Omega\left(\frac{1}{\log^2 n} \cdot \frac{|\calI^*|}{400 \cdot 2^s \cdot \sqrt{n}} \cdot \frac{2^{s-h}}{10}\right) \label{eq:alpha-small-3}\\
	&\geq \Omega\left( \frac{|\calI^*|}{2^{h} \cdot \sqrt{n} \cdot \log^2 n}\right) \geq \Omega\left(\frac{\eps^2}{\Lambda^{13} \sqrt{n}}\right). \label{eq:alpha-small-4}
\end{align}
We note (\ref{eq:alpha-small-2}) follows from the fact that $i \in \bD$ implies $\PE_{i}^{+}(\bT \setminus \{i \}) \geq 2^n \cdot \frac{2^{s-h}}{10}$ and the fact that $\Prx[| \bD|\geq 1] =\Omega(\frac{\beta}{\log^2 n})$; (\ref{eq:alpha-small-3}) follows from (\ref{eq:alpha-lb}); and, (\ref{eq:alpha-small-4}) follows from (\ref{eq:score-remind}). Thus, at least one iteration of the $O(\frac{\sqrt{n} \cdot \Lambda^{14}}{\eps^2})$ iterations of lines 5 and 6 will output a monotone edge in some variable in $\bC$ with high probability.

Suppose that $\beta \geq \log^2 n$. In this case, with probability at least $1 - \exp\left( - \Omega(\log^2 n) \right)$, 
\[ |\bD| \geq \frac{\beta}{4}, \]
and when this occurs,
\begin{align*} 
\Prx_{\bx \sim \{0,1\}^n}\left[\exists i \in \bD \text{ and }\bx \in \PE_{i}^{+}\left( \bT \setminus\{i\} \right) \right] \geq \frac{\beta}{4} \cdot \frac{2^{s - h}}{10} = \Omega\left( \dfrac{\eps^2}{\Lambda^{11} \sqrt{n}}\right)
\end{align*}
by (\ref{eq:alpha-lb}) and (\ref{eq:score-remind}). Thus, in this case again, we may conclude that at least one iteration of lines 5 and 6 will output a monotone edge from $\bC$ with high probability.
\end{proof}

% !TEX root =  main.tex

\newcommand{\ThirdAlg}{\texttt{AlgorithmCase3}}

\section{The Algorithm for Case 3}\label{algsec:case3}

Below, we prove correctness of $\ThirdAlg(f)$, which covers Case 3 of the algorithm. We let $s \geq t \in [\Lambda]$, $h \in [3\Lambda]$, and $l \in [\lceil \log n \rceil]$ be parameters, so that $f \colon \{0,1\}^n \to \{0,1\}$ is $\eps$-far from unate and satisfies Lemma~\ref{lem:score-summary} with $s, t, h, \ell$, for a hidden set $\calI \subset [n]$ of size $|\calI| = 2^{\ell}$. Similarly to case 1 and case 2, we assume the parameters $s, t, h, \ell$, and the set $\calI$ satisfy that every $i \in \calI$,
\begin{align}
\Score_{i, s}^+ \geq 2^{s-h} \qquad\qquad \Score_{i, t}^{-} \geq 2^{t-h} \qquad\text{and}\qquad \frac{|\calI|}{2^h} = \Omega\left( \frac{\eps^2}{\Lambda^{11}}\right) . \label{eq:score-remind-3}
\end{align}
Lastly, we assume that $\calI$ is not too large, i.e.,
\begin{align} 
|\calI| \cdot \frac{1}{2^s} \leq |\calI| \cdot \frac{1}{2^t} \leq n^{2/3}.
\label{eq:small-I-3}
\end{align}

Instantiating the algorithm of \cite{CWX17b} with the additional assumptions corresponding to Case 3 from Section~\ref{sec:mainalg} would give the desired upper bound on the query complexity. Specifically, one may derive from (\ref{eq:small-I-3}), that $\Ex_{\bS \subset [n]}\left[ |\bS \cap \calI| \right] \leq n^{1/6}$, which corresponds to the parameter $\alpha$ in \cite{CWX17b}. As a result of Fact~5.4 and Fact 5.12, the query complexity of the algorithm in \cite{CWX17b} is
\[ \tilde{O}(\sqrt{\alpha n}/\eps^2) \leq \tilde{O}(n^{7/12}/\eps^2) \qquad\text{and}\qquad \tilde{O}(\sqrt{n}/\eps^2),\]
 which are $\tilde{O}(n^{2/3}/\eps^2)$. However, there is a (minor) technical caveat in the different definitions for strong edges and $\Score$ in Definition~\ref{def:strong-edge} and the analogous definitions in \cite{CWX17b}. For the sake of completeness, we include a simple algorithm achieving an $\tilde{O}(n^{2/3}/\eps^2)$-query upper bound in Figure~\ref{fig:algcase3}.
 
\begin{figure}[t!]
\begin{framed}
\noindent Procedure $\ThirdAlg(f)$

\begin{flushleft}
\noindent {\bf Input:} Query access to a Boolean function $f \colon \{0, 1\}^n \to \{0, 1\}$
%, a point $x \in \{0,1\}^n$, a nonempty set $S \subseteq [n]$ satisfying $f(x) \neq f(x^{(S)})$, and an order $\pi \colon [|S|] \to S$.

\noindent {\bf Output:} Either ``unate,'' or two edges constituting an edge violation of $f$ to unateness.\begin{enumerate}
\item Repeat the following $O\left(\lceil \frac{2^t \sqrt{n} \log^2 n}{|\calI|}\rceil\right)$ times:
\item \ \ \ \ \ \ \ \ Draw $\bT \subset [n]$ of size $\lceil \frac{\sqrt{n}}{2^t}\rceil$ uniformly at random.
\item \ \ \ \ \ \ \ \ Repeat $O\left(\frac{2^{h}}{2^t} \log^2 n\right)$ times:
\item \ \ \ \ \ \ \ \ \ \ \ \ \ \ \ \ Sample $\bx\in \{0,1\}^n$ uniformly at random and run 
  $\AESearch(f,\bx,\bT)$
\item \ \ \ \ \ \ \ \ Let $\bA$ be the set of $i\in [n]$ such that an anti-monotone edge along variable $i$ is found.
\item \ \ \ \ \ \ \ \ Repeat $O\left( 2^s / 2^t\right)$ times:
\item \ \ \ \ \ \ \ \ \ \ \ \ \ \ \ \ Draw $\bS \subset \bT$ of size $\lceil \frac{\sqrt{n}}{2^s}\rceil$ uniformly at random.
\item \ \ \ \ \ \ \ \ \ \ \ \ \ \ \ \ Repeat $O\left(\frac{2^h}{2^s} \log^2 n\right)$ times:
\item \ \ \ \ \ \ \ \ \ \ \ \ \ \ \ \ \ \ \ \ \ \ \ Sample an $\by\in \{0,1\}^n$ uniformly at random and run
  $\AESearch(f,\by, \bS)$
\item \ \ \ \ \ \ \ \ Let $\bB$ be the set of $i\in [n]$ such that a monotone edge along variable $i$ is found
\item \ \ \ \ \ \ \ \ Output an edge violation of $f$ to unateness if one is found in $\bB$.
\item Output ``unate.''
\end{enumerate}
\end{flushleft}\vskip -0.14in
\end{framed}\vspace{-0.2cm}
\caption{Algorithm for Case 3}\label{fig:algcase3}
\end{figure}

\begin{lemma}[Query complexity of $\ThirdAlg$]
$\ThirdAlg(f)$ makes at most
\[ O\left( \left\lceil\frac{2^t \sqrt{n} \log^2 n}{|\calI|} \right\rceil\right) \left( \tilde{O}\left( \frac{2^h}{2^t}\right) +  O\left(\frac{2^s}{2^t}\right) \cdot \tilde{O}\left( \frac{2^h}{2^s}\right)\right) =  \tilde{O}(\sqrt{n}/\eps^2) + \tilde{O}(n^{2/3}). \]
queries to $f$
\end{lemma}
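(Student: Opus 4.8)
The statement is purely a query count for $\ThirdAlg(f)$ in Figure~\ref{fig:algcase3}: no correctness claim is involved. The plan is to tally the queries line by line to obtain the first (un-simplified) expression exactly as written, and then collapse that expression into $\tilde O(\sqrt n/\eps^2)+\tilde O(n^{2/3})$ using only the Case~3 assumptions (\ref{eq:score-remind-3}) and (\ref{eq:small-I-3}).

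For the line-by-line tally I would first note that the only lines that query $f$ are lines~4 and~9, each of which is a single call to $\AESearch$, which by Lemma~\ref{lem:aesearch} makes $O(\log n)$ queries. Every other step is query-free: sampling $\bT$ and $\bS$ (lines~2 and~7) uses only internal randomness, and detecting an edge violation (lines~5, 10, 11) only inspects the bichromatic edges already revealed by the $\AESearch$ calls. Hence the inner block on lines~3--4 costs $O\!\big(\frac{2^h}{2^t}\log^2 n\big)\cdot O(\log n)=\tilde O(2^h/2^t)$ queries; the inner block on lines~8--9 costs $O\!\big(\frac{2^h}{2^s}\log^2 n\big)\cdot O(\log n)=\tilde O(2^h/2^s)$ queries, and so lines~6--10 together cost $O(2^s/2^t)\cdot \tilde O(2^h/2^s)$ queries. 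Summing the two blocks gives the per-iteration cost of the outer loop, and multiplying by the outer-loop count $O\!\big(\lceil 2^t\sqrt n\log^2 n/|\calI|\rceil\big)$ reproduces precisely the first expression in the lemma.

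To simplify, I would use $\lceil x\rceil\le x+1$ to split the product into two pieces. The ``$x$'' piece is $\frac{2^t\sqrt n\log^2 n}{|\calI|}\cdot\big(\tilde O(2^h/2^t)+O(2^s/2^t)\tilde O(2^h/2^s)\big)=\tilde O\!\big(\sqrt n\cdot 2^h/|\calI|\big)$, and the ``$+1$'' piece is $\tilde O(2^h/2^t)+O(2^s/2^t)\tilde O(2^h/2^s)=\tilde O(2^h/2^t)=\tilde O\!\big(\tfrac{2^h}{|\calI|}\cdot\tfrac{|\calI|}{2^t}\big)$. The key input is the inequality $2^h/|\calI|=O(\Lambda^{11}/\eps^2)=\tilde O(1/\eps^2)$, which is just a restatement of $|\calI|/2^h=\Omega(\eps^2/\Lambda^{11})$ from (\ref{eq:score-remind-3}); substituting it turns the first piece into $\tilde O(\sqrt n/\eps^2)$, and combined with $|\calI|/2^t\le n^{2/3}$ from (\ref{eq:small-I-3}) it turns the second piece into $\tilde O(n^{2/3}/\eps^2)$, which is the claimed $\tilde O(n^{2/3})$ term (the $\eps$-dependence being absorbed into $\tilde O(\cdot)$, as it is subsumed by the $n^{2/3}/\eps^2$ bound of Theorem~\ref{thm:main} anyway). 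Adding the two pieces yields $\tilde O(\sqrt n/\eps^2)+\tilde O(n^{2/3})$.

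Since the whole argument is arithmetic, I do not expect a genuine obstacle. The only two points that need care are (i) correctly attributing the $O(\log n)$-per-call cost of $\AESearch$ and confirming that all the sampling and violation-detection steps are free, and (ii) handling the ceiling in the outer-loop count, so that the additive ``$+1$'' it contributes does not dominate — this is exactly the step where both Case~3 hypotheses $2^h/|\calI|=\tilde O(1/\eps^2)$ and $|\calI|/2^t\le n^{2/3}$ are used.
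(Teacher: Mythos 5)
Your proposal is correct and follows essentially the same route as the paper: the paper also obtains the first expression as a direct loop-by-loop tally (each $\AESearch$ call costing $O(\log n)$) and then splits according to whether the ceiling term is dominated by $2^t\sqrt{n}\log^2 n/|\calI|$ or by $1$, which is exactly your $\lceil x\rceil \le x+1$ decomposition, using the same two facts $2^h/|\calI|=\tilde{O}(1/\eps^2)$ and $|\calI|/2^t\le n^{2/3}$. Your remark that the second piece is really $\tilde{O}(n^{2/3}/\eps^2)$ is if anything more careful than the paper's write-up, which silently absorbs that $\eps$-dependence, and it is harmless for Theorem~\ref{thm:main}.
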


\begin{proof}
The query complexity upper bound is divided into two cases. If $2^t \sqrt{n}\log^2 = \Omega(|\calI|)$, then the query complexity is $\tilde{O}\left(\sqrt{n} / \eps^2 \right)$. Otherwise, the query complexity is $\tilde{O}(\frac{2^h}{2^t}) = \tilde{O}(|\calI|/2^t)$ and the bound follows from (\ref{eq:small-I-3}).
\end{proof}

We will use the definition of $i$-informative for monotone and anti-monotone edges given in Definition~\ref{def:informative}. The following lemma simply follows from applying Lemma~\ref{lem:informative-for-antimon} and Lemma~\ref{lem:informative-for-mon}, and taking a union bound.

\begin{claim}
With probability $1 - o(1)$ over the draw of $\bT$ and $\bS$ in lines 2 and 3 conditioned on $i \in \bS \subset \bT$, $\bS\setminus\{i\}$ is $i$-informative for monotone edges and $\bT\setminus\{i\}$ is $i$-informative for anti-monotone edges.
\end{claim}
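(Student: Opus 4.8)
With probability $1 - o(1)$ over the draw of $\bT$ and $\bS$ in lines 2 and 3 of $\ThirdAlg$ conditioned on $i \in \bS \subset \bT$, $\bS \setminus \{i\}$ is $i$-informative for monotone edges and $\bT \setminus \{i\}$ is $i$-informative for anti-monotone edges.

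**The plan.** The proof is essentially a bookkeeping exercise: I need to identify what the conditional distributions of $\bS \setminus \{i\}$ and $\bT \setminus \{i\}$ actually are, observe that they match the distributions $\calP_{i, m}$ appearing in Lemmas~\ref{lem:informative-for-antimon} and~\ref{lem:informative-for-mon}, and then invoke those two lemmas together with a union bound. First I would unpack the sampling in Figure~\ref{fig:algcase3}: line 2 draws $\bT \subset [n]$ of size $\lceil \sqrt{n}/2^t\rceil$ uniformly, and line 7 draws $\bS \subset \bT$ of size $\lceil \sqrt{n}/2^s \rceil$ uniformly. Conditioned on $i \in \bT$, the set $\bT \setminus \{i\}$ is a uniformly random subset of $[n]\setminus\{i\}$ of size $\lceil \sqrt{n}/2^t\rceil - 1$, i.e., it is distributed exactly as $\bT' \sim \calP_{i, \lceil\sqrt{n}/2^t\rceil}$ (recall $\calP_{i,m}$ is uniform over size-$(m-1)$ subsets of $[n]\setminus\{i\}$, so the indexing works out). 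Similarly, conditioned on $i \in \bS \subset \bT$, the set $\bS \setminus \{i\}$ is a uniformly random subset of $[n]\setminus\{i\}$ of size $\lceil\sqrt{n}/2^s\rceil - 1$ — because a uniform size-$k$ subset of a uniform size-$m$ superset of $[n]$, each conditioned to contain $i$, yields (after deleting $i$) a uniform size-$(k-1)$ subset of $[n]\setminus\{i\}$; the intermediate superset washes out by symmetry. Hence $\bS \setminus \{i\}$ is distributed as $\calP_{i, \lceil\sqrt{n}/2^s\rceil}$.

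**Key steps in order.** (1) Verify the two distributional identities above — this is the only place any real (though trivial) argument happens, and it is just the standard ``subset of a random subset is a random subset of the same relative size'' fact, combined with the conditioning-on-$i$ symmetry. (2) Apply Lemma~\ref{lem:informative-for-mon}: for every $i \in \calI$, $\bS' \sim \calP_{i, \lceil\sqrt{n}/2^s\rceil}$ is $i$-informative for monotone edges with probability $1 - o(1)$; since $\bS\setminus\{i\}$ conditioned on $i \in \bS \subset \bT$ has exactly this distribution, it is $i$-informative for monotone edges w.p.\ $1-o(1)$. (3) Apply Lemma~\ref{lem:informative-for-antimon}: for every $i \in \calI$, $\bT' \sim \calP_{i, \lceil\sqrt{n}/2^t\rceil}$ is $i$-informative for anti-monotone edges w.p.\ $1-o(1)$; since $\bT\setminus\{i\}$ conditioned on $i \in \bT$ has exactly this distribution, it is $i$-informative for anti-monotone edges w.p.\ $1-o(1)$. (Here I should double-check that $\lceil\sqrt n/2^t\rceil \le n-1$ and $\lceil\sqrt n/2^s\rceil \le n-1$ so that $\calP_{i,\cdot}$ is well-defined, which is immediate since $2^t,2^s\ge 1$ and $n$ is large.) (4) Union bound over the two failure events to conclude both hold simultaneously with probability $1 - o(1)$.

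**Main obstacle.** There is essentially no obstacle — the claim is a ``packaging'' lemma whose purpose is to let the subsequent Case-3 analysis lift the $\calH/\calP$-based structural facts (strong edges, informativeness) from Section~\ref{algsec:case1} onto the concrete sets sampled by $\ThirdAlg$. The one thing to be careful about is the off-by-one in the definition of $\calP_{i,m}$ (uniform over \emph{size-$(m-1)$} subsets of $[n]\setminus\{i\}$, not size-$m$), so that after deleting $i$ the sizes line up; getting this index matching right is what makes Lemmas~\ref{lem:informative-for-antimon} and~\ref{lem:informative-for-mon} directly applicable. If anything is subtle it is only notational, not mathematical.
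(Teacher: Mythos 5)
Your proposal is correct and follows essentially the same route as the paper, which disposes of this claim in one line by invoking Lemma~\ref{lem:informative-for-antimon} and Lemma~\ref{lem:informative-for-mon} and taking a union bound; your only added content is the (correct) distributional bookkeeping showing that, conditioned on $i\in\bS\subset\bT$, the sets $\bT\setminus\{i\}$ and $\bS\setminus\{i\}$ have exactly the $\calP_{i,\cdot}$ distributions those lemmas require. The off-by-one worry you flag about whether $\calP_{i,m}$ is over size-$m$ or size-$(m-1)$ subsets of $[n]\setminus\{i\}$ is an inconsistency in the paper itself (Section~\ref{dist:sec} versus the restatement before Lemma~\ref{lem:informative-2}), not a gap in your argument.
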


Furthermore, assuming that $i \in \bS \subset \bT$ is sampled in line 2 and 3, where $\bT \setminus \{i\}$ is $i$-informative for anti-monotone edges, and $\bS \setminus \{i\}$ is $i$-informative for monotone edges, it follows that there exists two sets of points $X_i$ and $Y_i$ of size $\Omega(\frac{2^{t}}{2^h}) \cdot 2^n$ and $\Omega(\frac{2^s}{2^h}) \cdot 2^n$, respectively, such that if $\bx \sim X_i$ and $\by \sim Y_i$ are sampled in lines 5 and 8, an edge violation is found with probability at least $\Omega(1)$. Since lines 5 and 8 are repeated sufficiently many times, such points $\bx$ and $\by$ will be sampled from $X_i$ and $Y_i$, respectively. Therefore, since $\bS$ is sampled at least $\tilde{O}(\max\{2^s/2^t, 2^s\sqrt{n}/|\calI|\})$ times, it suffices to prove the following claim.
\begin{claim}
With probability $\Omega( \max\{ 1, \frac{|\calI|}{2^s \sqrt{n} \log^2 n} \})$ over the draw of $\bT$ and $\bS$ in lines 2 and 3, $\bS \cap \calI \neq \emptyset$.
\end{claim}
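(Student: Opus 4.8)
The plan is to collapse the two nested uniform draws into a single one and then run a second-moment argument. First I would note that $s \ge t$ gives $m := \lceil \sqrt{n}/2^s\rceil \le \lceil \sqrt{n}/2^t\rceil$, so composing line~2 (a uniformly random $\bT \subset [n]$ of size $\lceil\sqrt{n}/2^t\rceil$) with line~3 (a uniformly random $\bS \subset \bT$ of size $m$) produces a set $\bS$ that is itself uniformly distributed over all size-$m$ subsets of $[n]$. This is the standard fact that a uniformly random subset of a uniformly random subset is again uniform; it follows from a one-line computation showing $\Prx_{\bT,\bS}\big[\bS = C\big] = 1/\binom{n}{m}$ for every fixed size-$m$ set $C$. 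Thus it suffices to lower bound $\Prx_{\bS}\big[\bS \cap \calI \ne \emptyset\big]$ for a uniformly random size-$m$ subset $\bS$ of $[n]$, and since $m \ge \sqrt{n}/2^s$ this reduces to showing $\Prx_{\bS}\big[\bS \cap \calI \ne \emptyset\big] = \Omega\big(\min\{1,\, |\calI|/(2^s\sqrt{n})\}\big)$, which is at least the asserted bound (reading the ``$\max$'' in the statement as ``$\min$'').

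Next I would apply the second-moment method to $X := |\bS \cap \calI|$. Its mean is $\Ex_{\bS}[X] = m|\calI|/n \ge |\calI|/(2^s\sqrt{n})$, and its second factorial moment is
\[ \Ex_{\bS}\big[X(X-1)\big] \;=\; |\calI|(|\calI|-1)\cdot \frac{m(m-1)}{n(n-1)} \;\le\; \Big(\frac{|\calI| m}{n}\Big)^{2} \;=\; \big(\Ex_{\bS}[X]\big)^{2}, \]
the inequality being the elementary $n(|\calI|-1)(m-1) \le (n-1)|\calI| m$, which holds because $|\calI| m \le n m \le n(|\calI|+m-1)$. Hence $\Ex_{\bS}[X^2] \le \Ex_{\bS}[X] + (\Ex_{\bS}[X])^{2}$ (equivalently, $X$ is hypergeometric with variance at most its mean). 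By the Cauchy--Schwarz inequality $\Prx_{\bS}\big[X>0\big] \ge (\Ex_{\bS}[X])^{2}/\Ex_{\bS}[X^{2}]$, so
\[ \Prx_{\bT, \bS}\big[ \bS \cap \calI \ne \emptyset \big] \;\ge\; \frac{\Ex_{\bS}[X]}{1 + \Ex_{\bS}[X]} \;\ge\; \frac{1}{2}\min\big\{1,\, \Ex_{\bS}[X]\big\} \;\ge\; \frac{1}{2}\min\Big\{1,\, \frac{|\calI|}{2^s\sqrt{n}}\Big\}, \]
which gives the claimed bound. I would also remark that in the ``$1$'' branch, i.e.\ when $|\calI| \gtrsim 2^s\sqrt{n}\log^{2}n$ and hence $\Ex_{\bS}[X] \ge \log^{2} n$, a Chernoff-type bound for the hypergeometric $X$ (Lemma~\ref{appendix1}) in fact yields $\Prx_{\bT,\bS}\big[\bS \cap \calI \ne \emptyset\big] \ge 1 - \exp(-\Omega(\log^{2} n))$; this stronger form is what is needed when union-bounding over the $O\big(\lceil 2^t\sqrt{n}\log^{2}n/|\calI|\rceil\big)$ outer repetitions of $\ThirdAlg$ in Section~\ref{algsec:case3}.

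There is no genuine obstacle here --- this is the one routine counting estimate in Section~\ref{algsec:case3}. The only two points worth checking are that the two nested uniform draws really do compose to a single uniform draw (standard), and that the ceilings in the set sizes do not spoil the estimate, including the corner case $2^s > \sqrt{n}$ where $m = 1$; there the bound $\Ex_{\bS}[X] \ge |\calI|/(2^s\sqrt{n})$ still holds since $m = \lceil \sqrt{n}/2^s\rceil \ge \sqrt{n}/2^s$, so the argument goes through verbatim.
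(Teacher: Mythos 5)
Your proof is correct, and for the key probabilistic step it takes a genuinely different route from the paper's. Both arguments start identically: the uniform draw of $\bS\subseteq\bT$ composed with the uniform draw of $\bT$ makes $\bS$ a uniformly random size-$\lceil\sqrt{n}/2^s\rceil$ subset of $[n]$ (the paper asserts this without comment, just as you do), and your reading of the ``$\max$'' as ``$\min$'' is the intended one, since the literal statement cannot hold when $|\calI|/(2^s\sqrt{n}\log^2 n)>1$. From there the paper runs a first-moment argument with tail truncation: it lower bounds $\Ex[|\bS\cap\calI|]$ by $|\calI|/(2^s\sqrt{n})$, upper bounds the same expectation by $n\exp(-\Omega(\log^2 n))+\gamma\cdot 4\max\{\log^2 n,\,|\calI|/(2^s\sqrt{n})\}$ via the overlap bound of Lemma~\ref{appendix1} and the definition $\gamma=\Prx[\bS\cap\calI\ne\emptyset]$, and then solves for $\gamma$; the $\log^2 n$ in the statement is exactly the loss from that truncation. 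Your second-moment (Paley--Zygmund) argument with the exact hypergeometric factorial moment avoids the truncation and yields the cleaner and stronger bound $\Prx[\bS\cap\calI\ne\emptyset]\ge\tfrac12\min\{1,\,|\calI|/(2^s\sqrt{n})\}$, with no $\log^2 n$ loss; the moment computation and the inequality $n(|\calI|-1)(m-1)\le(n-1)|\calI|m$ check out, and the corner cases you flag (ceilings, $m=1$) are handled correctly. One small quibble with your closing aside: Lemma~\ref{appendix1} is an upper-tail bound only, so it does not by itself give $\Prx[\bS\cap\calI=\emptyset]\le\exp(-\Omega(\log^2 n))$ when $\Ex[|\bS\cap\calI|]\ge\log^2 n$ (that would need a lower-tail estimate); but this remark is not needed, since the per-iteration bound of the claim is all that the surrounding argument in Section~\ref{algsec:case3} uses.
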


\begin{proof}
Let $\gamma$ be the probability over $\bS \subset [n]$ that $\bS \cap \calI \neq \emptyset$, which we will lower bound in the remainder of the proof. Consider the quantity
\[ \Ex_{\bS\subset[n]}\left[ |\bS \cap \calI| \right], \]
and note that since $\bS$ is a uniform random subset of $[n]$ of size $\lceil\sqrt{n}/2^s\rceil$, the above expectation is at least $|\calI|/(2^s \sqrt{n})$.
By Lemma~\ref{appendix1}, $|\bS \cap \calI| \leq 4\max\{ \log^2 n, |\calI| /(2^s \sqrt{n})\}$ with probability at least $1 - \exp\left( -\Omega(\log^2 n)\right)$. As a result, we may upper bound the above expectation using the definition of $\gamma$ by
\[ n \exp\left( -\Omega(\log^2 n)\right) + \gamma \cdot 4\max\{\log^2 n, |\calI|/(2^s \sqrt{n})\} \]
which gives the desired lower bound on $\gamma$. 
\end{proof}

% !TEX root = main.tex

\section*{Acknowledgements} We would like to thank Jinyu Xie for useful discussions. This work is supported in part by the NSF Graduate Research Fellowship under Grant No. DGE-16-44869 and NSF awards CCF-1703925 and IIS-1838154.

\begin{flushleft}
\bibliographystyle{alpha}
\bibliography{waingarten}

\newcommand{\etalchar}[1]{$^{#1}$}
\begin{thebibliography}{BCP{\etalchar{+}}17b}

\bibitem[BB16]{BB16}
Aleksandrs Belovs and Eric Blais.
\newblock A polynomial lower bound for testing monotonicity.
\newblock In {\em Proceedings of the 48th {ACM} Symposium on the Theory of
  Computing ({STOC}~'2016)}, pages 1021--1032, 2016.

\bibitem[BCP{\etalchar{+}}17a]{BCPRS17b}
Roksana Baleshzar, Deeparnab Chakrabarty, Ramesh Krishnan~S. Pallavoor, Sofya
  Raskhodnikova, and C.~Seshadhri.
\newblock A lower bound for nonadaptive, one-sided error testing of unateness
  of boolean functions over the hypercube.
\newblock {\em arXiv preprint arXiv:1706.00053}, 2017.

\bibitem[BCP{\etalchar{+}}17b]{BCPRS17}
Roksana Baleshzar, Deeparnab Chakrabarty, Ramesh Krishnan~S. Pallavoor, Sofya
  Raskhodnikova, and C.~Seshadhri.
\newblock Optimal unateness testers for real-values functions: Adaptivity
  helps.
\newblock In {\em Proceedings of the 44th International Colloquium on Automata,
  Languages and Programming ({ICALP}~'2017)}, 2017.

\bibitem[Bla09]{B09}
Eric Blais.
\newblock Testing juntas nearly optimally.
\newblock In {\em Proceedings of the 41st {ACM} Symposium on the Theory of
  Computing ({STOC}~'2009)}, pages 151--158, 2009.

\bibitem[BMPR16]{BMPR16}
Roksana Baleshzar, Meiram Murzabulatov, Ramesh Krishnan~S. Pallavoor, and Sofya
  Raskhodnikova.
\newblock Testing unateness of real-valued functions.
\newblock {\em arXiv preprint arXiv:1608.07652}, 2016.

\bibitem[CC16]{CS16}
Deeparnab Chakrabarty and Seshadhri Comandur.
\newblock An o(n) monotonicity tester for boolean functions over the hypercube.
\newblock {\em {SIAM} Journal on Computing}, 45(2):461--472, 2016.

\bibitem[CDST15]{CDST15}
Xi~Chen, Anindya De, Rocco~A. Servedio, and Li-Yang Tan.
\newblock Boolean function monotonicity testing requires (almost) {$n^{1/2}$}
  non-adaptive queries.
\newblock In {\em Proceedings of the 47th {ACM} Symposium on the Theory of
  Computing ({STOC}~'2015)}, pages 519--528, 2015.

\bibitem[CLS{\etalchar{+}}18]{CLSSX18}
Xi~Chen, Zhengyang Liu, Rocco~A. Servedio, Ying Sheng, and Jinyu Xie.
\newblock Distribution-free junta testing.
\newblock In {\em Proceedings of the 50th Annual {ACM} {SIGACT} Symposium on
  Theory of Computing (STOC)}, 2018.

\bibitem[CS18]{CS18}
Deeparnab Chakrabarty and C.~Seshadhri.
\newblock Adaptive boolean monotonicity testing in total influence time.
\newblock 2018.

\bibitem[CST14]{CST14}
Xi~Chen, Rocco~A. Servedio, and Li-Yang Tan.
\newblock New algorithms and lower bounds for monotonicity testing.
\newblock In {\em Proceedings of the 55th Annual {IEEE} Symposium on
  Foundations of Computer Science ({FOCS}~'2014)}, pages 285--295, 2014.

\bibitem[CWX17a]{CWX17}
Xi~Chen, Erik Waingarten, and Jinyu Xie.
\newblock Beyond talagrand functions: new lower bounds for testing monotonicity
  and unateness.
\newblock In {\em Proceedings of the 50th {ACM} Symposium on the Theory of
  Computing ({STOC}~'2018)}, 2017.

\bibitem[CWX17b]{CWX17b}
Xi~Chen, Erik Waingarten, and Jinyu Xie.
\newblock Boolean unateness testing with $\widetilde{\Omega}(n^{3/4})$ adaptive
  queries.
\newblock In {\em Proceedings of the 58th Annual {IEEE} Symposium on
  Foundations of Computer Science ({FOCS}~'2017)}, 2017.

\bibitem[CX16]{ChenXie:15}
X.~Chen and J.~Xie.
\newblock Tight bounds for the distribution-free testing of monotone
  conjunctions.
\newblock In {\em Proceedings of the 27th Annual ACM-SIAM Symposium on Discrete
  Algorithms (SODA)}, 2016.

\bibitem[DR11]{DolevRon:11}
E.~Dolev and D.~Ron.
\newblock Distribution-free testing for monomials with a sublinear number of
  queries.
\newblock {\em Theory of Computing}, 7(1):155--176, 2011.

\bibitem[FLN{\etalchar{+}}02]{FLNRRS02}
Eldar Fischer, Eric Lehman, Ilan Newman, Sofya Raskhodnikova, Ronitt Rubinfeld,
  and Alex Samorodnitsky.
\newblock Monotonicity testing over general poset domains.
\newblock In {\em Proceedings of the 34th {ACM} Symposium on the Theory of
  Computing ({STOC}~'2002)}, pages 474--483, 2002.

\bibitem[GGL{\etalchar{+}}00]{GGLRS00}
Oded Goldreich, Shafi Goldwasser, Eric Lehman, Dana Ron, and Alex Samordinsky.
\newblock Testing monotonicity.
\newblock {\em Combinatorica}, 20(3):301--337, 2000.

\bibitem[KMS15]{KMS15}
Subhash Khot, Dor Minzer, and Muli Safra.
\newblock On monotonicity testing and boolean isoperimetric type theorems.
\newblock In {\em Proceedings of the 56th Annual {IEEE} Symposium on
  Foundations of Computer Science ({FOCS}~'2015)}, pages 52--58. IEEE Computer
  Society, 2015.

\bibitem[KS16]{KS16}
Subhash Khot and Igor Shinkar.
\newblock An {$\widetilde{O}(n)$} queries adaptive tester for unateness.
\newblock In {\em Approximation, Randomization and Combinatorial Optimization.
  Algorithms and Techniques}, pages 37:1--37:7, 2016.

\end{thebibliography}
\end{flushleft}
\appendix
% !TEX root = main.tex

\section{Adaptive Edge Search}\label{sec:aes}

For completeness we present the adaptive edge search algorithm, $\AESearch$ in Figure~\ref{fig:edge-search}, which first appeared in \cite{CWX17b}. 
%The algorithm heavily uses $\AESearch$ to find bichromatic edges of a Boolean function $f \colon \{0,1\}^n \to \{0,1\}$. We present the subroutine .
We recall the lemma and include its proof below.
\begin{figure}[t!]
\begin{framed}
\noindent Subroutine \AESearch$\hspace{0.05cm}(f,x, S)$
\begin{flushleft}\noindent {\bf Input:} Query access to $f \colon \{0, 1\}^n \to \{0, 1\}$, $x \in \{0, 1\}^n$, 
 and a nonempty set $S \subseteq [n]$.

\noindent {\bf Output:}  Either a variable $i \in S$ with $f(x^{(i)}) \neq f(x)$, or ``fail.''

\begin{enumerate}
\item Query $f(x)$ and set $b \leftarrow f(x)$.  

%\item If $|S|=O(1)$, query every $f(x^{(i)})$ with $i\in S$. If an $i\in S$ with
%  $f(x^{(i)})\ne b$
 % is found,\\ return $i$; otherwise return ``fail.''
%
\item Draw $L=\lceil 4\log n\rceil$ subsets $\bT_1, \dots, \bT_L \subseteq S$ of size 
$\red{t=\left\lfloor (|S|-1) /{2} \right\rfloor+1}$ uniformly.
\item Query $f(x^{(\bT_\ell)})$ and set the output to be $\bb_\ell$ for each $\ell\in [L]$. Let $\bC \subseteq S$ where\vspace{-0.06cm}
	\[ \bC = \bigcap_{\ell \in [L] \colon\hspace{-0.02cm}\bb_\ell \neq b} \bT_\ell\ \ \ \ \ \ \text{($\bC=\emptyset$ by default if $\bb_\ell=b$ for all $\ell$)}. \vspace{-0.2cm}\]
\item If $\bC = \{ \bi \}$ for some $\bi$, query $f(x^{(\bi)})$ and return $\bi$ if $f(x^{(\bi)}) \neq b$;
otherwise return ``fail.''
\end{enumerate}
\end{flushleft}\vskip -0.14in
\end{framed}\vspace{-0.2cm}
\caption{Description of the adaptive edge search subroutine.\vspace{-0.15cm}} \label{fig:edge-search}
\end{figure}

%\begin{lemma}
%$\AESearch(x,S)$ makes $O(\log n)$ queries to $f$, and either returns an index $i\in S$ such that $(x, %x^{(i)})$ is a bichromatic edge, or ``fail.''
%\end{lemma}

%\begin{lemma}
%Let $(x, x^{(i)})$ be a bichromatic edge of $f$, and $S \subset [n] \setminus \{i\}$. If $(x, x^{(i)})$ is $S$-persistent, then $\AESearch(x, S \cup \{i \})$ and $\AESearch(x^{(i)}, S \cup \{i\})$ output $i$ with probability at least $\frac{2}{3}$. 
%\end{lemma}
%\cref{lem:aesearch}

\begin{lemma} 
Given a point $x\in \{0,1\}^n$ and a set $S\subseteq [n+1]$,
$\AESearch\hspace{0.05cm}(f,x,S)$ makes $O(\log n)$ queries to $f$, and returns either an
  $i\in S$ such that $(x,x^{(i)})$ is a bichromatic edge, or ``fail.''
  
Let $(x, x^{(i)})$ be a bichromatic edge of $f$ along $i$. % and $\red{S \subseteq [n+1]}$ with $i\in S$. % \setminus \{i \}$.
If $i\in S$ and $(x,x^{(i)})$ is $(S\setminus \{i\})$-persistent, then 
%which forms a good pair with $(x, x^{(i)})$. Then, 
  both $\AESearch\hspace{0.05cm}(f,x, S )$ and $\AESearch\hspace{0.05cm}(f,x^{(i)}, S )$ 
  output $i$ with probability at least $2/3$.
\end{lemma}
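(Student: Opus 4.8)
The plan is to analyze the set $\bC$ computed in Step 3 of $\AESearch$ and show that under the hypotheses it equals $\{i\}$ with probability at least $2/3$, and that in this case Step 4 confirms $i$ and returns it. I will argue symmetrically for $x$ and $x^{(i)}$, so fix the point $x$ (the argument for $x^{(i)}$ is identical after swapping roles). First I would record the elementary distributional fact: if $\bT \subseteq S$ is a uniformly random subset of size $t = \lfloor(|S|-1)/2\rfloor + 1$, then $i \in \bT$ with probability exactly $t/|S| \geq 1/2$, and conditioned on $i \in \bT$ (resp.\ $i \notin \bT$) the set $\bT \setminus \{i\}$ (resp.\ $\bT$) is a uniformly random subset of $S \setminus \{i\}$ of size $t-1$ (resp.\ $t$). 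Note $t-1 = \lfloor (|S|-1)/2\rfloor = \lfloor |S\setminus\{i\}|/2 \rfloor$ and $t = \lfloor |S\setminus\{i\}|/2\rfloor + 1$, which are exactly the two sizes appearing in Definition~\ref{def:persistency}. This is the bookkeeping that lets me invoke $(S\setminus\{i\})$-persistency of both endpoints.

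The core of the argument is to control two bad events for a single draw $\bT_\ell$. Write $b = f(x)$ and recall $f(x^{(i)}) = 1-b$ since $(x,x^{(i)})$ is bichromatic. Bad event (a): $i \in \bT_\ell$ but $f(x^{(\bT_\ell)}) = b$. Since $x^{(\bT_\ell)} = (x^{(i)})^{(\bT_\ell \setminus \{i\})}$ and, conditioned on $i \in \bT_\ell$, $\bT_\ell \setminus \{i\}$ is uniform of size $t-1 = \lfloor |S\setminus\{i\}|/2\rfloor$, the $(S\setminus\{i\})$-persistency of $x^{(i)}$ gives that $f((x^{(i)})^{(\bT_\ell\setminus\{i\})}) = f(x^{(i)}) = 1-b$ with probability $\geq 1 - 1/\log^2 n$; so bad event (a) has probability $\leq 1/\log^2 n$ conditioned on $i \in \bT_\ell$, hence $\leq 1/\log^2 n$ unconditionally. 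Bad event (b): $i \notin \bT_\ell$ but $f(x^{(\bT_\ell)}) \neq b$. Conditioned on $i \notin \bT_\ell$, $\bT_\ell$ is uniform of size $t = \lfloor |S\setminus\{i\}|/2\rfloor+1$, so by $(S\setminus\{i\})$-persistency of $x$ itself, $f(x^{(\bT_\ell)}) = f(x) = b$ with probability $\geq 1 - 1/\log^2 n$; thus bad event (b) has probability $\leq 1/\log^2 n$. Now take a union bound over the $L = \lceil 4\log n\rceil$ draws: with probability $\geq 1 - L/\log^2 n = 1 - o(1)$, for every $\ell$ we have $\bb_\ell \neq b \iff i \in \bT_\ell$. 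On this event, every $\bT_\ell$ with $\bb_\ell \neq b$ contains $i$, so $i \in \bC$; and for any $j \in S \setminus \{i\}$, the probability that $j$ survives into $\bC$ is the probability that $j$ lies in $\bigcap_{\ell: i \in \bT_\ell}\bT_\ell$ — but I need $\bigcap_{\ell: \bb_\ell\neq b}\bT_\ell$ to be a singleton. Here I would additionally condition on there being at least one index $\ell$ with $i \in \bT_\ell$ (this fails with probability $\leq 2^{-L} = o(1)$ since each has probability $\geq 1/2$ independently), and then, given that at least one such $\ell$ exists, for each fixed $j \neq i$, $\Pr[j \in \bT_\ell \mid i \in \bT_\ell] \leq (t-1)/(|S|-1) \leq 1/2$ independently across the relevant $\ell$'s, so $\Pr[j \in \bC] \leq 2^{-(\#\text{such }\ell)}$; a union bound over $j \in S$ together with a lower bound (say $\geq \log n$ with high probability, or just $\geq 1$ combined with a second-moment/Chernoff refinement) on the number of good indices makes $\Pr[\exists j \neq i, j \in \bC] = o(1)$.

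Putting these together: with probability $1 - o(1) \geq 2/3$ we have $\bC = \{i\}$, and then Step 4 queries $f(x^{(i)}) = 1-b \neq b$ and returns $i$. The correctness-always part (that $\AESearch$ returns a genuine bichromatic edge or fails) is immediate from Step 4, which explicitly verifies $f(x^{(\bi)}) \neq b$ before returning $\bi$, and the query bound $O(\log n)$ is read off from the description (one query in Step 1, $L = O(\log n)$ in Step 3, one in Step 4). The main obstacle I anticipate is the last part — ensuring $\bC$ is exactly a singleton rather than merely containing $i$. Getting the $o(1)$ bound there cleanly requires simultaneously (i) enough draws $\bT_\ell$ with $i \in \bT_\ell$ to kill every spurious $j$ via the independent $\leq 1/2$ inclusion probabilities, and (ii) the union bound over $|S| \leq n+1$ candidates $j$, so one wants the number of good draws to be $\gg \log n$; this is why $L = \lceil 4\log n\rceil$ (rather than $O(1)$) is chosen, and I would spend the care there to make the two union bounds and the Chernoff lower bound on the count of good draws fit together with room to spare under $1/3$.
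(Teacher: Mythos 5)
Your proposal is correct in outline, and the first half (getting $i \in \bC$) is essentially the paper's argument: persistency of $x^{(i)}$ controls draws containing $i$, persistency of $x$ controls draws missing $i$. Where you genuinely diverge is in ruling out spurious $j \neq i$. The paper never conditions on your ``iff'' event: for a fixed $j\neq i$ it lower-bounds, for a \emph{single unconditioned} draw $\bT$, the probability of the killing event $\{\,j \notin \bT \text{ and } f(x^{(\bT)}) \neq b\,\}$ by $\tfrac{t}{|S|}\bigl(1 - \tfrac{t-1}{|S|-1} - \tfrac{1}{\log^2 n}\bigr) \geq \tfrac14 - o(1)$, so $\Pr[j \in \bC] \leq (3/4+o(1))^{L} = o(1/n)$ and a union bound over $j \in S$ finishes; this sidesteps both of the complications your route must handle. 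Your route—identify $\bC$ with $\bigcap_{\ell : i \in \bT_\ell}\bT_\ell$ on the iff event, then use conditional independence across draws plus a Chernoff lower bound on the number of draws containing $i$—does work, but note two care points you only partially flag. First, once you condition on $\bb_\ell \neq b$ (part of the iff event), the inclusion probability for $j$ is no longer exactly $(t-1)/(|S|-1) \leq 1/2$ but at most $\tfrac{1/2}{1 - 1/\log^2 n} = 1/2 + o(1)$; harmless, but as written your $\leq 1/2$ is computed under the wrong (unconditioned) measure. Second, a count of good draws of only $\log_2 n$ gives $2^{-\#} = 1/n$, which a union bound over up to $n+1$ candidates does not beat; you need $\#\geq (1+c)\log_2 n$, which is available since the expected count is at least $2\log n$, but this is exactly the step you must make quantitative. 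The paper's per-draw $1/4$ bound buys you freedom from both the conditioning bias and the concentration-of-count argument, at the cost of a slightly less transparent event; your version makes the ``$\bC$ is a singleton'' intuition more explicit but requires that extra bookkeeping to close under $1/3$.
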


The first part of the lemma follows directly from the description of $\AESearch$.
For the second part, we prove it for $\AESearch\hspace{0.03cm}(f,x, S)$ 
   since the proof for $\AESearch\hspace{0.03cm}(f,x^{(i)}, S)$ is symmetric.
  The proof proceeds by exactly the same as Claim~6.6 and  6.7 of \cite{CWX17b}, except for some minor notational differences. We present a proof of the claims but adapted to the notation of this paper.
\begin{claim}
Let $(x, x^{(i)})$ be a bichromatic edge and $i\in S\subseteq [n+1]$.
If $(x, x^{(i)})$ is $(S\setminus \{i\})$-persistent, then, in line 3 of $\AESearch\hspace{0.05cm}(f,x, S)$, $i \in \bC$ with probability at least $1 - o(1)$. 
\end{claim}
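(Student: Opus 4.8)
The plan is to show that, with probability $1-o(1)$ over the choice of the subsets $\bT_1,\dots,\bT_L$ drawn in $\AESearch\hspace{0.05cm}(f,x,S)$ (here $L=\lceil 4\log n\rceil$ and $b=f(x)$), two events hold simultaneously: (i) every $\ell\in[L]$ with $i\notin\bT_\ell$ has $\bb_\ell=b$; and (ii) there is at least one $\ell$ with $i\in\bT_\ell$ and $\bb_\ell\neq b$. If both hold, then every index $\ell$ contributing to $\bC=\bigcap_{\ell:\hspace{0.03cm}\bb_\ell\neq b}\bT_\ell$ satisfies $i\in\bT_\ell$, and the intersection is over a nonempty collection, so $i\in\bC$. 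Note event (ii) is not merely cosmetic: $\bC$ is defined to be $\emptyset$ when no $\bb_\ell$ differs from $b$, so one genuinely must rule that out.

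The conceptual heart is a size-bookkeeping identity connecting $\AESearch$ to Definition~\ref{def:persistency}. Writing $S'=S\setminus\{i\}$ (so $|S'|=|S|-1$) and $t=\lfloor(|S|-1)/2\rfloor+1$ for the size used in $\AESearch$, one checks $t=\lfloor|S'|/2\rfloor+1$ and $t-1=\lfloor|S'|/2\rfloor$. Hence, conditioned on $i\notin\bT_\ell$, the set $\bT_\ell$ is uniform over subsets of $S'$ of size $\lfloor|S'|/2\rfloor+1$ — one of the two sizes in Definition~\ref{def:persistency} — so by $(S\setminus\{i\})$-persistency of $x$ we get $\Prx[f(x^{(\bT_\ell)})\neq f(x)]\le 1/\log^2 n$, i.e. $\Prx[\bb_\ell\neq b\mid i\notin\bT_\ell]\le 1/\log^2 n$. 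Conditioned on $i\in\bT_\ell$, write $\bT_\ell=\{i\}\cup\bR$ with $\bR$ uniform over subsets of $S'$ of size $\lfloor|S'|/2\rfloor$ — the other size in Definition~\ref{def:persistency}; then $x^{(\bT_\ell)}=(x^{(i)})^{(\bR)}$, so $(S\setminus\{i\})$-persistency of $x^{(i)}$ gives $\Prx[f(x^{(\bT_\ell)})\neq f(x^{(i)})]\le 1/\log^2 n$, and since $f(x)\neq f(x^{(i)})$ this yields $\Prx[\bb_\ell\neq b\mid i\in\bT_\ell]\ge 1-1/\log^2 n$.

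With these two estimates, event (i) fails with probability at most $\sum_{\ell\in[L]}\Prx[i\notin\bT_\ell]\cdot\Prx[\bb_\ell\neq b\mid i\notin\bT_\ell]\le L/\log^2 n=O(1/\log n)$ by a union bound (no independence needed). For event (ii), observe $t/|S|\ge 1/2$, so for each $\ell$ independently $\Prx[i\in\bT_\ell\ \text{and}\ \bb_\ell\neq b]\ge \tfrac12\big(1-1/\log^2 n\big)\ge 1/3$ for $n$ large; using independence of the $\bT_\ell$'s, event (ii) fails with probability at most $(2/3)^L=n^{-\Omega(1)}=o(1)$. A final union bound over the two bad events shows $i\in\bC$ with probability $1-o(1)$. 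The only delicate points are the rounding identities for $t$ and tracking the two persistency sizes correctly; there is no substantive obstacle, and this mirrors Claims~6.6 and~6.7 of~\cite{CWX17b}.
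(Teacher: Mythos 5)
Your proposal is correct and follows essentially the same route as the paper: the same two bad events (some $\ell$ with $i\notin\bT_\ell$ but $\bb_\ell\neq b$, and no $\ell$ with $\bb_\ell\neq b$), handled respectively by the persistency of $x$ with a union bound and by the persistency of $x^{(i)}$ together with $\Prx[i\in\bT_\ell]=t/|S|\ge 1/2$ and independence over the $L$ samples. Your explicit rounding bookkeeping ($t=\lfloor|S'|/2\rfloor+1$, $t-1=\lfloor|S'|/2\rfloor$) just makes precise the step the paper uses implicitly when matching the two sizes in Definition~\ref{def:persistency}.
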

\begin{proof}
Let $\bT_1, \dots, \bT_{L} \subseteq S $ be subsets sampled in line 2 of $\AESearch$. 
The parameter $t$ satisfies
$$
t=\left\lfloor \frac{|S|-1}{2}\right\rfloor +1\ge \frac{|S|}{2}.
$$
We note that there are two events where $i \notin \bC$: (1) either all $\bT_{\ell}$ satisfy $f(x^{(\bT_{\ell})}) = b$, or (2) there is an $\ell \in [L]$ with $i \notin \bT_{\ell}$ and $f(x^{(\bT_{\ell})}) \neq b$. We show that the probability of either event occurring is at most $o(1)$, so the claim follows by a union bound. 

For the first event, a single sample of a random set $\bT \subseteq S$ of size $t$ satisfies
\begin{align*}
\Prx_{\substack{\bT \subseteq S \\ |\bT| = t}}\left[ f(x^{(\bT)}) \neq b\right] &\geq \Prx_{\substack{\bT\subseteq S \\ |\bT| = t}}\big[i\in \bT \big]\cdot \Prx_{\substack{\bT' \subseteq S\setminus \{i\} \\ |\bT'| = t-1}}\left[ f(x^{(i \cup \bT')}) \neq b \right] \geq \frac{t}{|S| } \cdot \left( 1 - \frac{1}{\log^2 n}\right) \geq \frac{1}{2} - o(1),
\end{align*}
where we used the fact that $f(x^{(i)}) \neq b$ since $(x, x^{(i)})$ is bichromatic, and the assumption
  that $x^{(i)}$ is $(S\setminus \{i\})$-persistent. Thus the probability that all $\bT_{\ell}$ satisfy $f(x^{(\bT_{\ell})}) = b$ is $(0.5 + o(1))^{L} = o(1)$.

For the second event, using the assumption that $x$ is $(S\setminus \{i\})$-persistent, we have that
\begin{align*} 
\Prx_{\bT_1, \dots, \bT_{L}}\big[\exists \ell \in [L]: f(x^{(\bT_{\ell})}) \neq b \text{ and } i \notin \bT_{\ell}\big] &\leq L \cdot \Prx_{\substack{\bT \subseteq S \\ |\bT| = t}}\left[ f(x^{(\bT )}) \neq b \hspace{0.1cm}\big|\hspace{0.1cm} i \notin \bT\right] \leq \dfrac{L}{\log^2 n} = o(1).
\end{align*}
This finishes the proof of the claim.
\end{proof}
\begin{claim}
Let $(x, x^{(i)})$ be a bichromatic edge and $i\in S\subseteq [n+1]$. If $(x, x^{(i)})$ is $(S\setminus\{i\})$-persistent, then in line 3 of $\AESearch\hspace{0.05cm}(f,x, S)$, $\bC$ does not 
  contain any $j\ne i$ with probability at least $1 - o(1)$.
\end{claim}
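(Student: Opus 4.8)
The plan is to fix a single $j\in S$ with $j\ne i$, show that $\Pr[j\in\bC]=o(1/n)$, and then finish by a union bound over the at most $n$ choices of $j$. Observe that if $j\in\bC$ then $\bC$ is nonempty, so there is at least one $\ell\in[L]$ with $\bb_\ell\ne b$, and $j\in\bT_\ell$ holds for every such $\ell$. Hence the event $\{j\in\bC\}$ is contained in the event $\{\text{for all }\ell\in[L],\ \bb_\ell=b\text{ or }j\in\bT_\ell\}$, and since $\bT_1,\dots,\bT_L$ are i.i.d.\ we get
\[ \Pr[j\in\bC]\le (1-p)^L,\qquad\text{where } p:=\Prx_{\substack{\bT\subseteq S\\ |\bT|=t}}\big[f(x^{(\bT)})\ne b \ \text{and}\ j\notin\bT\big]. \]
So it suffices to show that $p$ is bounded below by an absolute constant; then $(1-p)^L=(1-p)^{\lceil 4\log n\rceil}=n^{-\Omega(1)}=o(1/n)$, and the union bound gives $\Pr[\exists\, j\ne i:\ j\in\bC]=o(1)$. (If $|S|=1$ there is no such $j$ and the claim is vacuous, so assume $|S|\ge 2$.)

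To lower bound $p$, I would first note by a direct count that $\Pr_{\bT}[i\in\bT\text{ and }j\notin\bT]=\tfrac{t(|S|-t)}{|S|(|S|-1)}\ge \tfrac14$, using $t=\lfloor(|S|-1)/2\rfloor+1\in[\,|S|/2,\,|S|/2+1\,]$. Conditioned on $\{i\in\bT,\ j\notin\bT\}$, we may write $\bT=\{i\}\cup\bT'$ with $\bT'$ a uniformly random subset of $S\setminus\{i,j\}$ of size $t-1=\lfloor(|S|-1)/2\rfloor$, and then $x^{(\bT)}=(x^{(i)})^{(\bT')}$. Since $(x,x^{(i)})$ is bichromatic we have $f(x^{(i)})\ne b$, so it is enough to bound the probability that $f((x^{(i)})^{(\bT')})\ne f(x^{(i)})$. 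Here is the key coupling: let $\bT''$ be a uniformly random subset of $S\setminus\{i\}$ of size $\lfloor|S\setminus\{i\}|/2\rfloor=\lfloor(|S|-1)/2\rfloor$; conditioned on $j\notin\bT''$ — an event of probability $1-\tfrac{\lfloor(|S|-1)/2\rfloor}{|S|-1}\ge \tfrac12$ — the set $\bT''$ is distributed exactly as $\bT'$. Therefore, by the $(S\setminus\{i\})$-persistence of $x^{(i)}$ (Definition~\ref{def:persistency}),
\[ \Prx_{\bT'}\big[f((x^{(i)})^{(\bT')})\ne f(x^{(i)})\big]=\Prx_{\bT''}\big[f((x^{(i)})^{(\bT'')})\ne f(x^{(i)})\,\big|\,j\notin\bT''\big]\le \frac{1/\log^2 n}{1/2}=\frac{2}{\log^2 n}. \]
Combining, $p\ge \tfrac14\big(1-\tfrac{2}{\log^2 n}\big)\ge \tfrac15$ for $n$ large enough, which is the desired constant. (When $|S|$ is a fixed constant, $t-1$ may be $0$, in which case the persistence step is vacuous since $x^{(\bT')}=x^{(i)}$; and then the union bound is only over $O(1)$ values of $j$, so the argument is only easier.)

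Putting things together: $\Pr[j\in\bC]\le(4/5)^{\lceil 4\log n\rceil}=o(1/n)$ for each fixed $j\ne i$, so $\bC$ contains no $j\ne i$ with probability $1-o(1)$. Combined with the previous claim, which gives $i\in\bC$ with probability $1-o(1)$, we conclude that $\bC=\{i\}$ with probability $1-o(1)$; in that case line~4 of $\AESearch\hspace{0.05cm}(f,x,S)$ queries $f(x^{(i)})$, sees $f(x^{(i)})\ne b=f(x)$ because $(x,x^{(i)})$ is bichromatic, and returns $i$. The same reasoning applied to $x^{(i)}$ (which is likewise $(S\setminus\{i\})$-persistent) handles $\AESearch\hspace{0.05cm}(f,x^{(i)},S)$, completing the proof of the second part of the Lemma. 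I expect the main obstacle to be the coupling step above, i.e.\ matching the conditional law of the random subset (over $S\setminus\{i,j\}$) with the law appearing in the definition of persistency (over $S\setminus\{i\}$); the remaining work is routine counting and chasing constants to make sure the per-$j$ bound $(1-p)^L$ comfortably beats the factor of $n$ in the union bound.
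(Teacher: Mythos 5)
Your proposal is correct and follows essentially the same route as the paper: fix $j\ne i$, lower-bound by an absolute constant the per-sample probability that $j\notin\bT$ and $f(x^{(\bT)})\ne b$ (using $i\in\bT$ together with the $(S\setminus\{i\})$-persistence of $x^{(i)}$), raise to the power $L$, and union bound over $j$. The only cosmetic difference is that you condition on the joint event $\{i\in\bT,\ j\notin\bT\}$ and couple its conditional law with the persistency distribution (yielding the constant $1/5$), whereas the paper conditions only on $i\in\bT$ and applies a union bound over $\{j\in\bT'\}$ and the flip event (yielding $1/4-o(1)$); both give $(1-p)^L=o(1/n)$ with $L=\lceil 4\log n\rceil$, so keep your sharper intermediate bound $p\ge \tfrac14(1-2/\log^2 n)$ rather than rounding to $1/5$ if you want the margin to be insensitive to the base of the logarithm.
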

\begin{proof}
Fix a $j\in S$ but $j\ne i$.
Note that in order for $j \in \bC$, every $\ell \in [L]$ with $f(x^{(\bT_{\ell})}) \neq b$ satisfies $j \in \bT_{\ell}$. However, since $x^{(i)}$ is $(S\setminus \{i\})$-persistent we have 
\begin{align*} 
\Prx_{\substack{\bT \subseteq S \\ |\bT| = t}}\left[ j \notin \bT \text{ and } f(x^{(\bT )}) \neq b\right] &\geq \Prx_{\substack{\bT \subseteq S \\ |\bT| = t}}\big[ i \in \bT\big]\cdot \Prx_{\substack{\bT' \subseteq S\setminus\{i \} \\ |\bT'| = t-1}}\left[ j \notin \bT' \text{ and } f(x^{(\bT'\cup \{i\})}) \neq b \right] \\[0.5ex]
	&\geq \dfrac{t}{|S| } \cdot \left(1-\frac{t-1}{|S|-1} - \frac{1}{\log^2 n}\right) \geq \frac{1}{4} - o(1).
\end{align*}
Therefore, $j\in \bC$ with probability $(3/4 + o(1))^{L} = o(1/n)$.
The claim follows from a union bound.
\end{proof}

It follows by combining these two claims that $\bC=\{i\}$ in line 4 with probability at least $1-o(1)$.
This finishes the proof of the lemma.

\section{Analysis of the Preprocessing Procedure}\label{proof:pruneanalysis}

%We will now describe an algorithm which given as input a set $S \subset [n]$, as well as access to a Boolean function $f \colon \{0,1\}^n \to \{0,1\}$, will determine whether $S$ has properties A and B for the function $f$. We will call this procedure $\TestAB$; the ``test'' has the additional property that when it will declare that $S$ does not have property A and B, it will output an index $i \in S$ for which it has found a bichromatic edge. See Figure~\ref{fig:test-ab} for a formal description of $\TestAB$.
 
We start with the following property of $\TestAB$: 
 
%Our proofs will crucially use the structure of $\TestAB$, which uses the sub-routine $\BinarySearch$, which is depicted in Figure~\ref{fig:binarysearch}

%\begin{claim}
%Given as input a point $x \in \{0,1\}^n$, a set $S \subset [n]$ of size $|S| = m$ satisfying $f(x) \neq %f(x^{(S)})$, as well as an order $\pi \colon [m] \to S$, the sub-routine $\BinarySearch_f(x, S, \pi)$ is a deterministic algorithm which makes at most $O(\log m)$ queries to $f$ and outputs a bichromatic edge in some variable $i \in S$.
%\end{claim}

\begin{claim}\label{lem:test-lemma}
Given a nonempty set $S \subseteq [n+1]$, an ordering $\pi$ of $S$ and $\xi\in (0,1)$,
  $\TestAB$ $(f,S,\pi,\xi)$ makes $O({\log^\red{5} n}/{\xi})$ many queries to $f$.
%\end{claim}
Furthermore, if the fraction of points that are not $S$-persistent is at least $\xi$,
  $\TestAB\hspace{0.05cm}(f,S,\pi,\xi)$ returns a variable $i\in S$ with probability at 
  least $1-\exp(-\Omega(\log^2 n))$.
%\begin{claim}\label{cl:testab-empty}
%We note that $\TestAB_f(\emptyset)$ always outputs ``accept''.
%\end{claim}
%\begin{lemma}\label{lem:test-lemma}
%If $S \subset [n]$ fails to have properties A or B, then with probability at least $1 - \exp\left(-\Omega(\log^2 n)\right)$ over the randomness in $\TestAB_f(S)$, $\TestAB_f(S)$ a bichromatic edge along variable $i \in S$.
\end{claim}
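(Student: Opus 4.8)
\textbf{Proof plan for Claim~\ref{lem:test-lemma}.}

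The plan is to verify the two assertions separately, both by routine concentration arguments. For the query bound, I would simply read it off the description in Figure~\ref{fig:test-ab}: the outer loop of $\TestAB$ runs $\log^4 n/\xi$ times, and each iteration flips a coin, samples one set $\bT\subseteq S$, and makes a single call to $\BinarySearch$, which by the lemma in Section~\ref{sec:binarysearch} uses $O(\log n)$ queries; multiplying gives $O(\log^5 n/\xi)$ queries. (The constant hidden here is what accounts for the ``$\red 5$'' in the statement.)

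For the correctness part, suppose the fraction of $S$-non-persistent points is at least $\xi$. I would first unpack the definition of $S$-persistence (Definition~\ref{def:persistency}): a point $x$ fails to be $S$-persistent exactly when at least one of the two displayed probabilities drops below $1-1/\log^2 n$, i.e.\ when $\Pr_{\bT}[f(x)\neq f(x^{(\bT)})] > 1/\log^2 n$ for $\bT$ of size $\lfloor |S|/2\rfloor$ or of size $\lfloor |S|/2\rfloor+1$. So for at least a $\xi/2$ fraction of points $x$, this happens for (say) the first size; by symmetry of the argument it suffices to handle that case, since $\TestAB$'s coin flip selects each size with probability $1/2$. Now consider a single iteration of the loop: with probability $\ge \xi/2$ the sampled $\bx$ is such a ``bad'' point, and conditioned on that, with probability $\ge 1/\log^2 n$ the further-sampled $\bT$ satisfies $f(\bx)\neq f(\bx^{(\bT)})$, in which case $\BinarySearch(f,\bx,\bT,\bpi')$ returns a genuine bichromatic edge along a variable $i\in\bT\subseteq S$, and $\TestAB$ outputs $i$. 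Thus each loop iteration independently succeeds (in the sense of producing some output $i\in S$) with probability at least $\xi/(2\log^2 n) \cdot (1/2) = \Omega(\xi/\log^2 n)$; here I also use that once $\bx$ is bad, the event ``$\bT$ hits a bichromatic coordinate'' has the claimed probability regardless of how $\BinarySearch$ internally splits the path. Over $\log^4 n/\xi$ independent iterations, the probability that \emph{every} iteration fails to output a variable is at most $(1-\Omega(\xi/\log^2 n))^{\log^4 n/\xi} \le \exp(-\Omega(\log^2 n))$, which is exactly the claimed failure bound.

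One subtlety worth spelling out carefully: I want to make sure the conditional probability bound ``$\Pr_\bT[f(\bx)\neq f(\bx^{(\bT)})]>1/\log^2 n$ $\Rightarrow$ $\BinarySearch$ returns a variable with probability $>1/\log^2 n$'' is clean. This holds because $\BinarySearch(f,\bx,\bT,\bpi')$ returns $\nil$ \emph{iff} $f(\bx)=f(\bx^{(\bT)})$ (by the lemma of Section~\ref{sec:binarysearch}), so the return-a-variable event is \emph{identical} to the event $f(\bx)\neq f(\bx^{(\bT)})$; there is no loss at all. The main (and only mild) obstacle is bookkeeping the two-sized-$\bT$ case against the internal fair coin of $\TestAB$, so that the $\xi$-fraction of non-persistent points correctly translates into an $\Omega(\xi/\log^2 n)$ per-iteration success probability; everything else is a direct Chernoff/union-bound computation, which is why the excerpt defers it to the appendix.
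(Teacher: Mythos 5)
Your proposal is correct and follows essentially the same route as the paper's own (much terser) proof: the query bound is read off the loop structure times the $O(\log n)$ cost of $\BinarySearch$, and the correctness part rests on the per-iteration success probability being $\Omega(\xi/\log^2 n)$, amplified over the $\log^4 n/\xi$ independent iterations. Your extra bookkeeping (splitting the $\xi$-fraction between the two sizes of $\bT$ and noting that $\BinarySearch$ returns a variable exactly when $f(\bx)\neq f(\bx^{(\bT)})$) just makes explicit what the paper leaves implicit.
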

\begin{proof}
The first part of the claim follows from the description of $\TestAB$.
For the second part we note that if the fraction of points that are not $S$-persistent is at least $\xi$,
  then the probability of $\bx$ and $\bT$ with
  $f(\bx)\ne f(\bx^{(\bT)})$ is $\Omega(\xi/\log^2n)$.
It then follows from the number of times we repeat in $\TestAB$.
\end{proof}

We recall the lemma we need for $\Prune$:

% We will now define a sub-routine $\Prune$ which takes as input a set $S \subset [n]$ of size $m$, with an order on these elements, and finds a subset which satisfies properties A and B. The formal description of $\Prune$ is given in Figure~\ref{fig:prune}.

\begin{lemma}
Given a Boolean function $f\colon\{0,1\}^n\rightarrow \{0,1\}$, 
  a nonempty $S_0 \subseteq [n+1]$, an ordering $\pi$ of $S_0$
  and a parameter $\xi\in (0,1)$, $\Prune\hspace{0.05cm}(f,S_0, \pi,\xi)$ makes
  at most  $O(|S_0|\hspace{0.05cm}{\log^\red{5} n}/{\xi})$  queries to $f$ and with probability at least $1 - \exp\left(-\Omega(\log^2 n)\right)$,
  it returns a subset $\bS \subseteq S_0$ %and a set $\bB$ of bichromatic edges 
  such that %(1) $\bB$ contains a bichromatic edge along coordinate $i$ for each $i\in S_0\setminus \bS$ and (2)
   at least $(1-\xi)$-fraction of points in $\{0,1\}^n$ are $\bS$-persistent. 
\end{lemma}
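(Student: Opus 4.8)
The statement to prove is Lemma~\ref{lem:prune2} (restated at the end of Appendix~\ref{proof:pruneanalysis}): $\Prune\hspace{0.04cm}(f,S_0,\pi,\xi)$ makes $O(|S_0|\log^5 n/\xi)$ queries and, with probability $1-\exp(-\Omega(\log^2 n))$, outputs a set $\bS$ such that at least a $(1-\xi)$-fraction of $\{0,1\}^n$ is $\bS$-persistent. The proof has two parts: the query bound (trivial) and the correctness bound. I would first dispose of the query bound: $\Prune$ makes at most $|S_0|$ calls to $\TestAB$ (each call either returns $\nil$ and terminates, or removes one variable from $S$, and $|S|\le |S_0|$), and by Claim~\ref{lem:test-lemma} each call to $\TestAB$ uses $O(\log^5 n/\xi)$ queries; multiplying gives $O(|S_0|\log^5 n/\xi)$.

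For correctness, the key observation is that $\Prune$ returns $S$ exactly when some call $\TestAB\hspace{0.04cm}(f,S,\tau,\xi)$ returns $\nil$ (or when $S=\emptyset$, in which case every point is trivially $S$-persistent and there is nothing to prove). So the output $\bS$ fails the desired property only if, for some set $S$ reached during the execution, strictly more than a $\xi$-fraction of points are not $S$-persistent yet $\TestAB\hspace{0.04cm}(f,S,\tau,\xi)$ returns $\nil$. By the contrapositive direction of Claim~\ref{lem:test-lemma}, for any fixed such "bad" set $S$ this happens with probability at most $\exp(-\Omega(\log^2 n))$. The plan is then to union-bound over the (at most $|S_0|\le n+1$) sets $S$ that can arise as the running set during $\Prune$. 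Since $|S_0|\le n+1 \le \poly(n)$ and $\exp(-\Omega(\log^2 n))\cdot\poly(n) = \exp(-\Omega(\log^2 n))$, the total failure probability is still $\exp(-\Omega(\log^2 n))$, as claimed.

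One point that needs a little care in writing it up: the sequence of sets $S$ visited by $\Prune$ is itself random (it depends on the outputs of earlier $\TestAB$ calls), so "union-bound over the sets visited" should be phrased as a union bound over the at most $|S_0|$ \emph{rounds}. Concretely, for round $j$, conditioned on the history up to round $j$ (which fixes the current set $S^{(j)}$ and ordering $\tau^{(j)}$), if more than a $\xi$-fraction of points are not $S^{(j)}$-persistent then by Claim~\ref{lem:test-lemma} the probability that round $j$'s call to $\TestAB$ returns $\nil$ is at most $\exp(-\Omega(\log^2 n))$; so the probability that \emph{some} round is the "first bad round" (i.e. the round where $\Prune$ terminates with a non-$\xi$-good set) is at most $|S_0|\cdot\exp(-\Omega(\log^2 n)) = \exp(-\Omega(\log^2 n))$. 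Conditioned on this not happening, whenever $\Prune$ terminates with a nonempty $\bS$ by a $\nil$ return it must be the case that fewer than a $\xi$-fraction of points are not $\bS$-persistent, i.e. at least a $(1-\xi)$-fraction are $\bS$-persistent; and if it terminates with $\bS=\emptyset$ the property holds trivially.

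\textbf{Main obstacle.} There is no serious obstacle — this lemma is genuinely a routine Chernoff-plus-union-bound argument, which is why it was deferred to the appendix. The only mild subtlety, as noted above, is being careful that the union bound is over rounds rather than over a fixed family of sets, since the trajectory of $\Prune$ is random; but this is handled cleanly by conditioning on the history at each round. (I would also double-check that Claim~\ref{lem:test-lemma} as stated is indeed the one-sided statement I need: if at least a $\xi$-fraction of points are not $S$-persistent, then $\TestAB$ returns a variable with probability $\ge 1-\exp(-\Omega(\log^2 n))$ — which is exactly the contrapositive I use. It is.)
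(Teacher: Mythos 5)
Your proposal is correct and matches the paper's own proof: both bound the query complexity by at most $|S_0|$ calls to $\TestAB$ at $O(\log^5 n/\xi)$ queries each, and both establish correctness by applying Claim~\ref{lem:test-lemma} to each call and taking a union bound over the at most $|S_0|\le n$ calls. Your extra care in phrasing the union bound over rounds (conditioning on the history) is a slightly more explicit version of the same argument the paper leaves implicit.
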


%\begin{lemma}\label{lem:prune}
%Given a set $S_0 \subset [n]$ of size $m$, $\Prune_f(S_0, \pi)$ makes at most $O(m \cdot \frac{\log^8 n}{\xi})$ queries to $f$ and with probability at least $1 - \exp\left(-\Omega(\log^2 n)\right)$ outputs a set $\bS \subset S_0$ satisfying properties A and B.
%\end{lemma}

\begin{proof}
The query complexity follows from the fact that $\Prune$ makes at most
  $|S_0|$ many calls to $\TestAB$.
%We note that by Claim~\ref{cl:testab-empty}, $\Prune_f(S_0,\pi)$ always terminates, and calls $\TestAB_f(\bS,\pi)$ at most $m$ times. The upper bound on the query complexity now directly follows from Claim~\ref{cl:testab-queries}
In addition, for each call, it follows from Claim \ref{lem:test-lemma} 
  that the probability of $\TestAB$ returning $\nil$ while $S$ is actually persistent over
  less than $(1-\xi)$-fraction of points is at most $\exp(-\Omega(\log^2 n))$.
The lemma follows from a union bound over $|S_0|\le n$ calls.
%In addition, we note that the probability that $\Prune_f(S_0,\pi)$ returns some set $\bS$ which does not satisfy properties A or B is at most $m \exp\left(-\Omega(\log^2 n)\right) \leq \exp\left(-\Omega(\log^2 n)\right)$.
\end{proof}

\section{Overlap of Two Random Sets of Certain Sizes}
\label{apphehe}

Let $k,\ell\in [n]$ be two positive integers with $\alpha=k\ell/n$.
We are interested in the size of $|\bS\cap \bT|$ where $\bS$ is a random $k$-sized subset of $[n]$
  and $\bT$ is a random $\ell$-sized subset of $[n]$, both drawn uniformly.

\begin{lemma}\label{appendix1}
%If $\alpha \ge \log^2 n$, then the probability of $|\bS\cap\bT|\ge  4\alpha$ is $\exp(-\Omega(\log^2 n))$.
For any $t\ge 4\alpha$, the probability of $|\bS\cap\bT|\ge  t$ is at most $\exp(-\Omega(t))$.
\end{lemma}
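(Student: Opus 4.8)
The statement to prove is the tail bound in Lemma~\ref{appendix1}: if $\bS$ is a uniformly random $k$-subset of $[n]$ and $\bT$ a uniformly random $\ell$-subset, with $\alpha = k\ell/n$, then $\Pr[|\bS \cap \bT| \ge t] \le \exp(-\Omega(t))$ for every $t \ge 4\alpha$. The plan is to reduce to a hypergeometric distribution and then apply a standard Chernoff-type tail bound. First I would fix $\bS$ arbitrarily (the distribution of $|\bS \cap \bT|$ depends only on $|\bS| = k$, so conditioning on $\bS$ loses nothing), so that $\bX := |\bS \cap \bT|$ is exactly $\mathrm{Hypergeometric}(n, k, \ell)$: drawing $\ell$ elements without replacement from a universe of $n$ of which $k$ are ``marked.'' Its mean is $\Ex[\bX] = k\ell/n = \alpha$.

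Next I would invoke the standard fact that the upper tail of a hypergeometric random variable is dominated by that of the corresponding binomial $\mathrm{Bin}(\ell, k/n)$ (this is Hoeffding's observation that sampling without replacement is at least as concentrated as sampling with replacement for convex functions, hence for the moment generating function used in the Chernoff bound). Thus it suffices to bound $\Pr[\bY \ge t]$ where $\bY \sim \mathrm{Bin}(\ell, k/n)$ has $\Ex[\bY] = \alpha$. Applying the multiplicative Chernoff bound in the regime $t \ge 4\alpha$: writing $t = (1+\delta)\alpha$ with $\delta \ge 3$, we get $\Pr[\bY \ge t] \le \left(\frac{e^\delta}{(1+\delta)^{1+\delta}}\right)^{\alpha} \le \left(\frac{e}{1+\delta}\right)^{t} \le (e/4)^{t} = \exp(-\Omega(t))$, using $(1+\delta)\alpha = t$ and $1 + \delta \ge 4$. (One should double-check the small-$\alpha$ edge case, e.g.\ $\alpha < 1$: the bound $\Pr[\bY \ge t] \le \binom{\ell}{t}(k/n)^t \le (\ell k/(tn))^t \cdot (e^t) = (e\alpha/t)^t \le (e/4)^t$ via $\binom{\ell}{t} \le (e\ell/t)^t$ handles it uniformly, so no special casing is really needed.)

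The only genuinely delicate point is making sure the constant in $4\alpha$ is what makes the exponent come out with the right sign — i.e.\ that $t \ge 4\alpha$ suffices rather than needing $t \ge C\alpha$ for a larger $C$. This is exactly why the clean form $(e/4)^t$ above is convenient: since $e/4 < 1$, the bound is already $\exp(-ct)$ with $c = \ln(4/e) = \ln 4 - 1 > 0$, an absolute constant, so the $\Omega(t)$ in the statement holds with no hidden dependence on $\alpha$, $k$, $\ell$, or $n$. I expect no real obstacle here; the whole argument is a two-line reduction (condition on $\bS$; dominate hypergeometric by binomial) followed by a textbook Chernoff computation. The main thing to be careful about is citing or stating the hypergeometric-to-binomial domination cleanly, and keeping track that $t$ need not be an integer (which is harmless — round down inside the binomial coefficient bound, or just note $\Pr[\bX \ge t] = \Pr[\bX \ge \lceil t \rceil]$ and $\lceil t\rceil \ge t$).
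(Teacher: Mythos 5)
Your proposal is correct, and it reaches the same Chernoff endpoint as the paper but by a somewhat different reduction. The paper reveals the elements of $\bT$ one at a time after fixing $\bS$, bounds each conditional inclusion probability by $k/(n-\ell)\le 2k/n$ (which forces a preliminary case split on $\ell\le n/2$ and inflates the mean bound to $2\alpha$), and then appeals to a Chernoff bound "together with a standard coupling argument" to handle the dependence among the indicators. You instead condition on $\bS$, identify $|\bS\cap\bT|$ as exactly $\mathrm{Hypergeometric}(n,k,\ell)$ with mean exactly $\alpha$, and invoke Hoeffding's fact that sampling without replacement is dominated (at the level of the moment generating function) by $\mathrm{Bin}(\ell,k/n)$, after which the multiplicative Chernoff computation with $t=(1+\delta)\alpha$, $\delta\ge 3$, gives $(e/4)^t$. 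Your route is cleaner in two respects: it needs no $\ell\le n/2$ case analysis and loses no factor of $2$ in the mean, and your fallback estimate $\Pr[|\bS\cap\bT|\ge t]\le\binom{\ell}{t}(k/n)^t\le(e\alpha/t)^t\le(e/4)^t$ is fully elementary (it in fact applies directly to the hypergeometric law, so one could skip the binomial domination altogether). What the paper's argument buys is self-containedness without citing the without-replacement domination lemma, at the cost of the cruder conditional bound. Both proofs are valid, and your explicit tracking of the absolute constant $\ln 4-1$ in the exponent correctly confirms that the threshold $4\alpha$ suffices.
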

\begin{proof}
We assume without loss of generality that $k\ge \ell$.
If $\ell> n/2$, the claim is trivial as $\alpha> n/4$ and $t\ge 4\alpha> n$.
We assume $\ell\le n/2$ below.

We consider the following~process.
We draw $\bS$ first. Then we add random (and distinct) indices of $[n]$ to  $\bT$ round by round for $\ell$ rounds.
In each round we pick an index uniformly at random from those that
  have not been added to $\bT$ yet.
Clearly this process generates the same distribution of $\bS$ and $\bT$ that we are interested in.

%, and we consider the same random process.

For each $i\in [\ell]$, we let $\bX_i$ be the random variable that is set to $1$ if
  the index in the $i$th round belongs to $\bS$ and is $0$ otherwise.
Although $\bX_i$'s are not independent, the probability~of $\bX_i=1$ is at most
  $k/(n-\ell)\le 2k/n$ using $\ell\le n/2$,
  for any fixed values of $\bX_1,\ldots,\bX_{i-1}$.
%As a result, the probability of $\bX_i=1$ is at most
%  $k/(n-\ell)\le 2k/n$, using $\ell\le n/2$.
Thus, the expectation of $\sum_{i\in [\ell]} \bX_i$ is at most $2k\ell/n=2\alpha$.
The lemma follows directly from the Chernoff bound (together with a standard coupling argument).
\end{proof}

\end{document}